\documentclass[11pt]{article}
\usepackage[letterpaper, margin=1in]{geometry}

\usepackage{makecell}

\usepackage{amsmath,amssymb,bbm,amsthm}
\usepackage{thm-restate,color,xcolor,xspace}
\usepackage{graphicx}
\usepackage[normalem]{ulem}  
\usepackage{color,xcolor,xspace}
\usepackage[hypertexnames=false]{hyperref}
\usepackage{cleveref}
\usepackage[linesnumbered,ruled,vlined]{algorithm2e}

\usepackage{tikz}
\usetikzlibrary{decorations.pathreplacing}

\usepackage{enumitem}
\usepackage{cite}

\usepackage{array}

\newtheorem{theorem}{Theorem}[section]
\newtheorem{lemma}[theorem]{Lemma}
\newtheorem{proposition}[theorem]{Proposition}

\newtheorem{definition}[theorem]{Definition}
\newtheorem{corollary}[theorem]{Corollary}
\newtheorem{remark}[theorem]{Remark}
\newtheorem{observation}[theorem]{Observation}
\newtheorem{claim}[theorem]{Claim}

\DeclareMathOperator\poly{poly}
\newcommand{\abs}[1]{\left \lvert #1 \right \rvert}
\newcommand{\floor}[1]{\left\lfloor #1 \right\rfloor}
\newcommand{\ceil}[1]{\left\lceil #1 \right\rceil}
\newcommand{\E}{\mathbb{E}}
\newcommand{\M}{\mathcal{M}}
\newcommand{\bbP}{\mathbb{P}}

\SetKwInput{KwData}{Input}
\SetKwInput{KwResult}{Output}

\newcommand{\maximummatching}{\textsc{MaximumMatching}}
\newcommand{\gadget}[1]{g^{#1}}
\newcommand{\subgadget}[1]{h^{#1}}
\newcommand{\mc}{\mathcal}

\newcommand{\Dup}{D_+}

\newcommand{\rr}{Regular-Graph Preservation}
\newcommand{\rrl}{Regular-Graph Preservation Lemma}

\title{On the Randomized Locality of Matching Problems in Regular Graphs}

\author{
  Seri Khoury\thanks{INSAIT, Sofia University “St. Kliment Ohridski”. \texttt{seri.khoury@insait.ai}} 
  \and
  Manish Purohit\thanks{Google Research. \texttt{purohitmanish@gmail.com}}
  \and
  Aaron Schild\thanks{Google Research. \texttt{aaron.schild@gmail.com}}
  \and
  Joshua Wang\thanks{Google Research. \texttt{josh.w0@gmail.com}}
}

\date{}

\begin{document}

\maketitle
\begin{abstract}

The main goal in distributed symmetry-breaking is to understand the locality of problems; i.e., the radius of the neighborhood that a node needs to explore in order to arrive at its part of a global solution. In this work, we study the locality of matching problems in the family of regular graphs, which is one of the main benchmarks for establishing lower bounds on the locality of symmetry-breaking problems, as well as for obtaining classification results. Our main results are summarized as follows:

 \begin{enumerate}
     \item \textbf{Approximate matching:} We develop randomized algorithms to show that \((1 + \epsilon)\)-approximate matching in regular graphs is truly local; i.e., the locality depends only on \(\epsilon\) and is independent of all other graph parameters. Furthermore, as long as the degree \(\Delta\) is not very small (namely, as long as \(\Delta \geq \operatorname{poly}(1/\epsilon)\)), this dependence is only logarithmic in $1/\epsilon$. This stands in sharp contrast to maximal matching in regular graphs which requires some dependence on the number of nodes \(n\) or the degree \(\Delta\). We show matching lower bounds for both results.
     \item \textbf{Maximal matching:} Our techniques further allow us to establish a strong separation between the node-averaged complexity and worst-case complexity of maximal matching in regular graphs, by showing that the former is only \(O(1)\).
 \end{enumerate}

Central to our main technical contribution is a novel martingale-based analysis for the \(\approx 40\)-year-old algorithm by Luby. In particular, our analysis shows that applying one round of Luby’s algorithm on the line graph of a \(\Delta\)-regular graph results in an almost \(\Delta/2\)-regular graph.
\end{abstract}

\newpage

\section{Introduction}

Matching problems, such as maximal or maximum matching, have garnered significant attention across several computational models, including the classical sequential model~\cite{BrandLNPSS0W20, Madry13, HopcroftK73, GoelKK13, Yuster13}; dynamic networks~\cite{BhattacharyaKSW23, ArarCCSW18, 0001LSSS19, Solomon16, GuptaP13, BhattacharyaKS23}; streaming algorithms~\cite{FischerMU22, PazS19, AssadiS23, AssadiBKL23, BuryGMMSVZ19, McGregor05}; online algorithms~\cite{BuchbinderNW23, GuptaGPW19, KarpVV90, KarandeMT11, Mehta13}; and distributed computing~\cite{GhaffariHK18, GhaffariKMU18, LotkerPP15, AhmadiK20, AhmadiKO18, Fischer20, Bar-YehudaCGS17, BenbasatKS19, LotkerPR09, Harris19, CzygrinowH03, EvenMR15, IzumiKY24, 0002MNSU25}.

In the classical LOCAL model of distributed computing, there is a network of $n$ nodes that can communicate via synchronized communication rounds. In each round, a node can send an unbounded-size message to each of its neighbors. The goal for these nodes is to solve a graph problem (e.g., find a large matching) while minimizing the number of communication rounds. We refer
to the number of rounds required to solve a problem in the LOCAL model as the \emph{locality} of that problem.
Since each node can only interact with the nodes in its $r$-hop neighborhood in $r$ rounds, the locality of a
problem is the radius of the neighborhood that each node needs to explore to determine its part in the global
solution (e.g., whether it is matched and if so to which neighbor).

Typically, the locality of matching problems depends on global graph parameters, such as the number of nodes \(n\) or the maximum degree \(\Delta\). For instance, a simple algorithm by Israeli and Itai~\cite{IsraelI86} finds a maximal matching in \(O(\log n)\) rounds in the LOCAL model, which constitutes a 2-approximate maximum matching in unweighted graphs (or approximate matching, for short). Later, Barenboim et~al.\ \cite{BarenboimEPS12}  showed that this logarithmic dependence on \(n\) can be substituted with a logarithmic dependence on \(\Delta\) by providing an \(O(\log \Delta + \mathrm{poly} \log \log n)\)-round algorithm. This upper bound also applies to finding a \((1+\epsilon)\)-approximate matching (while incurring a multiplicative \(\mathrm{poly}(1/\epsilon)\) factor in the number of rounds), as shown by Harris~\cite{Harris19}.

All of the above results are for randomized algorithms that succeed with high probability.\footnote{Throughout the paper, we say that an algorithm succeeds with high probability if it succeeds with probability $1-1/n^c$ for some arbitrarily large constant $c>0$.} If instead one is satisfied with a matching that is only large in expectation, then we can combine recent algorithms for fractional matching, rounding, and amplification to shave a $\log\log\Delta$ factor from the round complexity~\cite{Bar-YehudaCGS17, Bar-YehudaCS16, FischerMU22, GGKMR18}. For deterministic algorithms, the primary question is whether we can approach the state-of-the-art logarithmic bound that applies to randomized ones.\footnote{The current state-of-the-art deterministic algorithms take $\tilde{O}(\log^{5/3}n)$ rounds for maximal matching~\cite{ecomposition}, and $\tilde{O}(\log^{4/3}n\cdot \poly(1/\epsilon))$ rounds for $(1+\epsilon)$-approximate matching~\cite{G23,FischerMU22}, where $\tilde{O}(x)$ hides $\poly\log x$ factors.} For a more detailed discussion on deterministic algorithms, we refer the reader to the excellent surveys by Suomela~\cite{Suomela13} and Rozhon~\cite{abs-2406-19430}. 

Whether there exist $o(\log n)$-round algorithms that succeed with high probability for maximal or \((1+\epsilon)\)-approximate matching in general graphs remains one of the main open questions in the field. On the other hand, the best known lower bound as a function of $n$ is $\Omega(\sqrt{\log n/\log\log n})$ rounds, which was shown by Kuhn, et~al.\ \cite{KuhnMW16}, and it applies all the way up to $\poly(\log n)$-approximation. It is worth noting that for maximal matching, the best known lower bound has recently been improved to $\Omega(\sqrt{\log n})$ as a function of $n$ \cite{KS25}, but this lower bound does not extend to $(1+\epsilon)$-approximate matching.

\subparagraph{Regular Graphs} We say that a graph is regular if all the nodes have the same degree. The family of regular graphs has received much attention as a natural benchmark for studying the complexity of various fundamental problems (e.g.,~\cite{Babai14,BabaiCSTW13,Babai80,Spielman96,Kucera87,CarrollGT09,AlonM21,AlonP11,AlonBK10,AlonCHKRS10,Zamir23,Golowich23,AlonFK84,AlonFK84a}). Regular graphs admit nearly perfect matchings (see, for instance,~\cite{FlaxmanH07}), and the problem of finding an approximate matching in regular graphs has been widely studied in many different computational models~\cite{GoelKK13, CohenW18, AggarwalMSZ03, MotwaniR95, Alon03, KahnK98, DahlhausK92, GoldwasserG17, ColeH82, Yuster13}. In the LOCAL model, regular graphs serve as the standard benchmark for implementing the \emph{Round Elimination} technique for proving lower bounds~\cite{BalliuBHORS21,Bo20,BalliuB0O24,Balliu0KO23,BalliuBO22,Balliu0KO21,Balliu0KO22,Brandt19,BrandtFHKLRSU16}, which will be discussed shortly, and also for obtaining classifications of locality results (e.g., in paths and cycles~\cite{BrandtHKLOPRSU17,Balliu0CORS19}, grids~\cite{NaorS95,BrandtHKLOPRSU17,ChangKP19,GrunauR022}, and regular trees~\cite{BalliuBCOSST23,Balliu0COSS22}).\footnote{For more on classification of locality results, as well as ones in general graphs, we refer the reader to the survey by Rozhon~\cite{abs-2406-19430}.}

The locality of some problems in regular graphs still depends on graph parameters such as the number of nodes $n$ or the degree $\Delta$, while for others, it is independent of any such parameter. Interestingly, as we discuss next, maximal matching falls into the former category, whereas $(1 + \epsilon)$-approximate matching is in the latter.

For maximal matching, the approximately 40-year-old $O(\log n)$ upper bound by Israeli and Itai~\cite{IsraelI86} remains the best known even for regular graphs. On the hardness side, Khoury and Schild \cite{KS25} showed that maximal matching requires $\Omega(\min\{\log\Delta, \sqrt{\log n}\})$ rounds on $\Delta$-regular graphs. For low-degree regular graphs, Balliu et~al.\ \cite{BalliuBHORS21} showed that maximal matching in regular graphs requires $\Omega(\min\{\Delta, \log\log n / \log\log\log n\})$ rounds for randomized algorithms and $\Omega(\min\{\Delta, \log n / \log\log n\})$ rounds for deterministic ones. Even for very low-degree regular graphs (e.g. cycles), maximal matching still requires $\Omega(\log^{*} n)$ rounds~\cite{Linial92,Naor91}, even when using randomness.

On the other hand, approximate matching in regular graphs has received far less attention. The hard instances for approximate matching developed by Kuhn et~al.\ \cite{KuhnMW16} are very far from being regular. In fact, these instances are also hard for finding an approximate fractional matching, a problem that admits a trivial zero-round solution in regular graphs by simply setting the fractional value for each edge to be $1/\Delta$. This fractional matching can be rounded to an $O(1)$-approximate integral matching in one round by using a simple sampling technique~\cite{GGKMR18}.

This raises the question: where does $(1+\epsilon)$-approximate matching fall on the spectrum of locality in regular graphs? Is it closer to approximate fractional matching, or does it require some  dependence on $n$ or $\Delta$ similar to maximal matching?\footnote{Note that the amplification technique of~\cite{FischerMU22} cannot be used to amplify the constant-approximation factor to $(1+\epsilon)$ in regular graphs while using $\poly(1/\epsilon)$ phases of amplification. This is because the technique is designed for general graphs, and applying it to a regular graph can result in a non-regular graph after the first step. Consequently, the amplification algorithm would need to use an algorithm for constant-approximate matching in \emph{general graphs}, which requires some dependence on $n$ or $\Delta$.} 


\subsection{Our Results} 

\subparagraph{Approximate Matching} In this work, our first result is a simple algorithm that finds a $(1+\epsilon)$-approximate matching in regular graphs without any dependence on graph parameters.\footnote{All our upper bounds apply also to almost regular graphs, where the degrees of all the nodes are within a $(1\pm o(1))$-multiplicative factor from each other. Moreover, the case where $\epsilon \le n^{-1/20}$ can be handled in $\poly(1/\epsilon)$ rounds by transmitting the entire graph to all nodes.} For constant values of $\epsilon$, the algorithm also works in the more restricted CONGEST model, where the size of the messages is bounded by $O(\log n)$ bits. 

\begin{restatable}{theorem}{thmpolywarmup}\label{thm:warmup}
Let $n$ be a positive integer, and let $\epsilon \in (n^{-1/20}, 1/2)$ be an accuracy parameter. There is an $O(\epsilon^{-5} \log (1/\epsilon))$-round LOCAL algorithm that finds a $(1 + \epsilon)$-approximate maximum matching in $n$-node regular graphs, with high probability. The algorithm works in the CONGEST model for constant values of $\epsilon$.
\end{restatable}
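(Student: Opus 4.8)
The plan is to build the matching over $O(\log(1/\epsilon))$ \emph{epochs}, each epoch running a single round of Luby's matching algorithm (equivalently, one round of Luby's \textsc{mis} algorithm on the line graph): every surviving edge draws a fresh uniform priority, an edge joins the matching exactly when its priority beats that of all adjacent edges, and the newly matched vertices are deleted before the next epoch. The driving idea is the phenomenon advertised in the abstract: one such round on a $\Delta$-regular graph matches close to half of the vertices and leaves a residual graph that is again almost regular, with degree about $\Delta/2$, so after $k$ epochs only about $n/2^k$ vertices remain unmatched. Choosing $k = \log(1/\epsilon) + O(1)$ leaves at most $(\epsilon/2)\,n$ unmatched vertices, and here is the one place where regularity is used almost for free: since the maximum matching of any $n$-node graph has at most $n/2$ edges, a matching covering all but $(\epsilon/2)\,n$ vertices already has size at least $(1-\epsilon/2)\cdot n/2 \ge \mathrm{OPT}/(1+\epsilon)$ — we need \emph{only} the trivial bound $\mathrm{OPT}\le n/2$, never any lower bound on $\mathrm{OPT}$, so correctness reduces entirely to controlling the residual graph across the epochs.

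First I would normalize the degree: if $\Delta$ exceeds a threshold $\Delta_0=\poly(1/\epsilon)$, keep each edge independently with probability $\Delta_0/\Delta$, and note that a Chernoff bound shows all but an $O(\epsilon)$-fraction of the vertices then have degree within a constant factor of $\Delta_0$ — this only needs $\Delta_0$ to be a large enough \emph{polynomial} in $1/\epsilon$, rather than $\Omega(\epsilon^{-2}\log n)$, precisely because we are content to discard the $O(\epsilon n)$ atypical vertices rather than force every vertex to concentrate. For the epochs on the resulting almost-$\Delta_0$-regular graph the central claim is an invariant: if the residual graph currently has all degrees within a constant factor of $d$, then one Luby round matches a constant fraction of its vertices and, after discarding an $O(\epsilon/\log(1/\epsilon))$-fraction of "spoiled" vertices whose new degree falls outside a constant factor of the new target $\approx d/2$, all surviving degrees are within a constant factor of $d/2$. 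Since the matching probability of a vertex and the distribution of how many of its neighbours survive depend only on the local degree profile, this is a routine (but degree-floor-sensitive) concentration estimate as long as degrees stay within a constant factor; the spoiled vertices sum to an $O(\epsilon)$-fraction over the $\log(1/\epsilon)$ epochs, which is why one runs the whole procedure with accuracy parameter $\Theta(\epsilon/\log(1/\epsilon))$, keeping both the epoch count and the degree floor $\Delta_0$ at $\poly(1/\epsilon)$ and hence the round bound at $\poly(1/\epsilon)\cdot\log(1/\epsilon)$. I expect this invariant to be the main obstacle: proving a strong-enough tail bound for the per-epoch degree drop using only crude (union-bound/Chernoff) tools, and showing the "constant factor" degrades only by a $(1+o(1))$ multiplicative amount per epoch so that it survives $\log(1/\epsilon)$ compositions. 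This is exactly the step that the paper's later martingale analysis sharpens in order to delete the $\poly(1/\epsilon)$ overhead once $\Delta\ge\poly(1/\epsilon)$.

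Two loose ends remain. When $\Delta<\Delta_0=\poly(1/\epsilon)$ from the outset, the almost-regularity phenomenon breaks down — on a cycle, for instance, a single Luby round already isolates a constant fraction of the vertices, and a maximal matching is only a $2$-approximation — so I would instead work directly in the bounded-degree graph: obtain a constant-approximate matching cheaply from a few Luby rounds, and then remove short augmenting paths in $O(1/\epsilon)$ phases, each phase using randomly-prioritised grabs of vertex-disjoint augmenting paths of length $O(1/\epsilon)$ and exploiting that whenever $|M|<(1-\epsilon)\mathrm{OPT}$ there are $\Omega(\epsilon n)$ such disjoint paths, so $\poly(1/\epsilon)$ sub-rounds suffice and nothing depends on $n$; for the genuinely one-dimensional cases (paths and cycles) the even simpler route of cutting a random $\epsilon$-fraction of the edges, observing that all but an $O(\epsilon)$-fraction of vertices then lie in components of diameter $O(\epsilon^{-1}\log(1/\epsilon))$, and solving each such small component by brute force already works in $O(\epsilon^{-1}\log(1/\epsilon))$ rounds. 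Finally, for CONGEST with constant $\epsilon$: every message is an $O(\log n)$-bit priority, a matched/unmatched bit, or the identity of one neighbour; each Luby epoch is a single such round, the sparsification coins are agreed along edges with $O(1)$ bits, and the augmenting-path explorations touch only $O(1)$-radius balls of a bounded-degree graph, so all rounds are implementable with $O(\log n)$-bit messages.
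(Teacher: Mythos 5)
Your proposal takes a genuinely different route from the paper's proof of \Cref{thm:warmup}, and there is a gap at exactly the step you flag. Your main branch --- sample to $\Delta_0 = \poly(1/\epsilon)$ and then run $O(\log(1/\epsilon))$ Luby epochs, maintaining an ``after one Luby round all surviving degrees are within a $(1+o(1))$ factor of $d/2$'' invariant --- is essentially \Cref{thm:UpperBoundMain}, the paper's \emph{main} result, not the warmup. You describe that invariant as ``a routine (but degree-floor-sensitive) concentration estimate''; it is not. The matched-status indicators over $N(u)$ are correlated through the shared two-hop neighborhood, and no plain Chernoff or union-bound argument resolves that dependence: the paper needs the full shifted- and scaled-martingale machinery of \Cref{sec:localrrl} plus a bounded-dependence Chernoff wrapper (\Cref{lem:SparseRecursiveRegularity}), and even then the approximate-regularity parameter $\alpha$ degrades by a constant multiplicative factor per epoch, which is exactly why the preservation argument forces the enormous degree floor $\Delta \ge (1/\epsilon)^{10^5}$ in \Cref{thm:UpperBoundMain}. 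Expecting crude tools to deliver this invariant across $\log(1/\epsilon)$ compositions is the gap.

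The paper's actual proof of \Cref{thm:warmup} sidesteps the invariant entirely by using, for \emph{all} degree ranges after a single sampling step, what you relegate to a low-degree fallback: $O(\epsilon^{-4})$ phases of a hypergraph augmenting-path procedure (build the hypergraph of $O(1/\epsilon)$-length augmenting paths, compute an approximate fractional hypermatching via a black-box CONGEST primitive, round by independent per-vertex sampling, and certify the rounded size with McDiarmid). Your sketch of that stage is in the right spirit but too optimistic: ``randomly-prioritised grabs'' glosses over the rounding losses, because of which each phase can only be certified to recover $\Omega(\epsilon^4 n)$ disjoint paths rather than your hoped-for $\Omega(\epsilon n)$, which is what forces $O(\epsilon^{-4})$ phases and the overall $O(\epsilon^{-5}\log(1/\epsilon))$ round bound. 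If you flesh out that branch and apply it uniformly, you can drop the Luby epochs altogether --- that \emph{is} the proof. Your sampling-stage intuition and the observation that a matching covering $(1-\epsilon/2)n$ vertices is automatically a $(1+\epsilon)$-approximation because $\mathrm{OPT}\le n/2$ are both correct and both used in the paper.
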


While Theorem~\ref{thm:warmup} advances our understanding of $(1+\epsilon)$-approximate matching in regular graphs, the polynomial dependence on $1/\epsilon$ is far from desirable. Comparing with the $\tilde{O}(\frac{\log \Delta}{\epsilon^3} + \text{polylog}(1/\epsilon, \log\log n))$ round algorithm  of Harris~\cite{Harris19}, the algorithm from Theorem~\ref{thm:warmup} is better only in the regime $\epsilon > 1/\sqrt{\log \Delta}$.
In our next result, we present an exponentially faster algorithm as long as $\epsilon > 1/\Delta^{c'}$ for some constant $c' > 0$.
 In other words, we present an exponentially faster $(1+\epsilon)$-approximation algorithm for graphs that are not extremely sparse, i.e., when $\Delta > \text{poly}(1/\epsilon)$. We note that the constant $c$ in Theorem~\ref{thm:UpperBoundMain} has not been optimized and can be improved substantially with a more careful analysis. 

\begin{restatable}[Main Result I]{theorem}{thmmain}
\label{thm:UpperBoundMain}
Let $c = 10^5$ and let $\epsilon \in (0, 1)$ be an accuracy parameter. For $\Delta$-regular graphs with $\Delta > (1/\epsilon)^c$ (equivalently; $\epsilon > \Delta^{-1/c}$), there is an $O(\log(1/\epsilon))$-round CONGEST algorithm that finds a $(1+\epsilon)$-approximate maximum matching with high probability.
\end{restatable}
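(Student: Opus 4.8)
The plan is to iterate Luby's algorithm on the line graph for a bounded number of phases and output the union of the matchings produced. Set $T=\Theta(\log(1/\epsilon))$ and $G_0=G$. In phase $i$, every edge of $G_{i-1}$ draws an independent uniform priority encoded in $O(\log n)$ random bits (so all priorities are distinct with high probability), an edge enters the matching $M_i$ iff its priority beats that of every adjacent edge, and the endpoints of $M_i$ are then deleted to form $G_i$. One phase costs $O(1)$ CONGEST rounds: each vertex learns, for each still-active neighbor, the largest competing priority at that neighbor, decides which (if any) of its edges is a local maximum, and then announces one bit of matched-or-deleted status; all messages are $O(\log n)$ bits. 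Hence the algorithm runs in $O(T)=O(\log(1/\epsilon))$ rounds of CONGEST. Since matched vertices leave immediately, $M=M_1\cup\dots\cup M_T$ is a matching, and it covers exactly the vertices outside $V(G_T)$, so $\abs{M}=\tfrac{1}{2}\bigl(n-\abs{V(G_T)}\bigr)$.

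The engine of the analysis is a lemma I will call the \rrl: if every degree of a graph lies in $[(1-\delta)\Delta,(1+\delta)\Delta]$ and $\Delta$ is large enough relative to $1/\delta$, then after one phase of Luby on its line graph, with high probability every surviving vertex has degree in $[(1-\delta')\tfrac{\Delta}{2},(1+\delta')\tfrac{\Delta}{2}]$ and the number of survivors is at most $(\tfrac{1}{2}+\delta')$ times the current vertex count, where $\delta'$ exceeds $\delta$ by only a lower-order term (of order $\Delta^{-1/2}$, up to logarithmic factors). The starting point is elementary: the events that a given edge $e$ is a local priority maximum are pairwise disjoint over the edges $e$ incident to a fixed vertex $v$ (any two such edges are adjacent), so in the regular case $\bbP[v\text{ survives}]=\tfrac{\Delta-1}{2\Delta-1}\approx\tfrac{1}{2}$, and hence a surviving $v$ has expected new degree $\approx\Delta/2$. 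Upgrading this to a per-vertex high-probability statement, and then union-bounding over the $n$ vertices and the $T$ phases, is the crux.

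I expect that concentration step to be the main obstacle; it is exactly the \emph{martingale-based analysis} advertised in the abstract. The new degree of $v$ — the number of its neighbors that survive — depends on the priorities of all $\poly(\Delta)$ edges within distance two of $v$, and changing the priority of a single edge at $v$ can flip the survival status of \emph{every} neighbor of $v$ at once (if that edge becomes a local maximum it blocks all other edges at $v$). A naive Azuma bound therefore has $\poly(\Delta)$ exposure steps each with Lipschitz constant $\Theta(\Delta)$, and yields only the useless deviation $\poly(\Delta)\gg\Delta/2$. The remedy is to expose the randomness in a carefully chosen order — for instance, the priorities of edges far from $v$ first (which pins a would-be-matched threshold at each neighbor), then the rest one neighbor at a time — so that each step moves the conditional expected degree by only $O(1)$ (a large jump would need a priority to land at the very top of a length-$\Theta(\Delta)$ list, an event of probability $O(1/\Delta)$), while the conditional variances sum to only $\tilde{O}(\Delta)$. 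Freedman's (martingale Bernstein) inequality then gives concentration at scale $\tilde{O}(\sqrt{\Delta})$ with failure probability $n^{-\Omega(1)}$, which is what the union bound needs. The same technology, applied to $\abs{V(G_i)}$ rather than a single degree, controls how fast the vertex set shrinks.

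Granting the \rrl, the theorem is bookkeeping. Iterating the lemma with $\Delta_i\approx\Delta/2^i$ and $\delta_0=0$, the accumulated spread obeys $\delta_i\le\delta_{i-1}+\tilde{O}(\Delta_i^{-1/2})$, a sum dominated by its last term; taking $c=10^5$ a (very) large constant, the hypothesis $\Delta>(1/\epsilon)^c$ ensures both that $\Delta_T\approx\epsilon\Delta$ stays large enough for the lemma to apply at every phase and that $\delta_T=o(\epsilon)$, so the $(1\pm\delta_i)$ fudge factors are negligible throughout. Each phase then multiplies the number of survivors by at most $\tfrac{1}{2}+o(\epsilon)<0.6$, so after $T=\Theta(\log(1/\epsilon))$ phases at most $\tfrac{\epsilon}{1+\epsilon}n$ vertices remain, and
\[
\abs{M}=\tfrac{1}{2}\bigl(n-\abs{V(G_T)}\bigr)\ \ge\ \tfrac{1}{2}\Bigl(n-\tfrac{\epsilon}{1+\epsilon}n\Bigr)=\tfrac{1}{1+\epsilon}\cdot\tfrac{n}{2}\ \ge\ \tfrac{\mathrm{OPT}}{1+\epsilon},
\]
since every matching has at most $n/2$ edges. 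All randomness is used across at most $Tn\le\poly(n)$ invocations of high-probability events, so the algorithm succeeds with high probability, in $O(\log(1/\epsilon))$ CONGEST rounds.
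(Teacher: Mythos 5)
Your high-level plan — iterate Luby on the line graph, prove the resulting graph stays almost-regular after each phase via a martingale analysis, then iterate for $O(\log(1/\epsilon))$ rounds — is exactly the paper's approach. But there is a genuine gap in how you close the concentration step, and it is not just a detail.

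You claim that the per-vertex degree concentration holds with failure probability $n^{-\Omega(1)}$, so that a union bound over all $n$ vertices and $T$ phases finishes. This cannot be right when $\Delta$ is small. The degree of $v$ after one Luby phase is determined entirely by the $\poly(\Delta)$ edges within a constant radius of $v$ — it has no dependence on $n$ at all. So the failure probability of any per-vertex concentration bound you prove is a function of $\Delta$ alone, and with variance $\tilde{O}(\Delta)$ the best Freedman can give is $\exp(-\poly(\Delta))$. (Note also that at your announced deviation scale $\tilde{O}(\sqrt{\Delta})$, $\lambda^2/V = \tilde{O}(1)$, so Freedman returns a \emph{constant} failure probability, not even $\exp(-\poly(\Delta))$; you would need to aim for a deviation more like $\Delta^{1-\gamma}$ for some small $\gamma$.) The theorem only assumes $\Delta > (1/\epsilon)^c$, which for constant $\epsilon$ allows $\Delta = O(1)$, so $\exp(-\poly(\Delta))$ is far from $n^{-\Omega(1)}$ and your union bound over $n$ vertices collapses. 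The paper's fix is to \emph{not} demand that all vertices stay regular: the Local Regular-Graph Preservation Lemma (Lemma~\ref{lem:RecursiveRegularity}) is proved with failure probability $\exp(-\Delta^{1/16})$, and then Lemma~\ref{lem:SparseRecursiveRegularity} uses a Chernoff bound with bounded dependence to conclude that at most a $\delta'$-fraction of vertices become irregular, with high probability in $n$. The parameter $\delta_i$ tracking the fraction of irregular vertices is then carried through the recursion (Claim~\ref{claim:BoundingParameters}, Lemma~\ref{lem:FinalSparseRegularity}), and \Cref{thm:MatchConstantFraction} is formulated to tolerate a small irregular fraction. That extra layer — allowing a tiny, controlled fraction of bad vertices — is the piece your argument is missing, and without it the proof only covers the regime $\Delta \ge \text{polylog}(n)$.

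A second, smaller omission: the paper's Local RRL is stated and proved for \emph{bipartite} graphs, and the overall algorithm begins with a random color-coding step to extract a bipartite almost-regular subgraph before running Luby. Your sketch never mentions bipartiteness; you would either need to verify your martingale analysis works without it (the paper's does not attempt this) or add the same color-coding preprocessing.
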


 One may wonder whether the restriction in the theorem about the graph being dense enough is a mere technicality. The following (simple) lower bound presented in Theorem~\ref{thm:LowerInformal} shows that this is not the case, allowing us to establish a separation between dense and sparse regular graphs in which dense ones are easier. Note that, for the same problem, the \emph{opposite} separation holds in general graphs (due to the $\Omega(\min\{\log\Delta/\log\log\Delta,\sqrt{\log n/\log\log n}\})$ lower bound by~\cite{KuhnMW16}, and the upper bounds by~\cite{Harris19}). 
 
\begin{theorem}[Informal version of \Cref{thm:lower-general}]\label{thm:LowerInformal}
  For any degree $\Delta \ge 2$ and error $\epsilon = O(\Delta^{-1})$, any (randomized) LOCAL algorithm that computes a $(1 + \epsilon)$-approximate maximum matching in bipartite $\Delta$-regular graphs with at least $n \ge \Omega(\Delta^{-1}\epsilon^{-1})$ nodes requires $\Omega(\Delta^{-1}\epsilon^{-1})$ rounds.
\end{theorem}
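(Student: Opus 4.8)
The plan is to prove the bound by an indistinguishability argument on a very simple family of ``thin'' regular graphs, essentially running the folklore $\Omega(1/\epsilon)$ lower bound for $(1+\epsilon)$-approximate matching on a cycle, but with all distances rescaled by $\Delta$. For the hard instance I would take $G_n$ to be a disjoint union of $n/m$ copies of a single gadget $R_m$, where $m$ is the largest even integer with $m\le c_0/\epsilon$ for a small constant $c_0$ (with a mild adjustment of $m$ modulo $4$ when $\Delta$ is odd), and $R_m$ is the Cayley graph of $\mathbb{Z}_m$ with a connection set of $\Delta$ small \emph{odd} offsets, e.g.\ $S=\{\pm 1,\pm 3,\dots,\pm(\Delta-1)\}$. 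Then $R_m$ is $\Delta$-regular, bipartite (all offsets are odd, so the parity of the label is a proper $2$-colouring), contains a perfect matching (the offset-$1$ edges $\{0,1\},\{2,3\},\dots$), and has diameter $\Theta(m/\Delta)$. Since $\epsilon=O(1/\Delta)$ we have $m=\Theta(1/\epsilon)\gtrsim\Delta$, so $R_m$ is well defined, and $n\ge m=\Theta(1/\epsilon)\ge\Omega(\Delta^{-1}\epsilon^{-1})$. As $R_m$ is regular bipartite it is perfectly matchable, so any $(1+\epsilon)$-approximate matching of $G_n$ leaves at most $\tfrac{\epsilon n}{1+\epsilon}<\epsilon n$ vertices unmatched; it therefore suffices to show that a too-local algorithm leaves $\Omega(n/(r\Delta))$ vertices unmatched in expectation.

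To do that I would introduce, for the matching $M$ output by an $r$-round algorithm on one copy of $R_m$ and each cut $q$, the crossing number $f(q)=$ (number of edges of $M$ crossing the cut just before position $q$, after fixing an arbitrary linearisation of the cycle and pre-applying a uniformly random rotation so that the analysis is translation-invariant). Three observations drive the argument. (1) Position $q$ is unmatched by $M$ iff $f(q)=f(q+1)$, and always $|f(q)-f(q+1)|\le 1$; hence the number of unmatched vertices of $R_m$ equals $|\{q:f(q)=f(q+1)\}|$. (2) Only edges with both endpoints within $\Delta$ positions of $q$ can cross cut $q$, and membership of such an edge in $M$ depends only on the radius-$r$ views of its endpoints, so $f(q)$ is determined by the identifiers and private coins in a window $W_q$ of at most $L_0:=4r\Delta$ positions around $q$ --- this rescaling by $\Delta$ is exactly what produces the $1/\Delta$ factor in the final bound. (3) By rotation-invariance of $R_m$ and of the (i.i.d.) identifier distribution, all $f(q)$ are identically distributed; let $\beta=\Pr[f(q)\text{ odd}]$.

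The core step is then to force $\Omega(m/L_0)$ unmatched vertices per copy. First, if $\beta<1/10$ then with probability $>0.8$ both $f(q)$ and $f(q+1)$ are even, forcing $f(q)-f(q+1)$ to be an even number in $\{-1,0,1\}$, i.e.\ $0$, so $q$ is unmatched; thus $\E[\#\text{unmatched}]>0.8m$ per copy and we are done outright. The case $\beta>9/10$ is symmetric, so assume $\beta\in[1/10,9/10]$. Partition $\mathbb{Z}_m$ into $\Theta(m/L_0)$ disjoint arcs of length $\ge 16L_0$; in an arc place $J+1$ cuts $q_0<\dots<q_J$ at spacing $L_0$ for a large absolute constant $J$. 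The bits $b_j:=f(q_j)\bmod 2$ are independent (the $W_{q_j}$ are disjoint) and each $\mathrm{Bernoulli}(\beta)$; but if the arc contained no unmatched vertex between $q_0$ and $q_J$, then $f$ moves by $\pm1$ at every cut in between, so (as $L_0$ is even) $b_0=b_1=\dots=b_J$, an event of probability $\beta^{J+1}+(1-\beta)^{J+1}\le 2(9/10)^{J+1}<1/4$. Hence each arc harbours an unmatched vertex with probability $>1/4$; since the determining windows are disjoint across arcs these events are independent, so $\E[\#\text{unmatched in }R_m]\ge\Omega(m/L_0)=\Omega(m/(r\Delta))$. Summing over the $n/m$ copies gives $\E[\#\text{unmatched in }G_n]\ge\Omega(n/(r\Delta))$, while correctness with high probability forces this expectation to be at most $\tfrac{\epsilon n}{1+\epsilon}+n\cdot n^{-\Omega(1)}<\epsilon n$; comparing the two yields $r=\Omega(1/(\Delta\epsilon))$. (For consistency one checks that the contradiction hypothesis $r<\Omega(1/(\Delta\epsilon))$ indeed gives $16L_0\le m$, using $m=\Theta(1/\epsilon)$.)

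The main obstacle is the core step just described: verifying that $f(q)$ really is a function of only an $O(r\Delta)$-position window, and that a genuinely local algorithm cannot keep the parity sequence $(f(q)\bmod 2)_q$ alternating over a long stretch --- this is the heart of why a defect is forced roughly every $r\Delta$ positions, and the only place where the geometry of $R_m$ (hence the $\Delta$ dependence) enters. Everything else --- the extreme-$\beta$ cases, passing from ``high-probability correct'' to an expectation bound, and making the ``independent identifiers'' claim rigorous (a large identifier range or a standard coupling suffices) --- is routine but needs to be handled with some care to get the stated constants.
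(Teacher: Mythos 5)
Your approach is genuinely different from the paper's and, for even $\Delta$, the core argument appears sound. The paper proves this lower bound via the Ben-Basat--Kawarabayashi--Schwartzman gadget framework: it designs symmetric gadgets on $O(\Delta r)$ nodes that are inserted forwards or backwards uniformly at random, groups the gadgets so that each group independently forces an unmatched vertex with constant probability, and for general degree builds a degree-five base graph whose edges are replaced by ``edge subgadgets'' that emulate long edges. Your circulant crossing-parity argument replaces that machinery with a single structural observation: on a thin circulant, the cut-crossing count $f(q)$ is a function of an $O(r\Delta)$-window, changes by at most $\pm 1$ per step, and is forced to preserve parity over a stretch unless some vertex in that stretch is unmatched; independent Bernoulli bits at spacing $\Theta(r\Delta)$ then force a defect roughly every $O(r\Delta)$ positions. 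This is a clean and arguably more elementary alternative, and modulo routine bookkeeping (window radii, the exact constant $J$, and passing from high-probability correctness to an expectation bound) it does yield the claimed $\Omega(\Delta^{-1}\epsilon^{-1})$ for all \emph{even} $\Delta$.

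The odd-$\Delta$ case, however, contains a genuine gap, and it is precisely the case that costs the paper the most work. A connected circulant on $\mathbb{Z}_m$ whose connection set consists of short odd offsets necessarily has even degree: any symmetric subset of $\{\pm 1,\ldots,\pm D\}$ with $D < m/2$ has even cardinality, since $s\mapsto -s$ is a fixed-point-free involution on it. The only way to make the degree odd in a Cayley graph on $\mathbb{Z}_m$ is to include the unique self-inverse element $m/2$; the ``$m\bmod 4$'' tweak you mention ensures $m/2$ is odd (and hence preserves bipartiteness), but that offset produces edges of span $\Theta(m)=\Theta(1/\epsilon)$, which destroys the locality claim that $f(q)$ is determined by an $O(r\Delta)$-window and hence the entire independence and parity argument that follows. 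The approach is salvageable with a different base graph --- for instance $\mathbb{Z}_m\times\{0,1\}$ with $(i,0)$ adjacent to $(i+s,1)$ for $s\in\{0,1,\ldots,\Delta-1\}$, which is vertex-transitive, bipartite, $\Delta$-regular for every $\Delta\ge 2$, has the perfect matching $\{(i,0),(i,1)\}$, and has all edges of span $O(\Delta)$ --- but as written your construction does not apply for odd $\Delta\ge 3$, and the window and translation-invariance arguments would need to be restated on the replacement graph.
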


One intuition behind this separation is that in $\Delta$-regular graphs, the number of nearly optimal solutions to maximum matching scales with $\Delta$ (see for instance~\cite{AsratianK99}). This abundance of nearly optimal solutions intuitively facilitates the rapid identification of one using randomness.

Nonetheless, there is a limit to how much the abundance of nearly optimal solutions in dense graphs helps. In fact, Theorem \ref{thm:UpperBoundMain} is essentially optimal thanks to a slight extension of \cite{KS25} that we prove in Appendix \ref{app:LowerDense}:

\begin{restatable}[Main Result I Is Essentially Optimal]{theorem}{thmmainoptimal}\label{thm:LowerDense}
Let $c = 10^5$. For any even degree $\Delta\ge 2$ and error $\epsilon > \max(\Delta^{-1/c}, n^{-1/(c\log\Delta)})$, any (randomized) LOCAL algorithm that takes less than $(\log(1/\epsilon))/c$ rounds produces a $(1 + \epsilon)$-approximate maximum matching in $\Delta$-regular graphs with probability at most $\exp(-\sqrt{n})$.
\end{restatable}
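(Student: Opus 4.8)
The plan is to reduce this statement to the lower bound of Khoury--Schild~\cite{KS25}, which shows that maximal matching on $\Delta$-regular graphs requires $\Omega(\min\{\log\Delta,\sqrt{\log n}\})$ rounds, by exploiting the key feature of their construction: a hard instance on $N$ nodes in which, after $t$ rounds, any algorithm must fail to match a constant fraction of a designated set of $\Theta(N)$ ``hard'' vertices (equivalently, leaves $\Omega(N)$ augmenting paths of length $3$ untouched). The idea is that ``$(1+\epsilon)$-approximate'' with $\epsilon$ of the stated size is only a tiny additive slack: if $\epsilon \approx \Delta^{-t}$ for the number of rounds $t$ we are trying to rule out, then a $(1+\epsilon)$-approximate matching must correctly resolve almost all of the hard gadgets, which is exactly what \cite{KS25} forbids in few rounds.

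First I would recall (or lift from \cite{KS25}) the precise parametrized form of their lower-bound graph: for every $\Delta$ and every $t \le c_0\min\{\log\Delta,\sqrt{\log N}\}$ there is a $\Delta$-regular graph $G$ on $N$ nodes, together with a maximum matching of size $M$, such that every $t$-round (randomized) algorithm outputs, with probability $1-\exp(-\Omega(N))$, a matching that misses at least $\alpha M$ edges of that optimum for some absolute constant $\alpha>0$; in particular its size is at most $(1-\alpha)M \le \big(1 - \tfrac{\alpha}{2}\big)\cdot\mathrm{opt}$. (The high-probability concentration of the failure is exactly the $\exp(-\sqrt n)$-type statement we want in the conclusion; I would track that constants line up so the bound reads $\exp(-\sqrt n)$.) Second, I would set the scale: given the target round count $t = (\log(1/\epsilon))/c$, note $\Delta^{t}$ roughly equals $(1/\epsilon)^{t/\log\Delta}$, and the two constraints $\epsilon > \Delta^{-1/c}$ and $\epsilon > n^{-1/(c\log\Delta)}$ are precisely what is needed so that $t < c_0\min\{\log\Delta,\sqrt{\log n}\}$ holds with room to spare (this is where $c=10^5$ is chosen large relative to the $c_0$ of \cite{KS25}). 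Third, on that instance a $(1+\epsilon)$-approximate matching has size at least $\mathrm{opt}/(1+\epsilon) \ge (1-\epsilon)\,\mathrm{opt}$; since $\epsilon \ll \alpha/2$ for $\Delta$ large (and the very-small-$\Delta$ cases, where $\log\Delta$ is the binding term, are handled by the same inequality since $\epsilon<\Delta^{-1/c}$ forces $\epsilon$ tiny), this contradicts the $(1-\alpha/2)\mathrm{opt}$ upper bound from the previous step. Hence no $t$-round algorithm can be $(1+\epsilon)$-approximate except with probability $\exp(-\sqrt n)$.

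The one genuinely non-routine step is the \emph{quantitative extraction} from \cite{KS25}: their paper is phrased for \emph{maximal} matching (a covering/domination requirement), and I need the slightly stronger ``leaves $\Omega(N)$ length-$3$ augmenting paths untouched after $t$ rounds'' statement, with the failure probability driven down to $\exp(-\Omega(N))$ rather than merely $1-o(1)$. I expect this to follow by inspecting their construction --- the hardness already comes from independent copies of a small gadget on which the algorithm's local view is symmetric, so a Chernoff bound over the $\Theta(N/\mathrm{poly}(\Delta))$ independent gadgets gives exponential concentration --- but verifying that the gadget count is $\exp$-many in $n$ (so that $\exp(-\Omega(\#\text{gadgets})) \le \exp(-\sqrt n)$ under our parameter regime) and that the augmenting paths can be chosen edge-disjoint from the algorithm's output is the part that needs care; this is carried out in Appendix~\ref{app:LowerDense}. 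Everything else is arithmetic comparing $\epsilon$, $\Delta^{t}$, and $n$.
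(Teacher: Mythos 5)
Your high-level plan (reduce to the Khoury--Schild lower bound, check scales, contradict the approximation guarantee) matches the paper, but the quantitative core is misremembered in a way that breaks the argument.

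The crucial error is your claim that on the \cite{KS25} hard instance, ``every $t$-round (randomized) algorithm outputs a matching that misses at least $\alpha M$ edges of that optimum for some absolute constant $\alpha>0$.'' No such constant-fraction statement exists. What \cite{KS25} gives (their Theorem~3, used as a black box in Appendix~\ref{app:LowerDense}) is that for any $r$-round \emph{matching-certified} algorithm $f$, the per-vertex survival probability satisfies $P_f \ge C_1^{-r}$ with $C_1 = 10^{80}$. This decays exponentially in $r$. With $r \approx (\log(1/\epsilon))/c$, we get $P_f \gtrsim \epsilon^{(\log C_1)/c}$. The point of choosing $c = 10^5$ is \emph{not}, as you suggest, merely to satisfy the round-count threshold $t < c_0\min\{\log\Delta,\sqrt{\log n}\}$; it is to ensure $(\log C_1)/c < 1/100$ so that the forced unmatched fraction $\epsilon^{1/100}$ dwarfs the allowed slack $\epsilon$. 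If your constant-$\alpha$ version of the lemma were true, $c$ could be any constant and the theorem would be dramatically stronger than it actually is --- so this is not a cosmetic issue but the load-bearing step of the proof.

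Two smaller mischaracterizations worth flagging. First, the concentration does not come from ``independent copies of a small gadget'': the hard instance is a single $\Delta$-regular graph of girth $\ge (\log_\Delta n)/1000$ (Lemma~1 of \cite{KS25}). The per-vertex survival events are \emph{not} independent; they are handled via Chernoff--Hoeffding with bounded dependence (Theorem~\ref{thm:CorrelatedChernoff}), using the fact that each vertex's outcome depends only on its $(r{+}1)$-hop ball, which has size at most $\Delta^{2r+2} \le n^{1/250}$ by the girth and the parameter regime. Second, the constraint $\epsilon > n^{-1/(c\log\Delta)}$ is there precisely to guarantee $r+1 < (\log_\Delta n)/1000$, i.e., that the $(r{+}1)$-flower the algorithm sees is a tree; this is what licenses applying the tree-based Theorem~3. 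Finally, the approximation contradiction is closed by the folklore matching lemma (Lemma~\ref{lem:regular-matching-size} with $\tau_e=\tau_v=\kappa=0$) showing $G$ has a matching of size $\ge (1 - 1/(\Delta+1))n/2$, and the constraint $\epsilon > \Delta^{-1/c}$ ensures $\Delta$ is large enough that this is $>(1-\epsilon)n/2$, so that $3\epsilon < \epsilon^{1/100}$ suffices to contradict the approximation claim. Your write-up omits this step but it is routine. The genuinely non-routine part you correctly identified --- converting an arbitrary randomized LOCAL algorithm into a matching-certified algorithm on the label-based model that \cite{KS25}'s Theorem~3 speaks about --- is also spelled out in Appendix~\ref{app:LowerDense}, via splitting each edge's random bits into five streams used for orientation, port numbering, IDs, and a randomness tape.
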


At the heart of the proof of our first main result (Theorem~\ref{thm:UpperBoundMain}) is a novel martingales-based analysis of the classical algorithm by Luby~\cite{Luby86}. In particular, we use this analysis to prove our \emph{\rrl{}} (Lemma~\ref{lem:informal-rrl}), which shows that running a single round of Luby's algorithm on the line graph of a sufficiently dense $\Delta$-regular graph and removing the matched nodes together with their incident edges yields an almost $\approx \Delta/2$-regular graph. Interestingly, this analysis also allows us to prove the following implication for maximal matching.

\subparagraph{Node-Averaged Complexity of Maximal Matching}
When performing a distributed computation, the algorithm may arrive at some parts of its final output before the rest. That is, some nodes know their local result before the algorithm fully terminates on the last few stragglers. The notion of node-averaged complexity \cite{BalliuGKO23, BarenboimT18, ChatterjeeGP20, Feuilloley17} captures this nuance by reasoning about the average over the times at
which the nodes finish their computation and commit to their outputs (we formally define the node-averaged complexity in Section~\ref{sec:model}). Node-averaged complexity captures important applications such as optimizing the total energy consumption of the network (see, for instance, \cite{ChatterjeeGP20} and references within).


In sharp contrast to the worst-case complexity of maximal matching in regular graphs, our martingale-based analysis of Luby's algorithm yields a \emph{constant} node-averaged complexity. This also contrasts with a lower bound of Balliu et~al.\ \cite{BalliuGKO23} who showed that in general graphs, the node-averaged complexity of maximal matching is $\Omega\left(\min\left\{\frac{\log \Delta}{\log \log \Delta},\sqrt{\frac{\log n}{\log\log n}}\right\}\right)$.\footnote{We note that the corresponding edge-averaged complexity of maximal matching is known to be $O(1)$, even in general graphs, due to the classical algorithms of~\cite{Luby86,alon1986fast} (when applied on the line graph) that delete a constant-fraction of the edges in each round~\cite{BalliuGKO23}. However, when an edge arrives at its part of the output (i.e., whether it is matched or not), it does not necessarily help an incident node to arrive at its part of the output, as several edges are not matched. This nuance is fundamental due to the lower bound by~\cite{BalliuGKO23} discussed above for the node-averaged complexity of maximal matching in general graphs.}

\begin{restatable}[Main Result II]{theorem}{thmnodeaveraged}
\label{thm:NodeAveragedMain}
  The node-averaged complexity of finding a maximal matching in $\Delta$-regular graphs is $O(1)$.
\end{restatable}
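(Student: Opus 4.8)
The plan is to exploit the \rrl{} (Lemma~\ref{lem:informal-rrl}) iteratively to argue that running Luby's algorithm on the line graph of a $\Delta$-regular graph matches and retires a constant fraction of the vertices in each round, so that the expected retirement time of a random vertex is $O(1)$. Concretely, apply one round of Luby on the line graph: each edge of $G$ is tentatively given a random priority, and an edge joins the matching if it has the locally highest priority among all edges incident to its two endpoints; matched vertices, together with their incident edges, are then removed. By the \rrl{}, conditioned on the (typical) event that all degrees so far have stayed within a $(1\pm o(1))$ factor of the predicted value, the surviving graph is (almost) $\Delta/2$-regular. A standard calculation -- which also underlies the classical $O(\log n)$ bound for Luby on line graphs -- shows that in a (nearly) $d$-regular graph a single Luby round matches each vertex with probability bounded below by an absolute constant $p_0>0$ (e.g.\ because each vertex has constant probability of being the unique local maximum among its $\le 2d$ incident edges, or more robustly via the edge-deletion argument of~\cite{BalliuGKO23} combined with near-regularity to convert deleted edges into retired vertices). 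Hence after $t$ rounds the probability that a fixed vertex is still unretired is at most $(1-p_0)^t$ plus the failure probability of the near-regularity event.

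The key subtlety is that node-averaged complexity averages the \emph{true} stopping times over all $n$ vertices, and Luby's algorithm in the LOCAL model runs until a maximal matching is globally fixed, which takes $\Theta(\log n)$ rounds in the worst case; so the $O(1)$ bound cannot come from a worst-case per-vertex bound. Instead I would let $X_v$ denote the round in which vertex $v$ commits (it is matched, or all its neighbors have been matched away so it knows it is unmatched), and bound $\E_{v}\left[X_v\right] = \tfrac1n\sum_v \E[X_v] = \sum_{t\ge 1}\Pr_v[X_v \ge t]$. Using the geometric decay above, $\Pr_v[X_v\ge t] \le (1-p_0)^{t-1} + q_t$, where $q_t$ is the probability that the near-regularity invariant of the \rrl{} has failed within the first $t$ rounds. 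Since the \rrl{} guarantees the invariant persists with probability $1 - \Delta^{-\omega(1)}$ (or at worst $1-1/\poly(n)$) for the relevant number of rounds, $\sum_t q_t$ is negligible, and $\sum_t (1-p_0)^{t-1} = 1/p_0 = O(1)$, giving node-averaged complexity $O(1)$.

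Two points require care. First, the \rrl{} as stated presumably needs $\Delta$ to be large (the ``sufficiently dense'' hypothesis); for small $\Delta$ -- including $\Delta = O(1)$ and in particular cycles -- the invariant-based argument does not apply, but there the claim is even easier: in any $\Delta$-regular graph with $\Delta = O(1)$, one Luby round already retires a constant fraction of vertices by an elementary local argument (constant probability of being a local maximum), and the same geometric-decay summation yields $O(1)$. So I would split into the cases $\Delta \le \Delta_0$ (elementary) and $\Delta > \Delta_0$ (via the \rrl{}), where $\Delta_0$ is the threshold from Lemma~\ref{lem:informal-rrl}. Second, after several Luby rounds the surviving graph is only \emph{almost} regular, not exactly regular, so I must verify that the per-round constant retirement probability and the \rrl{} itself are robust to $(1\pm o(1))$-multiplicative degree slack; this is exactly the setting the paper already works in (cf.\ the footnote that all upper bounds extend to almost-regular graphs), so the needed robustness should be available from the machinery built for Theorem~\ref{thm:UpperBoundMain}.

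The main obstacle I anticipate is making the iteration of the \rrl{} rigorous across $\omega(1)$ rounds: one application yields an almost-$\Delta/2$-regular graph, but to conclude geometric decay over, say, $\Theta(\log\log n)$ rounds (enough to drive the tail sum to $o(1)$) one must re-apply the lemma to a graph whose regularity and edge-randomness are only conditionally controlled, and the degrees are halving each round so the ``density'' hypothesis eventually fails. I would handle this by only running the \rrl{}-based analysis for $O(\log\log\Delta)$ rounds -- while the degree is still $\poly(1/\cdot)$-large -- which already suffices to make $\sum_{t > T}\Pr_v[X_v\ge t]$ a small constant (indeed $o(1)$) via the geometric bound, and absorbing the remaining, negligibly-probable stragglers using the trivial worst-case $O(\log n)$ bound weighted by their $1/\poly(n)$ probability mass. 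Combining, $\E_v[X_v] = O(1)$.
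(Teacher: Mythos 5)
Your high-level plan (use the \rrl{} to keep the graph near-regular, run Luby for a while to get geometric decay in the survival probability, then clean up the stragglers) matches the paper's approach, and your handling of the small-$\Delta$ case is also fine. However, there is a genuine gap in how you propose to finish the large-$\Delta$ case.

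The gap is in the sentence proposing to absorb ``the remaining, negligibly-probable stragglers using the trivial worst-case $O(\log n)$ bound weighted by their $1/\poly(n)$ probability mass.'' You appear to be conflating two different quantities. The $1/\poly(n)$ quantity is the \emph{failure probability} of the high-probability event that the \rrl{}-based invariant persists. But after $T = O(\log\log\Delta)$ Luby rounds, the \emph{typical} (i.e., high-probability) outcome is that roughly a $\Theta(1/\log\Delta)$ fraction of vertices are still unmatched --- this is exactly the $(1-p_0)^{T}$ geometric tail you computed, with $T$ chosen just large enough that the \rrl{} hypotheses (degree still $\gg \poly(1/\epsilon)$) continue to hold. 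Those survivors are not a $1/\poly(n)$-measure event; they are a deterministic-in-expectation $\approx 1/\log\Delta$ fraction of the vertex set. If you now charge each of them the worst-case $O(\log n)$ rounds of Israeli--Itai, the contribution to the node-average is $(1/\log\Delta)\cdot O(\log n)$, which is unbounded whenever $\Delta \ll n$ (e.g., $\Delta = \poly\log n$). So the proposed accounting does not close.

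What makes the argument go through (and what the paper does) is to observe that the cleanup phase does not need to be $O(\log n)$: one can repeatedly invoke an algorithm whose round count is parameterized by $\Delta$ rather than $n$ --- specifically, the nearly-maximal independent set routine of \cite{Bar-YehudaCGS17} applied to the line graph, with parameters chosen so that each $O(\log\Delta)$-round invocation leaves a node unmatched with probability $O(1/\log^2\Delta)$. Then each straggler is expected to be finished within $O(1)$ invocations, i.e., $O(\log\Delta)$ rounds, and the cleanup's contribution to the node-average is $(1/\log\Delta)\cdot O(\log\Delta) = O(1)$. The $1/\poly(n)$-failure-probability event you describe does also exist (the \rrl{} invariant failing), but it is handled separately: its total contribution, weighted by $O(\log\Delta)$ rounds of cleanup over all nodes, is again $O(1)$ because the failure probability is itself $O(1/\log\Delta)$. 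Without a $\Delta$-parameterized cleanup routine, the geometric-decay-plus-worst-case-tail accounting is insufficient.
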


Theorem~\ref{thm:NodeAveragedMain} implies that, with respect to node-average complexity, maximal matching is easier in regular graphs than in general ones. On the other hand, as discussed earlier, for worst-case complexity, the $O(\log n)$ bound by~\cite{IsraelI86} is still the best known for both regular and general graphs. Whether maximal matching is any easier in regular graphs with respect to worst-case complexity remains an interesting open question.

\subparagraph{Outline of the paper} In Section~\ref{sec:model} we provide some basic definitions and notation. In Section~\ref{sec:tec} we provide a brief technical overview of our results. 
\Cref{sec:reg} and \Cref{sec:localrrl} contain the technical details of our main results (Theorems~\ref{thm:UpperBoundMain} and~\ref{thm:NodeAveragedMain}). \Cref{sec:warmup} provides a $\poly(1/\epsilon)$-round algorithm for approximate matching in general regular graphs.  Finally, \Cref{sec:lower} contains the details of our lower bound construction. We defer some basic definitions and concentration inequalities to \Cref{app:prelims}, along with a new concentration inequality for sums of random variables using shifted martingale analysis.

\subsection{Model and Basic Definitions}\label{sec:model}

\subparagraph{The LOCAL and CONGEST model} In this work we are interested in the LOCAL and CONGEST models of distributed computing~\cite{peleg,Linial92}. In both models, there is a synchronized communication network of $n$ computationally unbounded
nodes that can communicate via communication rounds; this network both defines the communication pattern and serves as the input graph we want to solve our problem on. In each round, each node can send an unbounded-size (in the LOCAL model) or $O(\log n)$-bit (in the CONGEST model) message to each of its neighbors. The goal is to perform a task (e.g., find 
a large matching) while minimizing the number of communication rounds. At the end of all communication rounds, each node needs to know which of its neighbors it is matched with, if any.

\subparagraph{Basic Graph Notations} For a graph $G$, we denote by $V(G)$ the set of nodes in $G$ and by $E(G)$ the set of edges. Given a node $u\in V(G)$, we denote by $N_{G}(u)$ the set of neighbors of $u$ in $G$, and by $N^d_{G}(u)$ the set of nodes at distance exactly $d$ from $u$. When $G$ is clear from the context, we omit the letter $G$ from the notation and use $V,E$ and $N(u)$ for brevity. Throughout the paper, $n$ denotes the number of nodes in the graph. In $\Delta$-regular graphs, all the nodes have the same degree $\Delta$. In this work, we are interested in unweighted and undirected graphs.

\subparagraph{Maximum and maximal matching} A matching $\mathcal{M}$ in a graph $G$ is a set of edges in $E(G)$, where no two edges in $\mathcal{M}$ share a node. A maximum matching is a matching of maximum possible size. A $(1+\epsilon)$-approximate matching in $G$ is a matching $\mathcal{M}$ satisfying $OPT\leq (1+\epsilon)|\mathcal{M}|$, where $OPT$ is the size of a maximum matching. A maximal matching is a matching that is not a strict subset of any other matching.

\subparagraph{Node-averaged Complexity (\cite{BalliuGKO23})} The node-averaged complexity of an algorithm $A$, \(\text{AVG}_V(A)\), is defined as follows.
\[
\text{AVG}_V(A) :=  \max_{G \in \mathcal{G}} \frac{1}{|V(G)|} \cdot \sum_{v \in V(G)} \mathbb{E} \left[ T_v^G(A) \right]
\]
where $T_v^G(A)$ is the time it takes for $v$ to reach its part of the output when running $A$. The node-average complexity of a graph problem is defined as the node-average complexity of the best algorithm $A^*$ that minimizes $\text{AVG}_V(A^*)$.
\section{Technical Overview}
\label{sec:tec}

\subsection{Warmup: A \texorpdfstring{$\poly(1/\epsilon)$}{poly(1/eps)}-Round Algorithm for Regular Graphs}
\label{sec:TechWarmup}

To prove \Cref{thm:warmup}, we use a two stage algorithm. We begin with a $\Delta$-regular graph on $n$ vertices with target error parameter $\epsilon > n^{-1/20}$. The first stage (sampling) involves uniformly and independently sampling edges from the graph with the goal of reducing the degree from $\Delta$ to $\poly(1/\epsilon)$, plus some post-processing. After this stage, we have an (irregular) graph on at most $n$ vertices with degree at most $d = \poly(1/\epsilon)$ and have used up a constant fraction of our error parameter $\epsilon$ (i.e. this restricted graph still has an almost-perfect matching). Our second stage (matching) involves finding a matching in this restricted graph, and its runtime only depends on the error parameter $\epsilon$ and the max degree $d$.

For the sampling stage, uniformly sampling to degree approximately $\epsilon^{-2} \log n$ would result in a graph that retains a near-perfect matching with high probability (via a Chernoff bound plus a union bound), which then when combined with Harris' algorithm~\cite{Harris19} can already give us an $O(\poly(1/\epsilon)\log\log n)$-round algorithm. However, to avoid the dependence on $n$, we need to find a way to reduce this degree even further. Instead, we sample down to degree $\Theta(\epsilon^{-4})$. The resulting subgraph can be very irregular, but we manage to tease out enough structure to make our argument go through. In particular, some small fraction of vertices may have degree exceeding our target $\Theta(\epsilon^{-4})$ by more than a factor two. We use Chernoff with bounded dependence and the matching polytope to argue that stripping out these problematic high-degree vertices still leaves an almost-perfect matching.

For the matching stage, we find a matching in the constructed low-degree subgraph from the sampling stage. The state-of-the-art algorithms of Harris~\cite{Harris19} fit our task, but they have a small runtime dependence on $n$. Instead, we combine the hypergraph framework from~\cite{Harris19} (which was also used in\cite{FischerMU22,DBLP:conf/focs/FischerGK17,Bar-YehudaCGS17}) with some ideas from~\cite{GGKMR18}, as follows. In  $\poly(1/\epsilon)$ phases, we increase the size of the matching in each phase by $\poly(\epsilon)\cdot  n$ edges. The algorithm for each phase finds a large set of disjoint augmenting paths. This is done by first constructing a hypergraph $H$ with the same set of nodes as in the low-degree subgraph, where each hyperedge corresponds to a $1/\epsilon$-length augmenting paths. Then, we find an $O(1/\epsilon)$-approximate fractional matching in the hypergraph by using the algorithms of~\cite{KuhnMW06,Harris19,BenBasatEKS23}, and we round this fractional matching by sampling each hyperedge with probability proportional to its fractional value. By using a similar McDiarmid-type argument as in~\cite{GGKMR18}, we can show that this rounding produces an integral $O(1/\epsilon)$-approximate matching in the hypergraph $H$ with high probability. Furthermore, observe that the maximum degree of a node in the hypergraph $H$ is $\exp(\poly(1/\epsilon))$. Therefore, the algorithms of~\cite{KuhnMW06,Harris19,BenBasatEKS23} for finding a fractional matching in this hypergraph take $O(\poly(1/\epsilon))$ rounds, as desired.

\subsection{Exponentially Faster Algorithm}
\label{sec:OverviewDense}

In this section, we give a brief technical overview for our main result. On bipartite graphs, our algorithm for proving \Cref{thm:UpperBoundMain} simply runs $O(\log(1/\epsilon))$ rounds of Luby's algorithm~\cite{Luby86} that finds a large matching in each round. On general graphs, we first run a color coding step to find a large bipartite almost regular subgraph. While the algorithm is very simple, the main challenge is its analysis. In this work, we provide a new martingale-based analysis for Luby's algorithm.

Since our analysis relies on martingale concentration inequalities, it is more convenient to work with a sequential view of Luby's algorithm. Our key lemma (Lemma~\ref{lem:informal-rrl}) shows that after applying one round of Luby's algorithm (and removing the matched nodes along with incident edges), the remaining graph remains almost regular, i.e., almost all nodes have very similar degrees.
Even with the sequential view, classical martingale concentration inequalities are not sufficient to provide the high probability bounds that we require. We use two techniques, a \emph{shifted martingale trick} and \emph{scaled martingale trick}, in order to provide the requisite bounds. Roughly speaking, we show that the number of matched neighbors of a node behaves similarly to a martingale. Finally, we show that a constant fraction of the nodes are matched in each iteration of Luby's algorithm when the graph is almost regular. Combined with our key lemma, this implies that after $O(\log(1/\epsilon))$ rounds at most $\epsilon$ fraction of nodes remain unmatched. We now present a deeper overview of each of the above components in the subsections below.

\subsubsection{Sequential view of Luby's algorithm}
In the traditional distributed implementation of one round of Luby's algorithm, each edge $f$ picks a uniformly random number $r_f$ and some edge $e$ is chosen into the matching if and only if $r_e < r_{e'}$ for all neighboring edges $e'$. 

We consider the following sequential view - the edges of the graph arrive sequentially in a uniformly random order and an edge $e$ is chosen into the matching if and only if it arrives before any of its neighboring edges.

Assuming that there are no collisions (i.e. each edge chooses a different random number from its neighbors), it can be readily seen that the two algorithms above produce exactly the same distribution over matchings. For CONGEST algorithms, we restrict the range of the random numbers to be integers in $\{1, 2, \ldots, M\}$ for some polynomially large $M$. In this case, \cite{KawarabayashiKS20} showed that the two algorithms are identical up to a vanishingly small failure probability.

\subsubsection{Martingale Techniques}
\label{sec:martingale-overview}

Our analysis of a single round of Luby's algorithm relies on analyzing certain associated martingales and using martingale concentration inequalities.

\subparagraph{Shifted Martingale} Consider a collection $X_1, X_2, \ldots, X_t$ of boolean random variables and let $\E[X_i \mid X_1, \ldots, X_{i-1}] = p_i$. We are interested in obtaining high probability bounds on the sum $S_t = \sum_{i=1}^t X_i$. For example, consider $X_i$ to be the indicator random variable for the event that the $i$th edge is chosen into the matching. Now clearly, the random variables $\{X_i\}$ are not independent, so we cannot use standard Chernoff bounds. If we let $S_i = \sum_{j=1}^i X_j$, then one could hope to use martingale inequalities to get a concentration result for $S_i$. The challenge here is that the sequence $S_1, \ldots, S_t$ is not necessarily a martingale. Nevertheless, when the $p_i$ are bounded, then we can still utilize martingale concentration inequalities to show that $S_t$ does not deviate too much from its mean by considering the following \emph{shifted random variables}: 
\[Y_i = S_i - \mathbb{E}[S_i \mid Y_0, \ldots, Y_{i-1}] + Y_{i-1}\]
Since the sequence $Y_1, \ldots, Y_t$ is a martingale and further has bounded variance, we can use bounded variance martingale concentration bounds on $Y_t$ to give good concentration on $S_t$ as well. This shifting idea is inspired by~\cite{KawarabayashiKS20}, in which it is used in the context of approximate-maximum independent set. In this work, we generalize this shifting idea to broader scenarios in Theorems~\ref{thm:ShiftedMartingaleUpperBound} and~\ref{thm:ShiftedMartingaleLowerBound}, that are deferred to the Appendix (our analysis uses these theorems as black-boxes). 

\subparagraph{Scaled Martingale}
In some parts of our analysis, the shifted martingale trick doesn't suffice for our purposes. Consider a sequence of random variables $S_1, \ldots, S_t$ such that $\mathbb{E}[S_i \mid S_1, \ldots, S_{i-1}] = (1 - p)S_{i-1}$ for some fixed $0 < p < 1$. In this scenario, we expect the difference $S_i - S_{i-1}$ to decrease as $i$ increases. It is challenging for the shifted martingale trick to exploit such dynamics. The main reason is that in order for the shifted martingale to give a concentration result, we need the expected value of $S_i - S_{i-1}$ to be bounded by some fixed number, which wouldn't exploit the property that the difference is decreasing over time. To get a concentration result in such scenarios, we use another trick which we refer to as the \emph{scaling trick}. We can obtain a concentration bound for $S_t$ by considering the following scaled random variables:
\[F_i = \frac{S_i}{(1-p)^{i-1}}\]
Observe that $F_i$ is a martingale. This is because $\mathbb{E}[F_i\mid F_1,\cdots, F_{i-1}] = \frac{1}{(1-p)^{i-1}}\mathbb{E}[S_i\mid F_1,\cdots, F_{i-1}] = \frac{1}{(1-p)^{i-1}}\mathbb{E}[S_i\mid S_1,\cdots, S_{i-1}] = \frac{S_{i-1}}{(1-p)^{i-2}} = F_{i-1}$. Therefore, we can get a concentration result for $S_t$ by getting a concentration result for $F_t$. The main intuition behind the scaling trick is that it exploits the decreasing difference between $S_i$ and $S_{i-1}$ over time. This is exactly the reason for dividing $S_i$ by $(1-p)^i$. For instance, since $F_1=S_1$, if we get that $F_t$ doesn't deviate too far from $F_1$, it would imply that $S_i = (1-p)^i F_i \approx (1-p)^i F_1 = (1-p)^i S_1$, which is exactly where we're expecting $S_i$ to be at step $i$.

\subsubsection{Local \rrl{}}
We analyze a single round of Luby's algorithm using the above martingale based techniques. First, we show a lemma with the following local guarantee. 

\begin{lemma}[Local \rrl{} - Informal]
\label{lem:informal-localrrl}
    Let $G$ be a bipartite $\Delta$-regular graph and suppose we run one round of Luby's algorithm on $G$ and let $u$ be an arbitrary node in $G$. Then, with probability at least $1 - \exp(-\poly(\Delta))$, $\Delta/2 \pm o(\Delta)$ neighbors of $u$ get matched.
\end{lemma}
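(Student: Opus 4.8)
The plan is to fix a node $u$ and track, over the sequential (random-order) arrival of edges, how many of its $\Delta$ incident edges eventually get chosen into Luby's matching — equivalently, how many neighbors of $u$ get matched. Order the edges $e_1, e_2, \ldots, e_m$ by a uniformly random permutation, and for each incident edge $e$ of $u$ let $W_e$ be the indicator that $e$ is chosen into the matching (i.e., $e$ arrives before all of its neighboring edges in the line graph). The quantity of interest is $W = \sum_{e \ni u} W_e$, and we want to show $W = \Delta/2 \pm o(\Delta)$ with probability $1 - \exp(-\poly(\Delta))$.

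\textbf{Step 1: Compute the expectation.} First I would argue $\E[W] = \Delta/2 \pm o(\Delta)$. Consider the process restricted to just the incident edges of $u$ together with their neighbors. An incident edge $e = (u,v)$ is matched iff it is the first among all edges incident to $u$ or to $v$ to arrive; there are $2\Delta - 1$ such edges, but the key point is that as soon as $u$ becomes matched (by the first-arriving incident edge of $u$), no further incident edge of $u$ can be matched — so at most one incident edge of $u$ is ever matched, and thus $W \in \{0,1\}$... wait — that is wrong: $W$ counts matched \emph{neighbors} of $u$, not matched incident edges of $u$. Let me restate: a neighbor $v$ of $u$ is matched if $v$ becomes matched via \emph{any} of $v$'s $\Delta$ incident edges, not just the edge $(u,v)$. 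So $W = \sum_{v \in N(u)} \mathbbm 1[v \text{ matched}]$, and each $v$ is matched with probability $\Theta(1)$ (in a $\Delta$-regular graph, a random vertex is matched by one round of Luby with constant probability, roughly $1/2$ in the large-$\Delta$ limit by a Poisson-type computation on the local tree). The cleanest way to pin the constant to $1/2$: condition on the arrival order of the $\Delta$ edges at $v$; the minimum-rank one, say $(v,w)$, is matched iff no edge at $w$ has smaller rank, and a short computation (or appeal to the known fact that Luby on a $\Delta$-regular line graph matches a $\to 1/2$ fraction of vertices) gives $\Pr[v\text{ matched}] = 1/2 \pm o(1)$. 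Summing over $v \in N(u)$ gives $\E[W] = \Delta/2 \pm o(\Delta)$.

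\textbf{Step 2: Concentration via the martingale machinery.} The random variables $\{\mathbbm 1[v\text{ matched}] : v \in N(u)\}$ are not independent — whether $v$ is matched depends on the ranks of all edges within distance $2$ of $u$ — but the dependence is local and mild. I would expose the randomness edge by edge in the random arrival order and set $X_i = \mathbbm 1[\text{$i$-th edge relevant to $u$'s $2$-hop ball matches some new neighbor of $u$}]$ (or more simply apply a Doob/Azuma argument where $S_j$ is the partial count). Here is where the \textbf{shifted martingale trick} of Section~\ref{sec:martingale-overview} enters: the partial sums $S_j = \sum_{i \le j} X_i$ need not form a martingale, but $\E[X_i \mid X_1,\ldots,X_{i-1}]$ is bounded, so Theorems~\ref{thm:ShiftedMartingaleUpperBound} and~\ref{thm:ShiftedMartingaleLowerBound} (invoked as black boxes) give $\Pr[|W - \E[W]| > o(\Delta)] \le \exp(-\poly(\Delta))$. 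To set up the black-box application I need a bound on the number of summands ($O(\Delta^2)$, since only edges in $u$'s $2$-hop ball matter and there are $O(\Delta^2)$ of them in a $\Delta$-regular graph — here regularity is used) and a bound on the per-step variance, which is $O(1)$ per edge since each edge controls whether at most $O(1)$ new neighbors of $u$ get matched at that moment. Setting the deviation to, say, $\Delta^{0.9}$ makes both the $o(\Delta)$ claim and the $\exp(-\poly(\Delta))$ failure probability come out.

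\textbf{Main obstacle.} The delicate part is not the concentration — the shifted-martingale theorems are designed exactly for this — but rather (a) nailing the expectation constant to precisely $1/2$ rather than merely $\Theta(1)$, which requires a careful Poissonization or an exact computation on the depth-$2$ local neighborhood and use of bipartiteness/regularity to rule out short odd cycles affecting the count, and (b) choosing the right filtration so that the ``bounded conditional expectation'' and ``bounded variance'' hypotheses of Theorems~\ref{thm:ShiftedMartingaleUpperBound}/\ref{thm:ShiftedMartingaleLowerBound} genuinely hold at every step — in particular making sure that revealing one edge's rank does not change the conditional mean of the remaining count by more than $O(1)$, which is true because a single edge rank can flip the matched-status of at most its two endpoints and hence of at most $O(1)$ neighbors of $u$. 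Once those two points are handled, summing the per-node failure probabilities (Step 2) over the at most $\Delta$ neighbors still leaves failure probability $\exp(-\poly(\Delta))$, completing the proof.
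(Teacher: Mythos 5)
Your proposal identifies the right quantity to track ($|\mathcal{M}_u|$, the number of matched neighbors of $u$), the right sequential viewpoint, and the right black-boxes (the shifted-martingale Theorems~\ref{thm:ShiftedMartingaleUpperBound} and~\ref{thm:ShiftedMartingaleLowerBound}), but there is a genuine gap at the heart of the argument. The shifted-martingale theorems concentrate $S_t = \sum_i X_i$ around $\sum_i p_i$ (the running sum of conditional means), \emph{not} around the deterministic value $\E[S_t]$; their hypotheses require deterministic bounds $P^\ell_i \le p_i \le P^h_i$, and the resulting window for $S_t$ has width on the order of $\sum_i P^h_i - \sum_i P^\ell_i$. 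In the present setting, $p_i = q_i = |E_i| / (|E_u| - (i-1))$ where $|E_i|$ is the (random) number of surviving edges in $A_u$ at step $i$. A crude ``$q_i$ is bounded by $O(\Delta/|E_u|)$'' hypothesis, which is what your Step~2 supplies, only yields $\sum_i P^h_i = \Theta(\Delta)$ and $\sum_i P^\ell_i = \Theta(\Delta)$ with a constant-factor gap — not the tight $\frac{\Delta}{2}(1 \pm o(1))$ window that the lemma demands. Your phrase ``$\E[X_i \mid X_1,\ldots,X_{i-1}]$ is bounded'' is exactly the statement the overview warns is insufficient: ``the main challenge here is that $q_i$ itself is a random variable.''

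What the paper does to close this gap, and what your proposal omits entirely, is the \emph{scaled martingale trick}. The paper first shows $\E[|E_i| \mid E_{i-1}] \approx (1 - 2/k)|E_{i-1}|$ (Lemma~\ref{lem:newlyKilledEdges}, itself built on Claim~\ref{cor:EveryNodeFewLabeled} and Lemma~\ref{lem:ProbNodeIsLabeled}), then sets $F_i = |E_i| / \gamma^{i-1}$ so that $F_i$ becomes a super/sub-martingale, and uses bounded-variance super/sub-martingale inequalities (Theorems~\ref{thm:SuperMartingaleConcentrationVariance},~\ref{thm:SubMartingaleConcentrationVariance}) to show $|E_i| \approx \gamma^{i-1} \Delta^2$ for \emph{every} $i$ up to the stopping time $t = k\log\Delta/100$ (Lemmas~\ref{lem:E_iConcentratesLowerBound},~\ref{lem:E_iConcentratesUpperBound}). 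Only with these pointwise bounds on $|E_i|$ does the geometric-series sum $\sum_i q_i$ collapse to $\frac{\Delta}{2}(1 \pm O(\alpha) \pm \Delta^{-\Omega(1)})$, which is what the shifted-martingale theorems actually need. Two further omissions: (i) your Step~1 posits $\E[W] = \Delta/2 \pm o(\Delta)$ as a separate preliminary computation (``Poissonization,'' ``known facts about Luby''), but in the paper the value $\Delta/2$ emerges only \emph{through} the $\sum q_i$ bounds — there is no independent expectation calculation, and asserting the constant $1/2$ without the $|E_i|$ analysis is essentially assuming the conclusion; and (ii) the upper bound on $|\mathcal{M}_u|$ requires a separate stopping-time argument (Lemma~\ref{lem:FewUnlabeledNodes}) to show few edges get matched after step $t$ — the martingale analysis is truncated at $t$, so monotonicity handles the lower bound but not the upper bound, and your proposal does not address this asymmetry.
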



Recall that $N(u)$ is the set of neighbors of node $u$ and $N^2(u)$ is the set of nodes at distance exactly $2$ from $u$. Let $A_u$ be the set of edges between $N(u)$ and $N^2(u)$. To prove \Cref{lem:informal-localrrl}, we show that roughly $\Delta/2$ edges from $A_u$ are chosen into the matching with probability at least $1 - \exp(-\poly(\Delta))$. While the claim holds trivially in expectation, it is challenging to obtain high probability bounds due to the dependencies between $u$'s neighbors. 

To simplify exposition, let $E_u$ denote the set of edges that are at most 3 hops away from $u$, i.e. $E_u = E \cap \left\{(\{u\} \times N(u)) \cup (N(u) \times N^2(u)) \cup (N^2(u) \times N^3(u))\right\}$. Clearly edges not in $E_u$ do not affect any of the edges in $A_u$ and hence can be ignored, so we restrict the analysis to assume that only edges from $E_u$ arrive in the sequential view of Luby's algorithm.

Let $\M_u \subset A_u$ be the set of matching edges chosen from $A_u$ and our goal is to get high probability upper and lower bounds on $|\M_u|$. Let $X_i \in \{0, 1\}$ be an indicator random variable for the event that the $i$th arriving edge belongs to $\M_u$. Let $q_i = \mathbb{E}[X_i \mid X_1, \ldots, X_{i-1}]$ be the probability that the $i$th edge is from $A_u$ and none of its neighboring edges have already arrived. One could try now to utilize the shifted martingale trick described above to obtain concentration bounds on $|\M_u| = \sum_{i} X_i$. However, the main challenge here is that $q_i$ itself is a random variable. Our goal is to analyze $q_i$ using martingales analysis. Roughly speaking, we use the scaled martingale trick discussed above to get concentration results for $q_i$ for all $i$, which enables us to apply the shifted martingale trick (i.e., Theorems~\ref{thm:ShiftedMartingaleUpperBound} and~\ref{thm:ShiftedMartingaleLowerBound}) to get the desired bounds on $|\mathcal{M}_u|$.

\subparagraph{Analyzing $q_i$:} To analyze $q_i$, we need to understand how many edges \emph{survive} after the first $i-1$ edges have already arrived. Intuitively, an edge is still surviving in iteration $i$ if neither it nor any of its neighbors was not sampled in the first $i-1$ iterations. Let $E_i$ be the set of surviving edges in $A_u$ at the beginning of the $i$th iteration. Then we have, $q_i = \frac{|E_i|}{|E_u| - (i-1)} = \frac{|E_i|}{\Delta(|N^2(u)| + 1) - (i-1)}$ since a sampled surviving edge is always added to the matching.

\subparagraph{The final key property towards the proof:} To get high probability bounds on $|E_i|$, we first prove that $\mathbb{E}[|E_i|\mid E_{i-1}] \approx (1-2/k)|E_{i-1}|$, where $k = |N^2(u)|$. This is exactly the setting where the scaled martingale trick can help us to get a concentration result for $|E_i|$, which paves the way for proving Lemma~\ref{lem:informal-localrrl}.

\subsubsection{\rrl{}}
We use \Cref{lem:informal-localrrl} to show that applying a single round of Luby's algorithm on a regular graph and removing the matched nodes along with incident edges results in a graph that is still almost regular.

\begin{lemma}[\rrl{} - Informal]
\label{lem:informal-rrl}
    Let $G$ be a bipartite $\Delta$-regular graph and let $G'$ be the graph obtained by running one round of Luby's algorithm on $G$ and deleting matched nodes and their incident edges. Then all but $o(1)$ fraction of nodes in $G'$ have their degree in the range $\Delta/2 \pm o(\Delta)$ with high probability.
\end{lemma}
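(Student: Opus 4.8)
The plan is to deduce the lemma from the per-node guarantee of \Cref{lem:informal-localrrl} by a counting argument, once the short-range dependencies between nearby nodes have been tamed.

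First, the reduction. Let $M$ denote the matching produced by one round of Luby's algorithm, so that $V(G')$ is the set of unmatched nodes and $\deg_{G'}(u) = \Delta - m(u)$ for $u \in V(G')$, where $m(u)$ is the number of neighbors of $u$ matched by $M$. Call a node $u \in V(G)$ \emph{bad} if $m(u) \notin [\Delta/2 - o(\Delta),\, \Delta/2 + o(\Delta)]$, with the concrete slack taken from \Cref{lem:informal-localrrl}; this is a well-defined event for every node, matched or not, and \Cref{lem:informal-localrrl} gives $\bbP[u \text{ bad}] \le \exp(-\poly(\Delta)) =: \delta$ for each fixed $u$, hence $\E[\abs{B}] \le \delta n$ for $B$ the (random) set of bad nodes. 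It suffices to prove that $\abs{B} = o(n)$ with high probability: the nodes of $G'$ with $G'$-degree outside $\Delta/2 \pm o(\Delta)$ all lie in $B$, and a short argument below shows $\abs{V(G')} = \Theta(n)$ on that event, so $o(n)$ bad nodes is an $o(1)$ fraction of $V(G')$.

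Markov's inequality alone only yields $\abs{B} = o(n)$ with constant probability, so the crux is to boost this to high probability despite the indicators $\mathbbm{1}[u\text{ bad}]$ not being independent. I would split on how $\delta$ compares to $1/n$. If $\delta \le n^{-(c+2)}$ --- which, since $\poly(\Delta) \to \infty$ polynomially in $\Delta$, happens once $\Delta$ exceeds a fixed $\poly\log n$ threshold --- a union bound over all $n$ nodes gives $\bbP[B \neq \emptyset] \le 1/n^c$, i.e.\ no node is bad at all. Otherwise $\delta > n^{-(c+2)}$, forcing $\Delta \le \poly\log n$; here I would use locality: the event $\{u\text{ bad}\}$ depends only on the random ranks of the edges within $3$ hops of $u$ (the set $E_u$ of the overview), so bad-events of nodes at $G$-distance more than $6$ are independent. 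Properly color $V(G)$ so that any two nodes at distance at most $6$ receive distinct colors; since the sixth power of $G$ has maximum degree $O(\Delta^6)$, greedy coloring uses $O(\Delta^6) = \poly\log n$ colors, and within each class the bad-indicators are mutually independent. A Chernoff bound then shows that a class of size $m_j$ has at most $2\delta m_j + O(\log n)$ bad nodes except with probability $n^{-(c+2)}$; a union bound over the $O(\Delta^6) \le n$ classes gives, with high probability, $\abs{B} \le 2\delta n + O(\Delta^6 \log n)$. Since $\Delta > (1/\epsilon)^c$ forces $\delta = \exp(-\poly(\Delta))$ to be far below the target $o(1)$ fraction, and $\Delta^6 \log n = \poly\log n = o(n)$ in this regime, we conclude $\abs{B} = o(n)$ with high probability.

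Finally, on the event $\abs{B} = o(n)$ I would observe that $\sum_{u \in V(G)} m(u) = \Delta \cdot \abs{\{v : v \text{ matched}\}}$, while each of the $n - \abs{B}$ good nodes contributes $\Delta/2 \pm o(\Delta)$ and each bad node at most $\Delta$; hence the number of matched nodes is $n/2 \pm o(n)$, so $\abs{V(G')} = n/2 \pm o(n) = \Theta(n)$, and the $o(n)$ bad nodes are indeed an $o(1)$ fraction of $V(G')$. The main obstacle is the middle step: converting the per-node exponential tail of \Cref{lem:informal-localrrl} into a high-probability bound on the \emph{global} count of deviating nodes. A naive union bound fails when $\Delta$ is only polylogarithmic in $n$, and the bad-events are dependent, so neither tool suffices alone; the distance-$6$ coloring is exactly what restores independence so a Chernoff bound applies, and the split according to $\delta$ against $1/n$ is what covers the entire range of $\Delta$.
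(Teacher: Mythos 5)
Your proposal is correct and follows essentially the same route as the paper: deduce the global statement from the per-node Local \rrl{} by a union bound when $\Delta \geq \poly\log n$ (so the $\exp(-\poly(\Delta))$ tail beats $1/\poly(n)$), and otherwise exploit the bounded dependence radius of the per-node bad events; the paper invokes the bounded-dependence Chernoff bound (Theorem~\ref{thm:CorrelatedChernoff}, keyed on the chromatic number of the dependency graph) as a black box in Lemma~\ref{lem:SparseRecursiveRegularity}, which is exactly the distance-$O(1)$ coloring plus per-class Chernoff plus union bound that you unroll by hand. Your final step bounding $|V(G')| = \Theta(n)$ to convert $o(n)$ bad nodes into an $o(1)$ fraction of $V(G')$ is a clean piece of bookkeeping that the paper handles implicitly via its $|V(G_i)| \ge \epsilon n$ hypothesis in the multi-round version.
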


When $\Delta$ is large enough ($\Delta \geq \poly \log n$), \Cref{lem:informal-localrrl} followed by a union bound suffices to show that all nodes have their degree in the range $\Delta/2 \pm o(\Delta)$ with high probability. On the other hand, when $\Delta$ is small, then by \Cref{lem:informal-localrrl}, the expected number of nodes that do \emph{not} have the requisite degree is only $n \cdot \exp(-\poly(\Delta))$. Further, the degree of a node after a round of Luby's algorithm only depends on $O(\poly(\Delta))$ other nodes. So, we can use a Chernoff-Hoeffding with bounded dependence inequality to argue that all but a $\exp(-\poly(\Delta))$-fraction of nodes have the required degree.

\subsubsection{Proof Sketch of \texorpdfstring{\Cref{thm:UpperBoundMain}}{mainthm}}
We first argue that running one round of Luby's algorithm on an almost regular graph matches a constant fraction of the nodes with high probability. We note that while this claim is easy to see in expectation, obtaining a high probability bound requires the use of our shifted martingale technique, particularly when the graph becomes only almost regular (instead of fully regular, as in the first iteration). Combined with \Cref{lem:informal-rrl} that states that the resulting graph remains almost regular, we get that after $O(\log 1/\epsilon)$ rounds, at most $\epsilon$-fraction of nodes remain unmatched.

\subsubsection{Extension to Node-Averaged Complexity}

We now sketch our result for the node-average complexity of MM given the Regular-Graph Preservation Lemma. We can combine \Cref{lem:informal-rrl} with the observation that in an (almost) regular graph, every node has a constant probability of being matched (and hence is only expected to survive a constant number of rounds under Luby). Since node-averaged complexity cares about the expected time to match a node, our initial rounds of Luby are consistent with our $O(1)$ node-averaged complexity goal. However, we cannot continue this trick until all nodes are matched, since eventually the graph will become too irregular for us to proceed. Instead, we only use Luby until a $O(1 / \log \Delta)$ fraction of nodes remain; after so few nodes remain, it is safe for us to finish by repeatedly calling a previous black box~\cite{Bar-YehudaCGS17} that takes $O(\log \Delta/\log\log\Delta)$ rounds to match a constant fraction of nodes but which can handle general (i.e. irregular) graphs. Similar to our previous argument, nodes are only expected to survive a constant number of black box invocations, but since there are so few nodes left our node-averaged complexity is still $O(1)$. Unlike the Luby phase, it is safe to continue this phase until all nodes are matched since it no longer matters how regular the graph is.

\subsection{Overview of Lower Bounds}\label{sec:TechLB}

To prove lower bounds against LOCAL algorithms, we use some ideas from a lower bound construction of Ben-Basat, Kawarabayashi, and Schwartzman~\cite{BenbasatKS19} that gave $\Omega(1 / \epsilon)$ lower bounds in the LOCAL model for $(1 + \epsilon)$ maximum matching as well as other approximate graph problems. The critical idea in that proof is that when an $r$-round LOCAL algorithm is deciding what to do with a node $v$ (e.g. who to match it with), it can only use the local structure of the graph around $v$; in particular, it can only see the $r$-hop neighborhood around $v$. This means that we could cut out this $r$-hop neighborhood from the graph, put it back in differently, and the algorithm would have to make the same decision on $v$ (or for randomized algorithms, the same distribution on decisions). The proof revolves around designing these $r$-hop neighborhoods as (symmetrical) gadgets, then showing that a constant number of gadgets will (with constant probability) induce an unmatched node between them. The BKS proof uses a simple path as their gadget which involves $O(r)$ nodes. Hence the algorithm can be shown to have an overall error rate of one error per $O(r)$ nodes, so the critical round threshold to allow for $(1 + \epsilon)$-multiplicative approximations is $\Omega(1 / \epsilon)$ rounds.

Relative to their result, the main upgrade we want to make is that the counterexample graph(s) should be $\Delta$-regular. It is relatively straightforward to take BKS path gadgets and stick them into a large cycle, recovering their $\Omega(1 / \epsilon)$ round lower bound for $2$-regular bipartite graphs. The main technical hurdle we overcome is generalizing to higher degree. We know from our upper bounds that there must be some degradation as the degree $\Delta$ increases, so the main question is, how much efficiency do we need to lose to burn our excess degree? We design gadgets with $O(\Delta r)$ nodes and asymptotically maintain the original error rate of one error per constant number of gadgets (the cycle proof argues about the outcome of two gadgets inducing a mistake, but for the general case we reason about the outcome of five gadgets), yielding $\Omega(1/(\Delta \epsilon))$ round lower bounds.
\section{An $O(\log 1/\epsilon)$-round Algorithm}
\label{sec:reg}

In this section, we show that there is an $O(\log(1/\epsilon))$-round algorithm to find a $(1+\epsilon)$-approximate maximum matching in $\Delta$-regular graphs when $\Delta \geq (\frac{1}{\epsilon})^c$ for a large constant $c$. We note that the proof makes no attempt to optimize the constant $c$ and we expect that it can be reduced significantly by a more careful analysis.

\thmmain*

\subparagraph{A Roadmap for the Proof of Theorem~\ref{thm:UpperBoundMain}:} The algorithm for proving Theorem~\ref{thm:UpperBoundMain} first runs a simple color coding step to find a bipartite almost regular subgraph, and then runs Luby's algorithm for $O(\log(1/\epsilon))$ rounds. Since our analysis for Luby's algorithm uses several martingale inequalities, it is cleaner to work with the sequential view of Luby that we present in \Cref{sec:sequential-luby}. In Section~\ref{sec:constantApprox} we show that a single round of Luby's algorithm in an almost regular graph matches a constant fraction of the nodes, with high probability.\footnote{In fact, we prove a more general claim where it suffices that a constant fraction of the edges are balanced.} In Section~\ref{sec:PuttingEverythingTogether}, we state our key \rrl{} that shows that running one round of Luby's algorithm on an almost regular graph and deleting the matched nodes yields an almost regular graph. A local version of that lemma, which bounds the probability that the degree of a particular node $u$ almost exactly halves after each round is the most technical part of the proof and we devote \Cref{sec:localrrl} for its proof. Finally, in \Cref{sec:putting-it-together}, we put these components together to finish the proof.

\subsection{Sequential View of Luby's Algorithm}
\label{sec:sequential-luby}

In the traditional distributed implementation of one round of Luby's algorithm, each edge $e$ picks a uniformly random integer $r_e$ in the set $\{1, 2, \ldots, M\}$ for an appropriately chosen polynomially large $M$. 
An edge $e$ is chosen to be in the matching if $r_e < r_{e'}$ for all neighboring edges $e'$. This distributed view is formally presented in Algorithm \ref{alg:L}.

\begin{algorithm}[H]
\SetAlgoLined
\DontPrintSemicolon
\KwData{An unweighted graph $G=(V,E)$, where $|E|=m$ and a constant $c'> 0$ }
\KwResult{A matching $\mathcal{M}$}	
$I\gets \emptyset$\;	
\For{each edge $e\in E$}{		
    $r_e\gets $ uniformly random number in $\{1,2,\hdots,100m^{c'+2}\}$\;		
}
\For{each edge $e\in E$}{
    Add $e$ to $\mathcal{M}$ if $r_e < r_{e'}$ for all neighboring edges $e'$ of $e$ in $G$ (i.e., for edges $e'$ where $e\cap e'\neq \emptyset$)\;		
}
\Return $\mathcal{M}$\;	
\caption{Distributed One Round Luby}
\label{alg:L}
\end{algorithm}

\begin{algorithm}[H]
\SetAlgoLined
\DontPrintSemicolon
\KwData{an unweighted graph $G=(V,E)$}
\KwResult{a matching $\mathcal{M}$}
$\mathcal{M}\gets \emptyset$\;
$U\gets E$\;
\While{$U\ne\emptyset$}{		
    $e\gets $ uniformly random element of the set $U$ 
    $U\gets U\setminus \{e\}$\;		
    \If{$\{e'\in E\mid e'\cap e\neq \emptyset\} \subseteq U$}{
        $\mathcal{M} \gets \mathcal{M} \cup \{e\}$\;
    }		
}
\Return $\mathcal{M}$\;
\caption{SeqLuby - Sequential View of Distributed One Round Luby}
\label{alg:SeqL}
\end{algorithm}

However, it is convenient for our analysis to work with a sequential view of Luby's algorithm. In the sequential view, the edges are sequentially sampled independently without replacement. When an edge $e$ is sampled, it is added to the matching only if none of its neighboring edges had been sampled earlier. Crucially, unlike the greedy matching algorithm, a sampled edge $e$ is \emph{blocked} by a neighboring edge $e'$ that was previously sampled even if $e'$ itself is not in the matching. This sequential view is formally presented in Algorithm \ref{alg:SeqL}. \cite{KawarabayashiKS20}
showed that the two algorithms are equivalent by showing that they produce the same distribution over matchings with high probability\footnote{They stated the claim in the context of independent sets.}. We formalize this in the following proposition.

\begin{proposition}[Proposition 3 in~\cite{KawarabayashiKS20}]
\label{prop:indep-equiv}
	For any unweighted graph $G$ with $m$ edges and constant $c' > 0$, let $\mathcal{D}_0$ and $\mathcal{D}_1$ be the distributions over matchings produced by  Algorithm~\ref{alg:L} and Algorithm~\ref{alg:SeqL}, respectively. The total variation distance between $\mathcal{D}_0$ and $\mathcal{D}_1$ is at most $1/m^{c'}$ from the distribution produced by Algorithm~\ref{alg:SeqL}; in particular
	
	$$\sum_{\text{matchings } \mathcal{M}_0} |\mathbb{P}_{\mathcal{M}\sim \mathcal{D}_0}[\mathcal{M} = \mathcal{M}_0] - \mathbb{P}_{\mathcal{M}\sim \mathcal{D}_1}[\mathcal{M} = \mathcal{M}_0]|\le 1/m^{c'}$$
\end{proposition}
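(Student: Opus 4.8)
The plan is to couple the randomness of Algorithm~\ref{alg:L} with that of Algorithm~\ref{alg:SeqL} so that the two agree outside a low-probability "collision" event. The key observation is that both algorithms put an edge $e$ into the matching exactly when $e$ beats every neighboring edge: in Algorithm~\ref{alg:L}, $e$ is added iff $r_e < r_{e'}$ for all neighbors $e'$; in Algorithm~\ref{alg:SeqL}, $e$ is added iff every neighbor of $e$ is still in $U$ when $e$ is drawn, i.e.\ iff $e$ precedes all of its neighbors in the sampling order. Consequently, as long as the values $(r_e)_{e\in E}$ drawn in Algorithm~\ref{alg:L} are pairwise distinct, sorting them in increasing order yields a genuine total order on $E$, and the matching output by Algorithm~\ref{alg:L} is exactly the matching Algorithm~\ref{alg:SeqL} would output if its sampling order were this induced order.

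Let $C$ be the event that the values $(r_e)_{e\in E}$ are not pairwise distinct, and write $p=\Pr[C]$. Conditioned on $\overline{C}$, the induced order on $E$ is, by exchangeability of the i.i.d.\ uniform draws under relabeling of the edges, uniformly distributed over all $m!$ orderings of $E$ — which is precisely the distribution of the sampling order used by Algorithm~\ref{alg:SeqL} (sampling without replacement from $U$ produces a uniformly random permutation). Since in both algorithms the output matching is the same deterministic function of that order, the conditional law of $\mathcal{M}$ given $\overline{C}$ under $\mathcal{D}_0$ equals $\mathcal{D}_1$. Hence $\mathcal{D}_0 = (1-p)\,\mathcal{D}_1 + p\,\mathcal{D}_0(\,\cdot\mid C)$, and therefore $\sum_{\mathcal{M}_0}\bigl|\mathbb{P}_{\mathcal{M}\sim\mathcal{D}_0}[\mathcal{M}=\mathcal{M}_0]-\mathbb{P}_{\mathcal{M}\sim\mathcal{D}_1}[\mathcal{M}=\mathcal{M}_0]\bigr| = p\sum_{\mathcal{M}_0}\bigl|\mathbb{P}_{\mathcal{D}_0}[\mathcal{M}_0\mid C]-\mathbb{P}_{\mathcal{D}_1}[\mathcal{M}_0]\bigr|\le 2p$.

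Finally, by a union bound over unordered pairs of distinct edges, $p \le \binom{m}{2}\cdot\Pr[r_e=r_{e'}] = \binom{m}{2}/(100\,m^{c'+2}) \le (m^2/2)/(100\,m^{c'+2}) = 1/(200\,m^{c'})$, so the $\ell_1$ distance is at most $2p \le 1/(100\,m^{c'}) \le 1/m^{c'}$, as claimed.

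The argument is essentially routine; the one point that needs care is the claim that, conditioned on all the $r_e$ being distinct, the order they induce on $E$ is uniformly random. This follows from exchangeability of i.i.d.\ draws, but it is the hinge of the proof, since it is what lets us identify the "no-collision" part of $\mathcal{D}_0$ with $\mathcal{D}_1$ \emph{exactly} rather than merely approximately. A secondary thing worth double-checking is that the winner rule really coincides edge-by-edge in the two algorithms — in particular that in Algorithm~\ref{alg:SeqL} an edge is blocked by an earlier-sampled neighbor regardless of whether that neighbor itself entered the matching, which is exactly the "$\subseteq U$" test in the pseudocode and the analogue of the "$r_e<r_{e'}$ for all neighbors" condition in Algorithm~\ref{alg:L}.
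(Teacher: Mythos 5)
Your proof is correct. The paper itself does not supply a proof of this proposition: it is imported wholesale from \cite{KawarabayashiKS20}, with the footnote observing only that the original statement was phrased for independent sets on the line graph rather than matchings. So there is no in-paper argument to compare against, but your self-contained coupling argument is exactly the right one for this kind of statement, and every step checks out. The hinge you correctly flag — that conditioned on no ties among the $r_e$'s, the induced ranking is a uniformly random permutation of $E$, which is precisely the law of the sampling order in Algorithm~\ref{alg:SeqL} — is the standard exchangeability fact and is applied correctly; on the complement event you correctly get the deterministic-function identity between the two outputs, and the final collision-probability union bound $\binom{m}{2}/(100m^{c'+2})\le 1/(200m^{c'})$ gives $\ell_1$ distance at most $1/(100m^{c'})$, comfortably within the claimed $1/m^{c'}$. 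One small point worth saying explicitly if you were to write this up formally: in both pseudocodes the set $\{e'\mid e'\cap e\neq\emptyset\}$ as written technically contains $e$ itself (which would make the test in Algorithm~\ref{alg:SeqL} vacuously false, since $e$ was just removed from $U$), so one should read it as the set of \emph{proper} neighbors of $e$; you clearly do, and your observation that blocking in the sequential view depends only on a neighbor being sampled earlier (not on it entering the matching) is exactly the point that makes the edge-by-edge acceptance rules coincide.
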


Following \Cref{prop:indep-equiv}, in this paper we focus on analyzing the sequential view of Luby whenever we want to make claims about a single round of Luby's algorithm. This incurs an additional tiny $1/\poly(n)$ failure probability, which we can tolerate.

\subsection{Analyzing One Round of Luby's Algorithm on Almost Regular Graphs}
\label{sec:constantApprox}

In this section we show that a single round of Luby's algorithm on almost regular graphs matches a constant fraction of the nodes with high probability. In fact, we prove this claim for a more general family of graphs in the following theorem. 

\begin{theorem}\label{thm:MatchConstantFraction}
 Let $G$ be an undirected graph with $n$ nodes and $m$ edges, and let $d = 2m/n$ be the average degree. Let $E^{low} = \{(u, v) \in E(G) \mid deg(u) \leq 2d, deg(v) \leq 2d\}$ be the set of edges induced by nodes with degree at most $2d$. If $|E^{low}| \geq m/2$, then one round of Luby's algorithm (Algorithm \ref{alg:L}) finds a matching in $G$ of size at least $n / 288$ with high probability.
\end{theorem}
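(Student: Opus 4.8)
The plan is to use the sequential view of Luby's algorithm (Algorithm~\ref{alg:SeqL}) and lower-bound the number of matched edges by analyzing only the edges in $E^{low}$. Call an edge of $E^{low}$ \emph{good} if, when it is sampled in the sequential process, none of its (at most $4d-2$) neighboring edges has been sampled before it, and moreover it is among the first (say) $m/(12d)$ edges in the random permutation of $E$. The first condition is exactly the event that the edge enters $\mathcal M$; the second condition is a convenient truncation that makes each such event have probability at least a constant times $1/d$. Indeed, conditioned on an edge $e \in E^{low}$ landing in position $i \le m/(12d)$ of the random order, the probability that none of its $\le 4d$ incident edges precedes it is at least $1 - 4d \cdot (i-1)/(m-1) \ge 1 - 4d/(12d) = 2/3$ for $n$ large. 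So $\E[X_i \mid X_1,\dots,X_{i-1}]$, where $X_i$ indicates that the $i$-th sampled edge is a good $E^{low}$-edge, is at least $\tfrac{2}{3}\cdot\tfrac{|E^{low}| - (i-1)}{m - (i-1)} \ge \tfrac{2}{3}\cdot\tfrac{m/2 - m/(12d)}{m} \ge \tfrac{1}{4}$ for $i \le m/(12d)$, using $|E^{low}| \ge m/2$ and $d \ge 1$.

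Next I would bound the number of matched edges from below by the number of good edges, and apply a concentration bound to the partial sum $S_T = \sum_{i=1}^{T} X_i$ with $T = \lceil m/(12d)\rceil$. The sequence $S_1,\dots,S_T$ is not a martingale, but since each conditional expectation $p_i = \E[X_i\mid X_1,\dots,X_{i-1}]$ lies in $[1/4,1]$, this is precisely the regime handled by the \emph{shifted martingale} lower-bound theorem (Theorem~\ref{thm:ShiftedMartingaleLowerBound}), which I would invoke as a black box. It yields that with high probability $S_T \ge \tfrac12 \E[S_T] \ge \tfrac18 T = \Omega(m/d) = \Omega(n)$. (One must check the failure probability is $1/\poly(n)$; this is fine because $T = \Omega(n/d)$ and $d \le n$, so $T \ge \Omega(1)$—actually one needs $T$ polynomially large, which holds unless $d$ is within a constant of $n$, a case that can be dispatched separately since then a constant-size matching already has $\Omega(n)$ edges after noting $m = \Omega(n^2)$ forces... hmm, better: if $d$ is huge then $m/2 \le |E^{low}|$ still gives many low-degree-incident edges and the truncation threshold should instead be taken as $\min(m/(12d), \text{const}\cdot n)$; I would just set $T = \lceil n/12 \rceil$ when $d$ is $O(1)$ and $T=\lceil m/(12d)\rceil$ otherwise, so that $T = \Theta(n)$ always and concentration is over a polynomially long sum.) Then, since a matching of size $s$ saturates $2s$ nodes, we get at least $2S_T / \dots$—wait, we only need the \emph{size} bound: $|\mathcal M| \ge S_T \ge n/288$ after tracking constants carefully (the $288$ suggests the intended constants are roughly $2/3$ for the no-collision probability, $1/4$ after accounting for $|E^{low}|\ge m/2$, a further factor $1/2$ from the shifted-martingale lower tail, and $1/12$ from the truncation, giving $\tfrac{1}{12}\cdot\tfrac{1}{4}\cdot\tfrac12 \cdot \tfrac{2m}{d} \cdot \tfrac1n = \tfrac{1}{144}\cdot\tfrac{2m}{dn}$; since $d = 2m/n$ this is $n/144$, and doubling-up slack lands at $n/288$).

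The main obstacle I anticipate is making the conditional-probability lower bound $p_i \ge 1/4$ genuinely hold \emph{pointwise} for every history, not just in expectation. The subtlety is that whether the $i$-th sampled edge $e$ is "blocked" depends on which of $e$'s neighbors were sampled among the first $i-1$ positions, and this is correlated with the values $X_1,\dots,X_{i-1}$ in a complicated way. The clean way around this is to first condition on the \emph{identity} $e$ of the $i$-th sampled edge and on the set (not the order) of the first $i-1$ sampled edges, reducing the blocking probability to a purely combinatorial count: among the $m-(i-1)$ as-yet-unsampled edges, $e$ is "last-to-arrive relative to its unsampled neighbors" — no, simpler, $e$ is good iff none of its already-sampled neighbors exists, i.e. iff all of $e$'s $\le 4d$ neighbors lie in the unsampled set; by a union bound over neighbors this fails with probability $\le 4d(i-1)/(m-1)$. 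Since this bound is uniform over the conditioning, it transfers to the conditional expectation given $X_1,\dots,X_{i-1}$ by averaging. The only remaining care is the edge case where $4d \ge m$, handled by the separate small-$n$/dense argument above. I do not expect the bipartite or regularity hypotheses to be needed — Theorem~\ref{thm:MatchConstantFraction} is stated for arbitrary graphs, and the only structural input is the size bound $|E^{low}|\ge m/2$.
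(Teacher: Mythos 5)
Your approach is fundamentally the paper's: switch to the sequential view of Luby, truncate to roughly $n/24$ iterations (note: since $m = dn/2$ exactly, $m/(12d) = n/24$ for every $d$, so your case split on large $d$ is unnecessary), lower-bound the conditional success probability $p_i$ by a constant, and apply the shifted-martingale concentration bound (Theorem~\ref{thm:ShiftedMartingaleLowerBound}). The constants even line up.

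There is, however, a genuine gap in the step you flag yourself. The shifted-martingale theorem requires that $p_i = \E[X_i \mid X_1,\dots,X_{i-1}]$ be bounded below by $P^\ell_i$ \emph{pointwise, for every history}. Your bound $p_i \ge 1/4$ is derived by averaging: you condition on the identity $e$ of the $i$-th sampled edge and union-bound over $e$'s neighbors against a \emph{uniformly random} set $S_{i-1}$ of prior samples. That gives a statement about $\Pr[\text{$e$ blocked} \mid e_i = e]$, not about $\Pr[\text{$e$ blocked} \mid e_i = e, X_1,\dots,X_{i-1}]$. Conditioning on the history $X_1,\dots,X_{i-1}$ tilts the distribution of $S_{i-1}$, and your claim that the bound is ``uniform over the conditioning'' is unsupported --- nothing in the union-bound calculation controls what that conditioning does.

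The repair, which is what the paper does, is to replace the probabilistic union bound by a deterministic, worst-case count. Because every edge of $E^{low}$ has both endpoints of degree at most $2d$, any single sampled edge, whatever it is, can block at most $4d$ edges of $E^{low}$. Therefore, for \emph{every} possible realization of the set $S_{i-1}$ of $i-1 \le n/24$ previously sampled edges, the number of unblocked $E^{low}$-edges is at least $|E^{low}| - 4d(i-1) \ge nd/4 - nd/6 = nd/12$. Dividing by the pool size $m = nd/2$ gives $p_i \ge 1/6$ as a bound valid for every history, which is exactly the pointwise guarantee the shifted martingale needs. This combinatorial, history-independent count is the ingredient your argument is missing; with it in place, the rest of your outline (truncation, then Theorem~\ref{thm:ShiftedMartingaleLowerBound} with $P^\ell_i$ a constant and $M = 1$) goes through.
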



\begin{proof}
For analysis, we'll focus on the sequential view of Luby's algorithm. By Proposition \ref{prop:indep-equiv}, the result holds for the distributed version as well.

Consider the first $t = n/24$ iterations of the while loop in Algorithm \ref{alg:SeqL}. We first claim that in each of these $t = n/24$ iterations, we add the sampled edge to the matching with probability at least $1/6$ (irrespective of previous random outcomes). Formally, for each $i \in [t]$, let $X_i \in \{0, 1\}$ be a random variable indicating whether we add an edge to the matching in the $i$th iteration, and let $p_i = \mathbb{E}[X_i \mid X_1, \ldots X_{t-1}]$. We now show that $p_i \geq 1/6, \forall i \in [t]$.

We say that an edge $e$ is \emph{blocked} if during any of the first $t$ iterations we sample some neighboring edge $e'$. By definition, if we sample some edge $e$ in an iteration $i \in [t]$ and it isn't blocked, then $e$ is always added to the matching. Next, we show that the number of blocked edges in $E^{low}$ is at most $nd/6$. Assume for contradiction, that there is an iteration where edge $e = \{u, v\}$ is sampled that blocks more than $4d$ edges in $E^{low}$. This implies that either $u$ or $v$ is incident on more than $2d$ edges from $E^{low}$. But that's a contradiction since no edge from $E^{low}$ can be incident on a node with degree $>2d$. Thus, in each iteration, at most $4d$ edges from $E^{low}$ can be blocked. Hence, in the first $t = n/24$ iterations, at most $(n/24)\cdot(4d) = nd/6$ edges are blocked in total.

Therefore, even at the end of the first $t$ iterations, at least $|E^{low}| - nd/6 \geq nd/4 - nd/6 = nd/12$ edges from $E^{low}$ remain unblocked. In any fixed iteration $i \in [t]$, if we sample any of these edges, it will be added to the matching. Hence, the probability that we add an edge to the matching in iteration $i$ is at least:
\[p_i \geq \frac{nd/12}{nd/2 - n/24} \geq \frac{nd/12}{nd/2} = 1/6\]

Finally, to prove the theorem we use the shifted martingale  trick (that was discussed in the preliminaries, and is formally presented in \Cref{sec:ShiftedMartingale}) in a black-box fashion, as follows. Let $S_t = \sum_{i=1}^t X_i$, and let $p_i = \mathbb{E}[X_i\mid X_1,\cdots, X_{i-1}]$. Since $p_i \geq 1/6$ for all $i\in [t]$, then by plugging in $P_i^\ell=1/6$ for all $i$ and $P^{\ell}=\sum_{i=1}^t P_i = t/6$ and $P^h = t$ in Theorem~\ref{thm:ShiftedMartingaleLowerBound}, we get that:

    $$\mathbb{P}[S_t<n/288]\leq \exp{\left(-\frac{(t/6-n/288)^2}{16t}\right)} \leq \exp{\left(-\frac{(n/144-n/288)^2}{2n/3}\right)} \leq e^{-n/96}$$

\end{proof}


\subsection{\rr{}}
\label{sec:PuttingEverythingTogether}

We first define some notation to facilitate the rest of the discussion. Intuitively, we say a node $(\alpha, \Delta)$-regular if all nodes in its two hop neighborhood have the same degree. 
\begin{definition}[$(\alpha, \Delta)$-regular node]
\label{def:alpha-regular}
    Let $\Delta$ be an integer and $\alpha\in [0,1]$ be a real number. Given a graph $G$, we say that a node $u$ is $(\alpha,\Delta)$-regular in $G$ if for any $v\in \{u\}\cup N(u)\cup N^2(u)$, we have $\Delta(1-\alpha) \leq deg(v) \leq \Delta(1+\alpha)$.
\end{definition}

We can now present our main technical lemma that states that if $u$ is a $(\alpha, \Delta)$-regular node, then its degree becomes almost exactly $\Delta / 2$ after running one round of Luby's algorithm, with failure probability exponentially small in $\Delta$. The proof of Lemma \ref{lem:RecursiveRegularity} is the most technical part of the paper and is deferred to \Cref{sec:localrrl}.

\begin{restatable}[Local \rrl{}]{lemma}{lemmaRR}
\label{lem:RecursiveRegularity}
    Let $\Delta\geq 2^{10}$ be an integer, and $\alpha\in [0,1/10]$ be a real number.  Let $G$ be a bipartite graph and $u$ be an $(\alpha,\Delta)$-regular node in it. Let $deg'(u)$ be the number of unmatched neighbors of $u$ after running SeqLuby (Algorithm~\ref{alg:SeqL}) on $G$. With probability at least $1-\exp(-\Delta^{1/16})$, it holds that: 
    
    $$\frac{\Delta}{2}\left(1-(10\alpha+\Delta^{-1/600})\right)\leq deg'(u)\leq \frac{\Delta}{2}\left(1+(10\alpha+\Delta^{-1/600})\right)$$
\end{restatable}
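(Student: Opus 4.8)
The plan is to analyze the number of matched neighbors of $u$ using the sequential view of Luby's algorithm (SeqLuby), restricting attention to the subgraph $E_u$ of edges within 3 hops of $u$, since edges farther away cannot influence whether a neighbor of $u$ gets matched. Write $k = |N^2(u)|$ and let $A_u$ be the set of edges between $N(u)$ and $N^2(u)$; since every neighbor $w$ of $u$ that is matched must be matched either to $u$ itself (at most one such $w$) or via an edge of $A_u$, the quantity $\deg(u) - \deg'(u)$ equals $|\mathcal{M}_u| \pm 1$, where $\mathcal{M}_u = \mathcal{M} \cap A_u$. So it suffices to show $|\mathcal{M}_u|$ concentrates around $\Delta/2$ up to the claimed relative error. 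Because $u$ is $(\alpha,\Delta)$-regular, we have $|A_u| = \sum_{w\in N(u)} (\deg(w)-1) = \Delta^2(1\pm O(\alpha))$ and $k = |N^2(u)| = \Delta(1\pm O(\alpha))$ (up to the small correction from shared neighbors, which in a bipartite graph is handled cleanly), and $|E_u| = \Delta(k+1) \pm O(\alpha \Delta^2)$.

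**The core of the argument** is the two-level martingale analysis sketched in the overview. Let edges of $E_u$ arrive in uniformly random order; let $X_i$ be the indicator that the $i$th edge lands in $\mathcal{M}_u$, and let $q_i = \mathbb{E}[X_i \mid X_1,\dots,X_{i-1}]$. Conditioned on the history, $q_i = |E_i| / (|E_u| - (i-1))$, where $E_i \subseteq A_u$ is the set of edges in $A_u$ that "survive" at step $i$, meaning neither they nor any neighboring edge has yet arrived. The first task is to control $|E_i|$: I would show $\mathbb{E}[|E_i| \mid E_{i-1}] = (1 - 2/k \pm o(1/k))|E_{i-1}|$, since an edge $e=(w,w')\in E_{i-1}$ with $w\in N(u)$, $w'\in N^2(u)$ dies at step $i$ exactly when one of the $\deg(w)-1 \approx \Delta$ edges at $w$ or $\deg(w')-1 \approx \Delta$ edges at $w'$ (within $E_u$) is the one that arrives, and the number of still-present edges in $E_u$ is $\approx \Delta k$, giving death probability $\approx 2\Delta/(\Delta k) = 2/k$. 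This is exactly the regime for the \emph{scaling trick}: $F_i := |E_i|/(1-2/k)^{i-1}$ is (approximately) a martingale with bounded increments, so $F_i$ concentrates around $F_1 \approx |A_u|$, which after rescaling gives $|E_i| \approx |A_u|(1-2/k)^{i-1}$ with failure probability $\exp(-\poly(\Delta))$, uniformly over $i$. This translates into $q_i \approx \frac{|A_u|(1-2/k)^{i-1}}{|E_u|-(i-1)}$ up to relative error $o(1)$, on a high-probability event. Then, with good pointwise bounds $q_i \in [P_i^\ell, P_i^h]$ in hand, I apply the shifted-martingale theorems (Theorems~\ref{thm:ShiftedMartingaleUpperBound} and~\ref{thm:ShiftedMartingaleLowerBound}) as black boxes to $S_t = \sum_i X_i$ to get $|\mathcal{M}_u| = \sum_{i=1}^{|E_u|} X_i$ concentrated around its conditional mean $\sum_i q_i$. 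Finally, I evaluate $\sum_{i=1}^{|E_u|} \frac{|A_u|(1-2/k)^{i-1}}{|E_u|-(i-1)}$: substituting $|A_u| \approx \Delta k$, $|E_u| \approx \Delta(k+1)$, and approximating the sum by an integral, the $(1-2/k)^{i-1}$ factor decays on the scale $i \sim k/2$ while the denominator changes by a factor $\Theta(1)$ only over scale $i \sim \Delta k$, so the sum evaluates to $\approx \Delta k \cdot \frac{1}{\Delta(k+1)}\cdot \frac{k}{2} \approx \Delta/2$; tracking the error terms gives the relative error $O(\alpha) + \Delta^{-\Theta(1)}$, which the constants $10\alpha + \Delta^{-1/600}$ are chosen to absorb.

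**The main obstacle** I anticipate is bookkeeping the error propagation through the nested martingales so that the final failure probability is $\exp(-\Delta^{1/16})$ and the relative error genuinely comes out as $10\alpha + \Delta^{-1/600}$ rather than something larger. Several subtle points need care: (i) the scaled martingale $F_i$ has increments that are not uniformly bounded by a constant but grow like $(1-2/k)^{-i}$, so the concentration bound degrades as $i$ grows and one must either stop the analysis once $(1-2/k)^{i-1}$ has become negligibly small (i.e. for $i \gg k\log\Delta$) and bound the tail contribution crudely, or use a more refined variance-based bound; (ii) the estimate $\mathbb{E}[|E_i|\mid E_{i-1}] = (1-2/k)|E_{i-1}|$ is itself only approximate — it depends on how many edges of $E_u$ remain present, which is a deterministic function of $i$ (namely $|E_u|-(i-1)$), but also on the precise degrees of endpoints, which the $(\alpha,\Delta)$-regularity only pins down to within $\alpha$; (iii) one must be careful that the "bad events" on which the $q_i$ bounds fail are excluded before invoking the shifted-martingale theorems, which requires either a union bound over all $i$ (costing a factor $|E_u| \le \poly(\Delta)$ in the failure probability, harmless against $\exp(-\poly(\Delta))$) or a stopping-time argument. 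I would handle (i) by splitting the sum at $i^\star \approx C k \log \Delta$, doing the careful martingale analysis for $i \le i^\star$ and noting $\sum_{i>i^\star} q_i = \poly(\Delta)\cdot(1-2/k)^{i^\star} = o(1)$ is negligible; (ii) by absorbing the $\alpha$-slack into the constants from the start; and (iii) by a union bound. Everything else is the kind of routine (if tedious) estimation that the paper defers to Section~\ref{sec:localrrl}.
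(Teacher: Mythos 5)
Your high-level plan matches the paper's proof: restrict attention to $E_u$, analyze the sequential view of Luby, use a scaled martingale to show $|E_i|$ concentrates around $|A_u|(1-2/k)^{i-1}$, and feed the resulting bounds on $q_i = |E_i|/(|E_u|-(i-1))$ into the shifted-martingale theorems to bound $\sum_i X_i = |\mathcal{M}_u|$. The paper's proof has exactly this two-level structure.

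However, your plan for the \emph{tail} (iterations $i$ beyond the stopping time) has a real gap. You propose taking $i^\star \approx C k \log\Delta$ so that $(1-2/k)^{i^\star}$ is polynomially small and then bounding $\sum_{i>i^\star} q_i$ (or $|E_{i^\star}|$) directly. But the scaled-martingale concentration on $|E_i|$ does not extend that far. The per-step conditional variance of $F_i = |E_i|/\gamma^{i-1}$ must be bounded deterministically to invoke \Cref{thm:SubMartingaleConcentrationVariance}/\Cref{thm:SuperMartingaleConcentrationVariance}, and the available deterministic bound is of order $\Delta\cdot |E_{i-1}|/(k\gamma^{i-1}) = O(\Delta^3 / (k\,\gamma^{i-1}))$. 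For $i$ up to $k\log\Delta/100$ one has $\gamma^{i-1}\geq \Delta^{-1/10}$, which is what the paper uses; but once $i$ approaches $k\log\Delta$, the normalization $\gamma^{-(i-1)}$ has grown to $\Theta(\Delta^2)$, the accumulated variance overshoots $\Delta^5\log\Delta$, and this swamps $\lambda^2 \leq F_1^2 \approx \Delta^4$ — the concentration bound becomes vacuous. (A stopping-time truncation of $|E_i|$ inside the variance estimate could plausibly rescue this, but that is a genuine additional idea, not bookkeeping.) Meanwhile at the paper's modest stopping time $t = k\log\Delta/100$, where the martingale analysis does work, $|E_t| \approx \Delta^{2 - 1/50}$ is still $\gg \Delta$, so the surviving-edge count is of no use for controlling the tail. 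The paper's fix is to bound the late matches instead by $W_t$, the number of \emph{unlabeled vertices} in $N(u)$ at time $t$, which decays from $\Delta$ (not $\Delta^2$) at rate $\approx (1-1/k)$ per step and so is already $\Delta^{1 - 1/400}$ at time $t$. This is \Cref{lem:FewUnlabeledNodes}, proved with its own scaled-martingale argument; it is the key ingredient your sketch is missing.

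A smaller issue: your intermediate estimates $k = |N^2(u)| \approx \Delta$ and $|A_u| \approx \Delta k$ are false in general (you even state $|A_u| \approx \Delta^2$ earlier, which contradicts $|A_u|\approx\Delta k$ unless $k\approx\Delta$). The parameter $k$ can be anywhere in $[\Delta/7, \,(1+\alpha)^2\Delta^2]$ — for instance $k \approx \Delta^2$ in a high-girth regular graph and $k \approx \Delta$ in $K_{\Delta,\Delta}$. The correct relations are $|A_u| \approx \Delta^2$ and $|E_u| \approx \Delta(k+1)$ for \emph{any} such $k$, after which $\sum_i q_i \approx (|A_u|/|E_u|)\cdot(k/2) \approx \Delta/2$ drops out with the $k$-dependence cancelling. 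Your two errors compensate, but a careful write-up must not build in the assumption $k\approx\Delta$.
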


As long as $\Delta$ is large enough ($\approx \log n$), \Cref{lem:RecursiveRegularity} followed by a union bound suffices to show that an almost regular graph remains almost regular with their degrees halved after one round of Luby's algorithm. However, for smaller $\Delta$, we can no longer rely on a simple union bound. In the following lemma, we use the Chernoff-Hoeffding concentration inequality with bounded dependence to show that even for small $\Delta$, \emph{almost all} nodes halve their degree.

\begin{lemma}[\rrl{}]
\label{lem:SparseRecursiveRegularity} 
Let $n \geq K\geq\Delta\geq 2^{10}$ be integers, and let  $\alpha\in [0,1/10]$, $\delta\in [0,1/100]$ be real numbers. Let $G$ be a bipartite graph with $n$ nodes and max degree $K$ and let $G'$ be the graph obtained by running $SeqLuby$ (Algorithm~\ref{alg:SeqL}) on $G$ and removing all matched nodes together with their incident edges.

If at least $(1-\delta)$-fraction of nodes in $G$ are $(\alpha, \Delta)$-regular, then at least $(1 - \delta')$-fraction of nodes in $G'$ are $(\alpha', \Delta/2)$-regular with probability at least $1-\exp{\left(-n/(K^{10}\exp{(\Delta^{1/99})})\right)}$,  where $\delta' = K^2\cdot (\delta + 2\exp{(-\Delta^{1/100})})$ and $\alpha' = 10\alpha+\Delta^{-1/600}$.

\end{lemma}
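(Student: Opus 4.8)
The goal is to amplify the single-node guarantee of Lemma~\ref{lem:RecursiveRegularity} into a statement about what fraction of the nodes in $G'$ are $(\alpha', \Delta/2)$-regular, while only paying a $\exp(-\poly(\Delta))$-type failure probability overall (so that the lemma is useful even for $\Delta$ as small as a constant power of $\log n$, or smaller). First I would fix the definitions: call a node $v \in V(G)$ \emph{good} if $v$ is $(\alpha,\Delta)$-regular in $G$ \emph{and}, after running SeqLuby, $v$ is unmatched and its surviving degree $\deg'(v)$ lies in the window $\tfrac{\Delta}{2}(1 \pm (10\alpha + \Delta^{-1/600}))$; otherwise call $v$ \emph{bad}. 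A node $u \in V(G')$ is $(\alpha', \Delta/2)$-regular as soon as every $w \in \{u\}\cup N_{G'}(u)\cup N^2_{G'}(u)$ is good, because $\alpha' = 10\alpha + \Delta^{-1/600}$ is exactly the window from Lemma~\ref{lem:RecursiveRegularity}. Since $G'$ has max degree at most $K$ (it is a subgraph of $G$), the $2$-hop neighborhood of $u$ in $G'$ has at most $1 + K + K^2 \le K^2$ (for $K \ge 2$; more carefully $\le 2K^2$, and one can absorb the constant) nodes, so a single bad node can "spoil" at most $K^2$ nodes of $G'$ from being $(\alpha',\Delta/2)$-regular. Hence it suffices to show that, with the claimed probability, the number of bad nodes is at most $(\delta + 2\exp(-\Delta^{1/100})) \cdot n$; multiplying by the spoiling factor $K^2$ then gives the bound $\delta' n$ on the number of non-$(\alpha',\Delta/2)$-regular nodes in $G'$.

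Next I would bound $\E[\#\text{bad}]$ and then concentrate. A node $v$ that is \emph{not} $(\alpha,\Delta)$-regular in $G$ is automatically counted as bad, but there are at most $\delta n$ such nodes by hypothesis. For every $(\alpha,\Delta)$-regular node $v$, Lemma~\ref{lem:RecursiveRegularity} says the probability that $v$ fails the degree window is at most $\exp(-\Delta^{1/16})$; I should also handle the event that $v$ itself is matched, but note that if $v$ is matched it is simply removed from $G'$ (so it does not appear in $V(G')$ at all and cannot spoil anyone), so the relevant quantity is really the number of \emph{unmatched, $(\alpha,\Delta)$-regular} nodes that land outside the window, plus the $\le \delta n$ non-regular ones. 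Either way, $\E[\#\text{bad among regular nodes}] \le n\exp(-\Delta^{1/16})$, so $\E[\#\text{bad}] \le \delta n + n\exp(-\Delta^{1/16}) \le \delta n + n\exp(-\Delta^{1/100})$ (the latter bound is weaker and easier to carry). The remaining task is a high-probability upper tail: I want $\#\text{bad} \le 2(\delta n + n\exp(-\Delta^{1/100}))$ except with probability $\exp(-n/(K^{10}\exp(\Delta^{1/99})))$, which together with the $K^2$ spoiling factor yields exactly the stated $\delta' = K^2(\delta + 2\exp(-\Delta^{1/100}))$.

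For the concentration step I would use a Chernoff--Hoeffding bound with bounded dependence (the "bounded dependency graph" version attributed in the excerpt to the preliminaries): whether a node $v$ is bad depends only on the random numbers $r_e$ of edges within $3$ hops of $v$ (as in the discussion preceding Lemma~\ref{lem:RecursiveRegularity}, the set $E_v$ of edges at most $3$ hops from $v$ is all that matters for $\deg'(v)$ and for whether $v$ is matched). Therefore the indicator variables $\{\mathbf{1}[v \text{ bad}]\}_{v \in V(G)}$ have a dependency graph of degree at most the number of nodes within $6$ hops of $v$, which is $\le K^6$ (or $O(K^6)$). A standard Chernoff--Hoeffding inequality for sums of $\{0,1\}$ variables with dependency-graph degree $d$ gives an upper tail of roughly $\exp(-\Omega(t^2/(n d)))$ at deviation $t$ from the mean, or $\exp(-\Omega(t/d))$ in the relevant multiplicative regime; taking $t = \delta n + n\exp(-\Delta^{1/100})$ (so we double the mean) gives failure probability at most $\exp\!\big(-\Omega\big(n\exp(-\Delta^{1/100})/K^6\big)\big)$, and since $\exp(-\Delta^{1/100}) \ge 1/\exp(\Delta^{1/99})$ and $K^6 \le K^{10}$, this is at most $\exp(-n/(K^{10}\exp(\Delta^{1/99})))$, matching the statement (the constant in $\Omega$ and the gap between $K^6$ and $K^{10}$, and between $\Delta^{1/100}$ in the mean and $\Delta^{1/99}$ in the exponent, give plenty of slack). \textbf{The main obstacle} is getting the bookkeeping of the three nested error parameters to line up cleanly: one must verify that the spoiling factor is genuinely at most $K^2$ (choosing to count matched nodes as simply absent rather than bad is what makes this clean), that the dependency-graph degree is a fixed power of $K$ (here $6$ hops, hence $K^6$, comfortably inside $K^{10}$), and that the slack $\Delta^{1/16} \gg \Delta^{1/100}$ in Lemma~\ref{lem:RecursiveRegularity} and $\Delta^{1/100}$ vs. $\Delta^{1/99}$ in the final exponent are enough to absorb the union/Chernoff losses — all of which are routine once the right quantities are identified, but easy to get slightly wrong on a first pass.
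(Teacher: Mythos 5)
Your approach is essentially identical to the paper's: bound the expected number of bad nodes via \Cref{lem:RecursiveRegularity}, concentrate with a bounded-dependence Chernoff--Hoeffding bound (the paper also uses a dependence bound of $K^{10}$, comfortably absorbing your $K^6$), and then account for a $K^2$ spoiling/poisoning factor because the $(\alpha', \Delta/2)$-regular property of a node depends on its two-hop neighborhood. Your observation that matched nodes are simply absent from $V(G')$ and hence cannot spoil anyone is a clean way to phrase something the paper leaves implicit.

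There is one small but genuine arithmetic slip. You propose to show $\#\mathrm{bad} \le 2\bigl(\delta n + n\exp(-\Delta^{1/100})\bigr)$ and claim that, after multiplying by $K^2$, this ``yields exactly the stated $\delta' = K^2(\delta + 2\exp(-\Delta^{1/100}))$.'' It does not: it gives $K^2\bigl(2\delta + 2\exp(-\Delta^{1/100})\bigr)$, a factor of two too large on the $\delta$ term. The issue is that you are applying the multiplicative-Chernoff slack to a mean that already includes the deterministic quantity $\delta n$. The fix is exactly what the paper does: the set of non-$(\alpha,\Delta)$-regular nodes in $G$ is a fixed set of size at most $\delta n$ (no randomness there, so no concentration needed), while only the count of $(\alpha,\Delta)$-regular nodes that fall outside the degree window after SeqLuby is random; concentrate that count alone to at most $2n\exp(-\Delta^{1/100})$, add the deterministic $\delta n$, and multiply by $K^2$ to recover the stated $\delta'$. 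With that bookkeeping change your proof matches the paper's.
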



\begin{proof}
We say that a node $v$ is \emph{good} if its degree in $G'$ is in the range $\frac{\Delta}{2}(1\pm \alpha')$. Any node that's not good is called \emph{bad}. Let $R$ be the set of nodes that are $(\alpha,\Delta)$-regular in $G$. By Lemma~\ref{lem:RecursiveRegularity}, 
each $(\alpha,\Delta)$-regular node in $G$ that remains unmatched is good with probability at least $1-\exp{(-\Delta^{1/16})}$. Hence, the expected number of nodes from $R$ that are bad is at most $|R|\cdot \exp{(-\Delta^{1/16})} \leq n$. To obtain a high probability bound, we use Chernoff-Hoeffding inequality with bounded dependence as follows.

For each node $u \in R$, let $B_u \in \{0, 1\}$ be an indicator random variable that indicates whether $u$ is bad. We have $\sum_{u \in R} \mathbb{E}[B_u] \leq |R|\cdot \exp{(-\Delta^{1/16})} \leq n\cdot \exp{(-\Delta^{1/16})}$. Since the maximum degree in $G$ is $K$, each $B_u$ depends on fewer than $K^{10}$ nodes (with $\lambda = |R|\cdot \exp{(-\Delta^{1/16})})$ to get:
\begin{align*}
    \bbP\left[\sum_{u \in R} B_u > 2 n \exp(-\Delta^{1/100})\right] &\leq \bbP\left[\sum_{u \in R} B_u > 2 |R| \exp(-\Delta^{1/16})\right]\\
    &\leq \bbP\left[\sum_{u \in R} B_u > \sum_{u \in R} \E[B_u] + |R| \exp(-\Delta^{1/16})\right]\\
    &\leq \exp{\left(-\frac{2 (|R| \cdot \exp(-\Delta^{1/16}))^2}{|R| \cdot K^{10}}\right)}\\
    &= \exp{\left(-\frac{2|R|}{K^{10}\exp{(2 \Delta^{1/16})}}\right)}\\
    &\leq \exp{\left(-\frac{2|R|}{K^{10}\exp{(\Delta^{1/99})}}\right)}
    \leq \exp{\left(-\frac{n}{K^{10}\exp{(\Delta^{1/99})}}\right)}
\end{align*}
where the last line used $\Delta > 2^{10}$ and $|R| \geq (1-\delta)n \geq n/2$. We say that a node $v$ poisons a node $u$ if $v$ prevents $u$ from being $(\alpha', \Delta/2)$-regular, i.e., $v$ poisons $u$ if $v$ is \emph{bad} and is at distance at most $2$ from $u$.
 We note that each bad node poisons at most $K^2$ nodes. Let $B$ be the set of bad nodes in $R$, i.e., we have $|B| = \sum_{u \in R} B_u$ and let $\tilde{B} = V(G) \setminus R$ be the set of nodes that are not $(\alpha, \Delta)$ regular in $G$. For nodes in $\tilde B$, we have no guarantee on the probability of whether they become good, so we always assume that they poison $K^2$ nodes. In total, the number of nodes that are not $(\alpha',\Delta/2)$-regular in $G'$ is at most $(|\tilde B| + |B|)K^2$ which is at most:
    \begin{align*}
    (|\tilde B| + 2n\cdot \exp{(-\Delta^{1/100})})\cdot K^2 \leq (\delta n + 2n\cdot \exp{(-\Delta^{1/100})})\cdot K^2 = n\cdot K^2\cdot (\delta + 2\exp{(-\Delta^{1/100})})
    \end{align*}
with probability at least $1-\exp{\left(-n/(K^{10}\exp{(\Delta^{1/99})})\right)}$, as desired.
\end{proof}

\Cref{lem:SparseRecursiveRegularity} shows that after each round of Luby's algorithm, the remaining graph still satisfies the requirement that most vertices remain almost regular, albeit with worsening parameters. The following technical claim shows that these parameters remain small enough after $O(\log(1/\epsilon))$ rounds. The proof uses basic arithmetic and is deferred to the appendix.

\begin{claim}\label{claim:BoundingParameters}
    Let $\Delta$ be an integer, $c=1/10^5$, and $\epsilon$ be a real number satisfying $1> \epsilon\geq 1/\Delta^{c}$. Furthermore let:
    
    \begin{enumerate}
        \item $\alpha_0 = \Delta^{-1/600}$ and $\alpha_i = 10\alpha_{i-1} + \Delta^{-1/600}$ for $i\geq 1$, 
        \item $\delta_0=\exp(-\Delta^{1/200})$, and $\delta_i = \Delta^2(\delta_{i-1}+2\exp(-(\Delta/2^i)^{1/100}))$ for $i\geq 1$.
    \end{enumerate} 
    
    It holds that $\alpha_i\leq 1/10$ and 
        $\delta_i\leq \exp(-\Delta^{1/300})$ for $1\leq i\leq 10\log(1/\epsilon)$.
    \end{claim}

We now extend Lemma~\ref{lem:SparseRecursiveRegularity} to a multi-round argument, showing that most of nodes' degrees after running Luby's algorithm for $i$ rounds become $\approx \Delta/2^i$.

\begin{lemma}[Multi-Round \rr{}]
\label{lem:FinalSparseRegularity}
     Let $(\Delta, \epsilon, (\alpha_i), (\delta_i))$ be as defined in \Cref{claim:BoundingParameters}. Let $G$ be a bipartite $n$-node graph, where all but $\delta=\delta_0 =\exp{(-\Delta^{1/200})}$-fraction of the nodes are $(\alpha_0,\Delta)$-regular. For $i\leq 10\log(1/\epsilon)$, let $G_i$ be the graph obtained after running Luby's algorithm (Algorithm~\ref{alg:L}) for $i$ rounds on $G$, where in each round we remove the matched nodes together with their incident edges. If $|V(G_i)|\geq \epsilon n$, then with high probability, at least $(1 - \delta_i)$-fraction of the nodes in $G_i$ are $(\alpha_i,\Delta/2^i)$-regular.
\end{lemma}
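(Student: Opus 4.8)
The plan is to induct on $i$, using \Cref{lem:SparseRecursiveRegularity} as the single-round step and \Cref{claim:BoundingParameters} to control the accumulating error parameters. The base case $i=0$ is the hypothesis: all but a $\delta_0$-fraction of nodes are $(\alpha_0,\Delta)$-regular in $G_0 := G$. For the inductive step, suppose that with high probability at least a $(1-\delta_{i-1})$-fraction of the nodes in $G_{i-1}$ are $(\alpha_{i-1}, \Delta/2^{i-1})$-regular. Condition on this event. The maximum degree in $G_{i-1}$ is at most the maximum degree of the original graph; more usefully, since $|V(G_{i-1})| \ge \epsilon n$ and since nodes of $G_{i-1}$ have degree within a $(1\pm\alpha_{i-1})$ factor of $\Delta/2^{i-1}$ except for a $\delta_{i-1}$-fraction, we can bound $K$ (the max degree of $G_{i-1}$) — this is the one slightly delicate bookkeeping point, discussed below. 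Then I would apply \Cref{lem:SparseRecursiveRegularity} to $G_{i-1}$ with degree parameter $\Delta_{i-1} := \Delta/2^{i-1}$, regularity parameter $\alpha_{i-1}$, and bad-fraction parameter $\delta_{i-1}$: its conclusion is that with probability at least $1 - \exp(-n/(K^{10}\exp(\Delta_{i-1}^{1/99})))$, at least a $(1-\delta')$-fraction of the nodes in $G_i$ are $(\alpha', \Delta_{i-1}/2)$-regular, where $\alpha' = 10\alpha_{i-1} + \Delta_{i-1}^{-1/600}$ and $\delta' = K^2(\delta_{i-1} + 2\exp(-\Delta_{i-1}^{1/100}))$.

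The next step is to verify that $\alpha' \le \alpha_i$ and $\delta' \le \delta_i$, so that the inductive hypothesis propagates with the parameters named in \Cref{claim:BoundingParameters}. For $\alpha'$: since $2^{i-1} \ge 1$ we have $\Delta_{i-1}^{-1/600} = (\Delta/2^{i-1})^{-1/600} \le$ — wait, this is $\ge \Delta^{-1/600}$, so I instead use that $\alpha' = 10\alpha_{i-1} + (\Delta/2^{i-1})^{-1/600}$ and bound $(\Delta/2^{i-1})^{-1/600}$ crudely; but \Cref{claim:BoundingParameters} defined $\alpha_i = 10\alpha_{i-1} + \Delta^{-1/600}$ using the \emph{original} $\Delta$, so I need $(\Delta/2^{i-1})^{-1/600}$ to still be small. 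Here is where the range $i \le 10\log(1/\epsilon)$ and $\epsilon \ge \Delta^{-c}$ with $c=1/10^5$ enter: $\Delta/2^i \ge \Delta/2^{10\log(1/\epsilon)} = \Delta \cdot \epsilon^{10} \ge \Delta \cdot \Delta^{-10c} = \Delta^{1-10^{-4}}$, which is still polynomially large in $\Delta$, so $(\Delta/2^{i-1})^{1/100}$, $(\Delta/2^{i-1})^{1/600}$, and $(\Delta/2^{i-1})^{1/99}$ are all still $\Delta^{\Omega(1)}$. This means the per-round degradation is dominated by what \Cref{claim:BoundingParameters} already accounts for (with $\Delta^2 \ge K^2$ since $K$ stays polynomially bounded in $\Delta$ — see next paragraph), and \Cref{claim:BoundingParameters} gives exactly $\alpha_i \le 1/10$ and $\delta_i \le \exp(-\Delta^{1/300})$ throughout the range, so in particular the hypotheses $\alpha_i \in [0,1/10]$, $\delta_i \in [0,1/100]$ of \Cref{lem:SparseRecursiveRegularity} are satisfied at every round.

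Finally, I would take a union bound over the $i \le 10\log(1/\epsilon)$ single-round failure events. Each has probability at most $\exp(-n/(K^{10}\exp((\Delta/2^{i-1})^{1/99})))$; since $\Delta/2^{i-1} \ge \Delta^{1-10^{-4}}$ and $K \le \poly(\Delta)$, each such term is at most $\exp(-n/\exp(\poly(\Delta))) \le \exp(-n^{1-o(1)})$ for $\Delta$ polynomially smaller than $n$ (which holds since $\Delta \le K \le n$ and we are in the regime $\Delta \le \poly\log n$ or can afford $\poly(\Delta) \le n^{o(1)}$); summing $O(\log(1/\epsilon)) \le O(\log n)$ of them keeps the total failure probability $1/\poly(n)$, i.e. the conclusion holds with high probability. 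I expect the \textbf{main obstacle} to be the bookkeeping around the max degree $K$ of the intermediate graphs $G_{i-1}$: \Cref{lem:SparseRecursiveRegularity} is stated with a max-degree parameter $K$, and one must argue that $K$ never blows up — it is bounded by the original max degree, which after the color-coding preprocessing is $O(\Delta)$, so $K^2 \le \Delta^{O(1)}$ and in fact $K^2 \le \Delta^2$ can be arranged (or absorbed into the constant in the exponent of \Cref{claim:BoundingParameters} by a trivial reparametrization). Making this bound on $K$ precise, and checking that it is compatible with the $\Delta^2$ factor hard-coded into the recursion for $\delta_i$ in \Cref{claim:BoundingParameters}, is the one place where care is genuinely required; everything else is a clean induction plus a union bound.
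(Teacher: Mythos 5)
Your inductive argument via \Cref{lem:SparseRecursiveRegularity} reproduces the \emph{sparse half} of the paper's proof, but it silently relies on an assumption that the lemma statement does not grant you, and that is exactly where your proposal has a real gap. \Cref{lem:SparseRecursiveRegularity}'s failure probability is $\exp\bigl(-n/(K^{10}\exp(\Delta_{i-1}^{1/99}))\bigr)$; this is only $1/\poly(n)$ when $\exp(\Delta^{1/99})$ is at most, say, $n^{1/2}$, i.e.\ roughly when $\Delta \lesssim \log^{99} n$. Once $\Delta$ exceeds this, the factor $\exp(\Delta^{1/99})$ swallows $n$ and the single-round failure probability becomes essentially $1$, so the union bound you propose does not close. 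You notice this yourself in the parenthetical ``(\ldots we are in the regime $\Delta \le \poly\log n$ or can afford $\poly(\Delta) \le n^{o(1)}$)'', but neither disclaimer is actually given by the lemma statement, which allows $\Delta$ up to $K \le n$ — and $\poly(\Delta) \le n^{o(1)}$ does not help, since the offending factor is $\exp(\Delta^{1/99})$, not a polynomial in $\Delta$.

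The paper's fix is a case split that you are missing: when $\Delta \ge \log^{50} n$ (the ``dense'' case), they do not use \Cref{lem:SparseRecursiveRegularity} at all, but instead apply the local lemma \Cref{lem:RecursiveRegularity} directly to each node and take a plain union bound over all $n$ nodes. This works precisely because that lemma's per-node failure probability $\exp(-\Delta_i^{1/16})$, with $\Delta_i \ge \Delta^{0.99} \ge \log^{49} n$, is already $1/\poly(n)$, and it gives the stronger conclusion that \emph{all} nodes are $(\alpha_i, \Delta/2^i)$-regular. \Cref{lem:SparseRecursiveRegularity} (and your induction) is then invoked only in the complementary ``sparse'' case $\Delta \le \log^{50} n$, where its failure probability is genuinely tiny. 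Your other observations — the need to track the max degree $K$ through the rounds, and the slight mismatch between the $\Delta^{-1/600}$ in \Cref{claim:BoundingParameters} and the $\Delta_{i-1}^{-1/600}$ coming out of \Cref{lem:SparseRecursiveRegularity} — are real and handled the way you suggest (by noting $\Delta/2^i \ge \Delta^{0.99}$ keeps everything polynomially large in $\Delta$, and absorbing slack into the generous constant $c$), and you should also invoke \Cref{prop:indep-equiv} at each step to pass between Algorithm~\ref{alg:L} and Algorithm~\ref{alg:SeqL}; but the dense/sparse split is the missing piece that makes the whole thing go through for all $\Delta$.
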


\begin{proof}
    The idea is to use a recursive argument where we apply Lemma~\ref{lem:RecursiveRegularity} with a simple union bound in the dense case, and Lemma~\ref{lem:SparseRecursiveRegularity} in the sparse case. For convenience of notation, let $\Delta_i=\Delta/2^i$. Since we apply Lemmas~\ref{lem:RecursiveRegularity} and ~\ref{lem:SparseRecursiveRegularity} recursively for $i$ rounds, we need to ensure that $\alpha_i\leq 1/10$ and $\delta_i\leq 1/100$, which follows from Claim~\ref{claim:BoundingParameters}. The proof is split into two cases, a dense case where $\Delta\geq \log^{50}n$, and a sparse case. We start with the dense case. 

    \subparagraph{Dense case where $\Delta\geq \log^{50} n$:} 
    By Lemma~\ref{lem:RecursiveRegularity} and a simple union bound argument, in the case where $\Delta\geq \log^{50} n$, \emph{all} the nodes are $(\alpha_i,\Delta/2^i)$-regular after running Luby's algorithm (Algorithm \ref{alg:L}) for $i$ rounds with high probability. This is because the failure probability of Lemma~\ref{lem:RecursiveRegularity} is at most $\exp{(-\Delta_i^{1/16})}$ at round $i$, and $\Delta_i\geq \Delta^{0.99}$ for any $i$ (since $\epsilon\geq 1/\Delta^{1/c}$ and $i\leq 10\log(1/\epsilon)$).

    \subparagraph{Sparse case where $\Delta\leq \log^{50}n$:} The idea is to apply Lemma~\ref{lem:SparseRecursiveRegularity} recursively for $i$ rounds and the proof follows by induction. 
    
    \subparagraph{Induction base i=1:} By Proposition~\ref{prop:indep-equiv} Algorithms~\ref{alg:L} and~\ref{alg:SeqL} produce the same distributions over matchings in the graph $G$, up to $1/\poly(n)$ total variation distance. Hence, for $i=1$ the claim follows directly from Lemma~\ref{lem:SparseRecursiveRegularity}, since the failure probability of Lemma~\ref{lem:SparseRecursiveRegularity} is only $\exp{\left(-n/\Delta^{10}\exp{(\Delta^{1/99})}\right)}<1/n^{1000}$, for $\Delta\leq \log^{50} n$.

    \subparagraph{Induction step:} Assume that the claim is true for $i$, i.e., assume that all but $|V(G_i)|\cdot \delta_{i}$ nodes in $G_i$ are $(\alpha_i,\Delta/2^i)$-regular with high probability. We show that the claim is true for $i+1$. Again, by Proposition~\ref{prop:indep-equiv}, Algorithms~\ref{alg:L} and~\ref{alg:SeqL} produce the same distributions over matchings in the graph $G_i$, up to a $1/|E(G_i)|^c\leq 1/(|V(G_i)|)^c\leq 1/(\epsilon n)^c\leq (1/n^{0.99})^c$ total variation distance, for an arbitrarily large constant $c$. Since the max degree in $G_i$ is at most $\Delta$, by Lemma~\ref{lem:SparseRecursiveRegularity} it holds that all but $|V(G_{i+1})|\cdot \delta_{i+1}$ nodes in $G_{i+1}$ are $(\alpha_{i+1},\Delta/2^{i+1})$-regular in $G_{i+1}$ with probability at least 
    
    $$1-\exp{\left(-\frac{|V(G_i)|}{\Delta^{10}\exp{(\Delta_i^{1/99})}}\right)}\geq 1-\exp{\left(-\frac{\epsilon n}{\Delta^{10}\exp{(\Delta^{1/99})}}\right)}\geq 1-\frac{1}{n^{1000}}$$

    \noindent where the first inequality holds since $\Delta_i\leq \Delta$ and $|V(G_i)|\geq \epsilon n$, and the second inequality holds since $\epsilon\geq 1/\Delta^{1/100}\geq 1/n^{1/100}$ and $\Delta\leq \log^{50}n$. Hence, by a union bound on all $i$'s, we get that as long as $V(G_i)$ has at least $\epsilon n$ nodes, it holds that all but $|V(G_i)|\cdot \delta_i$ nodes are $(\alpha_i,\Delta_i)$-regular in $G_i$ with probability at least $1-1/n^{100}$.
\end{proof}

\subsection{Putting it together}
\label{sec:putting-it-together}

\Cref{lem:FinalSparseRegularity} shows that the graph remains almost regular after $i \approx \log(1/\epsilon)$ repeated applications of Luby's algorithm, and \Cref{thm:MatchConstantFraction} showed that each round of Luby's  on an almost regular graph matches a constant fraction of nodes. We can now combine these two results to show that as long as the remaining graph is large enough, each application of Luby's algorithm matches a constant fraction of nodes. 

We first present a corollary of \Cref{thm:MatchConstantFraction} to formalize that the almost regular graphs implied by \Cref{lem:FinalSparseRegularity} do indeed satisfy the requirements of \Cref{thm:MatchConstantFraction}. We defer the proof to the appendix for brevity.

\begin{corollary}
\label{cor:ConstantMatchingRegularGraphs}
    Let $\Delta \geq C$, where $C$ is a large enough constant, and let
    $G'$ be a graph with $n'$ nodes and max degree $\Delta$. Suppose at least $(1 - \exp(-\Delta^{1/300}))$-fraction of the nodes are $(\alpha',\Delta')$-regular for $\alpha'\leq 1/10$. It holds that Luby's algorithm matches $n'/288$ nodes in $G'$ with high probability. 
\end{corollary}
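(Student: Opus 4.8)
The plan is to deduce Corollary~\ref{cor:ConstantMatchingRegularGraphs} from Theorem~\ref{thm:MatchConstantFraction} by verifying that the hypotheses of the theorem hold for $G'$. Recall that Theorem~\ref{thm:MatchConstantFraction} requires that the set $E^{low}$ of edges whose both endpoints have degree at most $2d$ (where $d = 2m/n'$ is the average degree of $G'$) contains at least half of the edges; the conclusion is then a matching of size at least $n'/288$ with high probability. So the entire task is a counting argument showing $|E^{low}| \ge m/2$ for the near-regular graphs produced by \Cref{lem:FinalSparseRegularity}.

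First I would set up notation: let $R$ be the set of $(\alpha',\Delta')$-regular nodes in $G'$, so $|R| \ge (1 - \exp(-\Delta^{1/300})) n'$, and every node in $R$ has degree in $[\Delta'(1-\alpha'), \Delta'(1+\alpha')]$ with $\alpha' \le 1/10$. The first step is to pin down how $d$ relates to $\Delta'$. Since all but an $\exp(-\Delta^{1/300})$-fraction of nodes have degree in $\Delta'(1 \pm 1/10)$ and the remaining nodes have degree at most $\Delta$ (the max degree), the average degree $d = 2m/n'$ is within, say, a factor $2$ of $\Delta'$ once $\Delta$ (and hence $\Delta'$, or rather the "floor" of the recursion) is a large enough constant — here one has to be slightly careful since the bad nodes could have degree up to $\Delta$, but their fraction $\exp(-\Delta^{1/300})$ is tiny, so their contribution to the sum of degrees is negligible compared to $\Delta' n'$. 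This gives $d \ge \Delta'(1-\alpha')/2 \ge \Delta'/3$ or so, and more importantly $2d \ge \Delta'(1-\alpha') \ge $ (some bound) — the key inequality we want is that $2d$ is at least the maximum degree $\Delta'(1+\alpha')$ of a node in $R$, i.e.\ $2d \ge \Delta'(1+\alpha')$. Since $d \ge \Delta'(1-\alpha')(1 - \exp(-\Delta^{1/300}))/2 \cdot(\text{correction})$... actually cleaner: $2m \ge \sum_{v \in R} \deg(v) \ge |R|\Delta'(1-\alpha') \ge (1-\exp(-\Delta^{1/300}))\Delta'(1-\alpha') n'$, so $d = 2m/n' \ge (1-\exp(-\Delta^{1/300}))(1-\alpha')\Delta'$. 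For $\Delta$ a large enough constant and $\alpha' \le 1/10$, this is at least, say, $0.8\Delta'$, hence $2d \ge 1.6\Delta' > 1.1\Delta' \ge \Delta'(1+\alpha')$. Therefore \emph{every} node in $R$ has degree $\le 2d$.

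The second step is then immediate: an edge fails to be in $E^{low}$ only if one of its endpoints has degree $> 2d$, which forces that endpoint to lie in $V(G')\setminus R$. The number of edges incident to $V(G')\setminus R$ is at most $|V(G')\setminus R| \cdot \Delta \le \exp(-\Delta^{1/300}) n' \Delta$. Meanwhile $m \ge \tfrac12 |R|\Delta'(1-\alpha') \ge \tfrac12 \cdot \tfrac12 n' \cdot 0.9\Delta' \ge n'\Delta'/5$ (say), and since the recursion floor in \Cref{lem:FinalSparseRegularity} keeps $\Delta' \ge \Delta^{0.99}$ in the relevant range, we have $\Delta \le (\Delta')^{1/0.99}$, so $\exp(-\Delta^{1/300}) n' \Delta = o(m)$ once $\Delta \ge C$ is a large enough constant. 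In particular $|E^{low}| \ge m - \exp(-\Delta^{1/300})n'\Delta \ge m/2$. Now Theorem~\ref{thm:MatchConstantFraction} applies directly and yields a matching of size at least $n'/288$ in $G'$ with high probability, completing the proof.

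The main obstacle — though it is more of a bookkeeping nuisance than a genuine difficulty — is being careful about which "$\Delta$" governs the "large enough constant $C$" threshold and the $\exp(-\Delta^{1/300})$ bounds. In the statement, $\Delta'$ is the current (halved) degree while $\Delta$ is the original degree appearing in the exponent; one must use that in the regime where this corollary is invoked (namely within \Cref{lem:FinalSparseRegularity} with $i \le 10\log(1/\epsilon)$ and $\epsilon \ge \Delta^{-c}$), the ratio between $\Delta$ and $\Delta'$ is polynomially bounded, so $\exp(-\Delta^{1/300})$ is small enough relative to $\Delta'$ to kill the bad-node contribution. Once that relationship is fixed, everything else is the routine averaging argument sketched above; no probabilistic reasoning beyond the black-box application of Theorem~\ref{thm:MatchConstantFraction} is needed.
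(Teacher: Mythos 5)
Your proof is correct and follows essentially the same route as the paper's: bound the average degree $\bar d$ from below, observe that every $(\alpha',\Delta')$-regular node has degree at most $2\bar d$ while the edges touching irregular nodes form a negligible fraction of $m$, and then invoke Theorem~\ref{thm:MatchConstantFraction}. You are in fact slightly more careful than the paper in flagging the hidden need for $\Delta'$ to be polynomially comparable to $\Delta$ (so that $\exp(-\Delta^{1/300})\,\Delta$ is dominated by $\Delta'$); the corollary statement leaves this implicit, but the calling context in \Cref{lem:FinalSparseRegularity} supplies it.
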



We are now ready to prove Theorem~\ref{thm:UpperBoundMain}.

\begin{proof}[\textbf{Proof of Theorem~\ref{thm:UpperBoundMain}}]

We first provide a proof that assumes that the graph in bipartite, and then we lift this assumption by a simple sampling argument. We simply run Luby's algorithm in $G$ for $i=10\log(1/\epsilon)$ rounds. By Lemma~\ref{lem:FinalSparseRegularity}, all but an $\exp{(-\Delta^{1/300})}$-fraction of the nodes in the graph are $(\alpha_i,\Delta_i)$-regular at round $i$ with high probability. Hence, by ~\Cref{cor:ConstantMatchingRegularGraphs}, in each of these rounds we match at least a $(1/288)$-fraction of the nodes. Therefore, after $O(\log(1/\epsilon))$ rounds, all but an $\epsilon$-fraction of the nodes are matched, as desired.

\subparagraph{Overcoming bipartiteness:} Lemma~\ref{lem:FinalSparseRegularity} assumes that the input graph is bipartite. To overcome this, we apply a simple color-coding trick. Before running Luby's algorithm, each node picks a uniformly random color independently in $\{0,1\}$. This naturally defines a bipartite subgraph graph $G'$ by ignoring all monocolored edges. Observe that for a given node $u\in G$, the degree of $u$ in $G'$ is $\frac{\Delta}{2}(1\pm \Delta^{-0.4})$ with probability at least $1-\exp{(-\Delta^{0.2}/6)}$ by a standard Chernoff-Hoeffding bound (Theorem~\ref{thm:Chernoff}). Hence, if $\Delta\geq \log^{50} n$, then all the nodes degrees in $G'$ are $\frac{\Delta}{2}(1\pm \Delta^{-0.4})$ with high probability. Therefore, all the nodes are also $(\Delta^{-0.4},\frac{\Delta}{2})$-regular in that case. Otherwise, if $\Delta<\log^{50} n$, then we can use a bounded dependence Chernoff-Hoeffding (Theorem~\ref{thm:CorrelatedChernoff}) type argument, as follows. By a union bound, the probability that a node $u$ isn't $(\Delta^{-0.4},\frac{\Delta}{2})$-regular is at most $\Delta^2\cdot \exp{(-\Delta^{0.2}/6)}\leq \exp{(-\Delta^{0.1})}$. Hence, the expected number of nodes that aren't $(\Delta^{-0.4},\frac{\Delta}{2})$-regular is at most $n\cdot \exp{(-\Delta^{0.1})}$. Furthermore, the event of whether a node is $(\Delta^{-0.4},\frac{\Delta}{2})$-regular depends only on $<\Delta^{10}$ other nodes. Hence, by applying Theorem~\ref{thm:CorrelatedChernoff} with $\lambda = n\cdot \Delta^{-0.1}$, we get that with high probability, all but $n\cdot \exp{(-\Delta^{1/200})}$ nodes are $(\Delta^{-0.4},\frac{\Delta}{2})$-regular in $G'$. Hence, since $\Delta^{-0.4}\leq \Delta^{-1/600}$, it suffices to apply Lemma~\ref{lem:FinalSparseRegularity} on $G'$. This concludes the proof. 
\end{proof}

\subsection{Extension: Node-Averaged Complexity}

We next utilize the techniques developed in this section to derive the following main result concerning node-averaged complexity.

\thmnodeaveraged*

\begin{proof}
  The algorithmic plan is similar to that of \Cref{thm:UpperBoundMain}. We begin by handling regular graphs whose degree $\Delta$ is a sufficiently large constant. We first apply our color-coding trick and then run Luby for $i = 200 \log \log \Delta$ rounds. We then repeatedly invoke the ``modified nearly-maximal independent set'' algorithm of \cite{Bar-YehudaCGS17} (in particular, we will use their Theorem 3.1) on the line graph (to translate from matching to independent set) of our remaining graph until our matching is maximal.

  Our color-coding trick guarantees that all but a $\exp{(-1^{1/200})}$-fraction of nodes are $(\Delta^{-0.4}, \Delta/2)$-regular in the bipartite subgraph $G'$. We then choose $\epsilon = 1/\log^{20} \Delta$; note that this is at least $1 / \Delta^{1/10^5}$ when $\Delta$ is sufficiently large (which we assumed for this part of the proof). This lets us invoke \Cref{lem:FinalSparseRegularity} to show that, with high probability, it is safe to run Luby's algorithm for $i \le 10 \log (1 / \epsilon) = 10 \log (\log^{20} \Delta) = 200 \log \log \Delta$ rounds because approximate degree factor is bounded ($\alpha_i \le 1/10$) and the number of irregular nodes is bounded ($\delta_i \le \exp{(-\Delta^{1/300})}$). We again invoke \Cref{cor:ConstantMatchingRegularGraphs} to say that with high probability, during these $200 \log \log \Delta$ rounds of Luby, we match at least a $1/288$-fraction of nodes in each Luby round. Hence with high probability our Luby rounds leave at most a $(287/288)^{200 \log \log \Delta} \le (1/2)^{\log \log \Delta} = 1 / \log \Delta$ fraction of nodes unmatched. Furthermore, since we were able to match a constant fraction of nodes per round, the contribution of this algorithmic phase to the overall node-averaged complexity is $O(1)$.

  Next, we use an algorithm of \cite{Bar-YehudaCGS17} to handle the remaining (irregular) graph. By running it on the line graph of our remaining graph, their theorem has the following guarantee for matchings: there is a constant $\beta$ such that over $\beta(\log \Delta / \log K + K^2 \log 1 / \delta)$ rounds, their algorithm will choose matching edges so that the probability that an edge survives is at most $\delta$, where $\Delta$ is a bound on the degree of any node, $K$ is an algorithm parameter (which they choose to be $\log^{0.1} \Delta$, and $\delta$ is another algorithm parameter (which they choose based on their error parameter $\epsilon$). Because we have so few surviving nodes going into this algorithm, we can choose parameters less aggressively and instead set $K = 2$ and $\delta = 1 / (\Delta \log^2 \Delta)$ to get a $O(\log \Delta)$ round algorithm which only lets edges survive with probability at most $1 / (\Delta \log^2 \Delta)$; we also know that the max degree is bounded by the degree of our original regular graph (which we have been using $\Delta$ to denote anyways). By a union bound, we know that nodes survive this algorithm with probability at most $1 / (\log^2 \Delta)$ (a node cannot survive if none of its edges do. Since this survival probability is smaller than the number of rounds this algorithm takes, each node only survives for $O(1)$ invocations of this algorithm, in expectation. We plan to simply invoke this algorithm forever until we get a maximal matching, which is safe because our node-averaged goal only considers the expected (not worst-case) survival time.

  In total, we have that with high probability, Luby handles at least a $(1 - 1/\log \Delta)$-fraction with $O(1)$ node-averaged rounds, and then the remaining at most $(1 / \log \Delta)$-fraction are handled by \cite{Bar-YehudaCGS17} in $O(\log \Delta)$ node-averaged rounds, for a total node-averaged round count of $O(1)$. If our high probability claim regarding Luby rounds fails, then we instead observe that the failure probability is at most $O(1 / \log \Delta)$ and instead handle all nodes using \cite{Bar-YehudaCGS17}, which still yields a node-averaged round count of $O(1)$.
  
  Finally, we need to consider what happens if $\Delta$ is not sufficiently large for this argument. In that case, $\Delta$ is bounded by a constant, and we can directly repeatedly invoke \cite{Bar-YehudaCGS17} with the same parameters above, which handles a constant fraction of nodes in $O(\log \Delta)$ rounds. Each node still survives a constant number of invocations in expectation and since $\Delta$ is now a constant, the node-averaged round count is $O(1)$ as well. This concludes the proof.
\end{proof}
\section{Local \rr{}}
\label{sec:localrrl}

We devote this section to the proof of \Cref{lem:RecursiveRegularity} that we restate here for convenience.

\lemmaRR*

\subsection{Notation and Preliminaries}
For the entirety of this section, let $u$ be some fixed $(\alpha, \Delta)$-regular node in $G$. Recall that $N(u)$ is the set of neighbors of $u$ and $N^d(u)$ is the set of nodes at distance exactly $d$ from $u$. Let $k = |N^2(u)|$.  Let $E_u$ denote the set of edges that are at most 3 hops away from $u$, i.e, $E_u = E \cap \left\{(\{u\} \times N(u)) \cup (N(u) \times N^2(u)) \cup (N^2(u) \times N^3(u))\right\}$ and let $A_u = E \cap \{N(u) \times N^2(u)\}$. The lemma states that almost exactly $\Delta/2$ edges from $A_u$ are matched with high probability.

\begin{figure}[tb]
\begin{center}
\includegraphics[scale=0.5]{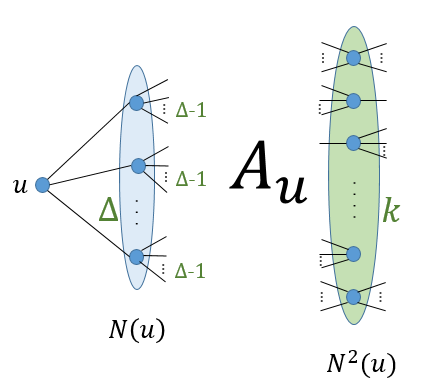}
\end{center}
\caption{We focus on a node $u$ that has $\Delta$ neighbors. The set of neighbors of $u$ is denoted by $N(u)$, and the set of nodes at distance 2 from $u$ is denoted by $N^2(u)$. We denote the size of $N^2(u)$ by $k$. The set of edges between $N(u)$ and $N^2(u)$ is denoted by $A$. Observe that each $v\in N(u)$ has exactly $\Delta-1$ edges incident to it in $A_u$. However, this is not necessarily the case for the nodes in $N^2(u)$, that can have anywhere between $1$ to $\Delta$ incident edges in $A_u$.}
\label{fig:setup}
\end{figure}

It is easy to observe that only edges from $E_u$ can affect the insertion of edges in $A_u$ in the matching, and hence in this section, we restrict our attention to only edges in the set $E_u$. In other words, instead of SeqLuby (Algorithm~\ref{alg:SeqL}), it suffices to show the lemma for Algorithm \ref{Alg: SeqLuby_u}. We formalize this in the following observation.

\begin{observation}\label{ob:SameMatchingSeqLuby_u}
    Let $D_0$ and $D_1$ be the distributions over the matching edges picked from $A_u = N(u)\times N^2(u)$ by Algorithms~\ref{alg:SeqL} and~\ref{Alg: SeqLuby_u}, respectively. It holds that for any matching $\mathcal{M}\subseteq A_u$:

    $$\mathbb{P}_{\mathcal{M}' \sim D_0}[\mathcal{M}' = \mathcal{M}] = \mathbb{P}_{\mathcal{M}' \sim D_1}[\mathcal{M}' = \mathcal{M}]$$
\end{observation}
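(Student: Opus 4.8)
The plan is to exhibit a coupling between the executions of Algorithm~\ref{alg:SeqL} (running on all of $G$, or equivalently on $E_u$ together with the rest of $E$) and Algorithm~\ref{Alg: SeqLuby_u} (running only on $E_u$), and to argue that along any coupled pair of executions the set of matching edges selected from $A_u$ is identical. The key structural fact, which I would state first, is that whether a given edge $e \in A_u$ enters the matching in SeqLuby depends only on the relative arrival order of $e$ and its neighboring edges, and every edge adjacent to an edge of $A_u$ lies in $E_u$: indeed an edge adjacent to $e = (x,y)$ with $x \in N(u)$, $y \in N^2(u)$ is incident to $x$ (hence in $\{u\}\times N(u)$ or in $N(u)\times N^2(u)$) or incident to $y$ (hence in $N(u)\times N^2(u)$ or in $N^2(u)\times N^3(u)$); in all cases it is within $3$ hops of $u$, so it belongs to $E_u$. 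Thus edges of $E \setminus E_u$ never block and are never blocked by any edge of $A_u$, so deleting them from the process cannot change which edges of $A_u$ are taken.

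With that fact in hand, the coupling argument is as follows. First, I would recall that SeqLuby (Algorithm~\ref{alg:SeqL}) is equivalent in distribution to the following process: draw a uniformly random permutation $\pi$ of $E$ and add an edge $e$ to $\M$ iff $e$ precedes all its neighbors in $\pi$ (this is just the observation that sampling uniformly without replacement from $U$ produces a uniformly random order). Likewise Algorithm~\ref{Alg: SeqLuby_u} is equivalent to drawing a uniformly random permutation $\sigma$ of $E_u$ and using the same selection rule restricted to $E_u$. Now couple $\pi$ and $\sigma$ by letting $\sigma$ be the order in which the edges of $E_u$ appear within $\pi$ (the restriction of $\pi$ to $E_u$); since $\pi$ is uniform on all of $E$, its restriction to $E_u$ is uniform on permutations of $E_u$, so this is a valid coupling of the correct marginals. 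Under this coupling, for any edge $e \in A_u$, the set of neighbors of $e$ that precede it is the same whether computed with respect to $\pi$ or with respect to $\sigma$ — because all neighbors of $e$ lie in $E_u$, and the relative order of $e$ and those neighbors is preserved by restriction. Hence $e$ is selected by Algorithm~\ref{alg:SeqL} iff it is selected by Algorithm~\ref{Alg: SeqLuby_u}, so $\M \cap A_u$ is identical in the two coupled executions, and therefore the two distributions $D_0$ and $D_1$ over subsets of $A_u$ (in particular over matchings $\M \subseteq A_u$) coincide.

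I do not anticipate a serious obstacle here; this is a standard "irrelevant edges can be ignored'' argument. The only point requiring care is the clean verification that every edge adjacent to an edge of $A_u$ really does lie in $E_u$ (i.e.\ that $E_u$ is "closed'' under adjacency as far as $A_u$ is concerned), together with the observation that restricting a uniform permutation to a subset yields a uniform permutation of that subset — both of which are routine. One should also note that Algorithm~\ref{alg:SeqL} as written samples edges without replacement from the full edge set, so strictly the statement is about the sequential (without-replacement) model rather than the distributed model with random numbers; but that is exactly the regime Observation~\ref{ob:SameMatchingSeqLuby_u} is phrased in, and the passage to Algorithm~\ref{alg:L} is handled separately by Proposition~\ref{prop:indep-equiv}.
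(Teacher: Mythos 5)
Your proof is correct and formalizes exactly what the paper asserts informally (``only edges from $E_u$ can affect the insertion of edges in $A_u$''), via a clean coupling of the uniform permutation on $E$ with its restriction to $E_u$. One subtlety worth flagging: the closure claim (every edge neighboring $A_u$ lies in $E_u$) relies on the graph being bipartite---otherwise an edge within $N(u)$ or within $N^2(u)$ would block edges of $A_u$ while lying outside $E_u$ as formally defined---but this is precisely the setting of Lemma~\ref{lem:RecursiveRegularity}, so your argument goes through.
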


Throughout this section, by iteration $i$ we mean the $i$th iteration of the while loop in Algorithm \ref{Alg: SeqLuby_u}.
We say an edge $e \in A_u$ \emph{survives} in iteration $i$ if neither it nor any of its neighbors have been sampled in the first $i-1$ iterations. Intuitively, a surviving edge that gets sampled gets added to the resulting matching.

\begin{algorithm}[H]
	\SetAlgoLined
	\DontPrintSemicolon
	\KwData{an unweighted graph $G=(V,E)$ and a node $u\in V$}
	\KwResult{a matching $\mathcal{M}_u\subseteq N(u)\times N^2(u)$}	
	$\M'_u\gets \emptyset$\;	
	$E_u\gets \{(u\times N(u))\cup (N(u)\times N^2(u))\cup (N^2(u)\times N^3(u))\}\cap E$\;
    $U\gets E_u$\;
	\While{$U\ne\emptyset$ }{		
		$e\gets $ uniformly random element of the set $U$ 	\;	
		$U\gets U\setminus \{e\}$\;		
		\If{$\{e'\mid e'\cap e\neq \emptyset\} \subseteq U$}{
			$\mathcal{M}'_u \gets \mathcal{M}'_u \cup \{e\}$\;
		}		
	}
        $\M_u = \mathcal{M}_u'\cap (N(u)\times N^2(u))$\;     
	\Return $\M_u$\;	
	\caption{SeqLuby$_u$ - SeqLuby in the neighborhood of $u$}\label{Alg: SeqLuby_u}
\end{algorithm}

\begin{definition}[The set of surviving edges $E_i$]
\label{def:surviving edges}
    For $i\geq 1$, we say that an edge $e\in A_u$ is still surviving at the beginning of the $i$th iteration of the while loop in Algorithm~\ref{Alg: SeqLuby_u}, if it holds that
    \[\{e\}\cup \{e'\mid e'\cap e\neq \emptyset\}\subseteq U\]
    where $U$ is the set of edges that haven't been sampled so far. 
    Let $E_i$ be the set of surviving edges at the beginning of the $i$th iteration.
\end{definition}

\subparagraph{The Stopping Time:} Since our proofs involve martingales analysis, one important aspect of it is \emph{the stopping time $t$} of our martingales. In several applications of martingale inequalities, we need to stop the martingale earlier than its final termination point. This concept is standard in martingales analysis, and in most cases, without stopping the martingale early, one wouldn't be able to apply the black-box martingale inequalities conveniently. In our context, $t$ is the number of iterations of the while loop in Algorithm~\ref{Alg: SeqLuby_u} that we consider for our analysis. We set $t=k\log\Delta/100$ (recall that $k=|N^2(u)|$). For the lower bound on $|\mathcal{M}_u|$, we show that the size of $|\mathcal{M}_u|$ is already large enough after $t$ iterations of the while loop. For the upper bound, we show that $|\mathcal{M}_u|$ doesn't change substantially after the first $t$ iterations (this is formally proven in Lemma~\ref{lem:FewUnlabeledNodes}), which allows us to ``stop" the analysis after $t$ iterations. 

For easy reference, we include a table of notations that we use throughout the proof in \Cref{table:notations}.

\begin{table}[ht]
\centering
\caption{Table of Notations}\label{table:Notations}
\begin{tabular}{|c|l|}
\hline
Notation & Meaning \\
\hline
\(V\) & The set of nodes in the input graph.\\
\hline
\(E\) & The set of edges in the input graph.\\
\hline
\(u\) & The fixed node throughout ~\Cref{sec:localrrl}.\\
\hline
\( N(u) \) & The set of neighbors of \( u \). \\
\hline
\( N^d(u) \) & The set of nodes at distance d from \( u \). \\
\hline
\(k\) & The size of \(N^2(u)\), i.e., \(k = |N^2(u)|\).\\
\hline
\(\Delta\) & The degree of the original input regular graph. In the Local \rrl{}\\& all the nodes in $\{u\}\cup N(u)\cup N^2(u)$ have degree in the range $\Delta(1\pm \alpha)$\\
\hline
\(\alpha\) & A parameter in $[0,1/200]$ used to lower and upper bound the degrees of the nodes in\\& $\{u\}\cup N(u)\cup N^2(u)$ by $\Delta(1\pm \alpha)$.\\
\hline
\(E_u\) & The set of edges in \((u\times N(u))\cup (N(u)\times N^2(u))\cup (N^2(u)\times N^3(u))\).\\
\hline
\( A_u \) & The set of edges in \( N(u) \times N^2(u) \).  \\
\hline 
\(\mathcal{M}_u\) & The returned set of matching edges in $A_u$, i.e., $|\mathcal{M}_u|$ is the number\\ & of matched neighbors of $u$.\\
\hline
$U$ & The set of edges that weren't sampled so far in Algorithm~\ref{Alg: SeqLuby_u}.\\
\hline
\(Z_i\) & A boolean random variable indicating whether we add an edge\\ &from $A_u$ to the matching in the $i$th iteration.\\
\hline
\( q_i \) & The probability that $Z_i = 1$ given $Z_1,\cdots, Z_{i-1}$.\\ &Since the $Z_i$'s are boolean, $q_i = \mathbb{E}[Z_i\mid Z_1,\cdots, Z_{i-1}]$. \\
\hline
\(E_i\) & The set of surviving edges in $A_u$ at the beginning of the $i$th iteration.\\
\hline
$t$ & The stopping time $t = k\log\Delta/100$.\\
\hline
\end{tabular}
\label{table:notations}
\end{table}

\subsection{Analyzing Number of Surviving Edges \texorpdfstring{$|E_i|$}{Ei}}
\label{sec:analyzingei}

Our first goal is to obtain tight bounds on the number of surviving edges at the beginning of the $i$th iteration. Towards this goal, we first bound the total number of edges adjacent to any node that are sampled by the first $t$ iterations. We setup some additional notation to facilitate this discussion.

\begin{definition}[Labeled Edges]
\label{def:labeledEdges}
    We say that an edge $e$ is labeled at iteration $i\in [t]$ if $e$ has been sampled in one of the first $i$ iterations, i.e., if $e \notin U$ at the end of $i$th iteration of the while loop in  Algorithm~\ref{Alg: SeqLuby_u}. If $e$ is labeled at iteration $i$ then it also labeled at any iteration $j>i$.
\end{definition}

\begin{claim}
    \label{cor:EveryNodeFewLabeled}
    With probability at least $1-\exp{\left(-\Delta^{1/11}\right)}$, every node in $\{u\}\cup N(u)\cup N^2(u)$ has at most $\Delta^{1/9}$ labeled edges incident to it at any iteration $i\in [t=k\log\Delta/100]$.
\end{claim}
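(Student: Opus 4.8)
The plan is to reduce the claim to an upper-tail bound for one hypergeometric random variable per node, and then take a union bound over the (polynomially many) nodes within distance $2$ of $u$. First, it suffices to control the last iteration $i=t$: for any node $v$ the number of labeled edges incident to $v$ is nondecreasing in $i$, so its maximum over $i\in[t]$ is attained at $i=t$. Fix $v\in\{u\}\cup N(u)\cup N^2(u)$, let $d_v$ be the number of edges of $E_u$ incident to $v$, and let $Y_v$ be the number of those that are among the first $t$ sampled edges in Algorithm~\ref{Alg: SeqLuby_u}. Since the first $t$ sampled edges form a uniformly random $t$-subset of $E_u$, the variable $Y_v$ is hypergeometric with population $|E_u|$, $d_v$ successes, and $t$ draws, so the task reduces to showing $\mathbb{P}[Y_v\ge\Delta^{1/9}]$ is tiny, uniformly in $v$.

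To bound $\mathbb{E}[Y_v]=d_v\cdot t/|E_u|$, I would use two observations. (i) Every $v$ within distance $2$ of $u$ has all of its incident edges inside $E_u$ — a neighbor of such a $v$ lies within distance $3$ of $u$ — so $d_v=\deg(v)\le(1+\alpha)\Delta$. (ii) Because $G$ is bipartite, no edge has both endpoints in $N^2(u)$, hence $A_u$ and $N^2(u)\times N^3(u)$ partition the set of edges incident to $N^2(u)$, giving $|E_u|\ge\sum_{w\in N^2(u)}\deg(w)\ge(1-\alpha)\Delta k$ with $k=|N^2(u)|$. Plugging in $t=k\log\Delta/100$ yields $\mathbb{E}[Y_v]\le\frac{(1+\alpha)\log\Delta}{100(1-\alpha)}\le\log\Delta/80$; the same estimate gives $t<|E_u|$, so the while loop really does reach iteration $t$.

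Since the threshold $\Delta^{1/9}$ is a polynomial factor above the mean $O(\log\Delta)$, I would finish with a standard upper-tail bound for sampling without replacement, for instance the elementary estimate $\mathbb{P}[Y_v\ge a]\le\binom{d_v}{a}(t/|E_u|)^a\le\big(e\,\mathbb{E}[Y_v]/a\big)^a$ (alternatively, invoke Theorem~\ref{thm:Chernoff} together with the fact that hypergeometric variables are at least as concentrated as binomial ones). This gives $\mathbb{P}[Y_v\ge\Delta^{1/9}]\le\big(e\log\Delta/(80\Delta^{1/9})\big)^{\Delta^{1/9}}=\exp(-\Omega(\Delta^{1/9}\log\Delta))$. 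Since $|\{u\}\cup N(u)\cup N^2(u)|\le((1+\alpha)\Delta)^2+(1+\alpha)\Delta+1\le\Delta^3$, a union bound over all these nodes bounds the overall failure probability by $\Delta^3\exp(-\Omega(\Delta^{1/9}\log\Delta))$, which is at most $\exp(-\Delta^{1/11})$ by a direct calculation that uses $\Delta\ge2^{10}$ (and leaves ample slack for larger $\Delta$).

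The step I expect to be the crux is this last concentration-plus-union-bound. The indicators ``edge $e$ lands among the first $t$'' are not independent, so one needs a tail inequality valid for sampling without replacement; and crucially the bound has to survive a $\mathrm{poly}(\Delta)$ union bound over the $\Theta(\Delta^2)$ nodes of $N^2(u)$. This is exactly why the claim fixes the threshold at $\Delta^{1/9}$ — a polynomial, rather than merely logarithmic, amount above $\mathbb{E}[Y_v]=\Theta(\log\Delta)$ — so that the resulting tail exponent $\Theta(\Delta^{1/9}\log\Delta)$ dominates $\log(\mathrm{poly}(\Delta))=O(\log\Delta)$.
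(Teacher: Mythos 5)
Your proof is correct, and it takes a genuinely different route than the paper's.

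The paper analyzes the sequential process directly: for a fixed node $v$ it defines indicators $X_i$ for ``an edge incident to $v$ is sampled at iteration $i$,'' uniformly bounds the conditional probabilities $p_i(v) = \mathbb{E}[X_i \mid X_1,\dots,X_{i-1}]$ by $8/(k+1)$, and then feeds this into the shifted-martingale black box (Theorem~\ref{thm:ShiftedMartingaleUpperBound}, a Freedman-type bound after recentering) to control $S_t$. You instead step back from the sequential view entirely: since the first $t$ edges drawn by Algorithm~\ref{Alg: SeqLuby_u} form a uniformly random $t$-subset of $E_u$, the number $Y_v$ of labeled edges at $v$ after $t$ iterations is hypergeometric, and the elementary union-bound estimate $\mathbb{P}[Y_v\ge a]\le\binom{d_v}{a}(t/|E_u|)^a\le\big(e\,\mathbb{E}[Y_v]/a\big)^a$ already yields a tail of $\exp\!\big(-\Omega(\Delta^{1/9}\log\Delta)\big)$. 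Both routes then finish with the same union bound over the $O(\Delta^2)$ nodes within distance two of $u$, and both reduce to $i=t$ by monotonicity. Your bounds on $d_v\le(1+\alpha)\Delta$, on $|E_u|\ge(1-\alpha)\Delta k$ (via bipartiteness), and on $\mathbb{E}[Y_v]=O(\log\Delta)$ all check out, and you correctly observe $t<|E_u|$ so the process reaches iteration $t$.

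The trade-off: your argument is more elementary and self-contained — it bypasses the martingale machinery entirely — because you exploit the static fact that a uniform permutation prefix is a uniform subset, which makes exchangeability do all the work. The paper's route is less sharp here (its per-node tail $\exp(-\Delta^{1/10})$ is weaker than your $\exp(-\Omega(\Delta^{1/9}\log\Delta))$), but it is the natural choice within the paper's framework: the shifted-martingale theorem is reused repeatedly in Section~\ref{sec:localrrl} for quantities (e.g.\ $|E_i|$, $|\mathcal{M}_u|$) that genuinely depend on the sampling order and have no clean ``uniform subset'' reformulation, so the paper gets mileage from setting up that machinery once and invoking it uniformly, whereas this particular claim happens to admit the simpler combinatorial shortcut you found. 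As a minor remark, the explicit constants in the statement (threshold $\Delta^{1/9}$, failure bound $\exp(-\Delta^{1/11})$, cutoff $\Delta\ge 2^{10}$) are slightly loose near the boundary $\Delta=2^{10}$ in both your derivation and the paper's — the union bound over $\Theta(\Delta^2)$ nodes only comfortably clears the target for somewhat larger $\Delta$ — but this is a preexisting slack in the paper's constants, not a defect introduced by your argument.
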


\begin{proof}
    Fix a node $v\in \{u\}\cup N(u)\cup N^2(u)$. We prove the claim by applying Theorem~\ref{thm:ShiftedMartingaleUpperBound}. For this, let $X_i\in \{0,1\}$ be a random variable indicating whether we sample an edge incident to $v$ at iteration $i$, and let $p_i(v) = \mathbb{E}[X_i\mid X_1,\cdots,X_{i-1}] = \mathbb{P}[X_i = 1\mid X_1,\cdots,X_{i-1}]$.  Furthermore, let $S_j = \sum_{i=1}^j X_i$ be the number of labeled edges incident to $v$ at iteration $j$. Since $S_j\geq S_{i}$ for any $j>i$, it suffices to prove the claim for $j=t$. Observe that since the degree of any node in $\{u\}\cup N(u)\cup N^2(u)$ is at least $(1-\alpha)\Delta$, it holds that $|E_u|\geq (1-\alpha)\Delta(k+1)$ (since each edge in $E_u$ is either incident with $u$ or some node in $N^2(u)$). Hence, we have that:

    $$p_i(v)\leq \frac{(1+\alpha)\Delta}{(1-\alpha)\Delta(k+1) - (i-1)}$$

    \noindent This is because $v$ has at most $(1+\alpha)\Delta$ edges incident to it, and at the $i$'th iteration we're sampling an edge out of $|E_u| - (i-1) \geq (1-\alpha)\Delta(k+1)-(i-1)$ remaining edges. Therefore, we have that:

    $$p_i(v)\leq \frac{(1+\alpha)\Delta}{(1-\alpha)\Delta(k+1)-(i-1)}\leq \frac{(1+\alpha)\Delta}{(1-\alpha)\Delta(k+1) - t +1} = \frac{(1+\alpha)\Delta}{(1-\alpha)\Delta(k+1) - k\log\Delta/100 +1}\leq \frac{8}{k+1}$$
    
    \noindent Let $P_i = 8/(k+1)$, and let $P=\sum_{i=1}^t P_i = 8t/(k+1)\leq \log\Delta$. By plugging these values into Theorem~\ref{thm:ShiftedMartingaleUpperBound} with $\lambda = \Delta^{1/9}$ and $M=1$, we get that:

    $$\mathbb{P}[S_t\geq \Delta^{1/9}]\leq \exp{\left(-\frac{(\Delta^{1/9}-\log\Delta)^2}{8\log\Delta + 2(\Delta^{1/9} - \log\Delta)/3}\right)}\leq \exp{\left(-\Delta^{1/10}\right)}$$
    as desired.

    The claim now follows by a union bound over all vertices in $\{u\} \cup N(u) \cup N^2(u)$.
\end{proof}

We say that node a $v$ is labeled at iteration $i$ if the edge sampled during the $i$th iteration is adjacent to  $v$. We now utilize \Cref{cor:EveryNodeFewLabeled} to argue that any iteration, every node $v$ is equally likely to be labeled with probability $\approx 1/k$.

\begin{lemma}\label{lem:ProbNodeIsLabeled}
    Recall that $p_i(v)$ is the probability that we label a node $v$ in the $i$th iteration. With probability at least $1-\exp{\left(-\Delta^{1/11}\right)}$, for any node $v\in \{u\}\cup N(u)\cup N^2(u)$ and for any iteration $i\in [t=k\log\Delta/100]$, it holds that the probability that $v$ is labeled at iteration $i$ is:

    $$\frac{(1-\alpha)}{(1+\alpha)\cdot k}\cdot (1-\Delta^{-7/8})\leq p_i(v)\leq \frac{(1+\alpha)}{(1-\alpha)\cdot k}\cdot(1+\frac{\log\Delta}{25\Delta})$$
\end{lemma}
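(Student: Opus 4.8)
\textbf{Proof proposal for Lemma~\ref{lem:ProbNodeIsLabeled}.}

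The plan is to directly compute $p_i(v)$, the probability that the edge sampled in iteration $i$ is incident to $v$. Conditioned on the history $X_1, \dots, X_{i-1}$ (i.e., on everything sampled so far), the edge sampled in iteration $i$ is uniform over the set $U$ of unsampled edges, so
\[
p_i(v) = \frac{|\{e \in U : e \ni v\}|}{|U|} = \frac{\deg(v) - (\text{\# labeled edges at } v)}{|E_u| - (i-1)}.
\]
So I need two-sided bounds on the numerator and denominator. For the numerator: $\deg(v) \in [(1-\alpha)\Delta, (1+\alpha)\Delta]$ since $v \in \{u\}\cup N(u)\cup N^2(u)$ and $u$ is $(\alpha,\Delta)$-regular, and by Claim~\ref{cor:EveryNodeFewLabeled} with probability at least $1 - \exp(-\Delta^{1/11})$ the number of labeled edges incident to $v$ is at most $\Delta^{1/9}$ at every iteration $i \in [t]$. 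So on this good event the numerator lies in $[(1-\alpha)\Delta - \Delta^{1/9}, (1+\alpha)\Delta]$. For the denominator: $|E_u| = \sum_{w \in \{u\}\cup N^2(u)} \deg(w)$ (each edge of $E_u$ has exactly one endpoint in $\{u\}\cup N^2(u)$ — edges $u$--$N(u)$ contribute through $u$, edges $N(u)$--$N^2(u)$ and $N^2(u)$--$N^3(u)$ contribute through the $N^2(u)$-endpoint), so $|E_u| \in [(1-\alpha)\Delta(k+1), (1+\alpha)\Delta(k+1)]$, and subtracting $i-1 \le t = k\log\Delta/100$ we get $|E_u| - (i-1) \in [(1-\alpha)\Delta(k+1) - k\log\Delta/100,\ (1+\alpha)\Delta(k+1)]$.

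Combining, the upper bound is
\[
p_i(v) \le \frac{(1+\alpha)\Delta}{(1-\alpha)\Delta(k+1) - k\log\Delta/100}
= \frac{1+\alpha}{(1-\alpha)(k+1)} \cdot \frac{1}{1 - \frac{k\log\Delta}{100(1-\alpha)\Delta(k+1)}},
\]
and since $(1-\alpha) \ge 9/10$ and the subtracted term inside the denominator is at most $\frac{\log\Delta}{90\Delta} \le \frac{1}{50}$ (using $\Delta \ge 2^{10}$), the correction factor $\frac{1}{1-x} \le 1+2x$ gives $p_i(v) \le \frac{1+\alpha}{(1-\alpha)(k+1)}(1 + \frac{\log\Delta}{25\Delta})$, which is at most the claimed $\frac{1+\alpha}{(1-\alpha)k}(1+\frac{\log\Delta}{25\Delta})$ since $\frac{1}{k+1} \le \frac{1}{k}$. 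For the lower bound,
\[
p_i(v) \ge \frac{(1-\alpha)\Delta - \Delta^{1/9}}{(1+\alpha)\Delta(k+1)}
= \frac{(1-\alpha)}{(1+\alpha)(k+1)}\left(1 - \frac{\Delta^{1/9}}{(1-\alpha)\Delta}\right),
\]
and $\frac{\Delta^{1/9}}{(1-\alpha)\Delta} \le \frac{10}{9}\Delta^{-8/9} \le \Delta^{-7/8}$ for $\Delta \ge 2^{10}$; but I need the $\frac{1}{k+1}$ to become $\frac{1}{k}$ in the stated lower bound, which goes the wrong way. The fix is to note $\frac{1}{k+1} = \frac{1}{k}\cdot\frac{k}{k+1} = \frac{1}{k}(1 - \frac{1}{k+1})$; if $k$ is reasonably large this $\frac{1}{k+1}$ loss is absorbed into the $(1-\Delta^{-7/8})$ slack, and if $k$ is tiny (a constant) one has to be slightly more careful — but in a $\Delta$-regular graph with $\Delta \ge 2^{10}$ and bipartiteness, $k = |N^2(u)|$ is at least a constant fraction of $\Delta$ unless there is heavy multiplicity, and in any case the lemma's slack terms are generous enough; I would just verify $\frac{1}{k+1}(1 - \Delta^{-7/8})' $ wording absorbs it, or alternatively state the bound with $k+1$ and note it implies the $k$ version up to the stated error.

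The main obstacle I expect is bookkeeping the small error terms so that the two required slack factors ($(1-\Delta^{-7/8})$ on the low side, $(1+\frac{\log\Delta}{25\Delta})$ on the high side) genuinely dominate \emph{all} accumulated errors simultaneously: the $\Delta^{1/9}$ from labeled edges, the $i-1 \le k\log\Delta/100$ from the shrinking pool, the $\frac{1}{k+1}$ vs $\frac{1}{k}$ discrepancy, and the $\frac{1}{1-x} \le 1+2x$ type rounding. None of these is deep, but the constants were evidently chosen to make it all fit, so the proof is essentially: condition on the good event of Claim~\ref{cor:EveryNodeFewLabeled}, write $p_i(v)$ as the exact ratio above, plug in the extreme values, and chase inequalities, taking care that each simplification only loses a factor that the stated error absorbs. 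I would present it exactly in that order, flagging the $\Delta \ge 2^{10}$ and $\alpha \le 1/10$ hypotheses at each step where they are used.
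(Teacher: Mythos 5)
Your approach is the same as the paper's: write $p_i(v)$ as (unsampled edges at $v$) over (remaining pool $|E_u|-(i-1)$), bound the numerator using the degree bounds plus \Cref{cor:EveryNodeFewLabeled}, bound the denominator using $(1\pm\alpha)\Delta(k+1)$ and $i-1\le t$, and chase the constants. The one place you explicitly leave open — converting $\frac{1}{k+1}$ to $\frac{1}{k}$ in the lower bound — is precisely the one place the paper does extra work: it proves $k \ge \Delta/7$ by an edge-counting argument in $A_u$ (since $u$ has at least $(1-\alpha)\Delta$ neighbors, each contributing at least $(1-\alpha)\Delta-1$ edges to $A_u$, while every node of $N^2(u)$ absorbs at most $(1+\alpha)\Delta$ of them, giving $k \ge \frac{(1-\alpha)^2\Delta^2-(1-\alpha)\Delta}{(1+\alpha)\Delta} \ge \Delta/7$), and then uses $\frac{1}{k+1} \ge \frac{1}{k}(1-\frac{1}{k})$ with $1/k \le 7/\Delta$ absorbed into the $(1-\Delta^{-7/8})$ slack. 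Your instinct that $k$ is a constant fraction of $\Delta$ is correct, but the ``unless there is heavy multiplicity'' caveat is a red herring (the graph is simple, so this is automatic), and your fallback of ``state the bound with $k+1$ and note it implies the $k$ version'' does not work for the lower bound — $\frac{1}{k+1} < \frac{1}{k}$, so that is the wrong direction. Without a quantitative lower bound on $k$ the lower bound on $p_i(v)$ is not actually closed; adding the $k \ge \Delta/7$ lemma fixes it.
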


\begin{proof}
        We start with the upper bound. The probability that we label an edge incident to $v$ at any iteration $i$ is at most:
        
        \begin{align*}
            p_i(v)&\leq \frac{(1+\alpha)\Delta}{(1-\alpha)\Delta(k+1) - (i-1)}\leq \frac{(1+\alpha)\Delta}{(1-\alpha)\Delta(k+1) - t +1}\\
            &\leq \frac{(1+\alpha)\Delta}{(1-\alpha)\Delta k - k\log\Delta/100}  = \frac{(1+\alpha)\Delta}{(1-\alpha)\Delta k(1-\log\Delta/(100(1-\alpha)\Delta))}\leq \frac{(1+\alpha)}{(1-\alpha)\cdot k}\cdot  (1+\frac{\log\Delta}{25\Delta})
        \end{align*}
        
        \noindent where the last inequality follows since:
        
        \begin{align*}
            \frac{1}{1-\log\Delta/(100(1-\alpha)\Delta)} &= \frac{1-\log\Delta/(100(1-\alpha)\Delta) + \log\Delta/(100(1-\alpha)\Delta)}{1-\log\Delta/(100(1-\alpha)\Delta)}\\
            &\leq 1 + \frac{\log\Delta/(100(1-\alpha)\Delta)}{1/2}\leq 1+\frac{\log\Delta}{25\Delta}
        \end{align*}
        
        \noindent For the lower bound, we apply Claim~\ref{cor:EveryNodeFewLabeled}, which says that with probability at least $1-\exp{\left(-\Delta/11\right)}$, every node $v\in \{u\}\cup N(u)\cup N^2(u)$ has at most $\Delta^{1/9}$ labeled edges incident to it at any of the first $t$ iterations. Hence, any node $v$ has at least $(1-\alpha)\Delta - \Delta^{1/9}$ unlabeled edges incident to it at any iteration $i\in [t]$. The probability that we label $v$ at iteration $i$ is exactly the probability that we pick one of these unlabeled edges out of the $|E_u|-(i-1)$ remaining edges. Hence:

        \setcounter{equation}{0}
        \begin{align}
            p_i(v)&\geq \frac{(1-\alpha)\Delta - \Delta^{1/9}}{|E_u| - (i-1)}\\
            &\geq \frac{(1-\alpha)\Delta - \Delta^{1/9}}{(1+\alpha)\Delta(k+1) - (i-1)}\\
            &\geq \frac{(1-\alpha)\Delta\cdot (1-1/((1-\alpha)\Delta^{8/9}))}{(1+\alpha)\Delta(k+1)}\\
            &\geq \frac{(1-\alpha)\Delta\cdot (1-2/\Delta^{8/9})}{(1+\alpha)\Delta(k+1)}
            \\&\geq \frac{(1-\alpha)}{(1+\alpha)\cdot k}\cdot(1-\Delta^{-7/8})
        \end{align}

        \noindent where (1) holds since the degree of $v$ is at least $(1-\alpha)\Delta-\Delta^{1/9}$, (2) holds since $|E_u|\leq (1+\alpha)\Delta(k+1)$, (4) holds since $\alpha\leq 1/10$, which implies that $(1-1/((1-\alpha)\Delta^{8/9}))\geq (1-2\Delta^{-8/9})$, and (5) holds since $1/(k+1)\geq (1/k)(1-1/k)$, and since $k\geq \Delta/7$. The latter holds since $u$ has at least $(1-\alpha)\Delta$ neighbors, and each of them has at least $(1-\alpha)\Delta-1$ edges incident to it in $|A_u|$. Hence:
        
        $$k=|N^2(u)|\geq \frac{|A_u|}{(1+\alpha)\Delta}\geq \frac{(1-\alpha)^2\Delta^2-(1-\alpha)\Delta}{(1+\alpha)\Delta}\geq \Delta/7$$
        
        \noindent where the last inequality holds since $\alpha\leq 1/10$.
    \end{proof}

The above bounds on $p_i(v)$ allow us to show that the expected number of surviving edges drops by a constant factor in each iteration. Recall that an edge $e \in A_u$ is said to be surviving at iteration $i$ if neither it or any of its neighbors have been sampled before that iteration. Informally, we show that $\mathbb{E}[|E_i| \mid E_{i-1}] \approx (1-2/k) |E_{i-1}|$.

\begin{lemma}\label{lem:newlyKilledEdges}
With probability at least $1-\exp{\left(-\Delta^{1/11}\right)}$, for any $1<i\leq t=k\log\Delta/100$, it holds that:

\begin{enumerate}
    \item $\mathbb{E}[|E_i|\mid E_1, \cdots, E_{i-1}]\geq \Big(1-\frac{2}{ k}\frac{(1+\alpha)}{(1-\alpha)}\left(1+\Delta^{-6/7}\right)\Big)\cdot |E_{i-1}|$.
    \item $\mathbb{E}[|E_i|\mid E_1,\cdots,E_{i-1}]\leq \Big(1-\frac{2}{ k}\frac{(1-\alpha)}{(1+\alpha)}\left(1-\Delta^{-6/7}\right)\Big)\cdot |E_{i-1}|$.
\end{enumerate}

\end{lemma}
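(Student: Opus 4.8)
The plan is to compute $\mathbb{E}[\,|E_i|\mid E_1,\dots,E_{i-1}]$ \emph{exactly} and then bound it using only the degree bounds guaranteed by $(\alpha,\Delta)$-regularity. The key observation is that $E_i$ is obtained from $E_{i-1}$ by a single sampling step (the iteration that separates the start of the $(i-1)$st iteration from the start of the $i$th): an edge $e=(v,w)\in E_{i-1}$, with $v\in N(u)$ and $w\in N^2(u)$, leaves the surviving set precisely when the edge sampled in that step shares an endpoint with $e$. Since $e$ survives at that point, \emph{every} edge incident to $v$ or to $w$ is still unsampled, and there are exactly $\deg(v)+\deg(w)-1$ of them ($G$ is simple, so $e$ is the only common edge of $v$ and $w$); moreover all of these edges lie in $E_u$, because in the bipartite graph every edge at $v$ or $w$ reaches a node at distance at most $3$ from $u$. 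The sampled edge is uniform over the current unsampled set $U$, and $|U|=|E_u|-(i-2)$ is a deterministic quantity depending only on $i$. Hence, conditioned on $E_1,\dots,E_{i-1}$, each $e=(v,w)\in E_{i-1}$ is removed with probability exactly $\frac{\deg(v)+\deg(w)-1}{|E_u|-(i-2)}$, and summing over $e\in E_{i-1}$ gives
\[
\mathbb{E}[\,|E_{i-1}|-|E_i|\mid E_1,\dots,E_{i-1}]=\sum_{e=(v,w)\in E_{i-1}}\frac{\deg(v)+\deg(w)-1}{|E_u|-(i-2)},
\]
which is a deterministic function of $E_1,\dots,E_{i-1}$.

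Next I would plug in the regularity bounds. Since $u$ is $(\alpha,\Delta)$-regular, every $z\in\{u\}\cup N(u)\cup N^2(u)$ has $\deg(z)\in[\Delta(1-\alpha),\Delta(1+\alpha)]$, so $2\Delta(1-\alpha)-1\le \deg(v)+\deg(w)-1\le 2\Delta(1+\alpha)-1$ for every $e=(v,w)\in E_{i-1}\subseteq A_u$. For the denominator I would use that each edge of $E_u$ is incident to exactly one of $u$ or $N^2(u)$, hence $|E_u|=\deg(u)+\sum_{w\in N^2(u)}\deg(w)\in[(1-\alpha)\Delta(k+1),(1+\alpha)\Delta(k+1)]$, together with $0\le i-2<t=k\log\Delta/100$. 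This yields, for every realization of $E_1,\dots,E_{i-1}$,
\[
\frac{(2\Delta(1-\alpha)-1)\,|E_{i-1}|}{(1+\alpha)\Delta(k+1)}\ \le\ \mathbb{E}[\,|E_{i-1}|-|E_i|\mid E_1,\dots,E_{i-1}]\ \le\ \frac{(2\Delta(1+\alpha)-1)\,|E_{i-1}|}{(1-\alpha)\Delta(k+1)-k\log\Delta/100}.
\]

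Finally I would simplify both sides into the claimed form by absorbing lower-order corrections. Using $k\ge\Delta/7$ (from the proof of \Cref{lem:ProbNodeIsLabeled}), the quantities $\frac{1}{2\Delta(1-\alpha)}$, $\frac1k-\frac1{k+1}$, and $\frac{\log\Delta}{100(1-\alpha)\Delta}$ are all $O(\log\Delta/\Delta)$, hence at most $\Delta^{-6/7}$ once $\Delta$ is large; the elementary inequalities $\frac1{1-x}\le 1+2x$, $(1-a)(1-b)\ge 1-a-b$, and $\frac1{k+1}\ge\frac1k(1-\frac1k)$ then turn the upper bound into $\frac{2}{k}\,\frac{1+\alpha}{1-\alpha}\,(1+\Delta^{-6/7})\,|E_{i-1}|$ and the lower bound into $\frac{2}{k}\,\frac{1-\alpha}{1+\alpha}\,(1-\Delta^{-6/7})\,|E_{i-1}|$. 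Rearranging $\mathbb{E}[\,|E_i|\mid\cdot\,]=|E_{i-1}|-\mathbb{E}[\,|E_{i-1}|-|E_i|\mid\cdot\,]$ gives the two inequalities of the lemma. I expect no genuine obstacle here, only a bookkeeping subtlety: the statement conditions on $\sigma(E_1,\dots,E_{i-1})$ rather than on the full sample history, but because $E_{i-1}$ is determined by the first $i-2$ samples and ``$e\in E_{i-1}$'' forces all edges incident to $v$ or $w$ to be unsampled, the per-edge removal probability above is exact given only $E_1,\dots,E_{i-1}$, so no additional averaging is required. (The probability $1-\exp(-\Delta^{1/11})$ in the statement is exactly the failure probability of \Cref{lem:ProbNodeIsLabeled}; one may equivalently phrase the whole computation in terms of the $p_i(v)$ bounds of that lemma, at the cost of conditioning on its good event.)
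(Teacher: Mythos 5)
Your proof is correct, and it takes a genuinely different and, in fact, cleaner route than the paper. The paper expands $\mathbb{E}[K_{i-1}\mid E_{i-1}]$ by regrouping over vertices as $\sum_{v}p_{i-1}(v)\,d_{i-1}(v)-p_{i-1}(E_{i-1})$ and then plugs in the two-sided bounds on the label probabilities $p_{i-1}(v)$ from Lemma~\ref{lem:ProbNodeIsLabeled}; those bounds in turn hinge on Claim~\ref{cor:EveryNodeFewLabeled} (a shifted-martingale argument showing every node in $\{u\}\cup N(u)\cup N^2(u)$ has at most $\Delta^{1/9}$ labeled incident edges), which is where the stated failure probability $\exp(-\Delta^{1/11})$ originates. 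You instead sum, over surviving edges $e=(v,w)\in E_{i-1}$, the \emph{exact} one-step removal probability $\tfrac{\deg(v)+\deg(w)-1}{|E_u|-(i-2)}$, using the key observation that membership of $e$ in $E_{i-1}$ already forces every edge at $v$ or $w$ to be unsampled, so no lower bound on the number of surviving edges at a vertex is needed. In effect, you have noticed that in the paper's identity the factor $p_{i-1}(v)$ is only ever multiplied by $d_{i-1}(v)$, and whenever $d_{i-1}(v)>0$ one actually has $p_{i-1}(v)=\deg(v)/|U|$ exactly; exploiting this eliminates both the $\Delta^{1/9}$ slack and the dependence on Claim~\ref{cor:EveryNodeFewLabeled}, upgrading the conclusion from a high-probability bound to a bound that holds deterministically (the stated $1-\exp(-\Delta^{1/11})$ becomes vacuous). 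The only soft spot is the constant bookkeeping at the end: with the crude bound $k\ge\Delta/7$ the accumulated correction is on the order of $8/\Delta$, which exceeds $\Delta^{-6/7}$ unless $\Delta\gtrsim 8^7$; you should instead use the sharper $k\ge(1-\alpha)^2\Delta/(1+\alpha)-1\ge 0.7\Delta$ (implicit in the same derivation the paper uses to get $k\ge\Delta/7$), which brings the correction down to $\approx 2/\Delta\le\Delta^{-6/7}$ for the stated range $\Delta\ge 2^{10}$. This is a constant-optimization issue, not a structural gap, and the paper's own estimates at this step are similarly non-optimized.
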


\begin{proof}
    We say that an edge $e'$ is killed at the $(i-1)$'th iteration if $e'\in E_{i-1}$, and either $e'$ or a neighboring edge is sampled in the $(i-1)$'th iteration. That is, $i-1$ is the first iteration in which $e'$ is no longer surviving. We denote the number of killed edges at the $(i-1)$'th iteration by $K_{i-1}$. Observe that $|E_{i}| = |E_{i-1}| - K_{i-1}$. We analyze $\mathbb{E}[K_{i-1}\mid E_{i-1}]$ to derive $\mathbb{E}[|E_{i}|\mid E_{i-1}]$. 
    
    \subparagraph{Some notation:} For an edge $e=\{v,w\}$, let $p_{i-1}(e)$ be the probability that $e$ is sampled at the $(i-1)$th iteration. Similarly, for a set of edges $S$, let $p_{i-1}(S)$ be the probability that some edge in $S$ is sampled in the $(i-1)$th iteration. Recall that $p_{i-1}(v)$ is the probability that a node $v$ is labeled in the $(i-1)$th iteration; i.e. the probability that some incident edge is labeled in that iteration. In particular, $p_{i-1}(v) = \sum_{e=\{v,w\}} p_{i-1}(e)$. Since the graph is bipartite, we assume without loss of generality that $v\in L$ and $w\in R$ when we refer to an edge $\{v,w\}$, where $L$ and $R$ are the left and right sides of the graph, respectively. Furthermore, we assume without loss of generality that $N(u)\subseteq L$ and $N^2(u)\subseteq R$. Observe that when an edge $e$ is sampled, it kills the set of edges $\{e'\in E_{i-1}\mid e\cap e'\neq \emptyset\}$. Let $Y_e\in \{0,1\}$ be $1$ if and only if $e\in E_{i-1}$, and let $d_{i-1}(v)$ be the number of surviving edges incident to a node $v$ at the $(i-1)$th iteration. Observe that:

    \setcounter{equation}{0}
    \begin{align}
        \mathbb{E}[K_{i-1}\mid E_{i-1}] &= \sum_{e = \{v,w\}\in E_u} p_{i-1}(e)\cdot |\{e'\in E_{i-1}\mid e\cap e'\neq \emptyset\}|\\
        & =\sum_{e=\{v,w\}\in E_u} p_{i-1}(e)\cdot\big( 
        d_{i-1}(v)
        +d_{i-1}(w) 
        -Y_e\big)\\
        & = \left(\sum_{v\in L} p_{i-1}(v) \cdot d_{i-1}(v)\right) + \left(\sum_{w\in R} p_{i-1}(w) \cdot d_{i-1}(w)\right)- p_{i-1}(E_{i-1})\\
        &= \left(\sum_{v\in N(u)} p_{i-1}(v)\cdot d_{i-1}(v)\right) + \left(\sum_{w\in N^2(u)} p_{i-1}(w)\cdot d_{i-1}(w)\right) - p_{i-1}(E_{i-1})
    \end{align}

    \noindent (2) follows by a simple inclusion/exclusion argument, since the number of edges that are killed by $e=\{v,w\}$ is the number of edges killed that are incident to $v$, plus the number of edges killed that are incident to $w$, minus the number of edges killed that are incident to both (which could only be $e$ itself if happens to be surviving at the beginning of the $(i-1)$th iteration). (3) holds because $p_i(x)$ is the sum of the incident $p_i(e)$s. (4) follows since the only nodes in $L$ that are incident to edges in $E_{i-1}$ are the nodes in $N(u)$ and the only nodes in $R$ that are incident to edges in $E_{i-1}$ are the nodes in $N^2(u)$. Now we are ready to derive an upper and lower bounds on $\mathbb{E}[K_{i-1}\mid E_{i-1}]$. 
    
    \subparagraph{Upper Bound on $\mathbb{E}[K_{i-1}\mid E_{i-1}]$:} By plugging the upper bound on $p_{i-1}(v)$ and $p_{i-1}(w)$ from Lemma~\ref{lem:ProbNodeIsLabeled} into equation (4) above, we get that:

    \begin{align*}
        \mathbb{E}[K_{i-1}\mid E_{i-1}]\leq \frac{2(1+\alpha)}{(1-\alpha)\cdot k}\cdot |E_{i-1}|\cdot (1+\frac{\log\Delta}{25\Delta})\leq \frac{2(1+\alpha)}{(1-\alpha)\cdot k}\cdot |E_{i-1}|\cdot (1+\Delta^{-6/7})
    \end{align*}

    \subparagraph{Lower Bound on $\mathbb{E}[K_{i-1}\mid E_{i-1}]$:} First, observe that since $i\leq t=k\log\Delta/100$, we have that:
    
    $$p_{i-1}(E_{i-1}) \leq \frac{|E_{i-1}|}{(1-\alpha)\Delta(k+1) - (i-2)}\leq \frac{|E_{i-1}|}{(1-\alpha)(\Delta k) - t}\leq \frac{|E_{i-1}|}{(1-\alpha)(\Delta k)/2}\leq \frac{4|E_{i-1}|}{\Delta k}$$
    
    \noindent where the last inequality holds since $\alpha\geq 1/2$. Hence, by plugging the lower bound on $p_{i-1}(v)$ and $p_{i-1}(w)$ from Lemma~\ref{lem:ProbNodeIsLabeled} into equation (4) above, we get that:

    \begin{align*}
        \mathbb{E}[K_{i-1}\mid E_{i-1}]\geq \frac{2(1-\alpha)}{(1+\alpha)\cdot k}\cdot |E_{i-1}|\cdot (1-\Delta^{-7/8}) - \frac{4|E_{i-1}|}{\Delta k}\geq \frac{2(1-\alpha)}{(1+\alpha)\cdot k}\cdot |E_{i-1}|\cdot (1-\Delta^{-6/7})
    \end{align*}
    
    \subparagraph{Deriving Upper and Lower Bounds on $\mathbb{E}[|E_{i}|\mid E_{i-1}]$:} Since  $|E_i| = E_{i-1} - K_{i-1}$, we get that:

    \begin{align*}
        \left(1-\frac{2(1+\alpha)}{(1-\alpha)\cdot k}\left(1+\Delta^{-6/7}\right)\right)\cdot |E_{i-1}|\leq \mathbb{E}[|E_i|\mid E_{i-1}]\leq \left(1-\frac{2(1-\alpha)}{(1+\alpha)\cdot k}\left(1-\Delta^{-6/7}\right)\right)\cdot |E_{i-1}|
    \end{align*}
as desired. Finally, the lemma follows since $\mathbb{E}[|E_i| \mid E_1, \cdots, E_{i-1}] = \mathbb{E}[|E_i| \mid E_{i-1}]$.
\end{proof}

Now we are ready to give concentration results for $|E_i|$ for any $i\in [t=k\log\Delta/100]$. We start with showing a lower for $|E_i|$ in Lemma~\ref{lem:E_iConcentratesLowerBound} and then we show an upper bound on $|E_i|$ in Lemma~\ref{lem:E_iConcentratesUpperBound}. 

\begin{lemma}\label{lem:E_iConcentratesLowerBound}
    Let $\gamma_{\ell} =  \Big(1-\frac{2}{k}\frac{(1+\alpha)}{(1-\alpha)}\left(1+\Delta^{-6/7}\right)\Big)$. With probability at least $1-\exp{(-\Delta^{1/13})}$, it holds that for every $i\in [t=k\log\Delta/100]$. 

    $$|E_i|\geq\gamma_{\ell}^{i-1}\cdot ((1-\alpha)\Delta)^2\cdot (1-\Delta^{-1/5})$$
\end{lemma}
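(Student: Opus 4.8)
The plan is to apply the \emph{scaled martingale trick} (\Cref{sec:martingale-overview}) to the process $|E_i|$. Set $F_i := |E_i|/\gamma_\ell^{\,i-1}$ for $1\le i\le t$, so $F_1=|E_1|=|A_u|$ and $|E_i|=\gamma_\ell^{\,i-1}F_i$. Part~(1) of \Cref{lem:newlyKilledEdges} gives $\E[\,|E_i|\mid E_1,\dots,E_{i-1}\,]\ge \gamma_\ell|E_{i-1}|$, and since this conditional-expectation bound is established through the \emph{deterministic} upper bound on $p_{i-1}(v)$ in \Cref{lem:ProbNodeIsLabeled} (the probabilistic caveat in \Cref{lem:newlyKilledEdges} concerns only its part~(2)), it holds pointwise; hence $(F_i)$ is a submartingale for the natural filtration of Algorithm~\ref{Alg: SeqLuby_u}. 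Since $|A_u|\ge(1-\alpha)\Delta\big((1-\alpha)\Delta-1\big)\ge((1-\alpha)\Delta)^2-(1-\alpha)\Delta$, it suffices to show that with probability at least $1-\exp(-\Delta^{1/13})$ one has $\min_{1\le i\le t}F_i\ge F_1-\lambda$, where $\lambda:=((1-\alpha)\Delta)^2\Delta^{-1/5}-(1-\alpha)\Delta=\Theta(\Delta^{9/5})$: this forces $F_i\ge((1-\alpha)\Delta)^2(1-\Delta^{-1/5})$, which is the claim after multiplying by $\gamma_\ell^{\,i-1}$.

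To bound the downward fluctuation of the submartingale $(F_i)$ we apply a variance-aware (Freedman/Bernstein-type) martingale inequality to the supermartingale $(-F_i)$, using two deterministic estimates. First, with $|E_i|=|E_{i-1}|-K_{i-1}$ (where $K_{i-1}$ is the number of edges of $A_u$ killed in iteration $i-1$), $|F_{i-1}-F_i|=\gamma_\ell^{-(i-1)}\big|(1-\gamma_\ell)|E_{i-1}|-K_{i-1}\big|$; here $K_{i-1}\le 2(1+\alpha)\Delta$ by the degree bound, $(1-\gamma_\ell)|E_{i-1}|\le\frac{O(1)}{k}|A_u|=O(\Delta)$ (using $|A_u|\le(1+\alpha)\Delta k$), and $\gamma_\ell^{-(i-1)}\le\gamma_\ell^{-t}=\exp(O(t/k))=\Delta^{O(1/100)}$ (since $1-\gamma_\ell=\Theta(1/k)$ and $t=k\log\Delta/100$), so $|F_{i-1}-F_i|\le R:=\Delta^{1+O(1/100)}$. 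Second, $\operatorname{Var}[F_i\mid E_1,\dots,E_{i-1}]=\gamma_\ell^{-2(i-1)}\operatorname{Var}[K_{i-1}\mid E_{i-1}]\le\gamma_\ell^{-2(i-1)}\cdot 2(1+\alpha)\Delta\cdot\E[K_{i-1}\mid E_{i-1}]\le\gamma_\ell^{-2(i-1)}\cdot O(\Delta)\cdot\frac{|E_{i-1}|}{k}$, again using the deterministic bound $\E[K_{i-1}\mid E_{i-1}]\le\frac{O(1)}{k}|E_{i-1}|$ from the proof of \Cref{lem:newlyKilledEdges}(1).

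The crux is that the summed conditional variances are \emph{independent of $k$} (which can be as large as $\Theta(\Delta^2)$, so $t$ as large as $\Theta(\Delta^2\log\Delta)$). Bounding $|E_{i-1}|\le|A_u|$ and using $\sum_{i=2}^t\gamma_\ell^{-2(i-1)}\le\frac{\gamma_\ell^{-2(t-1)}}{1-\gamma_\ell^2}\le\frac{\gamma_\ell^{-2t}}{1-\gamma_\ell}=O(k)\,\gamma_\ell^{-2t}$ (as $1-\gamma_\ell=\Theta(1/k)$), we get $\sum_{i=2}^t\operatorname{Var}[F_i\mid E_1,\dots,E_{i-1}]\le O(\Delta)\cdot\frac{|A_u|}{k}\cdot O(k)\,\gamma_\ell^{-2t}=O\big(\Delta\,|A_u|\big)\,\gamma_\ell^{-2t}\le O(\Delta^3)\cdot\Delta^{O(1/100)}=:V$, where the last step uses the $k$-free bound $|A_u|\le(1+\alpha)^2\Delta^2$. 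In words: the per-step variance scales like $1/k$ while the effective geometric sum has length $\Theta(k)$, so the two factors of $k$ cancel and $V=\Delta^{3+O(1/100)}$ regardless of how long the process runs.

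Finally, Freedman's inequality in its maximal form yields $\mathbb{P}\big[\exists\,i\le t:\,F_1-F_i\ge\lambda\big]\le\exp\!\Big(-\frac{\lambda^2}{2\big(V+R\lambda/3\big)}\Big)$. With $\lambda=\Theta(\Delta^{9/5})$, $R=\Delta^{1+O(1/100)}$, $V=\Delta^{3+O(1/100)}$, the term $R\lambda/3=\Delta^{2.8+O(1/100)}$ is dominated by $V$, so the exponent is $\Omega(\lambda^2/V)=\Delta^{\Omega(1)}$, which (with the constants tracked, using $\Delta\ge 2^{10}$) is at least $\Delta^{1/13}$; this proves the lemma. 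The genuinely nontrivial step is the variance cancellation above: a plain Azuma bound would contribute $\sum R^2\approx t\,R^2\approx\Delta^4\log\Delta$, hugely larger than $\lambda^2\approx\Delta^{3.6}$ and therefore useless, so one must exploit that the scaled increments have tiny ($O(\Delta^2/k)$) conditional variance and that their sum telescopes to a quantity governed by $|A_u|=O(\Delta^2)$ rather than by the iteration count $t$. The rest is routine bookkeeping of $\gamma_\ell$-powers together with the elementary neighborhood facts $k\ge\Delta/7$ and $|A_u|\le(1+\alpha)\Delta\min\{k,(1+\alpha)\Delta\}$.
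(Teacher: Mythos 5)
Your proof is correct and follows essentially the same route as the paper: define the scaled process $F_i=|E_i|/\gamma_\ell^{\,i-1}$, verify a one-sided martingale property via \Cref{lem:newlyKilledEdges}(1), bound the conditional variance via $\operatorname{Var}[K_{i-1}\mid E_{i-1}]\le O(\Delta)\cdot\E[K_{i-1}\mid E_{i-1}]$, and conclude with a Bernstein/Freedman-type bounded-variance martingale tail bound (the paper's Theorem~\ref{thm:SuperMartingaleConcentrationVariance} is exactly this). You part ways from the paper in two small but worthwhile respects. First, you observe that part~(1) of \Cref{lem:newlyKilledEdges} — hence the (sub)martingale property of $F_i$ and all the variance/jump estimates actually used in this lemma — follows from the purely counting-based \emph{upper} bound on $p_{i-1}(v)$ in \Cref{lem:ProbNodeIsLabeled} and therefore holds pointwise; the paper instead conditions on the event $\mathcal B$ from \Cref{lem:newlyKilledEdges}, which is unnecessary here (only part~(2) of that lemma is probabilistic, and it is not used) and also sidesteps the delicate question of whether conditioning on a globally-determined event $\mathcal B$ preserves the super/submartingale structure. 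Your version is cleaner. Second, you use a maximal form of Freedman to control $\min_{i\le t}F_i$ directly, where the paper takes a per-$i$ tail bound and then union-bounds over $i\in[t]$; the two are equivalent here since $t=\operatorname{poly}(\Delta)$ and the exponents have enough slack, but the maximal form is a touch tidier. Your emphasis on the variance cancellation — per-step variance $\sim\Delta/k\cdot |E_{i-1}|$ multiplied by an effective geometric length $\Theta(k)$, so $k$ drops out and $V$ is governed by $|A_u|=O(\Delta^2)$ rather than by $t$ — is the same observation the paper makes implicitly via $\sum_{j\le t}96\Delta^{3.1}/k\le \Delta^{3.1}\log\Delta$, and you are right that this is the crux; a plain Azuma bound with $t\,R^2$ would indeed be useless when $k=\Theta(\Delta^2)$. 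One pedantic caveat: the exact exponent in $\gamma_\ell^{-2t}$ (you write $\Delta^{O(1/100)}$, the paper's slightly cruder bound gives $\Delta^{1/5}$) depends on the constants chosen, but either works given the ample slack between $\lambda^2/V\sim\Delta^{0.4}$ and the target $\Delta^{1/13}$.
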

\begin{proof}
    The idea of the proof is to use the scaled martingale trick that was briefly discussed in Section~\ref{sec:martingale-overview}. Let $\mathcal{B}$ be the event in which the upper and lower bounds on $\mathbb{E}[|E_i|\mid E_{i-1}]$ from Lemma~\ref{lem:newlyKilledEdges} hold. By Lemma~\ref{lem:newlyKilledEdges}, the event $\mathcal{B}$ happens with probability at least $1-\exp{(-\Delta^{1/11})}$. Our analysis next is conditioned on the event $\mathcal{B}$ happening. For $i\geq 1$, we define the following random variable $F_i$ as follows. 

    $$F_i = \frac{|E_{i}|}{\gamma_{\ell}^{i-1}}$$

    \noindent Observe that $F_i$ is a supermartingale (see also Definition~\ref{def:martingale}) with a starting point $F_1 = |E_1|=|A_u|$ (recall that $A_u = (N(u)\times N^2(u))\cap E$ is the set of edges between $u$'s neighbors and their neighbors). This is because:

    $$\mathbb{E}[F_i\mid F_1,\cdots, F_{i-1} ]= \frac{1}{\gamma_{\ell}^{i-1}}\mathbb{E}[|E_{i}|\mid E_1,\cdots, E_{i-1}]\geq \frac{|E_{i-1}|}{\gamma_{\ell}^{i-2}} = F_{i-1}$$

    \noindent where the first equality follows from Observation~\ref{obs:ConditioningOnFunctionOfRV}, and the inequality follows from ~\Cref{lem:newlyKilledEdges}. Hence, to get a lower bound on $|E_i|$, we would like to apply the supermartingale inequality from Theorem~\ref{thm:SuperMartingaleConcentrationVariance}. For this, we analyze $Var[F_i\mid F_{i-1}]$ and $\mathbb{E}[F_i\mid F_{i-1}] - F_i$ for $i\geq 2$. First, observe that for any $i\in [t]$, we have that:

    $$\gamma_{\ell}^i \geq \gamma_{\ell}^t = \Big(1-\frac{2}{k}\frac{(1+\alpha)}{(1-\alpha)}\left(1+\Delta^{-6/7}\right)\Big)^t \geq \Big(1-\frac{8}{k}\Big)^{k\log\Delta/100}\geq e^{-8\log\Delta/50}\geq \Delta^{-1/10}$$
    
    \noindent where the second to last inequality follows from the fact that $1-x\geq e^{-2x}$ for any $0\leq x\leq 1/2$. Next, for $i\geq 2$, we analyze $Var[F_i\mid F_{i-1}]$ and then $\mathbb{E}[F_i\mid F_{i-1}] - F_i$. Observe that:

    \setcounter{equation}{0}
    \begin{align}
        Var[F_i\mid F_{i-1}] &= Var\left[F_i - \frac{F_{i-1}}{\gamma_{\ell}}\mid F_{i-1}\right]\\
        & = Var \left[\frac{|E_i|}{\gamma_{\ell}^{i-1}} - \frac{|E_{i-1}|}{\gamma_{\ell}^{i-1}}\mid E_{i-1} \right]\\
        &\leq \frac{1}{\gamma_{\ell}^{i-1}}\mathbb{E}\left[(|E_i| - |E_{i-1}|)^2\mid E_{i-1}\right]\\
        &\leq \frac{2\Delta}{\gamma_{\ell}^{i-1}}\mathbb{E}\left[\big||E_i| - |E_{i-1}|\big|\mid E_{i-1}\right]\\
        & = \frac{2\Delta}{\gamma_{\ell}^{i-1}}\big(\mathbb{E}[|E_{i-1}|\mid E_{i-1}] - \mathbb{E}[|E_i|\mid E_{i-1}]\big)\\
        &\leq \frac{2\Delta}{\gamma_{\ell}^{i-1}}\big(|E_{i-1}| - |E_{i-1}| + 12|E_{i-1}|/k\big)\\
        &\leq \frac{2\Delta}{\gamma_{\ell}^{i-1}}\cdot \frac{12|E_{i-1}|}{k}\\
        &\leq \frac{96\Delta^{3.1}}{k}
    \end{align}
    
    \noindent where (1) follows from Observation~\ref{obs:conditionalVariance}, (3) follows by the definition of variance, (4) follows since the maximum value of $|E_i - E_{i-1}|$ is $(1+\alpha)\Delta\leq 2\Delta$, (5) follows since $|E_{i-1}|\geq |E_i|$, (6) follows since $\mathbb{E}[|E_i|\mid E_{i-1}]\geq (1-12/k)|E_{i-1}|$ by Lemma~\ref{lem:newlyKilledEdges}, and (8) follows since $\gamma_{\ell}^{i-1}\geq \Delta^{-1/10}$ and $|E_{i-1}|\leq |E_1|\leq (1+\alpha)^2\Delta^2\leq 4\Delta^2$. 
    
    \noindent Next, we analyze $\mathbb{E}[F_i\mid F_{i-1}] - F_i$ for any $i\geq 2$. Observe that:
    \setcounter{equation}{0}
    \begin{align}
        \mathbb{E}[F_i\mid F_{i-1}] - F_i &=
        \frac{1}{\gamma_{\ell}^{i-1}}(\mathbb{E}[|E_i|\mid E_{i-1}] - |E_{i}|)\\
        &\leq \frac{1}{\gamma_{\ell}^{i-1}}\left(|E_{i-1}| - |E_{i}|\right)\\
        &\leq 4 \Delta^{1.1}
    \end{align}
    \noindent where (2) holds since $|E_i|\leq |E_{i-1}|$ (as there are more surviving edges at iteration $i-1$ compared to iteration $i$), and (3) holds since $|E_{i-1}| - |E_i|\leq 2(1+\alpha)\Delta\leq 4\Delta$ (as we kill at most $2(1+\alpha)\Delta-1$ edges in each iteration), and $\gamma_{\ell}^{i-1}\geq \Delta^{-1/10}$. Furthermore, observe that $F_1=|E_1| = |A_u|\geq ((1-\alpha)\Delta)^2-(1-\alpha)\Delta$, as $u$ has at least $(1-\alpha)\Delta$ neighbors, and each of them has at least $(1-\alpha)\Delta -1$ edges incident to it in $A_u$. Hence, we can plug our bounds on $Var[F_i\mid F_{i-1}]$ and $\mathbb{E}[F_i\mid F_{i-1}] - F_i$ into Theorem~\ref{thm:SuperMartingaleConcentrationVariance} to get that for a given $i\in [t]$, conditioned on the event $\mathcal{B}$ happening, we have that: 

    \setcounter{equation}{0}
    \begin{align}\mathbb{P}[F_i\leq ((1-\alpha)\Delta)^2- (1-\alpha)\Delta -\Delta^{7/4}\mid \mathcal{B}]&\leq \mathbb{P}[F_i\leq F_1-\Delta^{7/4}\mid \mathcal{B})\\
    &\leq \exp{\left(-\frac{\Delta^{3.5}}{2\left((\sum_{j=1}^i 96\Delta^{3.1}/k) + 4\Delta^{1.1}\cdot \Delta^{7/4}/3\right)}\right)}\\
    &\leq \exp{\left(-\frac{\Delta^{3.5}}{2\Delta^{3.1}\log\Delta + 8\Delta^{2.85}}\right)}\\
    &\leq \exp{\left(-\Delta^{3/10}\right)}\end{align}
    
    \noindent where (3) follows since $i\leq t=k\log\Delta/100$, and therefore $\sum_{j=1}^i96\Delta^{3.1}/k\leq t\cdot 96\Delta^{3.1}/k \leq \Delta^{3.1}\log\Delta$. Furthermore, observe that $\mathbb{P}[F_i\leq ((1-\alpha)\Delta)^2(1-\Delta^{-1/5})]\leq \mathbb{P}[F_i\leq ((1-\alpha)\Delta)^2- (1-\alpha)\Delta -\Delta^{7/4}]$. This is because $((1-\alpha)\Delta)^2(1-\Delta^{-1/5})\leq ((1-\alpha)\Delta)^2- (1-\alpha)\Delta -\Delta^{7/4}$. Hence, to get rid of the conditioning on the event $\mathcal{B}$ happening, we just use a union bound argument to get that for a given $i\in [t]$:

    \begin{align*}
        \mathbb{P}[F_i\leq ((1-\alpha)\Delta)^2\cdot (1-\Delta^{-1/5})]&\leq \mathbb{P}[F_i\leq ((1-\alpha)\Delta)^2- (1-\alpha)\Delta -\Delta^{7/4}\mid \mathcal{B}]\\
        &\leq \exp{\left(-\Delta^{3/10}\right)} + \exp{\left(-\Delta^{1/11}\right)}\\
        &\leq \exp{\left(-\Delta^{1/12}\right)}
    \end{align*}

    \noindent By another simple union bound argument, we get that with probability at least $1-\exp{(-\Delta^{1/13})}$, \emph{for every $i\in [t]$}, $F_i\geq ((1-\alpha)\Delta)^2\cdot (1-\Delta^{-1/5})$. Hence, the lemma follows since $F_i = |E_i|/\gamma_{\ell}^{i-1}$
\end{proof}

\begin{lemma}\label{lem:E_iConcentratesUpperBound}
    Let $\gamma_h =  \Big(1-\frac{2}{k}\frac{(1-\alpha)}{(1+\alpha)}\left(1-\Delta^{-6/7}\right)\Big)$. With probability at least $1-\exp{(-\Delta^{1/13})}$, it holds that for every $i\in [t=k\log\Delta/100]$. 

    $$|E_i|\leq\gamma_h^{i-1}\cdot ((1+\alpha)\Delta)^2(1+\Delta^{-1/5})$$
\end{lemma}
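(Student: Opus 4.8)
The proof is the mirror image of that of Lemma~\ref{lem:E_iConcentratesLowerBound}, with the roles of the lower and upper tails interchanged. Let $\mathcal{B}$ be the event that the bounds of Lemma~\ref{lem:newlyKilledEdges} hold for all $1 < i \le t$; by that lemma $\mathbb{P}[\mathcal{B}] \ge 1 - \exp(-\Delta^{1/11})$, and the analysis is carried out conditioned on $\mathcal{B}$. I would apply the scaled martingale trick with $F_i = |E_i|/\gamma_h^{i-1}$. By part~(2) of Lemma~\ref{lem:newlyKilledEdges}, conditioned on $\mathcal{B}$ we have $\mathbb{E}[|E_i| \mid E_1, \dots, E_{i-1}] \le \gamma_h |E_{i-1}|$, so $\mathbb{E}[F_i \mid F_1, \dots, F_{i-1}] \le |E_{i-1}|/\gamma_h^{i-2} = F_{i-1}$; that is, $F_1, F_2, \dots$ is a supermartingale, with starting value $F_1 = |E_1| = |A_u| \le (1+\alpha)\Delta\cdot((1+\alpha)\Delta - 1) < ((1+\alpha)\Delta)^2$. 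Since $|E_i| = \gamma_h^{i-1} F_i$, an upper bound on $|E_i|$ follows from an upper bound on $F_i$, i.e.\ from controlling the upper deviation $\mathbb{P}[F_i \ge F_1 + \Delta^{7/4}]$ --- precisely the tail that a Freedman-type inequality controls for a supermartingale.

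To invoke Theorem~\ref{thm:SuperMartingaleConcentrationVariance} (applied to the supermartingale $F_i$, i.e.\ in the direction symmetric to its use in Lemma~\ref{lem:E_iConcentratesLowerBound}), I need the same two quantitative inputs. First, the elementary estimate $\gamma_h^i \ge (1 - 2/k)^{k\log\Delta/100} \ge \Delta^{-1/10}$ for all $i \le t$, using $1 - x \ge e^{-2x}$ for $0 \le x \le 1/2$; combined with $|E_{i-1}| \le |E_1| \le 4\Delta^2$ this gives $F_{i-1} \le 4\Delta^{2.1}$. Second, the increment and variance bounds from the proof of Lemma~\ref{lem:E_iConcentratesLowerBound} carry over essentially verbatim (with $\gamma_\ell$ replaced by $\gamma_h$): using $|E_i| \le |E_{i-1}|$, that each iteration kills at most $2(1+\alpha)\Delta$ surviving edges, and that $\mathbb{E}[|E_i| \mid E_{i-1}] \ge (1 - 12/k)|E_{i-1}|$ by Lemma~\ref{lem:newlyKilledEdges}, one gets $F_i - \mathbb{E}[F_i \mid F_{i-1}] \le 4\Delta^{1.1}$ and $\mathrm{Var}[F_i \mid F_{i-1}] \le 96\Delta^{3.1}/k$. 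Feeding the variance sum $\sum_{j \le i} \mathrm{Var}[F_j \mid F_{j-1}] \le t\cdot 96\Delta^{3.1}/k \le \Delta^{3.1}\log\Delta$ and the increment bound $4\Delta^{1.1}$ into the inequality with deviation $\Delta^{7/4}$ yields $\mathbb{P}[F_i \ge F_1 + \Delta^{7/4} \mid \mathcal{B}] \le \exp(-\Delta^{3/10})$ for each fixed $i \in [t]$.

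Finally, I would finish exactly as in Lemma~\ref{lem:E_iConcentratesLowerBound}. Since $((1+\alpha)\Delta)^2(1 + \Delta^{-1/5}) \ge ((1+\alpha)\Delta)^2 + \Delta^{1.8} \ge F_1 + \Delta^{7/4}$, the event $\{F_i \ge ((1+\alpha)\Delta)^2(1+\Delta^{-1/5})\}$ is contained in $\{F_i \ge F_1 + \Delta^{7/4}\}$; removing the conditioning on $\mathcal{B}$ by a union bound brings the per-$i$ failure probability to $\exp(-\Delta^{1/12})$, and a further union bound over all $i \in [t]$ (with $t \le \poly(\Delta)$) degrades it to $\exp(-\Delta^{1/13})$. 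Rescaling back by $\gamma_h^{i-1}$ then gives $|E_i| \le \gamma_h^{i-1}((1+\alpha)\Delta)^2(1 + \Delta^{-1/5})$ simultaneously for all $i \in [t]$, as claimed. I do not anticipate a genuine obstacle, since the argument is dual to Lemma~\ref{lem:E_iConcentratesLowerBound}; the only points requiring care are orienting the scaled process as a supermartingale so that the \emph{upper} (rather than lower) tail is the one the concentration bound handles, and checking that $F_1 = |A_u|$ falls below $((1+\alpha)\Delta)^2$ with a large enough additive cushion ($\Delta^{7/4}$) to absorb the $\Delta^{-1/5}$-relative error term.
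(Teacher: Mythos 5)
Your proof is essentially the same as the paper's, with the right structural idea: scale by $\gamma_h^{i-1}$ so that $F_i = |E_i|/\gamma_h^{i-1}$ has non-increasing conditional means, bound the conditional variance and one-sided increment, apply the Chung--Lu Freedman-type inequality for the upper tail, and finish with union bounds. One detail does need fixing, though: you cite Theorem~\ref{thm:SuperMartingaleConcentrationVariance}, but that theorem (Chung--Lu Theorem 7.5) bounds the \emph{lower} tail $\mathbb{P}[X_t \le X_1 - \lambda]$ for sequences satisfying $\mathbb{E}[X_i \mid \ldots] \ge X_{i-1}$. What you need here is the \emph{upper} tail for a sequence with $\mathbb{E}[F_i \mid \ldots] \le F_{i-1}$, which is Theorem~\ref{thm:SubMartingaleConcentrationVariance} (Chung--Lu Theorem 7.3). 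The source of the slip is terminological: you call $F_i$ a ``supermartingale'' in the standard convention, but the paper explicitly adopts the reversed Chung--Lu convention (see the remark after Definition~\ref{def:martingale}), under which $\mathbb{E}[F_i \mid \ldots] \le F_{i-1}$ makes $F_i$ a \emph{sub}martingale, and Theorem~\ref{thm:SubMartingaleConcentrationVariance} is the one stated for that case. With that citation corrected, the argument is sound.

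Two further minor remarks. First, your increment bound $F_i - \mathbb{E}[F_i \mid F_{i-1}] \le 4\Delta^{1.1}$, obtained from the a.s.\ range $|E_{i-1}| - 4\Delta \le |E_i| \le |E_{i-1}|$, is actually tighter than the paper's $336\Delta^{1.1}$ (the paper goes through $\mathbb{E}[|E_i| \mid E_{i-1}] \ge (1 - 12/k)|E_{i-1}|$ instead); either suffices for the final $\exp(-\Delta^{3/10})$ bound. Second, the claim $F_1 = |A_u| \le ((1+\alpha)\Delta)^2$ together with $((1+\alpha)\Delta)^2(1+\Delta^{-1/5}) \ge ((1+\alpha)\Delta)^2 + \Delta^{1.8} \ge F_1 + \Delta^{7/4}$ is exactly the paper's final inequality, so the containment of events and the two-level union bound (first over $\mathcal{B}^c$, then over $i \in [t]$) go through as you describe.
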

\begin{proof}
    The proof is very similar to the proof of Lemma~\ref{lem:E_iConcentratesLowerBound}. As in the proof on Lemma~\ref{lem:E_iConcentratesLowerBound}, we define the random variable:

    $$F_i = \frac{|E_i|}{\gamma^{i-1}_h}$$
    \noindent Observe that $F_i$ is a submartingale (see also Definition~\ref{def:martingale}) with starting point $F_1=|E_1| = |A_u|$. This is because for any $i\geq 2$, we have that:

    $$\mathbb{E}[F_i\mid F_1,\cdots, F_{i-1}] = \frac{1}{\gamma_h^{i-1}}\mathbb{E}[|E_i|\mid E_1,\cdots, E_{i-1}]\leq \frac{|E_{i-1}|}{\gamma^{i-2}_h} = F_{i-1}$$

    \noindent Hence, we would like to use Theorem~\ref{thm:SubMartingaleConcentrationVariance} to get a concentration result for $F_i$. For this, we need to analyze $Var[F_i\mid F_{i-1}]$ and $F_i-\mathbb{E}[F_i\mid F_{i-1}]$. Observe that by a similar calculation to the one that we provided for $\gamma^{i}_{\ell}$ in the proof of Lemma~\ref{lem:E_iConcentratesLowerBound}, it holds that $\gamma^{i}_h\geq \Delta^{-1/10}$, for any $i\in [t]$. Hence, by the exact same calculation of $Var[F_i\mid F_{i-1}]$ in the proof of Lemma~\ref{lem:E_iConcentratesLowerBound}, it holds that $Var[F_i\mid F_{i-1}]\leq 96\Delta^{3.1}/k$. 

    \noindent It remains to analyze $F_i - \mathbb{E}[F_i\mid F_{i-1}]$. Observe that:
    \setcounter{equation}{0}
    \begin{align}
        F_i - \mathbb{E}[F_i\mid F_{i-1}] & = \frac{1}{\gamma^{i-1}_h}(|E_i| - \mathbb{E}[|E_i|\mid E_{i-1}])\\
        &\leq \frac{1}{\gamma^{i-1}_h}(|E_{i-1}| - |E_{i-1}| + 12|E_{i-1}|/k)\\
        & \leq \frac{48\Delta^2}{k\gamma^{i-1}_h}\\
        &\leq 336\Delta^{1.1}
    \end{align}

    \noindent where (2) follows since $|E_i|\leq |E_{i-1}|$ and $\mathbb{E}[|E_i|\mid E_{i-1}]\geq 1-12|E_{i-1}|/k$, where the latter follows from Lemma~\ref{lem:newlyKilledEdges}, (3) follows since $|E_{i-1}|\leq |E_1|\leq ((1+\alpha)\Delta)^2\leq 4\Delta^2$, and (4) follows since $k\geq \Delta/7$ and $\gamma_h^{i-1}\geq \Delta^{-0.1}$.

    \noindent The rest of the proof is exactly as the proof of Lemma~\ref{lem:E_iConcentratesLowerBound}, where instead of using Theorem~\ref{thm:SuperMartingaleConcentrationVariance}, we use Theorem~\ref{thm:SubMartingaleConcentrationVariance} to get concentration for $F_i$, and then deduce the desired bound for $|E_i|$. Observe that our analysis above is conditioned on the event $\mathcal{B}$ happening, where $\mathcal{B}$ is the event in which the bounds on $\mathbb{E}[|E_i|\mid E_{i-1}]$ hold. By Theorem~\ref{thm:SubMartingaleConcentrationVariance}, we have that:

    \begin{align*}\mathbb{P}[F_i\geq ((1+\alpha)\Delta)^2 + \Delta^{7/4}\mid \mathcal{B}]&\leq\mathbb{P}[F_i\geq F_1 + \Delta^{7/4}\mid \mathcal{B}]\\
    &\leq \exp{\left(-\frac{\Delta^{3.5}}{2\left((\sum_{j=1}^t 96\Delta^{3.1}/k) + 336\Delta^{1.1}\cdot \Delta^{7/4}/3\right)}\right)}\\
    &\leq \exp{\left(-\Delta^{3/10}\right)}\end{align*}

    \noindent Finally, by two simple applications of a union bound argument, the probability that there exists an $i\in [t]$ for which $F_i\geq ((1+\alpha)\Delta)^2 + \Delta^{7/4}$ is at most $\exp{(-\Delta^{1/13})}$.


    \noindent Hence, with probability at least $1-\exp{(-\Delta^{-1/3})}$ for every $i\in [t]$, $F_i\leq ((1+\alpha)\Delta)^2+\Delta^{7/4}\leq ((1+\alpha)\Delta)^2(1+\Delta^{-1/5})$. The proof of the lemma follows since $F_i = |E_i|/\gamma^{i-1}_h$
    
\end{proof}

\subsection{A Lower Bound on \texorpdfstring{$|\mathcal{M}_u|$}{Mu}}\label{sec:LowerBoundE}

Recall that by definition, whenever a surviving edge gets sampled at some iteration $i$, it gets added to the matching $\M_u$. We can now utilize the sharp concentration bounds on the number of surviving edges to obtain concentration bounds on the size of the matching $\M_u$. We first give a lower bound on $|\M_u|$ in this section and give an upper bound in the next.

\begin{lemma}\label{lem:LowerBoundMu}
    With probability at least $1-\exp{(-\Delta^{-1/15})}$ it holds that:

    $$|\mathcal{M}_u|\geq \frac{\Delta}{2}\cdot  \frac{(1-\alpha)^3}{(1+\alpha)^2}\cdot (1-\Delta^{-1/300})$$
    \begin{proof}
        It suffices to show that the size of $\mathcal{M}_u$ is large enough after $t=k\log\Delta/100$ iterations of the while loop in Algorithm~\ref{Alg: SeqLuby_u}, as the size of $\mathcal{M}_u$ can only increase after the $t$th iteration. Let $\mathcal{B}$ be the event in which the upper and lower bounds on $|E_1|,\cdots,|E_t|$ from Lemma~\ref{lem:E_iConcentratesLowerBound} and Lemma~\ref{lem:E_iConcentratesUpperBound} hold. Observe that the event $\mathcal{B}$ happens with probability at least $1-\exp{(-\Delta^{1/14})}$. We prove the lemma conditioned on the event $\mathcal{B}$ happening and then get rid of this condition by a simple union bound argument. To prove the lemma in the case where the event $\mathcal{B}$ happens we apply Theorem~\ref{thm:ShiftedMartingaleLowerBound}, as follows. Recall that $Z_i$ is an indicator random variable indicating whether we add an edge to $\mathcal{M}_u$ in the $i$th iteration of the while loop in Algorithm~\ref{Alg: SeqLuby_u}, and recall that $q_i$ is a random variable representing the probability that we add an edge in the $i$th iteration given the randomness so far. In other words: $q_i = \mathbb{P}[Z_i=1\mid Z_1,\cdots,Z_{i-1}] = \mathbb{E}[Z_i\mid Z_1,\cdots, Z_{i-1}]$. As discussed in Section~\ref{sec:OverviewDense}, we have that:

        $$q_i = \frac{|E_i|}{|E_u| - (i-1)}$$
        \noindent This is because the probability that we add an edge to the matching in the $i$th iteration is the probability that we pick a surviving edge among the remaining $|E_u| - (i-1)$ edges. Next, we show an upper and a lower bound on $\sum_{i=1}^t q_i$, conditioned on the event $\mathcal{B}$ happening. Observe that $|E_u|\leq ((1+\alpha)\Delta)(k+1)$, as the number of edges in $E_u$ is exactly the number of edges touching the nodes in $N^2(u)$ plus the number of edges touching $u$ itself. Since each node has degree at most $(1+\alpha)\Delta$, we get at most $(1+\alpha)\Delta(k+1)$ edges in total in $E_u$ (recall that $|N^2(u)|=k$).

        \subparagraph{A lower bound on $\sum_{i=1}^t q_i$.} Let $\gamma_{\ell} =  \Big(1-\frac{2}{k}\frac{(1+\alpha)}{(1-\alpha)}\left(1+\Delta^{-6/7}\right)\Big)$. Observe that: 

        \setcounter{equation}{0}
        \begin{align}
            \sum_{i=1}^t q_i &\geq \sum_{i=1}^t \frac{|E_i|}{(1+\alpha)\Delta(k+1) - (i-1)}\\
            &\geq \sum_{i=1}^t\frac{\gamma_{\ell}^{i-1}\cdot ((1-\alpha)\Delta)^2\cdot (1-\Delta^{-1/5})}{(1+\alpha)\Delta(k+1)}\\
            &=\frac{ ((1-\alpha)\Delta)^2\cdot (1-\Delta^{-1/5})}{(1+\alpha)\Delta(k+1)}\sum_{i=1}^t\gamma_{\ell}^{i-1}\\
            &=\frac{ (1-\alpha)^2\Delta\cdot (1-\Delta^{-1/5})}{(1+\alpha)(k+1)}\cdot \frac{1-\gamma_{\ell}^t}{1-\gamma_{\ell}}\\
            &= \frac{ (1-\alpha)^2\Delta\cdot (1-\Delta^{-1/5})}{(1+\alpha)(k+1)}\cdot \frac{1-\Big(1-\frac{2}{k}\frac{(1+\alpha)}{(1-\alpha)}\left(1+\Delta^{-6/7}\right)\Big)^t}{1-\Big(1-\frac{2}{k}\frac{(1+\alpha)}{(1-\alpha)}\left(1+\Delta^{-6/7}\right)\Big)}\\
            &\geq \frac{ (1-\alpha)^2\Delta\cdot (1-\Delta^{-1/5})}{(1+\alpha)(k+1)}\cdot\frac{1-\Delta^{-1/100}}{\frac{2}{k}\frac{(1+\alpha)}{(1-\alpha)}\left(1+\Delta^{-6/7}\right)}\\
            &\geq \frac{ (1-\alpha)^3\Delta\cdot k}{2(1+\alpha)^2(k+1)}\cdot\frac{(1-\Delta^{-1/5})(1-\Delta^{-1/100})}{(1+\Delta^{-6/7})}\\
            &\geq \frac{\Delta}{2}\cdot  \frac{(1-\alpha)^3}{(1+\alpha)^2}\cdot (1-\Delta^{-1/200})
        \end{align}

        \noindent where (2) follows from Lemma~\ref{lem:E_iConcentratesLowerBound}, (4) follows from the geometric series formula, (6) follows since $\gamma_{\ell}^{t}\leq \Delta^{-1/100}$, and (8) follows since $k\geq \Delta/7$ and from a simple arithmetic. Now we show an upper bound on $\sum_{i=1}^t q_i$. 

        \subparagraph{An upper bound on $\sum_{i=1}^t q_i$.} Let $\gamma_h =  \Big(1-\frac{2}{k}\frac{(1-\alpha)}{(1+\alpha)}\left(1-\Delta^{-6/7}\right)\Big)$. Observe that $|E_u|\geq ((1-\alpha)\Delta) k$, by just counting the degrees of the nodes in $N^2(u)$. Hence, we have that:

        \setcounter{equation}{0}
        \begin{align}
            \sum_{i=1}^t q_i &= \sum_{i=1}^t \frac{|E_i|}{|E_u| - (i-1)}\\
            &\leq \sum_{i=1}^t\frac{\gamma_h^{i-1}\cdot ((1+\alpha)\Delta)^2(1+\Delta^{-1/5})}{(1-\alpha)\Delta k-t}\\
            &=\frac{ ((1+\alpha)\Delta)^2(1+\Delta^{-1/5})}{(1-\alpha)\Delta k-k\log\Delta/100}\sum_{i=1}^t\gamma_{h}^{i-1}\\
            &\leq \frac{ ((1+\alpha)\Delta)^2(1+\Delta^{-1/5})(1+\Delta^{-8/10})}{(1-\alpha)\Delta k}\sum_{i=1}^t\gamma_{h}^{i-1}\\
            &=\frac{ (1+\alpha)^2\Delta(1+\Delta^{-1/5})(1+\Delta^{-8/10})}{(1-\alpha) k} \frac{1-\gamma_{h}^t}{1-\gamma_{h}}\\
            &\leq \frac{ (1+\alpha)^2\Delta(1+\Delta^{-1/5})(1+\Delta^{-8/10})}{(1-\alpha) k}\cdot \frac{1}{1-\Big(1-\frac{2}{k}\frac{(1-\alpha)}{(1+\alpha)}\left(1-\Delta^{-6/7}\right)\Big)}\\
            &\leq \frac{ (1+\alpha)^2\Delta(1+\Delta^{-1/5})(1+\Delta^{-8/10})}{(1-\alpha) k}\cdot\frac{1}{\frac{2}{k}\frac{(1-\alpha)}{(1+\alpha)}\left(1-\Delta^{-6/7}\right)}\\
            &\leq \frac{\Delta}{2}\cdot\frac{(1+\alpha)^3}{(1-\alpha)^2}\cdot\frac{(1+\Delta^{-1/3})(1+\Delta^{-8/10})}{(1-\Delta^{-6/7})}\\
            &\leq \frac{\Delta}{2}\cdot\frac{(1+\alpha)^3}{(1-\alpha)^2}\cdot (1+\Delta^{-1/100})
        \end{align}

        \noindent where (2) follows from Lemma~\ref{lem:E_iConcentratesUpperBound} and the rest follows from simple arithmetic. Having proved the upper and lower bounds on $\sum_{i=1}^t q_i$, we apply Theorem~\ref{thm:ShiftedMartingaleLowerBound} with $X_i=Z_i$, $S_t=\sum_{i=1}^t Z_i\leq |\mathcal{M}_u|$, $p_i=q_i$, $P^{\ell} = \frac{\Delta}{2}\cdot  \frac{(1-\alpha)^3}{(1+\alpha)^2}\cdot (1-\Delta^{-1/200})$, $P^h = \frac{\Delta}{2}\cdot\frac{(1+\alpha)^3}{(1-\alpha)^2}\cdot (1+\Delta^{-1/100})$,  $M=1$, and $\lambda = P^{\ell}-\Delta^{0.6}$ to get that:

        $$\mathbb{P}(|M_u|\leq \lambda\mid \mathcal{B})\leq \mathbb{P}[S_t\leq \lambda\mid \mathcal{B})\leq \exp{\left(-\frac{(P^{\ell}-\lambda)^2}{8P^h + 2(P^{\ell} - \lambda)/3}\right)}\leq \exp{\left(-\frac{\Delta^{1.2}}{100\Delta}\right)}\leq \exp{(-\Delta^{1/10})}$$

        \noindent Finally, since the event $\mathcal{B}$ happens with probability at least $1-\exp(-\Delta^{1/14})$, by a simple union bound, we get that $|\mathcal{M}_u|\geq P^{\ell} - \Delta^{0.6} \geq \frac{\Delta}{2}\cdot  \frac{(1-\alpha)^3}{(1+\alpha)^2}\cdot (1-\Delta^{-1/300})$ with probability at least $1-\exp{(-\Delta^{1/15})}$, as desired. 
    \end{proof}
\end{lemma}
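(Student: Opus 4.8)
The plan is to write $|\M_u|$ as a sum over iterations of SeqLuby$_u$ and apply the shifted martingale lower bound (Theorem~\ref{thm:ShiftedMartingaleLowerBound}), feeding it the sharp control on the number of surviving edges already established. Concretely, it suffices to lower bound the matching after only $t = k\log\Delta/100$ iterations of the while loop in Algorithm~\ref{Alg: SeqLuby_u}, since $\M_u$ only grows afterwards. Let $Z_i \in \{0,1\}$ indicate whether iteration $i$ adds an edge of $A_u$ to the matching, so that $\sum_{i\le t} Z_i \le |\M_u|$, and let $q_i = \mathbb{E}[Z_i \mid Z_1,\dots,Z_{i-1}]$. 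Because a sampled surviving edge of $A_u$ is always added to the matching, and iteration $i$ picks a uniformly random edge among the $|E_u|-(i-1)$ not-yet-sampled edges of $E_u$, we have $q_i = |E_i|/(|E_u|-(i-1))$ with $E_i$ the set of surviving edges of $A_u$ (Definition~\ref{def:surviving edges}).

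The heart of the argument is to show that, conditioned on a good event, $\sum_{i\le t} q_i$ is pinned down tightly from both sides. Let $\mathcal{B}$ be the event that the two-sided concentration bounds of Lemmas~\ref{lem:E_iConcentratesLowerBound} and~\ref{lem:E_iConcentratesUpperBound} hold for every $i \le t$; a union bound gives $\mathbb{P}[\neg\mathcal{B}] \le \exp(-\Delta^{1/14})$. On $\mathcal{B}$, I would plug $|E_i| \ge \gamma_\ell^{i-1}((1-\alpha)\Delta)^2(1-\Delta^{-1/5})$ and $|E_u| \le (1+\alpha)\Delta(k+1)$ into the formula for $q_i$, evaluate the geometric series $\sum_{i\le t}\gamma_\ell^{i-1} = (1-\gamma_\ell^t)/(1-\gamma_\ell)$, and use that $\gamma_\ell^t = \Delta^{-\Theta(1)}$ is negligible and $k \ge \Delta/7$ (Lemma~\ref{lem:ProbNodeIsLabeled}) to obtain $\sum_{i\le t} q_i \ge P^\ell := \frac{\Delta}{2}\cdot\frac{(1-\alpha)^3}{(1+\alpha)^2}(1-\Delta^{-1/200})$. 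Symmetrically, using $|E_i| \le \gamma_h^{i-1}((1+\alpha)\Delta)^2(1+\Delta^{-1/5})$ together with $|E_u|-(i-1) \ge (1-\alpha)\Delta k - t$ gives $\sum_{i\le t} q_i \le P^h := \frac{\Delta}{2}\cdot\frac{(1+\alpha)^3}{(1-\alpha)^2}(1+\Delta^{-1/100})$. The choice $t = k\log\Delta/100$ is exactly what keeps $\gamma^t$ small while $|E_u|-t \approx |E_u|$, so that $P^\ell$ and $P^h$ differ only by a $(1\pm O(\alpha))\cdot(1\pm\Delta^{-\Theta(1)})$ factor.

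Finally, I would invoke Theorem~\ref{thm:ShiftedMartingaleLowerBound} with $X_i = Z_i$, $p_i = q_i$, increment bound $M = 1$, the parameters $P^\ell \le \sum_{i\le t} q_i \le P^h$ above, and deviation $\lambda = P^\ell - \Delta^{0.6}$. Since $P^\ell - \lambda = \Delta^{0.6}$ and $P^h = O(\Delta)$, this yields $\mathbb{P}[\sum_{i\le t} Z_i \le \lambda \mid \mathcal{B}] \le \exp(-\Delta^{1.2}/O(\Delta)) \le \exp(-\Delta^{1/10})$. Combining with $\mathbb{P}[\neg\mathcal{B}]$ by a union bound and recalling $\sum_{i\le t} Z_i \le |\M_u|$ gives $|\M_u| \ge \lambda \ge \frac{\Delta}{2}\cdot\frac{(1-\alpha)^3}{(1+\alpha)^2}(1-\Delta^{-1/300})$ with probability at least $1 - \exp(-\Delta^{1/15})$, where the slack between the $\Delta^{-1/200}$ in $P^\ell$ (and the $\Delta^{0.6}$ in $\lambda$) and the $\Delta^{-1/300}$ in the statement absorbs all the lower-order error terms.

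The step I expect to be the main obstacle is the second one: matching the upper and lower bounds on $\sum_{i\le t} q_i$ to within a $(1 \pm o(1))$ factor. This needs the sharp two-sided concentration of $|E_i|$ to hold simultaneously over all $t$ iterations, and---less obviously---it genuinely requires the \emph{upper} bound on $\sum q_i$, not just the lower one, because the variance proxy in Theorem~\ref{thm:ShiftedMartingaleLowerBound} scales with $P^h$. Once those ingredients and the calibration of $t$ are in hand, the per-iteration identity for $q_i$, the geometric-sum evaluation, and the final concentration step are routine.
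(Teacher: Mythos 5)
Your proposal is correct and follows essentially the same route as the paper: stop the analysis at $t = k\log\Delta/100$, express $q_i = |E_i|/(|E_u|-(i-1))$, use the two-sided concentration of $|E_i|$ from Lemmas~\ref{lem:E_iConcentratesLowerBound} and~\ref{lem:E_iConcentratesUpperBound} with the geometric-series evaluation to pin $\sum q_i$ between $P^\ell$ and $P^h$, and feed these into Theorem~\ref{thm:ShiftedMartingaleLowerBound} with $M=1$, $\lambda = P^\ell - \Delta^{0.6}$. Your observation that the \emph{upper} bound on $\sum q_i$ is genuinely needed (because the variance proxy in the shifted-martingale bound scales with $P^h$) is exactly the subtle point the paper's proof is organized around.
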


\subsection{An Upper Bound on \texorpdfstring{$|\mathcal{M}_u|$}{Mu}}\label{sec:UpperBoundE}

Observe that all our analysis takes into consideration only the first $t=k\log\Delta/100$ iterations of the while loop in Algorithm~\ref{Alg: SeqLuby_u}. To show an upper bound on $|\mathcal{M}_u|$, it doesn't suffice to consider only the first $t$ iterations as we also need to make sure that not too many edges are added to $|\mathcal{M}_u|$ after the $t$th iteration. Our proof for the upper bound on $|\mathcal{M}_u|$ is split into two parts. First, in Lemma~\ref{lem:FewUnlabeledNodes}, we show that Algorithm~\ref{Alg: SeqLuby_u} doesn't add too many more edges to $|\mathcal{M}_u|$ after the $t$th iteration. Then, in Lemma~\ref{lem:UpperBoundMu}, we show an upper bound on the number of edges that are added to $|\mathcal{M}_u|$ up to the $t$th iteration, and deduce the upper bound on $|\mathcal{M}_u|$. Recall that we say that a node is labeled at iteration $i$ if one of its incident edges was sampled in one of the first $i$ iterations. Observe that a labeled node at iteration $i$ can't be matched at iteration $j>i$, as all of its incident edges are non-surviving at iteration $j$ (see also Definition~\ref{def:surviving edges}). In the following lemma, we show that the number of unlabeled neighbors of $u$ after the $t$'s iteration is small. 

\begin{lemma}\label{lem:FewUnlabeledNodes}
    With probability at least $1-\exp{(-\Delta^{1/12})}$, the number of unlabeled nodes in $N(u)$ after the $t$th iteration is processed is at most $\Delta^{0.9975}$.
\end{lemma}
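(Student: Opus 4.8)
The plan is to track, for $i=1,\dots,t+1$, the random variable $N_i$ equal to the number of vertices of $N(u)$ that are still unlabeled at the start of the $i$th iteration of the while loop of Algorithm~\ref{Alg: SeqLuby_u}, so that $N_1=\deg(u)\in[(1-\alpha)\Delta,(1+\alpha)\Delta]$ and the lemma amounts to bounding $\bbP[N_{t+1}>\Delta^{0.9975}]$. If $N_1\le\Delta^{0.9975}$ there is nothing to prove, so assume $N_1>\Delta^{0.9975}$. Since the graph is bipartite with $N(u)$ on one side, every edge of $E_u$ meets $N(u)$ in at most one vertex, so at most one new vertex of $N(u)$ is labeled per iteration and $X_i:=N_i-N_{i+1}\in\{0,1\}$.

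The first step is a multiplicative one-step drift estimate. Condition on the outcomes of the first $i-1$ iterations and let $S$ be the set of still-unlabeled vertices of $N(u)$, so $|S|=N_i$. Each $v\in S$ is unlabeled, hence all of its $\deg(v)\in[(1-\alpha)\Delta,(1+\alpha)\Delta]$ incident edges are still unsampled, and since $S\subseteq N(u)$ these edge sets are pairwise disjoint. As iteration $i$ picks a uniformly random edge out of the $|E_u|-(i-1)$ unsampled ones, and $(1-\alpha)\Delta(k+1)-t\le|E_u|-(i-1)\le(1+\alpha)\Delta(k+1)$, the quantity $q_i:=\E[X_i\mid X_1,\dots,X_{i-1}]$ obeys $\rho_\ell N_i\le q_i\le\rho_h N_i$ with $\rho_\ell=\tfrac{1-\alpha}{(1+\alpha)(k+1)}$ and $\rho_h=\tfrac{1+\alpha}{(1-\alpha)(k+1)-t/\Delta}$; note this is a deterministic combinatorial fact about $q_i$, requiring no conditioning on a ``good'' event. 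Using $k\ge\Delta/7$ (established in the proof of Lemma~\ref{lem:ProbNodeIsLabeled}), $\Delta\ge2^{10}$ and $\alpha\le1/10$, both $\rho_\ell,\rho_h=\Theta(1/k)$; in particular $\E[N_{i+1}\mid N_1,\dots,N_i]\le(1-\rho_\ell)N_i$.

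The second step applies the scaling trick of Lemmas~\ref{lem:E_iConcentratesLowerBound} and \ref{lem:E_iConcentratesUpperBound} to $N_i$: set $\gamma=1-\rho_\ell$ and $F_i=N_i/\gamma^{i-1}$. Then $\E[F_{i+1}\mid F_1,\dots,F_i]\le F_i$, so $(F_i)_{i\le t+1}$ is a submartingale with $F_1=N_1$; its increments satisfy $|F_{i+1}-\E[F_{i+1}\mid\cdot]|=|X_i-q_i|/\gamma^{i}\le\gamma^{-t}$, and its conditional variances sum to $\sum_i q_i(1-q_i)/\gamma^{2i}\le\gamma^{-2t}\,\rho_h\,t\,N_1$. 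Exactly the elementary computation for $\gamma^t=(1-\rho_\ell)^t$ used in the two cited lemmas (using $\rho_\ell\le1/2$ and $t=k\log\Delta/100$) shows $\gamma^t=\Delta^{-\Theta(1)}$ with a small exponent, so $M:=\gamma^{-t}$ is $\poly(\Delta)$ of degree $<1$ and the variance bound $V:=\gamma^{-2t}\rho_h t N_1$ is $\poly(\Delta)$ of degree $<2$. One then checks that $\E[N_{t+1}]\le N_1\gamma^t$ falls a $\Delta^{\Omega(1)}$ factor below $\Delta^{0.9975}$ — this is the step where the precise values of $\alpha\le1/10$, $\Delta\ge2^{10}$ and the constant $100$ in the definition of $t$ are used — so the event $\{N_{t+1}>\Delta^{0.9975}\}$ is contained in $\{F_{t+1}\ge F_1+\lambda\}$ for $\lambda:=\Delta^{0.9975}\gamma^{-t}-F_1=\Delta^{1+\Omega(1)}$. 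Feeding $\lambda$, $M$, $V$ into Theorem~\ref{thm:SubMartingaleConcentrationVariance} gives $\bbP[F_{t+1}\ge F_1+\lambda]\le\exp\!\left(-\tfrac{\lambda^2}{2(V+M\lambda/3)}\right)$, and since $\lambda^2$ has degree $2+\Omega(1)$ in $\Delta$ while both $V$ and $M\lambda$ have degree $<2$, this is at most $\exp(-\Delta^{\Omega(1)})\le\exp(-\Delta^{1/12})$.

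The crux of the difficulty is that the per-step contraction factor is only $1-\Theta(1/k)$ and is applied for just $t=k\log\Delta/100$ iterations, so $N_{t+1}$ concentrates around $N_1\cdot\Delta^{-\Theta(1)}$, which is only a $\Delta^{o(1)}$ factor below the target $\Delta^{0.9975}$. Hence the delicate part is the constant bookkeeping needed to certify that this $\Delta^{o(1)}$ gap is in fact $\Delta^{\Omega(1)}$ — one has to carry the $(1\pm\alpha)$ and $k/(k+1)$ factors through $\rho_\ell$ and use $\Delta\ge 2^{10}$ to absorb the leftover constants — and to certify that the resulting deviation $\lambda$ dominates $\sqrt{V}\cdot\Delta^{1/24}$ and $M\cdot\Delta^{1/12}$. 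Everything else (the bounds $X_i\in\{0,1\}$, $\sum q_i(1-q_i)\le\poly(\Delta)$, and $\gamma^{-t}\le\Delta^{o(1)}$) is routine and parallels the estimates already carried out for $|E_i|$.
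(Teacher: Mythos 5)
Your proof is correct and follows the same scaled-martingale skeleton as the paper's (track the number of unlabeled vertices of $N(u)$, rescale by $\gamma^{-(i-1)}$ to obtain a submartingale, and feed variance and increment bounds into Theorem~\ref{thm:SubMartingaleConcentrationVariance}). The one genuine improvement you make is in the drift estimate: you observe that the event $\{X_i=1\}$ occurs iff the $i$th sampled edge lands in the set of edges incident to a \emph{still-unlabeled} vertex of $N(u)$, that these edge sets are pairwise disjoint (bipartiteness plus $N(u)$ living on one side) and still entirely unsampled by definition of ``unlabeled,'' and that $|E_u|\in[(1-\alpha)\Delta(k+1),(1+\alpha)\Delta(k+1)]$ is deterministic. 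This gives $q_i\in[\rho_\ell N_i,\rho_h N_i]$ \emph{exactly}, without invoking Lemma~\ref{lem:ProbNodeIsLabeled}, which is where the paper pays a conditioning on the good event $\mathcal B$ (inherited from Claim~\ref{cor:EveryNodeFewLabeled}) and then removes it with a union bound. So your version is both shorter and gives a slightly cleaner probability accounting. The final extraction differs only cosmetically: the paper bounds $F_t\le 3\Delta$ w.h.p.\ and then multiplies by $\gamma^{t-1}$, while you translate $\{N_{t+1}>\Delta^{0.9975}\}$ directly into $\{F_{t+1}\ge F_1+\lambda\}$; these are the same computation.

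One caveat, which you flag yourself but phrase a bit confidently: the positivity of $\lambda=\Delta^{0.9975}\gamma^{-t}-F_1$ is delicate. With $t=k\log\Delta/100$ and $\rho_\ell=\frac{1-\alpha}{(1+\alpha)(k+1)}$, one gets $\gamma^{-t}\approx\Delta^{\frac{1-\alpha}{100(1+\alpha)}}$, and for $\alpha$ near $1/10$ and $\Delta$ near $2^{10}$ the quantity $\Delta^{0.9975+\frac{1-\alpha}{100(1+\alpha)}}$ does \emph{not} clear $(1+\alpha)\Delta$: the exponent gain $\frac{1-\alpha}{100(1+\alpha)}-\frac{1}{400}$ is only a few parts per thousand, and $\Delta^{0.005\text{ish}}$ does not dominate the constant $1.1$ when $\Delta=1024$. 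In other words $\lambda$ can be negative at the extremes of the allowed parameter range, which would make the inclusion $\{N_{t+1}>\Delta^{0.9975}\}\subseteq\{F_{t+1}\ge F_1+\lambda\}$ useless. To be fair, the paper's own final chain $W_t\le 3\Delta\,\gamma^{t-1}\le\Delta^{1-1/400}$ appears to suffer the same arithmetic looseness for $\Delta$ near $2^{10}$ and $\alpha$ near $1/10$; the cure in both cases is to tighten the constants (e.g.\ a larger denominator in the definition of $t$, or a smaller upper bound on $\alpha$ as the Table of Notations in fact states). So this is a shared bookkeeping concern rather than a flaw specific to your argument, but since you assert $\lambda=\Delta^{1+\Omega(1)}$ without a computation, you should either carry out the arithmetic or hedge as the paper does.
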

\begin{proof}
    We use the scaled martingale trick that was briefly discussed in Section~\ref{sec:martingale-overview}. Let $\mathcal{B}$ be the event in which the upper and lower bounds on the probability that a node is labeled from Lemma~\ref{lem:ProbNodeIsLabeled} hold. We first provide an analysis conditioned on the event $\mathcal{B}$ happening, and then we remove this assumption by a simple union bound argument. By Lemma~\ref{lem:ProbNodeIsLabeled}, the event $\mathcal{B}$ happens with probability at least $1-\exp{(-\Delta^{1/11})}$. Let $W_i$ be the number of unlabeled nodes in $N(u)$ after the $i$th iteration is processed. Observe that by Lemma~\ref{lem:ProbNodeIsLabeled}, we have that:
    
    $$\mathbb{E}[W_i\mid W_{i-1}] \leq \left(1-\frac{1-\alpha}{(1+\alpha)k}\cdot (1-\Delta^{-7/8})\right) W_{i-1}$$

    \noindent This is because by Lemma~\ref{lem:ProbNodeIsLabeled}, each node $v$ is labeled with probability at least $\frac{1-\alpha}{(1+\alpha)k}\cdot (1-\Delta^{-7/8})$ at any iteration. Therefore, the total probability mass that the unlabeled nodes have at the beginning of the $i$th iteration is $W_{i-1}\cdot\frac{1-\alpha}{(1+\alpha)k}\cdot (1-\Delta^{-7/8})$.
    
    \noindent Next, let $\gamma = \left(1-\frac{1-\alpha}{(1+\alpha)k}\cdot (1-\Delta^{-7/8})\right)$. We define the following random variable:
    
    $$F_i = \frac{W_i}{\gamma^{i-1}}$$
    \noindent Observe that $F_i$ is a submartingale with a starting point $F_1 = W_1 = |N(u)|$. This is because for any $i\geq 2$:

    $$\mathbb{E}[F_i\mid F_{i-1}] = \frac{1}{\gamma^{i-1}}\mathbb{E}[W_i\mid W_{i-1}]\leq \frac{W_{i-1}}{\gamma^{i-2}} = F_{i-1}$$

    \noindent Hence, we would like to use Theorem~\ref{thm:SubMartingaleConcentrationVariance} to get a concentration result for $F_i$ and then deduce a concentration result for $W_i$. For this, we analyze $Var[F_i\mid F_{i-1}]$ and then $F_i - \mathbb{E}[F_i\mid F_{i-1}]$. Observe that:

    \setcounter{equation}{0}
    \begin{align}
        Var[F_i\mid F_{i-1}] &= Var[F_i - F_{i-1}/\gamma\mid F_{i-1}]\\
        & = \frac{1}{\gamma^{i-1}}Var[W_i-W_{i-1}\mid W_{i-1}]\\
        &= \frac{1}{\gamma^{i-1}}\mathbb{E}[(W_i-W_{i-1})^2\mid W_{i-1}]\\
        &\leq \frac{1}{\gamma^{i-1}}\mathbb{E}[|W_i-W_{i-1}|\mid W_{i-1}]\\
        &\leq \frac{1}{\gamma^{i-1}}\big(\mathbb{E}[W_{i-1}\mid W_{i-1}]-\mathbb{E}[W_{i}\mid W_{i-1}]\big)\\
        &\leq \frac{1}{\gamma^{i-1}}\big(W_{i-1} - (1 - 8/k)W_{i-1}\big)\\
        &\leq \frac{8W_{i-1}}{k\gamma^{i-1}}\\
        &\leq \frac{16\Delta^{1.1}}{k}
    \end{align}
    
    \noindent where (1) follows from Observation~\ref{obs:conditionalVariance}, (3) follows from the definition of variance, (4) follows since the maximum value of $|W_i-W_{i-1}|$ is $1$, (5) follows since $W_{i-1}\geq W_i$, (6) follows from Lemma~\ref{lem:ProbNodeIsLabeled} which implies that $\mathbb{E}[W_i\mid W_{i-1}]\geq (1-8/k)W_{i-1}$, and (8) follows since $W_i\leq W_1\leq (1+\alpha)\Delta\leq 2\Delta$ and $\gamma^{i-1}\geq \Delta^{-0.1}$.

    \noindent Next, we analyze $F_i - \mathbb{E}[F_i\mid F_{i-1}]$.

    \setcounter{equation}{0}
    \begin{align}
        F_i - \mathbb{E}[F_i\mid F_{i-1}] &= \frac{1}{\gamma^{i-1}}\big(W_i - \mathbb{E}[W_i\mid W_{i-1}]\big)\\
        &\leq \frac{1}{\gamma^{i-1}}\big(W_{i-1} - (1-8/k)W_{i-1}\big)\\
        & = \frac{8W_{i-1}}{k\gamma^{i-1}}\\
        &\leq 112\Delta^{0.1}
    \end{align}

    \noindent where (2) follows since by  Lemma~\ref{lem:ProbNodeIsLabeled}, $\mathbb{E}[W_i\mid W_{i-1}]\geq (1-8/k)W_{i-1}$, and (4) follows since $W_{i-1}\leq 2\Delta$, $k\geq \Delta/7$, and $\gamma^{i-1}\geq \Delta^{0.1}$. Therefore, by plugging the bounds on $Var[F_i\mid F_{i-1}]$ and $F_i-\mathbb{E}[F_i\mid F_{i-1}]$ into Theorem~\ref{thm:SubMartingaleConcentrationVariance} for $\lambda=\Delta$, we get that:

    \begin{align*}
        \mathbb{P}[F_t\geq (1+\alpha)\Delta + \lambda]\leq \mathbb{P}[F_t\geq F_1 + \lambda\mid \mathcal{B}]&\leq \exp{\left(-\frac{\lambda^2}{2\big((\sum_{i=1}^t16\Delta^{1.1}/k)+112\Delta^{0.1}\lambda\big)}\right)}\\
        &= \exp{\left(-\frac{\Delta^2}{2\big(16\Delta^{1.1}\log\Delta/100+224\Delta^{1.1})}\right)}\\
        &=\exp(-\Delta^{0.8})
    \end{align*}
    \noindent where the first inequality holds since $F_1\leq (1+\alpha)\Delta$. By a union bound over the event in which the event $\mathcal{B}$ doesn't happen we get that:

    $$\mathbb{P}[F_t\geq 3\Delta]\leq \exp(-\Delta^{1/12})$$

    \noindent Finally, since 
    
    $$W_t = F_t\cdot \gamma^{t-1}\leq 3\Delta\cdot \left(1-\frac{1-\alpha}{(1+\alpha)k}\cdot (1-\Delta^{-7/8})\right)^{t-1}\leq 3\Delta\cdot \exp{\left(-\frac{(k\log\Delta/100)(1-\Delta^{-7/8})}{3k}\right)}\leq \Delta^{1-1/400}$$ 
    
    \noindent the claim follows.
\end{proof}

\noindent Now we are ready to give an upper bound on $|\mathcal{M}_u|$.

\begin{lemma}\label{lem:UpperBoundMu}
    With probability at least $1-\exp{(-\Delta^{-1/15})}$ it holds that:

    $$|\mathcal{M}_u|\leq \frac{\Delta}{2}\cdot\frac{(1+\alpha)^3}{(1-\alpha)^2}\cdot (1+\Delta^{-1/500})$$
    \begin{proof}
        Let $\mathcal{B}$ be the event in which the statement from Lemma~\ref{lem:E_iConcentratesUpperBound} holds. That is, in the event $\mathcal{B}$ we have upper bounds on $|E_1|,\cdots, |E_t|$. As usual, we first prove the lemma conditioned on the event $\mathcal{B}$ happening. The proof is similar to the proof of Lemma~\ref{lem:LowerBoundMu} where we showed a lower bound on $|\mathcal{M}_u|$. Recall that $Z_i$ is an indicator random variable indicating whether we add an edge to $\mathcal{M}_u$ in the $i$th iteration of the while loop in Algorithm~\ref{Alg: SeqLuby_u}, and $q_i = \mathbb{E}[Z_i\mid Z_1,\cdots, Z_{i-1}]$. In the proof of Lemma~\ref{lem:LowerBoundMu} we showed that $\sum_{i=1}^t q_i\leq \frac{\Delta}{2}\cdot\frac{(1+\alpha)^3}{(1-\alpha)^2}\cdot (1+\Delta^{-1/100})$. Let $|\mathcal{M}^t_u|$ be the size of $\mathcal{M}_u$ after the first $t$ iterations were processed. We apply Theorem~\ref{thm:ShiftedMartingaleUpperBound} with $X_i=Z_i$, $S_t = |\mathcal{M}^t_u|$, $p_i=q_i$, $M=1$, $P=\frac{\Delta}{2}\cdot\frac{(1+\alpha)^3}{(1-\alpha)^2}\cdot (1+\Delta^{-1/100})$, and $\lambda = P+\Delta^{0.6}$ to get that:

        \begin{align*}
            \mathbb{P}[|\mathcal{M}^t_u|\geq \lambda\mid \mathcal{B}]\leq \exp{\left(-\frac{\Delta^{1.2}}{8\Delta}\right)}\leq \exp{(-\Delta^{0.1})}
        \end{align*}
        
        \noindent By a simple union bound argument (to get rid of the conditioning on $\mathcal{B}$), we get that $|\mathcal{M}^t_u|\leq P+\Delta^{0.6} \leq \frac{\Delta}{2}\cdot\frac{(1+\alpha)^3}{(1-\alpha)^2}\cdot (1+\Delta^{-1/200})$ with probability at least $1-\exp{(-\Delta^{1/14})}$. Finally, by Lemma~\ref{lem:FewUnlabeledNodes}, with probability at least $1-\exp{(-\Delta^{1/12})}$, we can't add more than $\Delta^{0.9975}$ edges to $\mathcal{M}_u$ after the $t$th iteration. Hence, with probability at least $1-\exp{(-\Delta^{1/15})}$, it holds that $|\mathcal{M}_u|\leq  \frac{\Delta}{2}\cdot\frac{(1+\alpha)^3}{(1-\alpha)^2}\cdot (1+\Delta^{-1/500})$, as desired.
        
    \end{proof}
\end{lemma}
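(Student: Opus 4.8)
The plan is to mirror the lower-bound argument of Lemma~\ref{lem:LowerBoundMu} and then separately control the edges added to the matching after the stopping time $t=k\log\Delta/100$. Write $|\mathcal{M}_u|=|\mathcal{M}^t_u|+R_t$, where $|\mathcal{M}^t_u|=\sum_{i\le t}Z_i$ is the number of edges of $A_u$ that enter the matching during the first $t$ iterations of Algorithm~\ref{Alg: SeqLuby_u} and $R_t\ge 0$ counts those entering afterwards. For the first term I would condition on the event $\mathcal{B}$ that the upper bounds of Lemma~\ref{lem:E_iConcentratesUpperBound} on $|E_1|,\dots,|E_t|$ hold; on $\mathcal{B}$ the probabilities $q_i=\E[Z_i\mid Z_1,\dots,Z_{i-1}]=|E_i|/(|E_u|-(i-1))$ satisfy $\sum_{i\le t}q_i\le P$ with $P=\tfrac{\Delta}{2}\cdot\tfrac{(1+\alpha)^3}{(1-\alpha)^2}\cdot(1+\Delta^{-1/100})$, which is exactly the upper bound on $\sum_i q_i$ derived in the proof of Lemma~\ref{lem:LowerBoundMu}. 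Applying the upper-tail shifted-martingale inequality (Theorem~\ref{thm:ShiftedMartingaleUpperBound}) to $S_t=|\mathcal{M}^t_u|$ with $M=1$, this $P$, and threshold $P+\Delta^{0.6}$ gives $\mathbb{P}[|\mathcal{M}^t_u|\ge P+\Delta^{0.6}\mid\mathcal{B}]\le\exp(-\Delta^{0.2}/8)\le\exp(-\Delta^{0.1})$ once $\Delta$ is large enough, and since $\mathcal{B}$ fails with probability at most $\exp(-\Delta^{1/13})$, a union bound removes the conditioning and yields $|\mathcal{M}^t_u|\le P+\Delta^{0.6}\le\tfrac{\Delta}{2}\cdot\tfrac{(1+\alpha)^3}{(1-\alpha)^2}\cdot(1+\Delta^{-1/200})$ with probability at least $1-\exp(-\Delta^{1/14})$.

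For $R_t$, the key observation is that once a node $v\in N(u)$ is labeled (some incident edge has been sampled) it can never again be incident to a surviving edge of $A_u$, hence contributes no further edge to $\mathcal{M}_u$; since $\mathcal{M}_u$ is a matching, $R_t$ is at most the number of unlabeled nodes of $N(u)$ at the end of iteration $t$. That quantity is controlled by Lemma~\ref{lem:FewUnlabeledNodes}: it is at most $\Delta^{0.9975}$ with probability at least $1-\exp(-\Delta^{1/12})$. Combining the two estimates by a final union bound yields $|\mathcal{M}_u|\le\tfrac{\Delta}{2}\cdot\tfrac{(1+\alpha)^3}{(1-\alpha)^2}\cdot(1+\Delta^{-1/200})+\Delta^{0.9975}$, and absorbing the additive terms $\Delta^{0.6}$ and $\Delta^{0.9975}$ into the $\tfrac{\Delta}{2}\cdot\Delta^{-1/500}$-scale slack (which is where the unoptimized lower bound on $\Delta$ is used) gives the claimed inequality with total failure probability at most $\exp(-\Delta^{1/15})$.

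All of the genuinely delicate work is already done upstream: the sharp geometric-decay estimate $|E_i|\le\gamma_h^{i-1}\cdot((1+\alpha)\Delta)^2(1+\Delta^{-1/5})$ (Lemma~\ref{lem:E_iConcentratesUpperBound}, proved via the scaling trick) and the bound on unlabeled neighbors (Lemma~\ref{lem:FewUnlabeledNodes}, again via the scaling trick) are the real obstacles, and they are in hand. Within this lemma the only care required is the bookkeeping over the several small failure events and the elementary comparison of $R_t$ and the martingale deviation $\Delta^{0.6}$ against the $\Delta^{-1/500}$ error budget; no new probabilistic idea is needed. The one subtlety worth stating explicitly is that Theorem~\ref{thm:ShiftedMartingaleUpperBound} is invoked under the conditioning on $\mathcal{B}$ — which is precisely what guarantees $\sum_i q_i\le P$ — exactly as in the proof of Lemma~\ref{lem:LowerBoundMu}, so the conditioning does not disturb the hypotheses of the martingale inequality.
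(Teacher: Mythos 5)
Your proposal is correct and follows essentially the same route as the paper: decompose $|\mathcal{M}_u|$ into the contribution through the stopping time $t$ and the remainder $R_t$, bound the first by applying Theorem~\ref{thm:ShiftedMartingaleUpperBound} under the conditioning event $\mathcal{B}$ using the $\sum_i q_i \le P$ estimate already established in the proof of Lemma~\ref{lem:LowerBoundMu}, bound $R_t$ by the number of unlabeled nodes in $N(u)$ via Lemma~\ref{lem:FewUnlabeledNodes}, and finish with a union bound plus absorption of the lower-order additive terms into the $\Delta^{-1/500}$ slack. The only difference from the paper is cosmetic bookkeeping in where the $\Delta^{0.6}$ and $\Delta^{0.9975}$ terms are absorbed; the structure, ingredients, and constants are the same.
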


\subsection{Finishing the Proof}
Since $\M_u$ is exactly the matching adjacent to nodes in $N(u)$, the bounds on $|\M_u|$ suffice to prove \Cref{lem:RecursiveRegularity}.

\begin{proof}[Proof of Lemma~\ref{lem:RecursiveRegularity}]
    First, by Observation~\ref{ob:SameMatchingSeqLuby_u}, Algorithm~\ref{alg:SeqL} and Algorithm~\ref{Alg: SeqLuby_u} produce the same distributions over matchings in $(N(u)\times N^2(u))\cap E$. Furthermore, by Lemma~\ref{lem:UpperBoundMu}, with probability at least $1-\exp{(-\Delta^{1/15})}$, it holds that $|\mathcal{M}_u|\leq \frac{\Delta}{2}\cdot\frac{(1+\alpha)^3}{(1-\alpha)^2}\cdot (1+\Delta^{-1/500})$. Hence, the number of unmatched neighbors of $u$ is at least:
    \setcounter{equation}{0}
    \begin{align}
        deg'(u)&\geq (1-\alpha)\Delta - |\mathcal{M}_u|\\
        &\geq (1-\alpha)\Delta - \frac{\Delta}{2}\cdot\frac{(1+\alpha)^3}{(1-\alpha)^2}\cdot (1+\Delta^{-1/500})\\
        &=\frac{\Delta}{2}\left(2-2\alpha - \frac{(1+\alpha)^3}{(1-\alpha)^2}\cdot (1+\Delta^{-1/500})\right)\\
        &\geq \frac{\Delta}{2}\left(2-2\alpha - (1+8\alpha)\cdot (1+\Delta^{-1/500})\right)\\
        &\geq \frac{\Delta}{2}\left(2-2\alpha - (1+8\alpha+\Delta^{-1/600})\right)\\
        &=\frac{\Delta}{2}\left(1-(10\alpha+\Delta^{-1/600})\right)
    \end{align}
    
    \noindent where (4) follows since $\frac{(1+\alpha)^3}{(1-\alpha)^2}\leq (1+8\alpha)$ for $\alpha\leq 1/10$, and rest follows from simple arithmetic. For the upper bound on $deg'(u)$, we use Lemma~\ref{lem:LowerBoundMu}, which says that with probability at least $1-\exp{(-\Delta^{1/15})}$, it holds that $ |\mathcal{M}_u|\geq \frac{\Delta}{2}\cdot  \frac{(1-\alpha)^3}{(1+\alpha)^2}\cdot (1-\Delta^{-1/300})$. Hence, we have that:

    \setcounter{equation}{0}
    \begin{align}
        deg'(u)&\leq (1+\alpha)\Delta - |\mathcal{M}_u|\\
        &\leq (1+\alpha)\Delta - \frac{\Delta}{2}\cdot  \frac{(1-\alpha)^3}{(1+\alpha)^2}\cdot (1-\Delta^{-1/300})\\
        &=\frac{\Delta}{2}\left(2+2\alpha -\frac{(1-\alpha)^3}{(1+\alpha)^2}\cdot (1-\Delta^{-1/300})\right)\\
        &\leq \frac{\Delta}{2}\left(2+2\alpha -(1-8\alpha)\cdot (1-\Delta^{-1/300})\right)\\
        &\leq \frac{\Delta}{2}\left(2+2\alpha -(1-8\alpha-\Delta^{-1/600})\right)\\
        &\leq \frac{\Delta}{2}\left(1+(10\alpha+\Delta^{-1/600})\right)
    \end{align}

    \noindent where (4) follows since $\frac{(1-\alpha)^3}{(1+\alpha)^2}\geq 1-8\alpha$, for $\alpha\leq 1/10$. Hence, by a simple union bound argument, we get the desired bounds on $deg'(u)$ with probability at least $1-\exp{(-\Delta^{1/16})}$.
      
\end{proof}
\section{Warmup: A \texorpdfstring{$\poly(1/\epsilon)$}{poly(1/eps)}-Round Algorithm for General Regular Graphs}
\label{sec:warmup}


In this section, we give a simple randomized algorithm that finds a $(1+\epsilon)$-approximate matching in a number of rounds which only depends on the accuracy $\epsilon$ and not the graph size or degree.

\thmpolywarmup*

Our algorithm for proving Theorem~\ref{thm:warmup} runs in two stages. In the first stage, we reduce the degree $\Delta$ to $\poly(1 / \epsilon)$ by sampling edges independently with probability $\Theta(1/(\Delta\epsilon^4))$ and then only consider the subgraph induced by nodes whose degree is approximately $\Theta(1 / \epsilon^4)$. In \Cref{lem:sampling-main}, we show that the resulting subgraph retains a large matching with high probability. In the second stage, we leverage the small-degree property of the sampled graph to find an almost perfect matching without any dependence on $n$ in the number of rounds. We now handle each stage and their proofs in separate subsections: \Cref{subsec:sampling} for the sampling stage and \Cref{subsec:constant-matching} for the matching stage.

\subsection{Sampling Stage}
\label{subsec:sampling}

In this subsection, we give our sampling stage algorithm and prove that the resulting graph retains an almost perfect matching. Algorithm~\ref{alg:sample} gives a formal description of the algorithm.

\begin{algorithm}[H]
	\SetAlgoLined
	\DontPrintSemicolon
	\KwData{A $\Delta$-regular unweighted graph $G=(V,E)$; an accuracy parameter $\epsilon \in (n^{-1/20}, 1/2)$}
	\KwResult{An unweighted graph $G'=(V', E')$ and a max degree $d$ which is $\min\{\frac{6000}{\epsilon^4}, \Delta\}$
                  with the guarantee the max degree of $G'$ is at most $d$}

        \If{$\Delta \le \frac{6000}{\epsilon^4}$}{
          Return $G' \gets G$ and max degree $d \gets \Delta$.
        }
        
        $E_1 \subseteq E \gets $ each edge in $E$ is included independently with probability $p' = \frac{3000}{\Delta \epsilon^4}$
        
        $G_1\gets (V,E_1)$

        $V_2\gets $ all vertices in $V$ with degree at most $2p'\Delta$ in $G_1$

        $G_2\gets G_1[V_2]$

        Return graph $G' \gets G_2$ and max degree $d \gets (2p'\Delta)$.
	\caption{SamplingStage($G$)}
        \label{alg:sample}
\end{algorithm}

\begin{restatable}{lemma}{samplinglemma}
\label{lem:sampling-main}
Let $OPT(H)$ denote the size of the maximum matching in graph $H$. Then, when we run Algorithm~\ref{alg:sample}, we have that $OPT(G_2) \ge (1 - 5\epsilon)OPT(G)$ with probability at least $1 - 5/n^{30}$.
\end{restatable}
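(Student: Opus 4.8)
The plan is to produce, with the claimed probability, a point of Edmonds' matching polytope of $G_2$ of value at least $(1-5\epsilon)\,OPT(G)$; since that polytope is integral, this yields an integral matching of the same size in $G_2$, hence $OPT(G_2)\ge(1-5\epsilon)\,OPT(G)$.

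\textbf{Setup and structural claim.} If $\Delta\le 6000/\epsilon^4$ the algorithm returns $G_2=G$ and there is nothing to prove, so assume $\Delta>6000/\epsilon^4$ and write $\bar d:=p'\Delta=3000/\epsilon^4$ for the expected $G_1$-degree of every node; note $\bar d\ge 48000$ since $\epsilon<1/2$, and $OPT(G)\le n/2$. The heart of the proof is the structural claim that, with probability at least $1-5/n^{30}$, (i) $|E(G_1)|\ge(1-\epsilon/10)\,\bar d n/2$, and (ii) letting $B$ be the set of nodes whose $G_1$-degree lies outside $[(1-\epsilon)\bar d,(1+\epsilon)\bar d]$ (this includes every node of degree $>2\bar d$, i.e.\ every node deleted when forming $V_2$), the number of $G_1$-edges incident to $B$ is at most $(\epsilon/10)\,\bar d n$. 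For (i) this is a Chernoff bound on $\mathrm{Bin}(|E(G)|,p')$, whose mean $\bar d n/2$ is $\Omega(n)$. For (ii), the probability a fixed node's degree deviates from $\bar d$ by a $(1\pm\epsilon)$ factor is $e^{-\Omega(\epsilon^2\bar d)}=e^{-\Omega(1/\epsilon^2)}$; when $\epsilon=O(1/\sqrt{\log n})$ a plain union bound over the $n$ nodes already gives $B=\emptyset$, and otherwise $\bar d$ is small and I would use the Chernoff–Hoeffding inequality with bounded dependence (\Cref{thm:CorrelatedChernoff}), observing that the event ``node $v$ deviates'' depends only on the $\Delta$ edges at $v$ (so the dependency graph is $G$ itself), supplemented in the residual range of very large $\Delta$ by a union bound restricted to an independent set of $G$ inside the putative deviant set, which is valid because the degrees of non-adjacent nodes are independent.

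\textbf{The fractional matching and conclusion.} Condition on the event above. Let $G_2^{\circ}$ be the subgraph of $G_2$ induced by the nodes of $G_1$-degree in $[(1-\epsilon)\bar d,(1+\epsilon)\bar d]$ (these nodes all lie in $V_2$ since $(1+\epsilon)\bar d<2\bar d$, so $G_2^{\circ}=G_1[V\setminus B]$). By (i)–(ii), $|E(G_2^{\circ})|\ge|E(G_1)|-(\epsilon/10)\bar d n\ge(1-\epsilon)\,\bar d n/2$, and $G_2^{\circ}$ has maximum degree at most $(1+\epsilon)\bar d$. Set $x_e:=\frac{1-2\epsilon}{(1+\epsilon)\bar d}$ on every edge of $G_2^{\circ}$. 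Then $\sum_{e\ni v}x_e\le 1-2\epsilon\le 1$ for every node, and $x$ satisfies every odd-set (blossom) inequality $\sum_{e\in G_2^{\circ}[S]}x_e\le(|S|-1)/2$: for $|S|\le(1+\epsilon)\bar d$ use $|E(G_2^{\circ}[S])|\le\binom{|S|}{2}$, and for $|S|>(1+\epsilon)\bar d$ use $|E(G_2^{\circ}[S])|\le(1+\epsilon)\bar d|S|/2$; in both cases the $1-2\epsilon$ factor provides far more slack than the $1/((1+\epsilon)\bar d)\le\epsilon$ that is actually required. Hence $x$ lies in Edmonds' matching polytope of $G_2^{\circ}$, with value $|E(G_2^{\circ})|\cdot\frac{1-2\epsilon}{(1+\epsilon)\bar d}\ge\frac{(1-\epsilon)(1-2\epsilon)}{1+\epsilon}\cdot\frac n2\ge(1-5\epsilon)\frac n2\ge(1-5\epsilon)\,OPT(G)$. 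Since the matching polytope is integral, $G_2^{\circ}\subseteq G_2$ contains an integral matching of at least this size, proving the lemma.

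\textbf{Main obstacle.} The delicate part is the structural claim, specifically bounding the number of $G_1$-edges incident to degree-deviant nodes with failure probability $\le 1/n^{30}$ \emph{uniformly in $\Delta$}: the per-node failure probability $e^{-\Omega(1/\epsilon^2)}$ beats $1/n^{30}$ only when $\epsilon=O(1/\sqrt{\log n})$, while bounded-dependence and McDiarmid-type bounds degrade when $\Delta$ is a large (e.g.\ constant) fraction of $n$, so reconciling these regimes — where I expect to rely on the independence of degrees at non-adjacent nodes — is the step that needs the most care. A secondary point is simply tracking the $O(\epsilon)$-losses (from the sampling concentration, from discarding edges at deviant nodes, and from the blossom-constraint slack) so that they sum to at most $5\epsilon$; this amounts to fixing the constants $3000$ in $p'$ and the $\epsilon/10$ thresholds above.
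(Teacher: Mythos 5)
Your matching-polytope argument in the second paragraph is sound and is essentially the paper's Lemma~\ref{lem:regular-matching-size}, just inlined with slightly different constants. The gap is exactly where you flagged it: your treatment of the degree-concentration claim does not survive the regime of large $\Delta$ and moderate-to-large $\epsilon$, and the workaround you sketch does not close it.

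Concretely, in the regime $\epsilon \gg 1/\sqrt{\log n}$, the per-vertex deviation probability $\delta_0 = e^{-\Theta(1/\epsilon^2)}$ is much larger than $1/n$, so the union bound is out, and you turn to bounded-dependence Chernoff (\Cref{thm:CorrelatedChernoff}). But as you yourself note, the dependency graph of the indicators $\{Z_v\}$ is $G$ itself, whose chromatic number can be as large as $\Delta+1$. Nothing in the hypotheses prevents $\Delta$ from being a constant fraction of $n$, in which case $\exp(-2\lambda^2/(n\chi))$ with $\chi\approx\Delta$ is vacuous for any $\lambda = O(n)$. Your proposed fallback — union-bounding over independent sets inside the deviant set $B$ — also collapses in precisely this regime: with $\Delta \approx n$ one only has $\alpha(G[B]) \ge |B|/(\Delta+1)$, which is at most $1$ or $2$, so there is no independent set large enough to amplify the probability below $1/n^{30}$. (And even when such sets exist, the union bound over $\binom{n}{m}$ of them requires $\delta_0 < m/(ne)$, which again forces $\delta_0 = O(\log n / n)$ and thus $\epsilon = O(1/\sqrt{\log n})$ — the regime you already handled by a plain union bound.)

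The paper resolves this with a two-phase sampling trick that you are missing: for analysis it writes $p' = pq$ with $p = \min\{1000\log n/(\Delta\epsilon^2),1\}$, first forms an intermediate graph $G_0$ of expected degree $p\Delta = \Theta(\log n/\epsilon^2)$, shows by a \emph{plain} union bound (now valid, since the per-vertex failure probability is $e^{-\Omega(\log n)}$) that every $G_0$-degree is within a $(1\pm\epsilon/3)$ factor, and only then applies \Cref{thm:CorrelatedChernoff} to the second-phase subsampling $G_0\to G_1$ \emph{conditioned on $G_0$ being good}. The point is that the dependency graph for the second phase is $G_0$, not $G$, and its chromatic number is $O(\log n/\epsilon^2) = O(\operatorname{poly}\log n)$ uniformly in $\Delta$. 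This is the missing idea in your proof; without it, the structural claim cannot be pushed through for all $\Delta$.
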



We observe that if $\Delta \le \frac{6000}{\epsilon^4}$, then \Cref{lem:sampling-main} follows trivially (keeping the entire graph means we preserved the original matching). Thus, in the rest of this section, we assume that we used the sampling probability $p' = \frac{3000}{\Delta \epsilon^4} < 1$. Our plan is to show that the intermediate graph $G_1$ has an almost-perfect matching and that there are not many high-degree vertices so removing them does not significantly reduce the size of the matching.

More formally, we plan to invoke the following folklore result about almost-regular graphs admitting an almost-perfect matching. The result requires the following notation. Let $\kappa \in (0, 1/2)$ and $D \geq 1$ be fixed. We say an edge $e$ is $(\kappa, D)$-\emph{balanced} if and only if both its endpoints have degree in $((1 - \kappa)D, (1+\kappa)D)$. 

\begin{lemma}[Folklore]
\label{lem:regular-matching-size}
Let $G = (V, E)$ be an undirected, unweighted graph and $\tau_e,\tau_v, \kappa \in (0, 1/2)$ and $D\geq 1$ be fixed such that
\begin{enumerate}[noitemsep,topsep=0pt,parsep=0pt,partopsep=0pt,label=(\alph*)]
    \item At least $(1 - \tau_e)|E|$ edges are $(\kappa, D)$-balanced.
    \item At least $(1 - \tau_v)|V|$ vertices have degree in $((1-\kappa)D, (1+\kappa)D)$.
\end{enumerate}
Then $G$ has a matching of size at least $(1 - \tau_e - \tau_v - 2\kappa - \frac{1}{D+1})\frac{|V|}{2}$ consisting only of $(\kappa, D)$-balanced edges.
\end{lemma}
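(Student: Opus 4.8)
The plan is to pass to the spanning subgraph $H=(V,E_b)$ whose edges $E_b$ are exactly the $(\kappa,D)$-balanced edges, extract a large matching inside $H$ via Vizing's theorem, and then convert the resulting edge count into a bound in terms of $|V|$ using hypothesis (b). Since any matching contained in $H$ automatically consists only of $(\kappa,D)$-balanced edges, this delivers everything the lemma asks for.

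First I would pin down $\Delta(H)$. By definition every $(\kappa,D)$-balanced edge has both endpoints of degree strictly less than $(1+\kappa)D$ in $G$, so any vertex incident to an edge of $E_b$ has $G$-degree $<(1+\kappa)D$ and hence $H$-degree $<(1+\kappa)D$; vertices incident to no balanced edge have $H$-degree $0$. Thus $\Delta(H)<(1+\kappa)D$, so $\Delta(H)+1\le (1+\kappa)D+1$. By Vizing's theorem $E_b$ partitions into at most $\Delta(H)+1$ matchings, so by averaging, one of them — call it $M$ — satisfies
\[|M|\;\ge\;\frac{|E_b|}{\Delta(H)+1}\;\ge\;\frac{|E_b|}{(1+\kappa)D+1}\;\ge\;\frac{(1-\tau_e)\,|E|}{(1+\kappa)D+1},\]
where the last step is hypothesis (a).

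Next I would lower bound $|E|$ by a vertex count. By hypothesis (b) at least $(1-\tau_v)|V|$ vertices have degree $>(1-\kappa)D$, and all degrees are nonnegative, so $2|E|=\sum_{v\in V}\deg_G(v)\ge (1-\tau_v)(1-\kappa)D\,|V|$. Plugging this into the previous display gives $|M|\ge \frac{(1-\tau_e)(1-\tau_v)(1-\kappa)D}{2\big((1+\kappa)D+1\big)}\,|V|$. It then remains only to clean up the constant: a one-line cross-multiplication (after clearing denominators and dividing by $D$, the difference of the two sides below equals $2\kappa^2D+\kappa\ge 0$) shows
\[\frac{(1-\kappa)D}{(1+\kappa)D+1}\;\ge\;(1-2\kappa)\cdot\frac{D}{D+1}\;=\;(1-2\kappa)\Big(1-\tfrac{1}{D+1}\Big)\;\ge\;1-2\kappa-\tfrac{1}{D+1},\]
and the elementary inequality $(1-x)(1-y)(1-z)\ge 1-x-y-z$ — valid for $x,y\in[0,1]$ and $z\ge 0$, since the difference equals $xy+z(x+y-xy)\ge 0$ — applied with $x=\tau_e$, $y=\tau_v$, $z=2\kappa+\tfrac{1}{D+1}$ yields $|M|\ge\big(1-\tau_e-\tau_v-2\kappa-\tfrac{1}{D+1}\big)\tfrac{|V|}{2}$, as claimed.

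I do not anticipate a genuine obstacle, since this is a folklore statement; the only points needing a little care are (i) observing that ``bad'' vertices (degree outside the window) contribute nothing to $E_b$, so Vizing gives the clean chromatic-index bound $(1+\kappa)D+1$ rather than one involving the a priori unbounded degrees of bad vertices, and (ii) checking the two elementary inequalities so that every cross term among $\tau_e,\tau_v,\kappa,\tfrac{1}{D+1}$ works in our favor (each only enlarges the bound). One could instead put the fractional matching $x_e=1/((1+\kappa)D)$ on every balanced edge, verify feasibility (bad vertices get fractional degree $0$, good vertices strictly less than $1$), and round via half-integrality of the fractional matching polytope, but that loses $\Theta(|V|)$ to odd cycles and so does not recover the sharp $\tfrac{1}{D+1}$ term; the Vizing route is preferable.
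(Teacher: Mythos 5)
Your proof is correct, and it takes a genuinely different route from the paper's. You pass to the subgraph on the balanced edges, bound its maximum degree by $(1+\kappa)D$, and invoke Vizing's theorem to decompose $E_b$ into at most $\Delta(H)+1\le (1+\kappa)D+1$ matchings, then take the largest color class. The paper instead works with the induced subgraph $H=G[X]$ on the ``good'' vertices (whose edge set coincides with your $E_b$, since an edge is balanced iff both its endpoints are good), places the uniform fractional point $x_e = 1/\bigl((1+\kappa)D+1\bigr)$ on $E(H)$, and verifies that $x$ lies in Edmonds' \emph{integral} matching polytope (Theorem~\ref{thm:polytope}) — including the odd-set constraints $\sum_{e\in E(H[S])} x_e\le (|S|-1)/2$ — so that some vertex of the polytope, i.e.\ an integral matching, has at least $\sum_e x_e = |E(H)|/\bigl((1+\kappa)D+1\bigr)$ edges. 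Both routes produce the identical key inequality $|M|\ge |E_b|/\bigl((1+\kappa)D+1\bigr)$, after which the degree-sum and constant-cleanup steps are essentially the same (your algebra is correct). Vizing buys you a more self-contained, elementary argument and cleanly explains where the $+1$ comes from; the polytope route is the one that generalizes naturally to weighted matchings and is what the paper already had set up for Theorem~\ref{thm:polytope}.

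One small correction to your closing remark: the ``$x_e = 1/((1+\kappa)D)$ plus half-integrality'' alternative you dismiss is not what the paper does. The paper does not round a point of the fractional (degree-constrained) LP relaxation — which indeed is only half-integral and loses to odd components — but rather chooses the smaller value $1/\bigl((1+\kappa)D+1\bigr)$ precisely so that the resulting point satisfies the odd-set (blossom) inequalities and is therefore already in the \emph{integral} matching polytope. That is why the paper's approach, like yours, recovers the sharp $\frac{1}{D+1}$ term with no loss.
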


The proof of Lemma~\ref{lem:regular-matching-size} is provided in Appendix~\ref{app:general} for completeness. Now we just need to show that we can make the subgraph $G_1$ sampled in Algorithm~\ref{alg:sample} fit the conditions of \Cref{lem:regular-matching-size}. We expect vertices to have degree $p'\Delta$. Since we are going to need $\kappa$ to be on the order of $\epsilon$, we are aiming to have a lot of vertices with degree in the range $((1 - \epsilon)p'\Delta, (1 + \epsilon)p'\Delta)$. Unfortunately, as each edge is sampled independently with probability $p' = \Theta(\frac{1}{\Delta \epsilon^4})$, a standard Chernoff bound followed by a union bound attempting to guarantee that all vertices have degree between $((1 - \epsilon)p'\Delta, (1 + \epsilon)p'\Delta)$ results in too much failure probability. Instead, for the sake of analysis, we sample the edges in two separate phases and utilize concentration bounds for random variables with limited dependence to show that most of the vertices have approximately the correct degree. Formally, we show the following lemma.

\begin{lemma}
\label{lem:g1-deg}
With probability at least $1 - 4/n^{30}$, both of the following conclusions hold:
\begin{enumerate}[noitemsep,topsep=0pt,parsep=0pt,partopsep=0pt,label=(\alph*)]
    \item At most $(4e^{-1500/(27 \epsilon^2)})n$ vertices have degree outside of $((1 - \epsilon)p'\Delta, (1+\epsilon)p'\Delta)$ in $G_1$.
    \item At most $(12e^{-1500 / (27 \epsilon^2)})p'm$ edges in $G_1$ are \emph{not} $(\epsilon, p'\Delta)$-balanced.
\end{enumerate}
\end{lemma}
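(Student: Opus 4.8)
The plan is to prove part (a) first, and then derive part (b) from it together with tail bounds on the total degree of the vertices whose degree is too large. Since Algorithm~\ref{alg:sample} only constructs $G_1$ when $\Delta>6000/\epsilon^4$, we may assume $p'=3000/(\Delta\epsilon^4)\in(0,1/2)$, so every vertex $v$ has $\deg_{G_1}(v)\sim\mathrm{Bin}(\Delta,p')$ with mean $\mu:=p'\Delta=3000/\epsilon^4$, and a Chernoff bound (Theorem~\ref{thm:Chernoff}) gives, for each fixed $v$,
\[
  \bbP\big[\deg_{G_1}(v)\notin((1-\epsilon)\mu,(1+\epsilon)\mu)\big]\;\le\;2\exp(-\epsilon^2\mu/3)\;=\;2\exp(-1000/\epsilon^2).
\]
Call such a $v$ \emph{bad} and let $Z$ count the bad vertices, so $\E[Z]\le 2n\exp(-1000/\epsilon^2)$, which is far below the target $4e^{-1500/(27\epsilon^2)}n$ because $1000\gg 1500/27$. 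All of the work is in concentrating $Z$.

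The obstacle is that $\{\deg_{G_1}(v)\}_v$ are dependent: two adjacent vertices share the same edge-indicator, so the dependency graph of the bad-vertex indicators is $G$ itself, of maximum degree $\Delta$ --- far too large for the bounded-dependence Chernoff--Hoeffding inequality (Theorem~\ref{thm:CorrelatedChernoff}) to be useful when $\Delta$ is large. To circumvent this I will sample the edges of $G$ in two phases: phase one includes each edge independently with a smaller probability $p_1$, chosen so that $p_1\Delta$ is at most roughly $\max(\poly(1/\epsilon),\,\Theta(\log n))$; phase two then includes each edge of the phase-one graph $G^{(1)}$ independently with probability $p_2=p'/p_1$, so each edge lands in $E_1$ with probability exactly $p'$. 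Because $p_1\Delta$ is taken large enough, a bare union bound over the $n$ vertices shows that with probability $\ge 1-1/n^{31}$ every vertex satisfies $\deg_{G^{(1)}}(v)\le 2p_1\Delta$; conditioned on such a $G^{(1)}$, the bad-vertex indicators are functions of the independent phase-two coins, and each such coin influences only the two indicators at its endpoints, so the conditional dependency graph has maximum degree at most $2p_1\Delta$. Theorem~\ref{thm:CorrelatedChernoff} then bounds $\bbP[Z>4e^{-1500/(27\epsilon^2)}n]$ by $1/n^{30}$, with the huge gap between $\E[Z]$ and the target absorbing the $2p_1\Delta$ factor. In the complementary regime --- where $\epsilon$ is so small that $2n\exp(-1000/\epsilon^2)$ is already at most $n^{-30}$ --- no phases or conditioning are needed: a plain union bound forces $Z=0$ with probability $\ge 1-1/n^{30}$. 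A short check that these regimes (possibly with one intermediate case handled the same way) cover all $\epsilon\in(n^{-1/20},1/2)$ completes part (a).

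For part (b), an edge $e=\{a,b\}\in E_1$ fails to be $(\epsilon,p'\Delta)$-balanced only if one of $a,b$ has degree outside $((1-\epsilon)\mu,(1+\epsilon)\mu)$, so the number of non-balanced edges of $G_1$ is at most $\sum_v \deg_{G_1}(v)\,\mathbf{1}[\deg_{G_1}(v)\notin((1-\epsilon)\mu,(1+\epsilon)\mu)]$. I split this according to the size of $\deg_{G_1}(v)$: vertices with $\deg_{G_1}(v)<(1-\epsilon)\mu$ contribute $<\mu$ each; vertices with $\deg_{G_1}(v)\in[(1+\epsilon)\mu,2\mu)$ contribute $<2\mu$ each; and the rest is organized into dyadic bands $\deg_{G_1}(v)\in[2^j\mu,2^{j+1}\mu)$, $j\ge 1$. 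For each of these classes the number of vertices in it is a Boolean statistic governed by the very same two-phase bounded-dependence argument as in part (a), with per-vertex probabilities $\le 2\exp(-\epsilon^2\mu/2)=2e^{-1500/\epsilon^2}$, $\le 2e^{-1000/\epsilon^2}$, and $\le\exp(-\Theta(2^j/\epsilon^4))$ respectively --- all comfortably smaller than $e^{-1500/(27\epsilon^2)}$ --- so multiplying each class count by its degree upper bound and summing over $j$ keeps the total below $12e^{-1500/(27\epsilon^2)}p'm$ (using $m=n\Delta/2$). A union bound over the $O(1)$ failure events from parts (a) and (b) gives the stated probability $1-4/n^{30}$.

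The step I expect to be the crux is exactly the dependency issue flagged above: a naive union bound loses because there are too many vertices, while the dependency-graph version of Chernoff--Hoeffding loses because the degrees of the $\Delta$ neighbours of a vertex are all entangled. The two-phase decomposition is the device that threads this needle, and the delicate part is choosing the intermediate degree $p_1\Delta$ --- large enough that phase one succeeds via a bare union bound, yet small enough that the conditional dependency degree is dominated by the slack between $\E[Z]$ and the target --- and then checking that the constants close uniformly across the whole admissible range $\epsilon\in(n^{-1/20},1/2)$.
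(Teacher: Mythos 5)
Your high-level plan is the same as the paper's: a two-phase re-sampling of the edges, a union bound on the phase-one degrees, and then the bounded-dependence Chernoff--Hoeffding inequality (Theorem~\ref{thm:CorrelatedChernoff}) applied conditionally on the phase-one graph. However, there is a real gap in how you control the conditional mean. Theorem~\ref{thm:CorrelatedChernoff} applied \emph{conditionally on $G^{(1)}$} requires a bound on $\sum_v \mathbb{E}[Z_v \mid G^{(1)}]$, not on the unconditional $\sum_v \mathbb{E}[Z_v]$. You compute the latter (from the marginal $\mathrm{Bin}(\Delta,p')$ law), and then silently swap in the target $4e^{-1500/(27\epsilon^2)}n$ as if it bounded the former. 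But your phase-one event is only one-sided ($\deg_{G^{(1)}}(v)\le 2p_1\Delta$), which puts no floor on $\deg_{G^{(1)}}(v)$. If a vertex comes out of phase one with degree well below $p_1\Delta$, then its degree in $G_1$ is $\mathrm{Bin}(\deg_{G^{(1)}}(v),p_2)$ with mean well below $p'\Delta$, and that vertex is bad with conditional probability close to $1$. So the conditional mean of $Z$ is not controlled by your unconditional computation, and the conditional application of the bounded-dependence inequality does not close as stated.

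The paper avoids exactly this pitfall by conditioning on a \emph{two-sided} event for the phase-one graph (all degrees in $((1-\epsilon/3)p\Delta,(1+\epsilon/3)p\Delta)$, Proposition~\ref{prop:g0-deg}), and by defining the per-vertex bad event in terms of the deviation of $Y_v$ from its conditional mean $q d_v$, rather than from $p'\Delta$. This makes the per-vertex conditional probability genuinely bounded by a Chernoff estimate (one then pays a relative deviation of $\epsilon/3$ rather than $\epsilon$, which is where the constant $1500/27$ in the exponent comes from) and lets the bounded-dependence inequality go through verbatim. Your argument is repairable along these lines --- the two-sided phase-one constraint is free by the same union bound, and then you should measure badness against $p_2\deg_{G^{(1)}}(v)$ and translate back to the window around $p'\Delta$ at the end. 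For part (b), your dyadic-in-degree decomposition is a more elaborate route than the paper's, which simply attaches an indicator $W_e$ to each $e\in E(G_0)$ for the event ``$e$ survives to $G_1$ and has a bad endpoint'' and runs the same bounded-dependence machinery on the $W_e$; your route could be made to work but inherits the same conditional-mean issue, and is doing extra work.
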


In order to prove \Cref{lem:g1-deg}, we break up the sampling stage into two phases as follows. Let $p = \min\{\frac{1000 \log n}{\Delta \epsilon^2}, 1\}$ and $q = p' / p$. For the sake of analysis, we assume that the set $E_1$ in Algorithm~\ref{alg:sample} is constructed as follows: let $E_0 \subseteq E$ be such that each edge $e$ is included in $E_0$ independently with probability $p$, and let $E_1 \subseteq E_0$ be such that each edge is chosen independently with probability $q$. Let $G_0 = (V, E_0)$. The following two propositions are simple consequences of Chernoff bounds.

\begin{proposition}
\label{prop:g0-deg}
With probability at least $1 - 1/n^{30}$, every vertex in $G_0$ has degree between $(1 - \epsilon/3)p\Delta$ and $(1 + \epsilon/3)p\Delta$.
\end{proposition}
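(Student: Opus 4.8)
The plan is a textbook multiplicative Chernoff bound applied vertex by vertex, followed by a union bound over the $n$ vertices. First I would fix an arbitrary vertex $v \in V$ and observe that, since $G$ is $\Delta$-regular and each of the $\Delta$ edges incident to $v$ is placed in $E_0$ independently with probability $p$, the degree $\deg_{G_0}(v)$ is a sum of $\Delta$ i.i.d.\ Bernoulli$(p)$ indicators, so it has mean $p\Delta$. The degenerate case $p = 1$ is disposed of immediately: then $E_0 = E$ and $\deg_{G_0}(v) = \Delta$ deterministically, which lies strictly between $(1-\epsilon/3)p\Delta$ and $(1+\epsilon/3)p\Delta$ because $\epsilon > 0$. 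So I may assume $p = \tfrac{1000\log n}{\Delta\epsilon^2} < 1$, in which case the mean is exactly $\mu := p\Delta = \tfrac{1000\log n}{\epsilon^2}$.

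Next I would invoke the multiplicative Chernoff--Hoeffding bound (Theorem~\ref{thm:Chernoff}) with relative deviation $\epsilon/3 \in (0,1)$ and mean $\mu$, obtaining
\[
\bbP\!\left[\,\abs{\deg_{G_0}(v) - p\Delta} > \tfrac{\epsilon}{3}\,p\Delta\,\right] \le 2\exp\!\left(-\tfrac{(\epsilon/3)^2\,\mu}{3}\right) = 2\exp\!\left(-\tfrac{1000\log n}{27}\right) \le 2\,n^{-37}.
\]
The constant $1000$ in the definition of $p$ was chosen precisely so that this exponent, namely $(\epsilon/3)^2 \cdot \tfrac{1000}{\epsilon^2} \cdot \tfrac{1}{3} = \tfrac{1000}{27} \approx 37$, comfortably exceeds $30$ and leaves slack for the union bound; the only care needed in this step is keeping that arithmetic straight (and noting $\epsilon/3 < 1$ so the standard form of the bound applies).

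Finally I would take a union bound over the $n$ vertices of $G$, which bounds the probability that \emph{some} vertex has degree outside $((1-\epsilon/3)p\Delta,(1+\epsilon/3)p\Delta)$ by $2n^{-36} \le n^{-30}$, as required. I do not expect any genuine obstacle here: unlike \Cref{lem:g1-deg}, whose two-phase sampling is introduced specifically to cope with limited dependence, in $G_0$ each vertex degree is an exact sum of \emph{independent} indicators, so one clean Chernoff bound per vertex plus a union bound suffices. The only thing worth double-checking is that the numerical constants remain consistent with the value of $p$ fixed just before the proposition statement, and that the $p=1$ branch is explicitly acknowledged so the "$<1$" manipulations are justified.
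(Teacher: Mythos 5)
Your proposal is correct and follows essentially the same route as the paper's own proof: dispose of $p = 1$ trivially, apply the two-sided multiplicative Chernoff bound (Theorem~\ref{thm:Chernoff}) with relative deviation $\epsilon/3$ to the i.i.d.\ Bernoulli sum giving $\deg_{G_0}(v)$, and finish with a union bound over the $n$ vertices. The constant bookkeeping also matches ($1000/27 > 37$, leaving ample slack after the $\times n$ from the union bound).
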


\begin{proof}
Note that if $p = 1$, then the claim follows trivially. So suppose $p = \frac{1000 \log n}{\Delta \epsilon^2} < 1$.
Consider a vertex $v\in V$. For each neighbor $u$ of $v$ in $G$, let $X_{uv}$ denote the indicator variable of the presence of the edge $\{u,v\}$ in $G_0$. Let $X_v = \sum_{u \in N_G(v)} X_{uv}$ be the degree of $v$ in $G_0$. Note that $\mathbb{E}[X_v] = \sum_{u \in N_G(v)} \mathbb{E}[X_{uv}] = \sum_{u \in N_G(v)} p = p\Delta$. Thus, by Theorem \ref{thm:Chernoff} with $\delta\gets\epsilon/3$,

$$\mathbb{P}[|X_v - p\Delta| \ge \epsilon/3 p\Delta] \le 2e^{-\epsilon^2 p\Delta/27}  \le 2/n^{1000/27}$$

Thus, by a union bound over all $v\in V(G)$, the probability that there exists a vertex with degree outside of the desired range is at most $2n/n^{1000/27} \le 1/n^{30}$, as desired.
\end{proof}

\begin{proposition}\label{prop:g1-edges}
With probability at least $1 - 1/n^{30}$, $G_1$ has between $(1-\epsilon)p'm$ and $(1+\epsilon)p'm$ edges.
\end{proposition}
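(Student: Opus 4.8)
The plan is a direct Chernoff bound, in the same spirit as the proof of Proposition~\ref{prop:g0-deg}, because $|E(G_1)| = |E_1|$ is a sum of \emph{independent} indicator variables. The first step is to observe that although, for the sake of analysis, $E_1$ is constructed in two phases ($E_0 \subseteq E$ by $p$-sampling and then $E_1 \subseteq E_0$ by $q$-sampling with $q = p'/p$), the composition of two independent samplings means that each edge $e \in E$ ends up in $E_1$ independently with probability $pq = p'$. Hence, writing $Y_e = \mathbbm{1}[e \in E_1]$ and $Y = \sum_{e \in E} Y_e = |E(G_1)|$, the variables $Y_e$ are mutually independent $\mathrm{Bernoulli}(p')$ random variables and $\mathbb{E}[Y] = p' m$.

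Next I would apply Theorem~\ref{thm:Chernoff} with $\delta \gets \epsilon$ to obtain
\[
\mathbb{P}\big[\, |Y - p' m| \ge \epsilon\, p' m \,\big] \le 2\exp\!\big(-\epsilon^2 p' m / 3\big).
\]
It then only remains to check that the exponent is large. Since the input graph is $\Delta$-regular, $m = n\Delta/2$, and since we are in the regime where the sampling probability is $p' = \frac{3000}{\Delta \epsilon^4} < 1$, we get $p' m = \frac{1500\,n}{\epsilon^4}$, so $\epsilon^2 p' m = \frac{1500\,n}{\epsilon^2} \ge 1500\,n$ because $\epsilon < 1$. Therefore the failure probability is at most $2\exp(-500\,n) \le 1/n^{30}$ for every positive integer $n$ (indeed $500\,n \ge \ln 2 + 30 \ln n$ trivially, and in any case $n > 2^{20}$ here). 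Complementing the bad event shows that $G_1$ has between $(1-\epsilon)p' m$ and $(1+\epsilon)p' m$ edges with probability at least $1 - 1/n^{30}$, as claimed.

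There is essentially no obstacle in this proposition: the only two points worth stating carefully are (i) that the two-phase description of $E_1$ collapses to a single independent $p'$-sampling, so that an ordinary Chernoff bound applies with no dependence bookkeeping (in contrast to the degree concentration in Lemma~\ref{lem:g1-deg}, which is the genuinely delicate part), and (ii) that $p' m = \Theta(n/\epsilon^4)$ is $\Omega(n)$, which is immediate from $\Delta$-regularity; notably, the hypothesis $\epsilon > n^{-1/20}$ is not even needed for this particular claim.
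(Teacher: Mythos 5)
Your proof is correct and takes essentially the same approach as the paper: both observe that the two-phase construction makes each edge land in $E_1$ independently with probability $p'$, apply the multiplicative Chernoff bound (Theorem~\ref{thm:Chernoff}) with $\delta = \epsilon$, and use $\Delta$-regularity to compute $p'm = 1500n/\epsilon^4$, giving a failure probability far below $1/n^{30}$.
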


\begin{proof}
For an edge $e\in E(G)$, let $X_e = 1$ if $e\in E(G_1)$ and $X_e = 0$ otherwise. Note that $\mathbb{P}[X_e = 1] = pq = p'$ for all $e$ and that the $X_e$s are independent. Let $X = \sum_{e\in E(G)} X_e$. Since $G$ is $\Delta$-regular, $\mathbb{E}[X] = p'm = p'(\Delta n/2)$. By Theorem \ref{thm:Chernoff},

$$\mathbb{P}[|X - p'm| \ge  \epsilon p'm]\le 2e^{-(p' \Delta n/2)\epsilon^2/3} = 2e^{-500n/\epsilon^2} < 1/n^{30}$$
as desired.
\end{proof}

We are now ready to prove Lemma \ref{lem:g1-deg}.

\begin{proof}[Proof of \Cref{lem:g1-deg}]
For any $v \in V$, let $d_v$ denote the degree of $v$ in $G_0$. We say $G_0$ is \emph{good} if and only if $d_v \in ((1 - \epsilon/3)p\Delta, (1+\epsilon/3)p\Delta)$ for all $v \in V$. By \Cref{prop:g0-deg}, $\mathbb{P}[G_0 \text{ is good}] \geq 1 - 1/n^{30}$.

Let $Y_v$ be a random variable denoting the degree of $v$ in $G_1$. Since every edge in $G_0$ is included in $G_1$ independently with probability $q$, by \Cref{thm:Chernoff}, we have
\begin{align*}
  \mathbb{P}[|Y_v - q d_v| \geq (\epsilon/3)qd_v \mid G_0 \text{ is good}] \leq 2 e^{-\epsilon^2 q d_v/27} \leq 2 e^{-\epsilon^2 (1-\epsilon/3) pq\Delta / 27} \leq 2e^{-1500 / (27 \epsilon^2)}
  \end{align*}
where we used $d_v \geq (1-\epsilon/3)p\Delta)$ since $G_0$ is good in the second inequality. Unfortunately we note that this probability is not low enough to allow us to union bound over all vertices in $V$. However, as $G_0$ has bounded maximum degree (when $G_0$ is good), the random variables $\{Y_v\}$ exhibit limited number of dependencies. We thus utilize concentration bounds for sums of dependent random variables to show that most vertices have the appropriate degree.

Formally, let $Z_v$ be an indicator variable denoting the event $|Y_v - q d_v| \geq (\epsilon / 3)q d_v$. For convenience, let $\delta_0 := 2e^{-1500 / (27 \epsilon^2)}$. Then the previous inequality is equivalent to $\mathbb{P}[Z_v \mid G_0 \text{ is good}] \leq \delta_0$. Consider the collection of random variables $\mathcal{Z} = \{Z_v\}$. Given a graph $G_0$, variables $Z_u$ and $Z_v$ are dependent only when they are adjacent in $G_0$. Thus, if $G_0$ is good, then the dependency graph of $\mathcal{Z}$ has maximum degree at most $(1 + \epsilon/3)p\Delta$; which implies that it has chromatic number at most $1 + (1 + \epsilon/3)p\Delta \leq 4000 \log n / \epsilon^2$. Thus, applying \Cref{thm:CorrelatedChernoff} with $\lambda \gets n\delta_0$, we have
\begin{align*}
    &\mathbb{P}\left[\sum_{v \in V} Z_v - \sum_{v \in V} \mathbb{E}[Z_v \;\middle|\; G_0 \text{ is good}] > n\delta_0 \mid G_0 \text{ is good} \right] \le \exp\left(-\frac{2n\delta_0^2}{\chi(\mathcal{Z})}\right)\\
    &\mathbb{P}\left[\sum_{v \in V} Z_v > 2n\delta_0 \;\middle|\; G_0 \text{ is good} \right] \le \exp\left(-\frac{2n\delta_0^2}{\chi(\mathcal{Z})}\right)
    \le \exp\left(-\frac{2n \delta_0^2 \epsilon^2}{4000 \log n}\right) \leq 1/n^{100}
\end{align*}

In particular, this implies that the probability that the number of vertices whose degree in $G_1$ is outside the range $\left((1-\epsilon/3)^2 pq\Delta, (1+\epsilon/3)^2pq\Delta\right) \subset \left((1-\epsilon) p'\Delta, (1+\epsilon)p'\Delta\right)$ exceeds $2 n\delta_0$ is at most $\frac{1}{n^{100}}$ whenever $G_0$ is good. Let $B$ denote the bad event that at least $2n\delta_0$ nodes in $G_1$ have degree outside the range $\left((1-\epsilon) p'\Delta, (1+\epsilon)p'\Delta\right)$. Then we have $\mathbb{P}[B \mid G_0 \text{ is good}] \leq 1/n^{100}$. Overall, without the conditioning, we have
$\mathbb{P}[B] \leq \mathbb{P}[B \mid G_0 \text{ is good}] + \mathbb{P}[G_0 \text{ is not good}] \leq 1/n^{100} + 1/n^{30} \leq 2/n^{30}$. This completes the proof of the first statement in the lemma.

For the second part of the lemma, consider an arbitrary edge $e = \{u, v\} \in E(G_0)$ and let $W_e$ be an indicator variable for the event that $e \in G_1$ \emph{and} at least one of its end points $u$ or $v$ have their degree outside the range $((1 - \epsilon)p'\Delta, (1 + \epsilon)p'\Delta)$. Once again, let us condition on the event that $G_0$ is good. We note that $\mathbb{P}[W_e \mid G_0 \text{ is good}] \leq \mathbb{P}[e \in E(G_1) \text{ and } \max\{Z_u, Z_v\} = 1 \mid G_0 \text{ is good}]$. Further, the dependency graph of the collection of random variables $\mathcal{W} = \{W_e\}_{e \in E(G_0)}$ has maximum degree at most $(2(1 + \epsilon/3)p\Delta)^2$ since $W_e$ only depends on edges in the 2-neighborhood of edge $e$. Thus the chromatic number, $\chi(\mathcal{W}) \leq 1 + (2(1 + \epsilon/3)p\Delta)^2 \leq 16(10^6)(\log^2 n)/\epsilon^4$. Once again, applying \Cref{thm:CorrelatedChernoff} with $\lambda \gets \delta_0 q |E(G_0)|$, we have
\begin{align*}
    \mathbb{P}\left[\sum_{e\in E(G_0)} W_e > \delta_0 q|E(G_0)| + \sum_{e\in E(G_0)} \mathbb{E}[W_e | G_0\text{ is good}] \;\middle|\; G_0 \text{ is good}\right]
      &\le e^{-2|E(G_0)|q^2\delta_0^2\epsilon^4/(16(10)^6(\log^2 n))} \\
      &\le 1/n^{100}
\end{align*}
where we used $|E(G_0)| \geq n$ in the last inequality. Finally, we note that 
\begin{align*}
    \mathbb{E}[W_e | G_0 \text{ is good}] &\leq \mathbb{P}[\max(Z_u,Z_v) = 1 | e\in E(G_1), G_0\text{ is good}] \cdot \mathbb{P}[e\in E(G_1)| G_0\text{ is good}]\\
    &\le 2\mathbb{P}[Z_u = 1 | e\in E(G_1), G_0\text{ is good}] \cdot \mathbb{P}[e\in E(G_1)| G_0\text{ is good}] \le 2\delta_0q
\end{align*}
Substituting into the inequality above, we get
\begin{align*}
    \mathbb{P}\left[\sum_{e\in E(G_0)} W_e > 3\delta_0 q|E(G_0)| \;\middle|\; G_0 \text{ is good}\right] &\le 1/n^{100}
    \intertext{But, when $G_0$ is good, $|E(G_0| \leq (1 + \epsilon/3)p \Delta n /2 < 2p\Delta n/2$, so}
    \mathbb{P}\left[\sum_{e\in E(G_0)} W_e > 6\delta_0 pq \Delta n/2 \;\middle|\; G_0 \text{ is good}\right] &\le 1/n^{100}
\end{align*}
Overall, without the conditioning we have,
\begin{align*}
    \mathbb{P}\left[\sum_{e\in E(G_0)} W_e > 6\delta_0 pq \Delta n/2 \right]
      &\le \mathbb{P}\left[\sum_{e\in E(G_0)} W_e > 6\delta_0 pq \Delta n/2 \;\middle|\; G_0 \text{ is good}\right] + \mathbb{P}[G_0 \text{ is not good}]\\
      &\le \frac{1}{n^{100}} + \frac{1}{n^{30}} \leq \frac{2}{n^{30}}
\end{align*}
The lemma now follows from a union bound over the two statements.
\end{proof}

\Cref{lem:sampling-main} now follows directly from \Cref{lem:regular-matching-size} and \Cref{lem:g1-deg}.

\begin{proof}[Proof of \Cref{lem:sampling-main}]
If $\Delta \le \frac{6000}{\epsilon^4}$, then we have $G_2 = G$ and the lemma follows trivially. Thus, we assume that $\Delta > \frac{6000}{\epsilon^4}$. By \Cref{prop:g1-edges}, $|E(G_1)| \geq (1 - \epsilon)p'm$ with probability at least $1 - 1/n^{30}$. Thus, by the second statement of \Cref{lem:g1-deg}, the number of edges in $G_1$ that are not $(\epsilon, p'\Delta)$ balanced is at most $(12 e^{-1500/27\epsilon^2})/(1 - \epsilon)\cdot |E(G_1)| \leq 24 e^{-1500/27\epsilon^2} |E(G_1)| < \epsilon |E(G_1)|$. At the same time, by the first statement of \Cref{lem:g1-deg}, at most $(4 e^{-1500/(27\epsilon^2)})n \le \epsilon|V(G_1)|$ vertices of $G_1$ have degree not in $((1-\epsilon)p'\Delta, (1+\epsilon)p' \Delta)$.

Substituting $\tau_v = \tau_e = \kappa = \epsilon$ in \Cref{lem:regular-matching-size}, we get that restricting to just the nodes with degrees in $((1 - \epsilon)p'\Delta, (1 + \epsilon)p'\Delta)$ must have a matching of size at least $(1 - 4\epsilon - \frac{1}{D+1}) \frac{n}{2} \ge (1 - 5\epsilon) \frac{n}{2}$ where the last inequality used $\Delta > \frac{6000}{\epsilon^4}$. The probability bound follows from a union bound over the two lemmas. $G_2$ includes all these nodes because $\epsilon < 1/2$.
\end{proof}

\subsection{Matching Stage}
\label{subsec:constant-matching}

In this subsection, we give our matching stage algorithm which accepts a (low-degree) possibly-non-regular graph and returns an almost-perfect matching in a number of rounds which depends only on the maximum degree $d$ and the desired accuracy $\epsilon > 0$. Our main result for this subsection is the following.

\begin{restatable}{lemma}{warmupalgo}
\label{lem:constant-matching}
For $\epsilon > n^{-1/20}$, there is an $O(\epsilon^{-5} \log d)$-round randomized LOCAL algorithm where each vertex knows $d$ and $\epsilon$ that returns a matching of size at least $(OPT - \epsilon n)$ on $n$-vertex graphs with maximum degree $d$ with probability at least $1 - 1/n^{30}$.
Furthermore, this algorithm is a CONGEST algorithm if $\epsilon$ and $d$ are constant. 
\end{restatable}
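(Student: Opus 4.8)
The plan is to use the classical augmenting-path framework in $O(\poly(1/\epsilon))$ phases. We maintain a matching $\M$, initially empty, and in each phase we (i) locate a large collection of pairwise vertex-disjoint augmenting paths of length $O(1/\epsilon)$ and (ii) flip all of them simultaneously, increasing $|\M|$ by the number of paths found. The progress guarantee comes from the standard counting argument: as long as the deficit $OPT-|\M|$ is at least $\epsilon n$, the symmetric difference $\M\oplus\M^\star$ with a maximum matching $\M^\star$ contains exactly $OPT-|\M|\ge\epsilon n$ vertex-disjoint augmenting paths, and since distinct augmenting paths use distinct edges of $\M$, at most $\epsilon|\M|\le\epsilon n/2$ of them have more than $1/\epsilon$ matching edges; hence at least $\epsilon n/2$ of them have length $O(1/\epsilon)$. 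Because each phase will flip $\Omega(\poly(\epsilon)\,n)$ of these short paths, $O(\poly(1/\epsilon))$ phases suffice to push the deficit below $\epsilon n$, which is exactly the claimed $(OPT-\epsilon n)$ guarantee (and if $OPT<\epsilon n$ the empty matching already suffices, using that every vertex knows $\epsilon$ and $d$ to compute the phase count and parameters).

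For a single phase I would build, in $O(1/\epsilon)$ rounds, the hypergraph $H$ whose vertex set is $V(G)$ and whose hyperedges are the augmenting paths with at most $1/\epsilon$ matching edges with respect to the current $\M$; this is the hypergraph framework of \cite{Harris19,FischerMU22,DBLP:conf/focs/FischerGK17,Bar-YehudaCGS17}. Here $H$ has rank $O(1/\epsilon)$ and maximum degree at most $d^{O(1/\epsilon)}$ (choosing one of $\le d$ options at each of the $O(1/\epsilon)$ non-matching edges along a path), and a matching in $H$ is precisely a set of vertex-disjoint short augmenting paths; by the counting above $H$ has an integral matching of size $\Omega(\epsilon n)$ whenever the deficit exceeds $\epsilon n$. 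I would then compute an $O(1)$-approximate \emph{fractional} matching $x$ of $H$ using the distributed hypergraph-matching algorithms of \cite{KuhnMW06,Harris19,BenBasatEKS23}: since $\log(\mathrm{maxdeg}(H))=O(\epsilon^{-1}\log d)$ and $H$ has rank $O(1/\epsilon)$, and one communication round in $H$ is simulated by $O(1/\epsilon)$ rounds in $G$, this step costs $O(\poly(1/\epsilon)\cdot\log d)$ rounds.

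The heart of the proof — and the step I expect to be the main obstacle — is rounding $x$ to an integral matching of $H$ of size $\Omega\!\big(\sum_P x_P\big)$ with probability $1-1/\poly(n)$. A one-shot rounding (keep $P$ with probability $\Theta(\epsilon)\,x_P$, then discard conflicting paths) has the right expectation but resists direct concentration, because flipping the coin of a single hyperedge can change the survival of up to $d^{O(1/\epsilon)}$ others, so a McDiarmid bound over the hyperedge coins is hopeless. The fix is a two-stage rounding in the spirit of \cite{GGKMR18}. First I would round $x$ to a $\Theta(\epsilon)$-uniform fractional matching $x'$, independently setting $x'_P=\Theta(\epsilon)$ with probability $x_P/\Theta(\epsilon)$ and $x'_P=0$ otherwise and then zeroing out the incident hyperedges of every vertex whose load exceeds $2$; a per-vertex Chernoff bound (Theorem~\ref{thm:Chernoff}) gives each violation probability $\exp(-\Omega(1/\epsilon))$, and since $\epsilon>n^{-1/20}$ a union bound over the $n$ vertices shows $\sum_P x'_P=\Omega\!\big(\sum_P x_P\big)=\Omega(\epsilon n)$ with high probability. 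The support $H'$ of $x'$ is a sub-hypergraph of rank $O(1/\epsilon)$ and \emph{maximum degree $O(1/\epsilon)$}. Second, I would sample each hyperedge of $H'$ independently with probability $\Theta(\epsilon^2)$ and discard conflicts; now flipping one coin affects only $\poly(1/\epsilon)$ paths and there are only $O(\poly(1/\epsilon)\cdot n)$ coins, so a bounded-differences argument — equivalently Theorem~\ref{thm:CorrelatedChernoff} applied to the survivor indicators, whose dependency graph has degree $\poly(1/\epsilon)$ — yields $\Omega(\poly(\epsilon)\,n)$ surviving disjoint paths except with probability $\exp(-\Omega(\poly(\epsilon)\,n))\le 1/n^{30}$, again using $\epsilon>n^{-1/20}$.

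Finally I would flip the surviving augmenting paths, conclude that each phase increases $|\M|$ by $\Omega(\poly(\epsilon)\,n)$, and union-bound the $O(\poly(1/\epsilon))$ per-phase failure probabilities (those of the fractional-matching subroutine and of the two rounding stages) to obtain a $(OPT-\epsilon n)$-matching with probability at least $1-1/n^{30}$. A careful choice of the polynomial exponents and of the number of phases brings the total round count to $O(\epsilon^{-5}\log d)$. For the CONGEST claim, note that when $\epsilon$ and $d$ are constants the hypergraph $H$ has constant rank and constant maximum degree, every node lies on $O(1)$ augmenting paths, and all of the subroutines above exchange only $O(1)$ numbers per edge per round, each fitting in $O(\log n)$ bits; hence the whole algorithm runs in CONGEST, proving \Cref{lem:constant-matching}.
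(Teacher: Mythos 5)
Your overall plan — augmenting-path phases, a hypergraph of short augmenting paths, the fractional hypergraph matching of~\cite{BenBasatEKS23}, a randomized rounding, and a union bound over $O(\poly(1/\epsilon))$ phases — is exactly the structure of the paper's Algorithm~\ref{alg:constant-match} and its analysis, so the skeleton is right. Where you diverge is in the rounding, and there you miss the paper's key trick. You correctly observe that a one-shot rounding where \emph{each hyperedge} flips its own coin makes McDiarmid hopeless (Lipschitz constant $d^{O(1/\epsilon)}$ per coin). The paper's fix is not a two-stage degree-reduction but a different placement of the randomness: each \emph{vertex} $v$ independently samples a single hyperedge $Y_v$ from those incident to $v$ (with probability $\tau x(P)$, $\tau = 1/(4k^2)$) or the dummy symbol $\star$, and a hyperedge survives iff some $v\in P$ chose $P$ and no vertex (in or out of $P$) chose a conflicting hyperedge. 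Now the function $f(Y)=|\mathcal P_i|$ is indexed by $n$ vertex-coins, and flipping one $Y_v$ can change the output set only near $v$ and its chosen hyperedge, so the Lipschitz constant is $2k=O(1/\epsilon)$ (Proposition~\ref{prop:hypermatch-lipschitz}). McDiarmid over $n$ coins with Lipschitz $O(1/\epsilon)$ then gives tail $\exp(-\Omega(\poly(\epsilon)\,n))$, directly, with no degree truncation and no bounded-dependence Chernoff.

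Your two-stage alternative is in spirit workable, but as written Stage~1 has a real gap: you claim that since the per-vertex overload probability is $\exp(-\Omega(1/\epsilon))$, ``a union bound over the $n$ vertices'' gives $\sum_P x'_P = \Omega(\epsilon n)$ w.h.p. But $n\cdot\exp(-\Omega(1/\epsilon))$ is not $o(1)$ when $\epsilon$ is a constant (say $1/2$), so the union bound over the event ``some vertex is overloaded'' is vacuous in exactly the regime the lemma is most often used. What you actually need is that the total fractional weight deleted by the truncation is $o(\epsilon n)$, which requires (i) showing the \emph{expected} lost weight is at most $n\,\exp(-\Omega(1/\epsilon))\ll\epsilon n$ (this holds with an explicit constant, but needs to be stated as such, not derived from a union bound) and (ii) a concentration argument for the lost weight, which again brings in bounded-dependence Chernoff and some care because the pre-truncation vertex degrees are not a priori bounded. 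You also silently assume $x_P \le \Theta(\epsilon)$ so that $x_P/\Theta(\epsilon)$ is a valid probability, and you invoke Theorem~\ref{thm:CorrelatedChernoff} for a lower tail when it is stated as an upper tail (fixable by applying it to $1-X$). None of these is fatal, but together they make your proof noticeably more delicate than the paper's. If you want the cleanest path, replace both stages of your rounding with the vertex-indexed sampling above; it is a one-shot rounding that your ``hopeless'' remark does not apply to, because the coins live on vertices rather than hyperedges.
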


To prove \Cref{lem:constant-matching}, we give an algorithm that improves a matching over many iterations; in each iteration, the algorithm attempts to find a large set of disjoint augmenting paths for the current matching. The algorithm finds these paths by constructing a hypergraph whose hyperedges each represent an augmenting path for the current matching. The algorithm finds an approximately maximum fractional matching in this hypergraph and then rounds that matching via independent random sampling and removing collisions. When the current matching is far from optimality, \Cref{prop:augment} certifies that this procedure actually finds a large matching. The algorithm is formally given by Algorithm~\ref{alg:constant-match}, and it relies the following fractional matching algorithm for hypergraphs.

\begin{theorem}[Theorem 4.8 of \cite{BenBasatEKS23} with $\alpha = 2$ and using the $\delta(e)$s computed in the algorithm]\label{thm:beks19}
    In any $f$-bounded hypergraph $G = (V,E)$ with $\epsilon\in (0,1)$ with maximum degree $\Delta$, there is an $O(\log \Delta + f\log (f/\epsilon))$-round deterministic CONGEST algorithm for computing an $(f+\epsilon)$-approximate fractional hypergraph matching in $G$.
\end{theorem}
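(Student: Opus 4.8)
The plan is to reduce the theorem to producing a single ``almost‑saturating'' feasible fractional matching, and then to exhibit a deterministic distributed primal--dual / geometric‑scaling scheme that produces one within the claimed round bound.

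\textbf{Reduction to near‑saturation.} For a fractional matching $y\ge 0$ write $\ell(v)=\sum_{e\ni v}y_e$ for the load of $v$; feasibility means $\ell(v)\le 1$ for all $v$. Set $\delta=\frac{\epsilon}{2(f+\epsilon)}$. I would first show it suffices to construct a feasible $y$ such that every hyperedge contains a vertex of load at least $1-\delta$. Indeed, let $S=\{v:\ell(v)\ge 1-\delta\}$; by assumption $S$ hits every hyperedge, so $x_v=\frac{1}{1-\delta}\mathbbm{1}[v\in S]$ is a feasible fractional vertex cover of weight $\frac{|S|}{1-\delta}$. Since the hypergraph is $f$‑bounded, $(1-\delta)|S|\le\sum_{v\in S}\ell(v)\le\sum_v\ell(v)=\sum_e|e|\,y_e\le f\sum_e y_e$, so the cover weight is at most $\frac{f}{(1-\delta)^2}\sum_e y_e$. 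By LP duality the maximum fractional matching equals the minimum fractional vertex cover, hence $\mathrm{OPT}\le\frac{f}{(1-\delta)^2}\sum_e y_e$; and since $(1-\delta)^2(f+\epsilon)\ge(1-2\delta)(f+\epsilon)=f$, we get $\sum_e y_e\ge\mathrm{OPT}/(f+\epsilon)$, as required. (The per‑edge values $\delta(e)$ produced by the algorithm below are exactly these $y_e$, which is what the rounding step in the rest of the paper uses.)

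\textbf{Coarse phase, $O(\log\Delta)$ rounds.} Initialize $y_e=\frac{1}{2f\Delta}$, so every load is at most $\tfrac12$. For $\lceil\log_2(2f\Delta)\rceil$ rounds, in parallel double $y_e$ for every hyperedge whose endpoints all currently have load at most $\tfrac14$. Loads are monotone nondecreasing, and at any vertex $v$ the total weight of its doubled incident edges is at most $\ell(v)\le\tfrac14$, so after one round $\ell(v)$ grows to at most $\tfrac14+\tfrac14=\tfrac12$: feasibility holds throughout. Since always $y_e\le\tfrac12$, no edge can be doubled in all $\lceil\log_2(2f\Delta)\rceil$ rounds, so by monotonicity every hyperedge ends the phase with an endpoint of load $>\tfrac14$. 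Thus $y$ is a feasible fractional matching in which every hyperedge is $\tfrac14$‑saturated (an $O(f)$‑approximation), and the cost is $O(\log(f\Delta))=O(\log\Delta+f\log(f/\epsilon))$.

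\textbf{Refinement phase, $O(f\log(f/\epsilon))$ rounds, and the main obstacle.} Call a hyperedge \emph{active} if all its endpoints still have load $<1-\delta$; an inactive hyperedge already witnesses $(1-\delta)$‑saturation, so it remains to kill all active edges. I would do this in $O(\log(1/\delta))=O(\log(f/\epsilon))$ scaling levels, where at level $j$ every hyperedge has an endpoint of load $\ge 1-2^{-j}$, and level $j{+}1$ is reached from level $j$ in $O(f)$ rounds: each vertex apportions its current residual capacity $1-\ell(v)$ (capped at $1-\delta$) among its active incident edges, and each active edge grows by the minimum budget received from its (at most $f$) endpoints, which keeps every load $\le 1-\delta$; a charging argument in the spirit of the classical distributed primal--dual method (the growth of $\sum_e y_e$ per level is proportional to the residual capacity eliminated, and $\sum_e y_e\le n$) shows $O(f)$ rounds per level halve the worst residual capacity among vertices incident to active edges. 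After $O(\log(f/\epsilon))$ levels there are no active edges, so $y$ is $(1-\delta)$‑saturating, and together with the reduction and coarse phase it is $(f+\epsilon)$‑approximate. For the CONGEST bound, every message is a load or an edge value; rounding each to the nearest power of $1+\Theta(\delta/f)$ keeps all quantities in a $\mathrm{poly}(n,f/\epsilon)$ range (hence $O(\log n)$ bits), distorting loads by only a $(1\pm o(\delta))$ factor that is absorbed by slightly shrinking $\delta$, and each phase‑round is realized by $O(1)$ communication rounds. The delicate point — and where I expect the real work to lie — is precisely the refinement phase: coordinating simultaneous increments on overlapping hyperedges so that no vertex ever overflows, while making progress \emph{geometric} in the residual capacities (so only $O(\log(f/\epsilon))$ scaling levels, not $\mathrm{poly}(f/\epsilon)$) and composing the two phases \emph{additively} rather than multiplicatively; constant‑factor slack in the coarse phase is harmless because the exact saturation threshold is enforced only at termination, where the $\delta$ of the reduction supplies the margin.
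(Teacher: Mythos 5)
First, note that the paper itself does not prove this statement: Theorem~\ref{thm:beks19} is imported as a black box from \cite{BenBasatEKS23} (their Theorem~4.8), so there is no in-paper proof to compare against; your proposal is effectively an attempt to re-derive the cited result from scratch.

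As a standalone argument, your reduction to near-saturation is correct (the set $S$ of vertices with load at least $1-\delta$ hits every hyperedge, the scaled indicator is a feasible fractional cover, and with $\delta=\epsilon/(2(f+\epsilon))$ LP duality gives the $(f+\epsilon)$ factor), and your coarse doubling phase does produce, in $O(\log(f\Delta))$ rounds, a feasible $y$ in which every hyperedge has an endpoint of load greater than $1/4$. The genuine gap is the refinement phase, precisely the step you yourself flag as where the real work lies. The key claim --- that $O(f)$ rounds of ``each vertex splits its residual equally among its active incident edges and each active edge grows by the minimum offered budget'' halve the worst residual capacity --- is only asserted, and as stated it appears false: a bottleneck vertex $v$ with active degree close to $\Delta$ offers only about $r_v/\Delta$ per edge, and its incident edges may each grow by far less than that because their \emph{other} endpoints offer tiny budgets, so nothing ties the per-round progress at $v$ to $1/f$ rather than $1/\Delta$. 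For this water-filling dynamic the natural bound on the number of rounds per level depends on $\Delta$, not on $f$, which would destroy the claimed $O(\log\Delta + f\log(f/\epsilon))$ bound. Boosting a constant-saturated fractional matching to $(1-\Theta(\epsilon/f))$-saturation in a number of rounds independent of $\Delta$ is exactly the technical content of the cited result, and the known algorithm achieves it by a different deterministic primal--dual/local-ratio mechanism (vertex potentials decreasing through geometrically spaced levels, with hyperedges paying against their minimum-level endpoints, the payments $\delta(e)$ forming the fractional matching), not by the residual-splitting scheme you sketch. Until the ``charging argument in the spirit of the classical distributed primal--dual method'' is made precise --- and it cannot be made precise for the scheme as described --- the round bound in the theorem is not established by your proposal.
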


The remainder of this subsection is a proof of \Cref{lem:constant-matching}. We begin by verifying the following property of $\mathcal{P}_i$:

\begin{proposition}\label{prop:disjointness}
$\mathcal{P}_i$ is a collection of vertex-disjoint $M_{i-1}$-augmenting paths.
\end{proposition}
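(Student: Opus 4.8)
The plan is to read the claim directly off the definition of $\mathcal{P}_i$ in Algorithm~\ref{alg:constant-match}. Recall that in iteration $i$ the algorithm forms a hypergraph $H_i$ on vertex set $V$ whose hyperedges are precisely the vertex sets of the (bounded-length) $M_{i-1}$-augmenting paths of the current matching; it computes an approximately maximum fractional matching of $H_i$, independently samples each hyperedge with probability proportional to its fractional value, and then retains only a pairwise vertex-disjoint sub-collection of the sampled hyperedges by discarding every sampled hyperedge that collides (i.e.\ shares a vertex) with another sampled hyperedge. The collection $\mathcal{P}_i$ consists of the augmenting paths corresponding to the hyperedges that survive this rounding step.

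With this description, both halves of the proposition are immediate. First, every $P \in \mathcal{P}_i$ is an $M_{i-1}$-augmenting path, since $\mathcal{P}_i$ is indexed by a subset of the hyperedges of $H_i$ and, by construction, each hyperedge of $H_i$ is (the vertex set of) an $M_{i-1}$-augmenting path. Second, the paths in $\mathcal{P}_i$ are pairwise vertex-disjoint: if distinct $P, P' \in \mathcal{P}_i$ shared a vertex $v$, then the hyperedges corresponding to $P$ and $P'$ are both sampled and both contain $v$, so at least one of them would have been removed in the collision-elimination step, contradicting $P, P' \in \mathcal{P}_i$.

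There is essentially no obstacle here, since the statement is just an unpacking of how the rounding step is defined; the only care needed is to make the bookkeeping explicit, namely to fix the identification between hyperedges of $H_i$ and augmenting paths and to note that ``collision'' means ``sharing at least one vertex,'' so that the surviving hyperedges are automatically vertex-disjoint. The genuinely nontrivial content of the rounding step --- that \emph{enough} hyperedges survive for $\mathcal{P}_i$ to yield a \emph{large} augmentation of $M_{i-1}$ --- is handled separately via the McDiarmid-type concentration argument together with \Cref{prop:augment}, and is not needed for this proposition.
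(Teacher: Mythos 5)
Your high-level argument — disjointness is automatic because the rounding step explicitly eliminates collisions — is the same as the paper's, but your description of the rounding mechanism does not match Algorithm~\ref{alg:constant-match}, so your disjointness step does not literally follow from what the algorithm actually does. The algorithm does not independently sample hyperedges and then discard colliding ones. Instead, each \emph{vertex} $v$ independently samples a single item $Y_v$ — either one of the hyperedges containing $v$ (each with probability $\tau x_i(P)$) or the dummy $\star$ — and then a hyperedge $P$ is placed in $\mathcal{P}_i$ if and only if (a) some $v\in P$ has $Y_v = P$, and (b) every $u$ with $Y_u\cap P\neq\emptyset$ has $Y_u = P$. Condition~(b) is a nonlocal filter: whether $P$ survives depends on the choices of all $u$ whose chosen hyperedge touches $P$, not just the vertices of $P$ itself. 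Your phrase ``both sampled and both contain $v$'' has no referent in this process.

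The fix is short and is exactly the paper's argument: suppose $P_0\neq P_1\in\mathcal{P}_i$ share a vertex $v$. By condition~(a) applied to $P_1$, there is some $v_1\in P_1$ with $Y_{v_1}=P_1$. Then $Y_{v_1}\cap P_0 \supseteq P_0\cap P_1 \ni v$ is nonempty, so condition~(b) applied to $P_0$ forces $Y_{v_1}=P_0$, contradicting $Y_{v_1}=P_1\neq P_0$. Your first half (every $P\in\mathcal{P}_i$ is an $M_{i-1}$-augmenting path because $E_i$ consists exactly of such paths) is fine as stated.
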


\begin{proof}
By definition of $E_i$, $\mathcal{P}_i$ is a collection of $M_{i-1}$-augmenting paths, so it suffices to check that they are vertex-disjoint. Suppose, for the sake of contradiction, that there is a vertex $v\in V$ for which there exist $P_0,P_1\in \mathcal{P}_i$ with $P_0\ne P_1$ for which $v\in P_0$ and $v\in P_1$. By part (a) of the definition of $\mathcal{P}_i$, there exist $v_0\in P_0$ and $v_1\in P_1$ for which $Y_{v_0} = P_0$ and $Y_{v_1} = P_1$. By part (b) of the membership of $P_0$, since $P_0\cap Y_{v_1}\ne \emptyset$, $Y_{v_1} = P_0$, a contradiction to the fact that $P_0 \ne P_1$. Thus, the sets in $\mathcal{P}_i$ are vertex-disjoint, as desired.
\end{proof}

Note, first, that $|M_i| \ge |M_{i-1}|$, since all paths in $\mathcal{P}_i$ are augmenting paths. Let $OPT$ denote the size of a maximum matching in $G$. If $|M_T| \ge OPT - \epsilon n$, then we are done, so assume for the sake of contradiction that $|M_T| < OPT - \epsilon n$. This means that $|M_i| \le OPT - \epsilon n$ for all $i\in [T]$. We use this to show the following:

\begin{proposition}\label{prop:ei-size}
For any $i\in \{1,2,\hdots,T\}$, if $|M_{i-1}|\le OPT - \epsilon n$, then $\sum_{P\in E_i} x_i(P) \ge \frac{\epsilon n}{4(k + 1)}$.
\end{proposition}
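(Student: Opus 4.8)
The plan is to exhibit an explicit feasible fractional matching of the hypergraph that is supported on short augmenting paths and has value at least $\epsilon n/2$, so that the fractional optimum is at least $\epsilon n/2$, and then to invoke the approximation guarantee of the fractional matching subroutine (\Cref{thm:beks19}) to conclude that the \emph{computed} solution $x_i$ already has value $\Omega(\epsilon n/(k+1))$.

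First I would fix a maximum matching $M^*$ of $G$, so $|M^*| = OPT$. Since $|M_{i-1}| \le OPT - \epsilon n$ by hypothesis, the symmetric difference $M_{i-1}\,\triangle\,M^*$ decomposes into vertex-disjoint alternating paths and cycles, and the standard augmenting-path counting argument (cf.\ \Cref{prop:augment}) shows that it contains at least $|M^*| - |M_{i-1}| \ge \epsilon n$ vertex-disjoint $M_{i-1}$-augmenting paths; call this collection $\mathcal{A}$. Each $P \in \mathcal{A}$ is a genuine $M_{i-1}$-augmenting path in $G$: its edges taken from $M_{i-1}$ lie in $M_{i-1}$, and its remaining edges lie in $M^* \subseteq E(G)$, and both endpoints are uncovered by $M_{i-1}$.

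Next I would control the lengths of the paths in $\mathcal{A}$. The total number of edges appearing across all paths of $\mathcal{A}$ is at most $|M_{i-1}\,\triangle\,M^*| \le |M_{i-1}| + |M^*| \le 2\,OPT \le n$. Hence, by averaging (Markov's inequality) over the at least $\epsilon n$ paths in $\mathcal{A}$, at least $\epsilon n/2$ of them contain at most $k$ edges --- here I am relying on the algorithm's choice of $k$ (on the order of $1/\epsilon$, taken large enough that this averaging goes through; one may equivalently count only $M^*$-edges and use $|M^*| \le n/2$). Every such short augmenting path $P$ then has at most $k+1$ vertices, so $V(P)$ is a hyperedge of $E_i$; and since the paths in $\mathcal{A}$ are vertex-disjoint, the assignment $x(P) = 1$ on these $\ge \epsilon n/2$ hyperedges (and $0$ elsewhere) is a feasible fractional matching of the hypergraph. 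Therefore the maximum fractional matching of the hypergraph on $E_i$ has value at least $\epsilon n/2$.

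Finally, since the hypergraph formed by the vertex sets of the augmenting paths in $E_i$ is $(k+1)$-bounded and the algorithm computes $x_i$ as an $(f+\epsilon)$-approximate fractional matching with $f = k+1$ via \Cref{thm:beks19}, we obtain $\sum_{P \in E_i} x_i(P) \ge \frac{1}{k+1+\epsilon}\cdot\frac{\epsilon n}{2} \ge \frac{\epsilon n}{4(k+1)}$, where the last inequality uses $\epsilon < 1 \le k+1$. I expect the only delicate point to be matching constants to the paper's exact definitions --- in particular the precise length bound built into the definition of $E_i$ (edges versus $M^*$-edges) and the exact approximation ratio of the subroutine producing $x_i$; the averaging step, where the hypothesis $|M_{i-1}| \le OPT - \epsilon n$ and the choice of $k$ interact, is the one I would write out most carefully.
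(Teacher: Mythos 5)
Your argument is correct and gets to the same bound, but via a genuinely different route from the paper's. The paper simply invokes \Cref{prop:augment} with $\ell = (k-1)/2$: that proposition already packages both facts needed here — a lower bound on the number of vertex-disjoint augmenting paths and a uniform upper bound ($\le 2\ell+1$ edges) on their length — and its parameter $1/\ell = \epsilon/2$ produces the collection $\mathcal{P}$ with $|\mathcal{P}| \ge \tfrac{\epsilon}{2}(n - OPT/2) \ge \tfrac{\epsilon n}{4}$ in one step. You instead re-derive a slightly weaker version of that statement from first principles: decompose $M_{i-1}\,\triangle\,M^*$ into alternating paths and cycles to get $\ge \epsilon n$ augmenting paths, bound their total edge count by $|M_{i-1}| + |M^*| \le n$, and then use a Markov-style count (at most $n/k < \epsilon n/2$ paths can have more than $k$ edges, because $k = 4/\epsilon + 1$) to discard the long ones. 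The averaging step does interact correctly with the algorithm's choice of $k$ as you suspected, and the arithmetic in the last step is fine regardless of whether one treats $H_i$ as $k$-bounded (as the paper's text says) or $(k+1)$-bounded (as the "$\le k$ edges, so $\le k+1$ vertices" reading suggests), since the slack between $\epsilon n/4$ and $\epsilon n/2$ absorbs the difference. What your approach buys is self-containment — no need to accept \Cref{prop:augment} as a black box — at the cost of a couple of extra lines; the paper's route is shorter and gives constants matched to its own lemma. Both are sound.
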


\begin{proof}
By Proposition \ref{prop:augment} applied to $M\gets M_{i-1}$ and $\ell\gets (k-1)/2$, there exists a collection $\mathcal{P}$ of $M_{i-1}$-augmenting paths with length at most $k$ in $G$ for which

\begin{align*}
|\mathcal{P}| &\ge \frac{1}{2}(OPT(1 - 2/(k-1)) - |M_{i-1}|) \\
&\ge \frac{1}{2}(OPT(1 - \epsilon/2) - (OPT - \epsilon n))
= \frac{\epsilon}{2}(n - OPT/2)
\ge \frac{\epsilon n}{4}
\end{align*}

When the sets in $\mathcal{P}$ are viewed as hyperedges in $H_i$, $\mathcal{P}$ is a hypergraph matching thanks to the vertex disjointness of the sets. Thus, $H_i$ has a hypergraph matching with at least $\frac{\epsilon n}{4}$ hyperedges. Since $H_i$ is a $k$-bounded hypergraph, Theorem \ref{thm:beks19} implies that the total size of the fractional matching $x_i$ is at least $|\mathcal{P}|/(k+1/2) \ge \frac{\epsilon n}{4(k+1)}$ as desired.
\end{proof}

\begin{algorithm}[H]
	\SetAlgoLined
	\DontPrintSemicolon
	\KwData{An unweighted graph $G=(V,E)$; an accuracy parameter $\epsilon \in (n^{-1/20}, 1/2)$}
	\KwResult{A matching $M$ in $G$}
	
	$M_0\gets\emptyset$

        $k\gets 4/\epsilon + 1$

        $T\gets 10^4 / \epsilon^4 + 1$

        $\tau\gets 1/(4k^2)$

        \For{$i\in \{1,2,\hdots,T\}$}{
            $E_i\gets $ the set of all $M_{i-1}$-augmenting paths $P$ with $|P|\le k$

            $H_i\gets (V,E_i)$, where each $P\in E_i$ yields a hyperedge between the vertices in $P$

            $x_i\gets $ the fractional matching in $H_i$ given by Theorem \ref{thm:beks19} with $\epsilon = 1/2$

            \tcc{$\mathcal{P}_i$ will contain disjoint $M_{i-1}$-augmenting paths we are choosing}

            $\mathcal{P}_i\gets \emptyset$

            \For{each vertex $v\in V$ independently}{
                $\mathcal{X}_v\gets \{P \text{ } \forall P\in E_i \text{ for which } v\in P\}\cup \{\star\}$
                
                $Y_v\gets $ a randomly chosen member of $\mathcal{X}_v$, with $P\in \mathcal{X}_v$ chosen with probability $\tau x_i(P)$, and $\star$ chosen with probability $1 - \tau \sum_{Q\ne\star \in \mathcal{X}_v} x_i(Q)$
            }

            \For{each hyperedge $P\in E_i$}{
                Add $P$ to $\mathcal{P}_i$ if (a) there exists a $v\in P$ for which $Y_v = P$ and (b) for any $u\in V$ for which $Y_u\cap P\neq\emptyset$, $Y_u = P$
            }

            $M_i\gets$ augmentation of $M_{i-1}$ by $\mathcal{P}_i$
        }
	
	\Return $M_T$
	
	\caption{ConstantMatch($G$)}\label{alg:constant-match}
\end{algorithm}

~\\We now use the lower from Proposition~\ref{prop:ei-size} to show that the rounding part of Algorithm \ref{alg:constant-match} finds a large collection of augmenting paths. To analyze the sampling steps, we use McDiarmid's Inequality:

\begin{theorem}[\cite{McDiarmid89}]\label{thm:mcdiarmid}
Let $\mathcal{X}_1, \mathcal{X}_2, \hdots, \mathcal{X}_n$ be sets, $c_1,c_2,\hdots,c_n\in \mathbb{R}$, and $f:\mathcal{X}_1\times\mathcal{X}_2\times\hdots\times\mathcal{X}_n\rightarrow\mathbb{R}$ be a function with the property that, for any $i\in [n]$, $x_1\in \mathcal{X}_1,x_2\in\mathcal{X}_2,\hdots,x_n\in\mathcal{X}_n$, and $x_i'\in \mathcal{X}_i$,

$$|f(x_1,\hdots,x_{i-1},x_i,x_{i+1},\hdots,x_n) - f(x_1,\hdots,x_{i-1},x_i',x_{i+1},\hdots,x_n)|\le c_i$$

Then, for any $\delta > 0$,

$$\mathbb{P}[|f(X_1,X_2,\hdots,X_n) - \mathbb{E}[f(X_1,X_2,\hdots,X_n)]| \ge \delta] \le 2\exp\left(-\frac{2\delta^2}{\sum_{i=1}^n c_i^2}\right)$$
\end{theorem}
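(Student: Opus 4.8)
The plan is to prove this by the \emph{method of bounded differences}: attach to $f$ and the (independent) coordinates $X_1,\dots,X_n$ its Doob martingale, check that the increments are controlled by the $c_i$, and then quote the Azuma--Hoeffding inequality. (As is standard for this statement, we regard $X_1,\dots,X_n$ as independent random variables with $X_i$ supported on $\mathcal{X}_i$.) Concretely, set $Z_0 = \mathbb{E}[f(X_1,\dots,X_n)]$ and, for $1\le i\le n$, $Z_i = \mathbb{E}[f(X_1,\dots,X_n)\mid X_1,\dots,X_i]$, so that $Z_n = f(X_1,\dots,X_n)$ and $(Z_i)_{i=0}^n$ is a martingale with respect to the filtration $\mathcal{F}_i = \sigma(X_1,\dots,X_i)$. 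The quantity to be bounded is exactly $\mathbb{P}[|Z_n - Z_0|\ge \delta]$.

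The key step is the increment bound $|Z_i - Z_{i-1}|\le c_i$ (in fact the sharper predictable-range version). By independence of the coordinates, $Z_i = g_i(X_1,\dots,X_i)$ where $g_i(x_1,\dots,x_i) := \mathbb{E}[f(x_1,\dots,x_i,X_{i+1},\dots,X_n)]$, and $Z_{i-1} = \mathbb{E}_{X_i}[g_i(X_1,\dots,X_{i-1},X_i)]$ is the average of $g_i$ over the $i$-th coordinate with $X_1,\dots,X_{i-1}$ frozen. Fixing $x_1,\dots,x_{i-1}$, the bounded-differences hypothesis applied pointwise and integrated over $X_{i+1},\dots,X_n$ gives $|g_i(x_1,\dots,x_{i-1},x_i) - g_i(x_1,\dots,x_{i-1},x_i')|\le c_i$ for all $x_i,x_i'$; hence the range of $g_i(x_1,\dots,x_{i-1},\cdot)$ has length at most $c_i$, and both $Z_i$ and the average $Z_{i-1}$ lie in it. Recording the lower envelope $L_i := \inf_{x_i} g_i(X_1,\dots,X_{i-1},x_i)$, which is $\mathcal{F}_{i-1}$-measurable, we get $L_i \le Z_i \le L_i + c_i$, which is the form of the increment bound needed below.

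Finally, apply Azuma--Hoeffding to $(Z_i)$: since $Z_i - Z_{i-1}$ is conditionally mean zero and lies in an interval of length $c_i$ with $\mathcal{F}_{i-1}$-measurable endpoints, $\mathbb{P}[Z_n - Z_0\ge \delta]\le \exp\!\big(-2\delta^2/\sum_{i=1}^n c_i^2\big)$, and running the same bound for $-f$ together with a union bound doubles the right-hand side, giving the claimed two-sided estimate. The only real content beyond bookkeeping is the increment bound of the previous paragraph, and in particular the use of independence to write $Z_i$ as a function of the first $i$ coordinates so that conditioning on $X_i$ does not interact with the expectation over $X_{i+1},\dots,X_n$; this is where the hypothesis that the $X_j$ are independent is essential. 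If one insisted on a self-contained argument, the remaining work is the proof of Azuma--Hoeffding itself, i.e.\ Hoeffding's lemma $\mathbb{E}[e^{s(Z_i-Z_{i-1})}\mid \mathcal{F}_{i-1}]\le e^{s^2 c_i^2/8}$ for a conditionally mean-zero increment of conditional range $c_i$, followed by the exponential Markov inequality and optimization over $s$; since the paper is content to cite McDiarmid's inequality, this can be treated as a black box.
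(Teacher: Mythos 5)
The paper cites this inequality as a known result from \cite{McDiarmid89} and gives no proof, so there is no in-paper argument to compare against. Your proof is the standard (and correct) derivation via the Doob martingale $Z_i = \mathbb{E}[f\mid X_1,\dots,X_i]$ and Azuma--Hoeffding, and you correctly make explicit the implicit hypothesis that the $X_i$ are independent, which is what lets you write $Z_i$ as a deterministic function of $X_1,\dots,X_i$ alone. You also correctly flag the one genuinely delicate point: using only $|Z_i-Z_{i-1}|\le c_i$ and vanilla Azuma would give $\exp(-\delta^2/(2\sum_i c_i^2))$, a factor of $4$ weaker in the exponent than what is claimed; to obtain the stated $\exp(-2\delta^2/\sum_i c_i^2)$ one needs, as you do, the conditional-range form — namely that $Z_i$ and $Z_{i-1}$ both lie in $[L_i, L_i+c_i]$ with $L_i$ being $\mathcal{F}_{i-1}$-measurable — so that Hoeffding's lemma applies with the full factor $c_i^2/8$ in the moment-generating-function bound. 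The argument is complete modulo citing Hoeffding's lemma, which is the appropriate black box here.
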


Fix an $i\in \{1,2,\hdots,T\}$ for the rest of this section. Define the function $f:\prod_{v\in V} \mathcal{X}_v\rightarrow \mathbb{R}$ to be $f(Y) := |\mathcal{P}_i|$, where $Y$ is the $n$-tuple of $Y_v$s for all $v\in V$. This choice of function is inspired by Lemma 5.1 of \cite{GGKMR18}. We now prepare to use Theorem \ref{thm:mcdiarmid} by showing the following two results:

\begin{proposition}\label{prop:hypermatch-lipschitz}
Let $Y$ and $Y'$ be two different tuples indexed by $V$ for which there exists exactly one $v\in V$ for which $Y_v\ne Y_v'$. Then $|f(Y) - f(Y')| \le 2k \le 10/\epsilon$.
\end{proposition}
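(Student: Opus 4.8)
The plan is to bound $\bigl||\mathcal{P}_i(Y)| - |\mathcal{P}_i(Y')|\bigr|$ by comparing the two families directly, using the key structural fact established in Proposition~\ref{prop:disjointness}: for any input tuple, $\mathcal{P}_i$ is a \emph{vertex-disjoint} collection of augmenting paths. Write $\mathcal{P} = \mathcal{P}_i(Y)$, $\mathcal{P}' = \mathcal{P}_i(Y')$, and let $v$ be the unique vertex with $Y_v \neq Y_v'$. Since
\[
\bigl||\mathcal{P}| - |\mathcal{P}'|\bigr| \;=\; \bigl||\mathcal{P} \setminus \mathcal{P}'| - |\mathcal{P}' \setminus \mathcal{P}|\bigr| \;\le\; \max\bigl(|\mathcal{P} \setminus \mathcal{P}'|,\, |\mathcal{P}' \setminus \mathcal{P}|\bigr),
\]
it suffices to prove $|\mathcal{P} \setminus \mathcal{P}'| \le k$; the symmetric bound $|\mathcal{P}' \setminus \mathcal{P}| \le k$ then follows by swapping the roles of $Y$ and $Y'$, and finally $k \le 2k = 8/\epsilon + 2 \le 10/\epsilon$ since $\epsilon < 1/2$.

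Next I would fix $P \in \mathcal{P} \setminus \mathcal{P}'$ and locate where the change is ``responsible.'' Conditions (a) and (b) from the definition of $\mathcal{P}_i$ both hold for $P$ under $Y$ but at least one fails under $Y'$, and since $Y,Y'$ agree off coordinate $v$: if (a) fails under $Y'$, then no $w \in P$ has $Y_w' = P$, whereas under $Y$ some $w^\ast \in P$ has $Y_{w^\ast} = P$; as $Y_w' = Y_w$ for $w \neq v$, this forces $w^\ast = v$, so $v \in P$ and $Y_v = P$. If instead (b) fails under $Y'$, the offending vertex $u$ with $Y_u' \cap P \neq \emptyset$ and $Y_u' \neq P$ must be $v$ — for $u \neq v$ we have $Y_u' = Y_u$, which under $Y$ satisfies (b), i.e.\ $Y_u = P$ — so $Y_v'$ is a genuine path (not $\star$) intersecting $P$. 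Combining: if $Y_v' \neq \star$ then $v \in Y_v'$, so case (a) gives $v \in P \cap Y_v'$ and case (b) gives $P \cap Y_v' \neq \emptyset$ directly; hence \emph{every} $P \in \mathcal{P}\setminus\mathcal{P}'$ shares a vertex with the single path $Y_v' \in E_i$, and since $\mathcal{P}$ is vertex-disjoint and $H_i$ is $k$-bounded (so $|Y_v'| \le k$), at most $k$ such paths exist. If $Y_v' = \star$, then case (b) is impossible and case (a) forces $Y_v = P$, of which there is at most one in the vertex-disjoint family $\mathcal{P}$, so $|\mathcal{P}\setminus\mathcal{P}'| \le 1 \le k$. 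Either way $|\mathcal{P}\setminus\mathcal{P}'|\le k$, which completes the argument.

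The real content — and the only place the argument could go off the rails — is recognizing that one must \emph{not} try to count hyperedges whose membership in $\mathcal{P}_i$ changes (there can be arbitrarily many hyperedges through $v$, so such a count is useless); instead one counts only the \emph{destroyed} members of the already vertex-disjoint family $\mathcal{P}$, whose number is pinned down by the $\le k$ vertices of the single path $Y_v'$ that re-sampling $v$ introduces as a blocker. The rest is routine bookkeeping — handling the degenerate case $Y_v' = \star$, and keeping straight that the removed paths are controlled by $Y_v'$ while (by symmetry) the added paths are controlled by $Y_v$.
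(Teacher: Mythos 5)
Your argument is correct, and it differs from the paper's in a way worth noting. The paper passes through an intermediate tuple $Y''$ agreeing with $Y,Y'$ off coordinate $v$ but with $Y''_v = \star$, proves $\bigl||\mathcal{P}_i| - |\mathcal{P}'_i|\bigr| \le k$ for a star-vs-path step (using that every $Q\in\mathcal{P}_i$ with $Q\cap P\neq\emptyset$ must be among the $\le \ell \le k$ sets containing some vertex of $P$, by vertex-disjointness and condition (b)), and then invokes the triangle inequality to get $\le 2k$ in general. You instead bound the two symmetric differences directly: every $P\in\mathcal{P}\setminus\mathcal{P}'$ has some condition failing under $Y'$ only at coordinate $v$, and in either case (condition (a) forcing $Y_v=P$ and $v\in P\cap Y'_v$, or condition (b) giving $Y'_v\cap P\neq\emptyset$ outright) ends up intersecting $Y'_v$, so vertex-disjointness of $\mathcal{P}$ and the $k$-boundedness of $H_i$ cap the count at $k$ (and a cleaner $\le 1$ when $Y'_v=\star$). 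Combined with the elementary identity $\bigl||A|-|B|\bigr| = \bigl||A\setminus B|-|B\setminus A|\bigr| \le \max(|A\setminus B|,|B\setminus A|)$, this avoids the intermediate tuple entirely and actually yields the slightly stronger bound $|f(Y)-f(Y')|\le k$. Both arguments hinge on the same structural fact (Proposition~\ref{prop:disjointness}), but your route is a bit more economical and gives a tighter Lipschitz constant, though $\le 2k$ is of course all the McDiarmid application requires.
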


\begin{proof}
Let $\mathcal{P}_i$ and $\mathcal{P}_i'$ be the sets resulting from $Y$ and $Y'$ respectively. It suffices to show that $||\mathcal{P}_i| - |\mathcal{P}_i'|| \le k$ when $Y_v = \star$ and $Y_v' = P \ne \star$, as all remaining cases can be covered by the triangle inequality. Let $p_1,p_2,\hdots,p_{\ell}$ with $\ell\le k$ be the members of $P$. By part (b) of the definition, for all $j\in \{1,2,\hdots,\ell\}$, there exists at most one $P_j\in \mathcal{P}_i\cup \{\star\}$ for which $p_j\in P_j$. Every other set in $\mathcal{P}_i$ does not intersect $P$, so $\mathcal{P}_i\subseteq \mathcal{P}_i'\cup \{P_1,P_2,\hdots,P_{\ell}\}$. Furthermore, $\mathcal{P}_i'\subseteq \mathcal{P}_i\cup \{P\}$. Therefore,

$$|\mathcal{P}_i| - \ell \le |\mathcal{P}_i'|\le |\mathcal{P}_i| + 1$$

as desired (since $\ell\le k$).
\end{proof}

\begin{proposition}\label{prop:hypermatch-expectation}
If $|M_{i-1}|\le OPT - \epsilon n$, then

$$\mathbb{E}_Y[f(Y)] \ge \frac{(1 - 2k^2\tau)\tau \epsilon n}{4(k+1)} \ge \frac{\epsilon^4 n}{5000}$$
\end{proposition}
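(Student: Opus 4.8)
The plan is to reduce to a per-hyperedge estimate: I would show that for every $P\in E_i$,
\[
\mathbb{P}[P\in\mathcal{P}_i]\ \ge\ (1-2k^2\tau)\,\tau\,x_i(P),
\]
and then sum over $P$, using linearity of expectation and Proposition~\ref{prop:ei-size} (which applies precisely because we are assuming $|M_{i-1}|\le OPT-\epsilon n$). This gives $\mathbb{E}_Y[f(Y)]=\sum_{P\in E_i}\mathbb{P}[P\in\mathcal{P}_i]\ge(1-2k^2\tau)\,\tau\sum_{P\in E_i}x_i(P)\ge\frac{(1-2k^2\tau)\,\tau\,\epsilon n}{4(k+1)}$, and the closing numeric bound $\ge\epsilon^4 n/5000$ follows by substituting $\tau=1/(4k^2)$ (so $1-2k^2\tau=\tfrac12$) and $k=4/\epsilon+1\le 5/\epsilon$, $k+1\le 6/\epsilon$ — routine arithmetic.

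\textbf{Isolating the right event.} To get the per-hyperedge bound, fix $P=\{p_1,\dots,p_\ell\}\in E_i$ with $\ell\le k$. Call an outcome $Y_w$ \emph{bad for $P$} if $Y_w$ is a hyperedge (i.e.\ $Y_w\ne\star$) with $Y_w\ne P$ and $Y_w\cap P\ne\emptyset$; note that for $w\in P$ every hyperedge through $w$ other than $P$ is bad for $P$. I would then lower bound $\mathbb{P}[P\in\mathcal{P}_i]$ by the probability of the single event
\[
B_P\ :=\ \{Y_{p_1}=P\}\ \cap\ \bigcap_{w\ne p_1}\{Y_w\text{ is not bad for }P\},
\]
after checking $B_P\subseteq\{P\in\mathcal{P}_i\}$: condition (a) in the definition of $\mathcal{P}_i$ holds via $v=p_1$; for condition (b), if some $u$ has $Y_u\cap P\ne\emptyset$ then either $u=p_1$ (so $Y_u=P$) or $u\ne p_1$, and since $Y_u$ is then a hyperedge with $Y_u\cap P\ne\emptyset$ that is not bad for $P$, we must have $Y_u=P$.

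\textbf{Bounding $\mathbb{P}[B_P]$.} Since the $Y_w$ are mutually independent, $\mathbb{P}[B_P]=\mathbb{P}[Y_{p_1}=P]\cdot\prod_{w\ne p_1}(1-\mathbb{P}[Y_w\text{ bad for }P])$, with $\mathbb{P}[Y_{p_1}=P]=\tau x_i(P)$ and $\mathbb{P}[Y_w\text{ bad for }P]=\tau\sum_{Q\in E_i:\,w\in Q,\,Q\ne P,\,Q\cap P\ne\emptyset}x_i(Q)$. Using $\prod_w(1-a_w)\ge 1-\sum_w a_w$ and swapping the order of summation, each hyperedge $Q$ with $Q\cap P\ne\emptyset$ is counted at most $|Q|\le k$ times, so the double sum is at most $k\sum_{Q:\,Q\cap P\ne\emptyset}x_i(Q)\le k\sum_{j=1}^{\ell}\sum_{Q\ni p_j}x_i(Q)\le k\cdot\ell\cdot 1\le k^2$, where the last step uses the fractional-matching degree constraint $\sum_{Q\ni v}x_i(Q)\le 1$ at each vertex (a property of the fractional matching from Theorem~\ref{thm:beks19}; combined with $\tau\le 1$ this is also what makes the probabilities assigned in Algorithm~\ref{alg:constant-match} well-defined). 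Hence $\mathbb{P}[P\in\mathcal{P}_i]\ge\mathbb{P}[B_P]\ge\tau x_i(P)(1-k^2\tau)\ge(1-2k^2\tau)\,\tau x_i(P)$, completing the per-hyperedge bound.

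\textbf{Main obstacle.} Everything is elementary once $B_P$ is chosen correctly; the only point that needs genuine care is the combinatorial verification that $B_P\subseteq\{P\in\mathcal{P}_i\}$ and the bookkeeping of the "bad" events, so that independence of the $Y_w$'s and the per-vertex fractional-matching bound can be applied cleanly. This parallels the McDiarmid-style analysis of Lemma 5.1 of~\cite{GGKMR18}, which is the template being followed.
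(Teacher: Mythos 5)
Your proof is correct and follows the same high-level strategy as the paper's (per-hyperedge lower bound on $\mathbb{P}[P\in\mathcal{P}_i]$, sum by linearity, then apply Proposition~\ref{prop:ei-size}), but the decomposition into a sufficient event is different and slightly cleaner. The paper's sufficient event requires that exactly one $p_j\in P$ has $Y_{p_j}=P$ while all other $p_{j'}\in P$ have $Y_{p_{j'}}=\star$, \emph{and} that no vertex outside $P$ hits $P$; these two conditions are independent because the first depends only on $\{Y_w : w\in P\}$ and the second only on $\{Y_w : w\notin P\}$, and the paper bounds them separately (giving the extra factor $(1-\tau)^k$) before multiplying. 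Your event $B_P$ instead fixes a single representative $p_1$ and lets every other vertex $w\neq p_1$ satisfy the uniform condition ``$Y_w$ is not bad for $P$,'' which is strictly more permissive (e.g.\ it allows $Y_{p_j}=P$ for $j\neq 1$) yet still contained in $\{P\in\mathcal{P}_i\}$. This avoids splitting into two independent events and the case analysis over which $p_j$ is the witness; instead you apply independence and the elementary product inequality $\prod(1-a_w)\ge 1-\sum a_w$ across all $w\neq p_1$ at once, feeding the fractional-matching degree constraint into a single $k^2\tau$ bound. Both routes land on $(1-2k^2\tau)\tau x_i(P)$, and your containment check $B_P\subseteq\{P\in\mathcal{P}_i\}$ is verified correctly: the paper's approach buys a marginally sharper intermediate constant, while yours buys a tidier argument with one fewer moving part.
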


\begin{proof}
We start by lower bounding the probability that any specific $P = \{p_1,p_2,\hdots,p_{\ell}\}\in E_i$ is added to $\mathcal{P}_i$. First, note that

\begin{align*}
&\mathbb{P}_Y[\text{there exists } j\in \{1,2,\hdots,\ell\} \text{ for which both } Y_{p_j} = P \text{ and } Y_{p_{j'}} = \star \text{ for all } j'\ne j]\\
&= \sum_{j=1}^{\ell} \tau x_i(P) \prod_{j'\ne j}\left(1 - \tau\sum_{Q\ne\star\in \mathcal{X}_{p_{j'}}} x_i(Q)\right)\\
&\ge \sum_{j=1}^{\ell}\tau x_i(P)(1 - \tau)^{\ell-1}\\
&\ge \tau(1-\tau)^k x_i(P)
\end{align*}

since the $Y_{p_j}$s are independent. For any $u,w\in V$, let $E_i(u,w)$ denote the set of all hyperedges $Q\in E_i$ for which $u,w\in Q$. Note that for any $j\in \{1,2,\hdots,\ell\}$

\begin{align*}
\mathbb{P}_Y[\text{for all } u\in V\setminus P, p_j\notin Y_u] &= \prod_{u\in V\setminus P} \mathbb{P}_{Y_u}[p_j\notin Y_u]\\
&= \prod_{u\in V\setminus P}\left(1 - \tau\sum_{Q\in E_i(u,p_j)} x_i(Q)\right)\\
&\ge 1 - \tau\sum_{u\in V}\sum_{Q\in E_i(u,p_j)} x_i(Q)\\
&= 1 - \tau\sum_{Q\in E_i: p_j\in Q}\sum_{u\ne p_j\in Q} x_i(Q)\\
&\ge 1 - k\tau\sum_{Q\in E_i: p_j\in Q} x_i(Q)\\
&\ge 1 - k\tau
\end{align*}

By a union bound,

$$\mathbb{P}_Y[\text{for all } u\in V\setminus P, P\cap Y_u = \emptyset] \ge 1 - k^2\tau$$

The first event only depends on $V\setminus P$, while the second only depends on $P$. Thus, by independence,

\begin{align*}
&\mathbb{P}_Y[\text{there exists } j\in \{1,2,\hdots,\ell\} \text{ for which both } Y_{p_j} = P \text{ and } Y_{p_{j'}} = \star \text{ and for all $u\in V\setminus P$}, P\cap Y_u = \emptyset]\\
&\qquad\ge (1 - k^2\tau)\tau(1 - \tau)^kx_i(P)
\end{align*}

Such $P$ are added to $\mathcal{P}_i$, as $Y_{p_j} = P$ (satisfying condition (a)), $Y_u = \star$ for all $u\in P$ with $u\ne v_j$, and $Y_u\cap P = \emptyset$ for all $u\in V\setminus P$, so condition (b) is never triggered. Thus,

$$\mathbb{P}_Y[P\in \mathcal{P}_i] \ge (1 - k^2\tau)\tau(1-\tau)^kx_i(P) \ge (1 - 2k^2\tau)\tau x_i(P)$$

and

\begin{align*}
\mathbb{E}_Y[f(Y)] &= \sum_{P\in E_i} \mathbb{P}_Y[P\in \mathcal{P}_i]\\
&\ge (1 - 2k^2\tau)\tau \sum_{P\in E_i} x_i(P)\\
&\ge \frac{(1 - 2k^2\tau)\tau \epsilon n}{4(k+1)}
\end{align*}

by Proposition \ref{prop:ei-size}, as desired.
\end{proof}

We are now ready to use McDiarmid's Inequality to lower bound the improvement in each iteration:

\begin{proposition}\label{prop:hypermatch-result}
$|\mathcal{P}_i| \ge \frac{\epsilon^4 n}{10000}$ with probability at least $1 - 2\exp\left(-\epsilon^6n/10^{10}\right)$
\end{proposition}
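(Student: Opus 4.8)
The plan is to obtain Proposition~\ref{prop:hypermatch-result} directly from the two preceding propositions via McDiarmid's inequality (Theorem~\ref{thm:mcdiarmid}). We argue conditionally on the execution history up to the start of iteration $i$, so that $M_{i-1}$ — and hence $E_i$, $H_i$, $x_i$, and the sets $\mathcal{X}_v$ — are all fixed; we are in the case $|M_{i-1}| \le OPT - \epsilon n$, which is exactly the standing assumption of this part of the argument. The remaining randomness is then the collection of \emph{independent} choices $\{Y_v\}_{v \in V}$, and $f(Y) = |\mathcal{P}_i|$ is a deterministic function of this $n$-coordinate tuple, which is the setting McDiarmid requires.

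First I would assemble the two inputs to the inequality. By Proposition~\ref{prop:hypermatch-lipschitz}, $f$ has the bounded-differences property with constant $c_v = 2k$ for each coordinate $v$ (this is a worst-case bound over all tuples, exactly as Theorem~\ref{thm:mcdiarmid} demands), so $\sum_{v \in V} c_v^2 \le 4 n k^2$. By Proposition~\ref{prop:hypermatch-expectation}, and because $|M_{i-1}| \le OPT - \epsilon n$, we have $\mathbb{E}_Y[f(Y)] \ge \epsilon^4 n / 5000$. Now set $\delta := \epsilon^4 n / 10000$, so that $\delta \le \tfrac12 \mathbb{E}_Y[f(Y)]$; consequently, on the event $|f(Y) - \mathbb{E}_Y[f(Y)]| < \delta$ we get $|\mathcal{P}_i| = f(Y) > \mathbb{E}_Y[f(Y)] - \delta \ge \epsilon^4 n / 10000$, which is the conclusion we want.

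It remains to invoke Theorem~\ref{thm:mcdiarmid}, which bounds the probability of the complementary large-deviation event by
$$2\exp\!\left(-\frac{2\delta^2}{\sum_{v} c_v^2}\right) \;\le\; 2\exp\!\left(-\frac{2\delta^2}{4 n k^2}\right).$$
Substituting $\delta = \epsilon^4 n/10000$ and $k = 4/\epsilon + 1 = \Theta(1/\epsilon)$ and simplifying yields the stated tail bound $2\exp(-\epsilon^6 n / 10^{10})$. This last step is pure arithmetic; the real content of the argument already lives in the two preceding propositions — in particular, the delicate choice is $\tau = 1/(4k^2)$, small enough that the $(1 - k^2\tau)$-type collision losses in Proposition~\ref{prop:hypermatch-expectation} cost only a constant factor off the fractional-matching lower bound of Proposition~\ref{prop:ei-size}, yet not so small that it kills the expectation, all while Proposition~\ref{prop:hypermatch-lipschitz} keeps the bounded-differences constant at $O(k)$. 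Given those, Proposition~\ref{prop:hypermatch-result} presents no further obstacle.
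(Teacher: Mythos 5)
Your proof takes exactly the same route as the paper: apply McDiarmid's inequality (Theorem~\ref{thm:mcdiarmid}) with the Lipschitz constant from Proposition~\ref{prop:hypermatch-lipschitz} and the expectation lower bound from Proposition~\ref{prop:hypermatch-expectation}, choosing $\delta = \epsilon^4 n/10000$ to be half the expectation. Your explicit remark that one conditions on $M_{i-1}$ (hence on $H_i$, $x_i$, and the sample spaces $\mathcal{X}_v$) so that the $Y_v$ are the only remaining independent randomness is a helpful clarification that the paper's terse proof leaves implicit, and it matches how the proposition is ultimately invoked in the proof of Lemma~\ref{lem:constant-matching} via a conditional probability.
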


\begin{proof}
By Theorem \ref{thm:mcdiarmid}, Proposition \ref{prop:hypermatch-lipschitz}, and Proposition \ref{prop:hypermatch-expectation}, with $\delta\gets \frac{\epsilon^4n}{10000}$,

$$\mathbb{P}_Y[f(Y) \le \frac{\epsilon^4 n}{10000}] \le 2\exp\left(-\frac{2\delta^2}{(10/\epsilon)^2n}\right) \le 2\exp\left(-\epsilon^6n/10^{10}\right)$$

as desired.
\end{proof}

\begin{proof}[Proof of Lemma \ref{lem:constant-matching}]
Let $T'$ be the minimum value for which $|M_i| \ge OPT - \epsilon n$, or $T' = T+1$ if no such $T'$ exists. $T'$ is a random variable. We now upper bound the probability that  $T' = T+1$. Since $|M_i| = |\mathcal{P}_i| + |M_{i-1}|$ for all $i$, $|M_T| \le n$, and $|M_0| = 0$, there must exist an $i$ for which $|\mathcal{P}_i|\le \frac{n}{T} < \epsilon^4 n/10^4$. Thus,

\begin{align*}
\mathbb{P}[T' = T+1] &\le \sum_{i=1}^T\mathbb{P}\left[T' = T+1 \text{ and } |\mathcal{P}_i| < \frac{\epsilon^4 n}{10^4}\right]\\
&\le \sum_{i=1}^T\mathbb{P}\left[|M_{i-1}| < OPT - \epsilon n \text{ and } |\mathcal{P}_i| < \frac{\epsilon^4 n}{10^4}\right]\\
&\le \sum_{i=1}^T\mathbb{P}\left[|\mathcal{P}_i| < \frac{\epsilon^4 n}{10^4} \;\middle|\; |M_{i-1}| < OPT - \epsilon n \right]\\
&\le 2T\exp(-\epsilon^6n/10^{10})\\
&\le \frac{1}{n^{30}}
\end{align*}

where the second to last inequality follows from Proposition \ref{prop:hypermatch-result}. Thus, with the desired probability, $T' < T + 1$, in which case the algorithm finds a matching with the desired size. The maximum degree of $H_i$ is at most $d^k$, so each iteration of the for loop takes $\log (d^k) + k\log(k/\epsilon) = O((\log (d/\epsilon))/\epsilon)$, as all other operations takes a constant number of rounds in LOCAL, and a constant number of rounds in CONGEST when $d$ and $\epsilon$ are constant. Thus, multiplying by $T$ gives the desired runtime.
\end{proof}

\subsection{The End-to-End Algorithm}

In this subsection, we are finally ready to show that \Cref{thm:warmup} follows from \Cref{lem:sampling-main} and \Cref{lem:constant-matching}.

\begin{proof}[Proof of \Cref{thm:warmup}]
    We show that SamplingStage($G$) (Algorithm~\ref{alg:sample}) followed by Algorithm~\ref{alg:constant-match} returns the desired output in the desired runtime (for $\epsilon > n^{-1/20}$). By definition, the maximum degree of $G_2$ is at most $2p'\Delta = 6000/\epsilon^4$. We use this fact to bound both the runtime and the approximation error:

    \emph{Runtime:} \Cref{lem:constant-matching} is applied to the $G_2$ produced by Algorithm \ref{alg:sample}, so $d = 6000/\epsilon^4$ in this case and $O(\epsilon^{-5} \log d) = O(\epsilon^{-5}\log(1/\epsilon))$ as desired. Since the runtime of Algorithm~\ref{alg:sample} is a constant number of rounds, the overall runtime is still just $O(\epsilon^{-5} \log(1/\epsilon))$. Note that the algorithm for \Cref{lem:constant-matching} can be used, as each vertex in $G_2$ knows both $d$ (which only depends on the original regular graph's degree $\Delta$ and $\epsilon$) and $\epsilon$.

    \emph{Approximation:} Let $OPT$ denote the size of the maximum matching in the input graph $G$. By \Cref{lem:sampling-main}, we have that $OPT(G_2) \ge (1 - 5\epsilon)OPT$ with probability at least $1 - \frac{5}{n^{30}}$. We note that since $G$ is $\Delta$-regular, we have $OPT \ge (1 - \frac{1}{\Delta+1})\frac{n}{2} \ge \frac{n}{3}$.

    Therefore, applying Lemma \ref{lem:constant-matching} results in a matching with size at least $(1 - 5\epsilon)OPT - \epsilon n \geq (1 - 5\epsilon)OPT - 3 \epsilon OPT = (1 - 8\epsilon)OPT$ with probability at least $1 - 5/n^{30} - 1/n^{30} > 1 - 6/n^{30}$. The theorem now follows by replacing $\epsilon$ with $\epsilon/8$, which only changes run time by a constant factor.
\end{proof}





\section{Lower Bounds}
\label{sec:lower}

In this section, we complement our algorithms by giving lower bounds for bipartite regular graphs of low degree. Our bounds are based on a lower bound construction of Ben-Basat, Kawarabayashi, and Schwartzman (BKS), who proved $\Omega(1 / \epsilon)$ lower bounds for a variety of problems in the LOCAL model, including maximum matching~\cite{BenbasatKS19}. The high-level idea from that construction is as follows. First, design a symmetric subgraph gadget (for them, a path sufficed) with radius roughly equal to the number of rounds so that no matter how the gadget's boundary nodes are hooked up to the surrounding graph, the innermost node(s) can never hope to learn about anything outside of the gadget. Next, due to the symmetrical design of the gadget, we can choose to randomly insert it either forwards or backwards. No matter how an algorithm (even randomized) originally chooses to match that innermost gadget node(s), this equalizes the probability of matching it with its forward neighbor and backwards neighbor. Finally, there is now a constant probability that a constant number of these innermost matching edges are now locked in a way that induces a single error, meaning we expect there to be roughly one error per (gadget size) number of nodes. Against Monte Carlo algorithms, there is an additional step where the independence of our forwards/backwards insertion decisions can be used to execute a Chernoff bound that makes it exponentially unlikely that the number of mistakes can be significantly lower than expected.

The efficiency of this argument hinges on the number of nodes in the gadget. Let $r$ be the number of rounds available to the algorithm. In the original construction \cite{BenbasatKS19}, the path gadgets had $O(r)$ nodes, so roughly $\epsilon^{-1}$ rounds winds up translating into a $(1 + \epsilon)$ multiplicative error. Our main technical contribution is showing that we can match this efficiency when constructing a cycle (a bipartite regular graph of degree two) and that as the degree $\Delta$ scales up, we can design gadgets with only $O(\Delta r)$ nodes. We make the following observations about this dependence on $\Delta$: (i) any lower bound, whether it uses this particular gadget framework or not, must eventually worsen as $\Delta$ increases because our upper bounds establish that higher degree regular graphs are easy and (ii) for this particular gadget framework, $\Theta(\Delta r)$ is asymptotically the best possible gadget size dependence on $\Delta$ and $r$. The latter can be observed by considering a breadth-first search (BFS) tree rooted at one of the innermost nodes; this tree must have $r$ levels before we ever reach any node outside the gadget. There must be a path from the root to a node at level $r$, otherwise the root will not be connected to the outside graph; this path is a witness to the fact that there is at least one node at every level. Next, observe that the node at level $i \in [1, r]$ must have $\Delta$ unique neighbors, which must be either in levels $i - 1$, $i$, or $i + 1$ (otherwise the BFS tree is incorrect); this holds even for non-bipartite graphs and for bipartite graphs they can only be in levels $i - 1$ or $i + 1$. In any case, this means we can partition the $r$ levels into groups of three (if non-bipartite) or two (if bipartite) such that each group must have at least $\Delta$ nodes. Hence there are $\Omega(\Delta r)$ nodes in such a gadget.

Our (asymptotically optimal) gadgets formally yield the following lower bounds against deterministic and Monte Carlo algorithms.

\begin{restatable}{theorem}{lowergeneral}\label{thm:lower-general}
  For any degree $\Delta \ge 2$ and error $\epsilon \in (0, \frac{1}{160\Delta + 32})$, any deterministic LOCAL algorithm that computes a $(1 + \epsilon)$-multiplicative approximation for \maximummatching{} on bipartite $\Delta$-regular graphs with at least $n \ge \Omega(\Delta^{-1}\epsilon^{-1})$ nodes requires $\Omega(\Delta^{-1}\epsilon^{-1})$ rounds.

  For any degree $\Delta \ge 2$, error $\epsilon \in (0, \frac{1}{320\Delta + 64})$, and failure probability $\delta \in (0, 1)$, any Monte Carlo LOCAL algorithm that computes a $(1 + \epsilon)$-multiplicative approximation with probability at least $1 - \delta$ for \maximummatching{} on bipartite $\Delta$-regular graphs with at least $n \ge \Omega(\Delta^{-1}\epsilon^{-1} \ln (1 - \delta)^{-1})$ nodes requires $\Omega(\Delta^{-1}\epsilon^{-1})$ rounds.
\end{restatable}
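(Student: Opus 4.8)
The plan is to adapt the gadget-based construction of Ben-Basat, Kawarabayashi and Schwartzman~\cite{BenbasatKS19}, sketched in \Cref{sec:TechLB}, to the $\Delta$-regular bipartite setting, and argue by contradiction: suppose some $r$-round algorithm, $r = o(\Delta^{-1}\epsilon^{-1})$, is correct (deterministically, or with probability $1-\delta$) on all bipartite $\Delta$-regular graphs of the stated size. \textbf{The gadget.} First I would build a \emph{gadget} $\Gamma = \Gamma_{\Delta,r}$: a bipartite graph on $\Theta(\Delta r)$ vertices in which every internal vertex has degree exactly $\Delta$; with two disjoint \emph{ports} $L_\Gamma$, $R_\Gamma$, each a small vertex set carrying a total of $\Delta/2$ units of spare degree to be hooked up externally; with a distinguished central vertex (or central pair) whose $r$-hop ball lies entirely inside $\Gamma$; and with an automorphism $\sigma$ that swaps $L_\Gamma$ with $R_\Gamma$ and fixes the center. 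A workable shape is a \emph{thin} gadget: a path-like spine of $\Theta(r)$ vertices, fattened by attaching perfectly-matchable bulk (blown-up edges / small complete bipartite blocks) to raise each spine vertex to degree $\Delta$ while keeping the radius-$r$ ball around the center of size $O(\Delta r)$, arranged so that any parity defect the gadget is forced to carry lives on $O(1)$ spine vertices rather than on a whole bulk block. When $\Delta = 2$ the gadget degenerates to the BKS path. As noted in \Cref{sec:TechLB}, a BFS argument shows $\Omega(\Delta r)$ vertices are unavoidable for radius $r$, so this gadget is asymptotically smallest possible — which is precisely why this approach is not expected to beat $\Omega(\Delta^{-1}\epsilon^{-1})$.

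\textbf{The hard distribution.} Next I would lay $N = \Theta(n/(\Delta r))$ copies $\Gamma^{(1)},\dots,\Gamma^{(N)}$ of the gadget around a cycle, joining $R_{\Gamma^{(i)}}$ to $L_{\Gamma^{(i+1)}}$ by a short symmetric connector (a constant-length blown-up path) chosen so that the resulting graph is bipartite, $\Delta$-regular, and retains a perfect matching; hence $\mathrm{OPT} = n/2$. Each gadget is then inserted ``forwards'' or ``backwards'' (apply $\sigma$ or not) according to an independent fair coin $x_i$. The key \textbf{indistinguishability} observation is that, since the center $v_i$ of $\Gamma^{(i)}$ has its radius-$r$ ball inside $\Gamma^{(i)}$ and that ball is $\sigma$-invariant, an $r$-round algorithm's behavior at $v_i$ cannot depend on $x_i$ except for relabeling ``left'' and ``right''. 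Consequently, conditioned on the algorithm's private randomness and on the orientations of all gadgets other than $i$, the algorithm matches $v_i$ toward $L_{\Gamma^{(i)}}$ and toward $R_{\Gamma^{(i)}}$ with the same probability.

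\textbf{Forcing defects.} The core step is to partition the gadgets into $\Theta(N)$ disjoint windows of five consecutive gadgets each (two suffice in the $\Delta = 2$ cycle case) and to show that for each window $W$, conditioned on the algorithm's randomness and on all orientations outside $W$, the matching leaves some vertex of $W$ unmatched with probability at least an absolute constant $p_0 > 0$ over the independent orientations of $W$'s gadgets. This is a short case analysis on top of the indistinguishability observation: if some central vertex of $W$ is exposed with constant probability we are done; otherwise every such central vertex is almost surely matched and, by symmetry, ``leans'' toward each of its two ports with probability $\approx \tfrac12$, and these leanings are mutually independent since each depends only on its own gadget's coin. One then checks that a constant-probability joint pattern of the leanings of two of $W$'s central vertices forces a parity mismatch the connector between them cannot absorb — e.g.\ both leaning into the shared connector over-subscribes a vertex, while both leaning away leaves a connector vertex exposed — and the extra buffer gadgets of $W$ guarantee (when $\Delta > 2$) that the bulk elsewhere cannot silently repair the mismatch. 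The hardest part will be tuning the thin gadget together with the symmetric connectors so this case analysis goes through; faithfully porting BKS's two-gadget bookkeeping to five gadgets is otherwise routine.

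\textbf{Putting it together.} By linearity over the $\Theta(N)$ windows, every algorithm $A$ leaves $\mathbb{E}_{\vec x}\bigl[\#\text{unmatched in } A(G^{\vec x})\bigr] \ge p_0 \cdot \Theta(N) = \Omega(n/(\Delta r))$ vertices unmatched on average. For deterministic $A$, some orientation attains this, so $|M| \le n/2 - \Omega(n/(\Delta r))$; since $\mathrm{OPT} = n/2$, for $r < c\,\Delta^{-1}\epsilon^{-1}$ with a suitable constant $c$ this violates the $(1+\epsilon)$-approximation guarantee, and the construction only requires $n = \Omega(\Delta^{-1}\epsilon^{-1})$ nodes, yielding the first bound. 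For Monte Carlo $A$, fix the private randomness and reveal the orientations window by window: the per-window indicators of ``$W$ contains an exposed vertex'' stochastically dominate independent $\mathrm{Bernoulli}(p_0)$ variables, so a Chernoff bound shows $A$ leaves $\Omega(N)$ vertices unmatched except with probability $e^{-\Omega(N)}$. Choosing the number of windows to be $\Omega(\ln(1-\delta)^{-1})$ — which is what the hypothesis $n = \Omega(\Delta^{-1}\epsilon^{-1}\ln(1-\delta)^{-1})$ together with $r < c\,\Delta^{-1}\epsilon^{-1}$ allows — makes $e^{-\Omega(N)} < 1-\delta$, so on some fixed orientation $A$ fails to output a $(1+\epsilon)$-approximation with probability exceeding $\delta$, contradicting the Monte Carlo guarantee and giving the second bound.
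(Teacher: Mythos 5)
Your high-level strategy (randomly oriented gadgets, $\sigma$-invariant $r$-hop balls so the algorithm cannot distinguish forwards from backwards, constant per-window failure probability, Chernoff against Monte Carlo algorithms after fixing the private randomness \`a la Yao) matches the paper's. But there is a genuine gap in the gadget design: your node subgadget, with two ports ``each carrying a total of $\Delta/2$ units of spare degree,'' only makes sense when $\Delta$ is even, and the paper explicitly flags this at the start of its general-degree section: a node subgadget with an odd number of external edges breaks the left/right symmetry needed for the orientation trick, while an even number of external edges lets the algorithm split them $50$--$50$ between the two ports and escape the parity argument. Your proposal does not engage with this obstacle; ``thin spine plus perfectly-matchable bulk'' is exactly the even-degree construction (\Cref{subsec:lower-degree-even}), and it does not degrade gracefully to odd $\Delta$.

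The paper's route around this is qualitatively different: rather than replacing \emph{nodes} by subgadgets strung along a cycle, it replaces \emph{edges} by subgadgets in a \emph{degree-five} bipartite regular base graph. Writing $\Delta = 3x + 2y$, it uses two kinds of edge subgadget (``blue'' of induced degree $x$, ``red'' of induced degree $y$); each edge subgadget is a layered widget of $\Delta\rho$ nodes simulating $z$ parallel long edges between its two endpoints. A ``gadget'' is then an interior base-graph node together with its five incident edge subgadgets, and the randomization is a permutation of the three blue subgadgets among themselves and of the two red subgadgets among themselves ($3!\cdot 2! = 12$ insertions, not a single coin flip). The ``five'' in the paper is the degree of the base graph, and the defect argument is about an \emph{anchor node} of degree five whose five gadget-neighbors cannot coordinate exactly one of them to match toward it; it is not a ``window of five consecutive gadgets on a cycle.'' Your sketch would need something structurally like this to cover odd $\Delta$, and the constants $\epsilon < \tfrac{1}{160\Delta + 32}$ and $\epsilon < \tfrac{1}{320\Delta + 64}$ in the statement come out of exactly this edge-subgadget accounting ($n = 4k + 10k\Delta\rho$), not out of a cycle of node subgadgets. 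As written, your proposal proves the even-$\Delta$ case (\Cref{thm:lower-degree-even}) but does not establish \Cref{thm:lower-general}.
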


The complexity of our proof and the resulting constant factors heavily depend on the degree, so we will split our analysis and gadget design into the following cases, from easiest to hardest: degree two (\Cref{subsec:lower-degree-two}), even degree (\Cref{subsec:lower-degree-even}), and finally general degree (\Cref{subsec:lower-degree-general}). The first two subsections prove specializations of \Cref{thm:lower-general} with better bounds.

\subsection{Bipartite Regular Graphs of Degree Two (Cycles)}
\label{subsec:lower-degree-two}

As a warm-up, we consider the (easy difficulty) degree two case. This case is also covered by \Cref{subsec:lower-degree-even}, which handles even degree, but the construction here will be simpler, be useful for building intuition (especially for readers less familiar with the original BKS construction), and yield a better constant factors. We will prove the following subresult in this subsection.

\begin{theorem}\label{thm:lower-degree-two}
  For any error $\epsilon \in (0, \frac{1}{16})$, any deterministic LOCAL algorithm that computes a $(1 + \epsilon)$-multiplicative approximation for \maximummatching{} on bipartite $2$-regular graphs with at least $n \ge \Omega(\epsilon^{-1})$ nodes requires $\Omega(\epsilon^{-1})$ rounds.

  For any error $\epsilon \in (0, \frac{1}{32})$, and failure probability $\delta \in (0, 1)$, any Monte Carlo LOCAL algorithm that computes a $(1 + \epsilon)$-multiplicative approximation with probability at least $1 - \delta$ for \maximummatching{} on bipartite $2$-regular graphs with at least $n \ge \Omega(\epsilon^{-1} \ln (1 - \delta)^{-1})$ nodes requires $\Omega(\epsilon^{-1})$ rounds.
\end{theorem}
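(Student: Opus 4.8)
The plan is to adapt the BKS path gadget to cycles. Fix a round budget $r$. The hard instance will be an even cycle (which is automatically $2$-regular and bipartite) assembled from $k=\Theta(n/r)$ copies of a single gadget, arranged in a ring and joined consecutively by single edges: the gadget is a path $P$ on an odd number $L=\Theta(r)$ of vertices with $L>2r+2$, so that $P$ has a unique midpoint and its midpoint's radius-$r$ ball stays inside $P$; we also take $k$ even so that the cycle has even length. The gadget has the reflection automorphism $\sigma$ swapping its two endpoints (its ``ports''), and for each copy $g_j$ I pick an independent fair coin $b_j$: if $b_j=-$ the copy is placed reversed (apply $\sigma$), otherwise unchanged. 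One checks that a matching of $P$ forced outward from the midpoint has exactly two ``phases'' and that reversing the gadget swaps them; in particular, for each port, the two phases disagree on whether that port is matched inside $P$ or instead wants its junction edge.

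The heart of the argument is indistinguishability plus symmetrization at the midpoint $v_j$ of $g_j$. Since $L>2r+2$, the radius-$r$ ball around $v_j$ lies inside $g_j$, so an $r$-round (possibly randomized) algorithm's decision of whether, and in which direction along the cycle, to match $v_j$ depends only on the contents of $g_j$. Because $g_j$ is reflection symmetric, averaging over the coin $b_j$ equalizes the probability that $v_j$ is matched toward its clockwise neighbour and the probability that it is matched toward its counterclockwise neighbour; these two events correspond precisely to the two phases of $g_j$. If $v_j$ is left unmatched that is already an unmatched vertex we charge to index $j$; otherwise the alternation of the matching propagates from $v_j$ out to both ports of $g_j$ unless some other vertex of $g_j$ is unmatched (again a vertex we charge to $j$). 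So, conditioned on ``$g_j$ has no unmatched interior vertex'', $g_j$ sits in one of its two phases, each with probability $1/2$, and hence the port bit $a_j$ recording whether $g_j$'s clockwise port is matched inside $g_j$ is a fair coin; moreover the family $\{a_j\}$ is independent across $j$, since distinct gadgets use independent coins $b_j$ and independent private randomness.

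I then count unmatched vertices. At the junction between $g_j$ and $g_{j+1}$, a short case analysis on which endpoint (if any) the junction edge matches shows that the junction carries an unmatched vertex precisely when the relevant port bits of $g_j$ and $g_{j+1}$ disagree. Consequently, for each $j$ the event ``some vertex of $g_j\cup g_{j+1}\cup\{\text{junction }j\}$ is unmatched'' has probability at least an absolute constant: if $v_j$ or $v_{j+1}$ is unmatched we are done, and otherwise the two port bits are independent fair coins, disagreeing with probability $1/2$. Passing to a step-$3$ arithmetic progression of indices $j$ makes these events involve pairwise disjoint gadget pairs, hence independent, and since the corresponding vertex sets are disjoint the number of unmatched vertices is at least the number of these events that occur --- a sum of $\Omega(k)=\Omega(n/r)$ independent Bernoulli variables of constant mean. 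Since an even cycle has a perfect matching of size $n/2$, a $(1+\epsilon)$-approximate matching leaves fewer than $\epsilon n$ vertices unmatched; taking expectations therefore yields a contradiction once $r<c'/\epsilon$ for a suitable constant $c'$, proving the deterministic bound (for a deterministic algorithm one may instead drop the random coins and set each $b_j$ adversarially to force the port-bit disagreements). For the Monte Carlo bound, a Chernoff bound on the $\Omega(k)$ independent constant-mean indicators shows that a constant fraction of them occur --- hence $\Omega(n/r)$ unmatched vertices --- except with probability $e^{-\Omega(k)}$; requiring this bad-for-us event to have probability exceeding the permitted $\delta$ forces $k=\Omega(\ln(1-\delta)^{-1})$, that is $n=\Omega(\epsilon^{-1}\ln(1-\delta)^{-1})$, as claimed.

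I expect the symmetrization step to be the main obstacle: one must argue that the random orientation really does turn each port bit into an (essentially) unbiased coin, independent across gadgets, even though an algorithm may read vertex identifiers inside a gadget. This requires care in assigning identifiers so that a gadget and its mirror image present the ``same'' local view (or in reducing to order-invariant algorithms), in choosing $L$ so that reversing a gadget truly flips the port bit rather than leaving it fixed, and in showing that a single phase committed at a gadget's midpoint propagates all the way to its ports, so that a phase clash between adjacent gadgets is guaranteed to surface as an unmatched vertex rather than being absorbed somewhere inside a gadget.
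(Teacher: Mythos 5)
Your proposal is essentially the paper's proof with cosmetic variations: same BKS-style path gadget placed inside a cycle with random orientation flips per gadget, same indistinguishability argument at the midpoint (the radius-$r$ ball stays inside the gadget), same parity-clash counting between adjacent gadgets, same Chernoff bound against Monte Carlo algorithms. The small differences are that the paper joins gadgets via explicit anchor nodes and post-processes the output so the midpoint $v^i_r$ is never unmatched (which cleanly avoids your conditioning on ``no unmatched interior vertex'' and makes the ``matched left with prob. 1/2, matched right with prob. 1/2'' statement unconditional), whereas you join gadgets by a single junction edge, reason in terms of port bits, and pass to a step-3 progression for independence (step-2 already suffices). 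Your closing worry about vertex identifiers under a mirror image is a non-issue in the paper's formulation: the gadget's internal labels are kept fixed and only the two external connections are swapped, so the radius-$r$ view from the midpoint — IDs and all — is literally identical under both orientations.
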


\begin{proof}
The overall proof plan is to use the BKS path gadget (with slightly different notation), but embed it a little differently to get an even-length cycle. We give all the proof details here so it is not necessary for a reader to be familiar with the BKS proof.

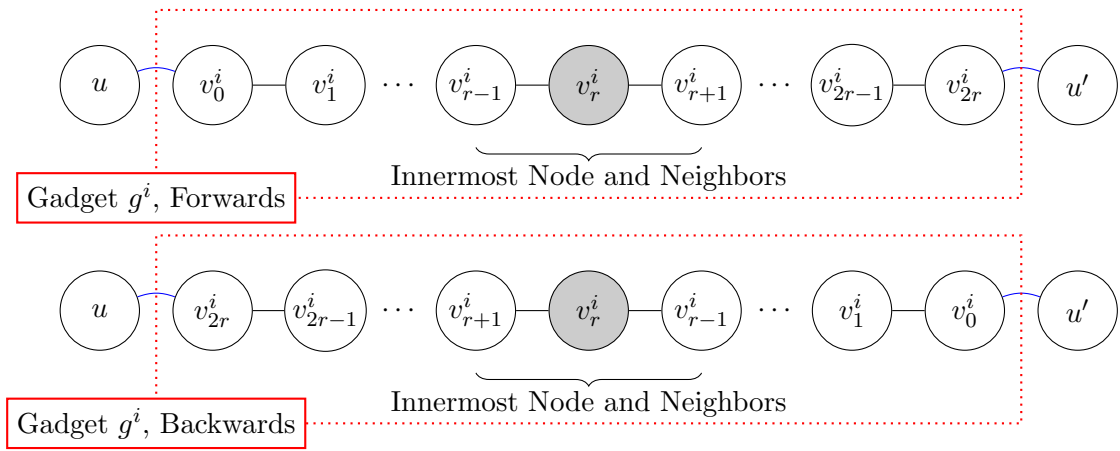
\begin{figure}
\centering
\begin{tikzpicture}[%
  auto,
  scale=1.0,
  hollownode/.style={
    circle,
    draw=black,
    minimum size=30pt,
    inner sep=1pt,
  },
  shadednode/.style={
    circle,
    draw=black,
    fill=black!20,
    minimum size=30pt,
    inner sep=1pt,
  },
  redblock/.style={
    rectangle,
    solid,
    draw=red,
    fill=white,
    text=black,
    align=center,
  },
  ]
  \node[hollownode] (v-) at (-1.5, 0) {$u$};
  \node[hollownode] (v0) at ( 0, 0) {$v^i_0$};
  \node[hollownode] (v1) at ( 1.5, 0) {$v^i_1$};
  \node             (v2) at ( 2.5, 0) {$\cdots$};
  \node[hollownode] (v3) at ( 3.5, 0) {$v^i_{r-1}$};
  \node[shadednode] (v4) at ( 5, 0) {$v^i_{r}$};
  \node[hollownode] (v5) at ( 6.5, 0) {$v^i_{r+1}$};
  \node             (v6) at ( 7.5, 0) {$\cdots$};
  \node[hollownode] (v7) at ( 8.5, 0) {$v^i_{2r-1}$};
  \node[hollownode] (v8) at (10, 0) {$v^i_{2r}$};
  \node[hollownode] (v9) at (11.5, 0) {$u'$};

  \draw[blue] (v-) to[bend left=20] (v0);
  \draw (v0) -- (v1);
  \draw (v3) -- (v4) -- (v5);
  \draw (v7) -- (v8);
  \draw[blue] (v8) to[bend left=20] (v9);

  \draw [decorate,decoration={brace,amplitude=5pt,mirror,raise=5ex}]
  (3.5,0) -- (6.5,0) node[midway,yshift=-4em]{Innermost Node and Neighbors};

  \draw[red, thick, dotted] (-0.75, -1.5) node[redblock] {Gadget $\gadget{i}$, Forwards} -- (-0.75, 1) -- (10.75, 1) -- (10.75, -1.5) -- cycle;

  \node[hollownode] (x-) at (-1.5, -3) {$u$};
  \node[hollownode] (x0) at ( 0, -3) {$v^i_{2r}$};
  \node[hollownode] (x1) at ( 1.5, -3) {$v^i_{2r-1}$};
  \node             (x2) at ( 2.5, -3) {$\cdots$};
  \node[hollownode] (x3) at ( 3.5, -3) {$v^i_{r+1}$};
  \node[shadednode] (x4) at ( 5, -3) {$v^i_{r}$};
  \node[hollownode] (x5) at ( 6.5, -3) {$v^i_{r-1}$};
  \node             (x6) at ( 7.5, -3) {$\cdots$};
  \node[hollownode] (x7) at ( 8.5, -3) {$v^i_1$};
  \node[hollownode] (x8) at (10, -3) {$v^i_0$};
  \node[hollownode] (x9) at (11.5, -3) {$u'$};

  \draw[blue] (x-) to[bend left=20] (x0);
  \draw (x0) -- (x1);
  \draw (x3) -- (x4) -- (x5);
  \draw (x7) -- (x8);
  \draw[blue] (x8) to[bend left=20] (x9);

  \draw [decorate,decoration={brace,amplitude=5pt,mirror,raise=5ex}]
  (3.5,-3) -- (6.5,-3) node[midway,yshift=-4em]{Innermost Node and Neighbors};

  \draw[red, thick, dotted] (-0.75, -4.5) node[redblock] {Gadget $\gadget{i}$, Backwards} -- (-0.75, -2) -- (10.75, -2) -- (10.75, -4.5) -- cycle;
\end{tikzpicture}
\caption{(Degree Two Case) These are the two ways to our (path) gadget may be inserted between $u$ and $u'$. In $r$ rounds, the shaded innermost node $v^i_r$ can only learn about the internal contents of the gadget and nothing else in the surrounding graph, no matter whether the gadget was inserted forwards or backwards. The gadget is connected to two anchor nodes via the blue curved edges.}
\label{fig:tikz-path}
\end{figure}

We want our gadget to help guard against LOCAL algorithms which run for some $r$ rounds (which will be one round fewer than the $\Omega(\epsilon^{-1})$ rounds we require in the theorem statement). \Cref{fig:tikz-path} depicts the gadget design, which we now formally describe. For the $i$th copy of our (path) gadget, let us label the nodes $v^i_0, v^i_1, ..., v^i_{2r}$ and connect $v^i_j$ with $v^i_{j+1}$ for $j \in \{0, 1, ..., 2r-1\}$. Our gadget will also have two external edges coming out of the path endpoints $v^i_0$ and $v^i_{2r}$. These will connect to some external nodes $u, u'$. Let's say that the gadget is inserted forwards if $v^i_0$ is connected to $u$ and $v^i_{2r}$ is connected to $v$ and the gadget is inserted backwards if $v^i_{2r}$ is connected to $u$ and $v^i_0$ is connected to $v$. Importantly, for the innermost node $v^i_r$, it is impossible to tell whether the gadget is inserted forwards or backwards within $r$ rounds.

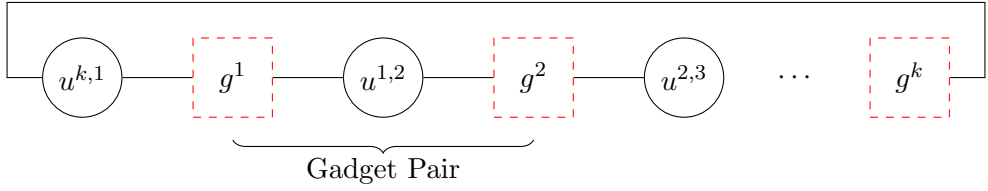
\begin{figure}
\centering
\begin{tikzpicture}[%
  auto,
  scale=1.0,
  gadgetnode/.style={
    rectangle,
    dashed,
    draw=red,
    minimum size=30pt,
    inner sep=1pt,
  },
  hollownode/.style={
    circle,
    draw=black,
    minimum size=30pt,
    inner sep=1pt,
  },
  ]
  \node[hollownode] (u01) at (0, 0) {$u^{k, 1}$};
  \node[gadgetnode] (g1) at (2, 0) {$\gadget{1}$};
  \node[hollownode] (u12) at (4, 0) {$u^{1, 2}$};
  \node[gadgetnode] (g2) at (6, 0) {$\gadget{2}$};
  \node[hollownode] (u23) at (8, 0) {$u^{2, 3}$};
  \node             (g3) at (9.5, 0) {$\cdots$};
  \node[gadgetnode] (g4) at (11, 0) {$\gadget{k}$};

  \draw (u01) -- (g1) -- (u12) -- (g2) -- (u23);
  \draw (g4) -- (12, 0) -- (12, 1) -- (-1, 1) -- (-1, 0) -- (u01);

  \draw [decorate,decoration={brace,amplitude=5pt,mirror,raise=5ex}]
  (2,0) -- (6,0) node[midway,yshift=-4em]{Gadget Pair};
\end{tikzpicture}
\caption{(Degree Two Case) For even $k$, we generate a distribution of counterexamples by inserting $k$ gadgets between $k$ anchor nodes $u^{1, 2}, u^{2, 3}, \cdots, u^{k-1, k}, u^{k, 1}$. Each gadget is independently and uniformly at random chosen to be inserted forwards or backwards. We then pair up gadgets, starting with $\{\gadget{1}, \gadget{2}\}$.}
\label{fig:tikz-gadgets}
\end{figure}

Now that we have described our gadget, we are ready to discuss how to use it. Let $k$ be the number of gadgets that we use; since we will pair them up later, we know that $k$ is even. A constant number of gadgets will suffice against deterministic LOCAL algorithms, but we will require on the order of $\ln (1 - \delta)^{-1}$ gadgets against Monte Carlo LOCAL algorithms to establish an failure probability of $\delta$. We introduce $k$ additional anchor nodes, outside of any gadget, and label them $u^{1, 2}, u^{2, 3}, \ldots, u^{k-1, k}, u^{k, 1}$, with the intention that anchor node $u^{i, j}$ is connected to gadgets $\gadget{i}$ and $\gadget{j}$. We then insert each gadget independently and uniformly at random either forwards or backwards between its two anchor nodes. The final result is a distribution over (counterexample) graphs; see \Cref{fig:tikz-gadgets} for a depiction.

As a preliminary observation, each graph in the support of distribution is a cycle on $k(2r + 2)$ nodes. Since this is an even number of nodes, the graphs in our distribution are indeed bipartite.

We are now ready to begin reasoning about the algorithm's performance on this distribution. Let us focus our attention on some innermost gadget node $v^i_r$ (note that each gadget pair has two of these). The algorithm may do one of the following to this gadget node (a deterministic LOCAL algorithm does one of them deterministically, and a Monte Carlo LOCAL algorithm will produce some distribution over these three possibilities):
\begin{itemize}
  \item leave the node $v^i_r$ unmatched,
  \item match $v^i_r$ with $v^i_{r-1}$,
  \item or match $v^i_r$ with $v^i_{r+1}$.
\end{itemize}

For technical reasons, we will post-process the algorithm's output to make the first case impossible. Whenever $v^i_r$ is unmatched, we will match it with $v^i_{r+1}$, unmatching $v^i_{r+1}$ with its previous match. This can only improve the size of the matching, so if we can show that the post-processed matching is not a $(1 + \epsilon)$-approximation, the algorithm's matching is also not a $(1 + \epsilon)$-approximation.

Next, we say that $v^i_r$ is matched to the right if it is matched with $v^i_{r+1}$ and its gadget is inserted forwards, or if it is matched with $v^i_{r-1}$ and its gadget is inserted backwards. Similarly, we say that it is matched to the left if it is matched with $v^i_{r-1}$ and its gadget is inserted forwards, or if it is matched with $v^i_{r+1}$ and its gadget is inserted backwards. Our post-processing has guaranteed that exactly one of these possibilities occurs, and our random insertions guarantee each innermost node is independently and uniformly at random matched to the right or left (we have essentially XOR'd the algorithm's decision pattern with a random bit string). Interestingly, this means that we can even guard against a Monte Carlo algorithm whose computation at various nodes all have access to some shared public randomness.

We now pair up adjacent gadgets; each pair will have an independent chance of creating an unmatched node. For $i \in \{1, 2, ..., k/2\}$, the $i$th gadget pair includes the nodes of $\gadget{2i-1}$, the nodes of $\gadget{2i}$, as well as the anchor node between them, namely $u^{2i-1, 2i}$. We know that there is a $1/2$ probability that the two innermost gadget nodes $v^{2i-1}_r$ and $v^{2i}_r$ have one match left and one match right. When this event occurs, there are an odd number of nodes between these two matching edges, so one of the nodes between these two innermost gadget nodes must be unmatched. Furthermore, this $1/2$ failure chance occurs independently across all our gadget pairs.

Against deterministic LOCAL algorithms, we are essentially done. We now choose the minimum number of gadgets possible: $k = 2$. We also choose $r = \floor{\frac{1}{9} (\epsilon^{-1} - 7)}$ which is $\Omega(\epsilon^{-1})$; since $\epsilon < 1/16$ this guarantees $r \ge 1$ and hence $v^i_{r-1}$ and $v^i_{r+1}$ actually exist. Furthermore, our overall construction has $n = k(2r + 2) = 4r + 4 = \Omega(\epsilon^{-1})$ nodes, which the algorithm was guaranteed to be able to handle. Observe that in expectation (over the randomness of our counterexample distribution) the algorithm leaves half a node unmatched across these two gadgets. It must get performance at least this bad on some specific graph in the support of this distribution. On that counterexample, its multiplicative approximation is at least:
\begin{align*}
  \frac{n}{n - 1/2} &= 1 + \frac{1}{2n-1} \\
                    &= 1 + \frac{1}{8r + 7} \\
                    &= 1 + \frac{1}{8\floor{\frac{1}{9} (\epsilon^{-1} - 7)} + 7} \\
                    &\ge 1 + \frac{1}{\frac{8}{9} (\epsilon^{-1} - 7) + 7} \\
                    &> 1 + \epsilon & \text{(since } \epsilon < 1/16 \implies \epsilon < 1/7 \text{)}
\end{align*}
This contradicts the $(1 + \epsilon)$-multiplicative approximation guarantee of the deterministic LOCAL algorithm, completing that half of the proof.

We now consider Monte Carlo LOCAL algorithms. We now need additional gadgets to force the failure probability to $\delta$, so we will be choosing a new value for $k$. Since each gadget pair is independent, we can use a Chernoff bound to control the probability that the algorithm fails on too few gadget pairs; it will be exponentially small in the total number of gadgets.

Due to linearity of expectation, the expected number of gadget pairs which fail is $\mu = \frac{k}{2} \cdot \frac12 = \frac{k}{4}$. We need some additional multiplicative error to run our Chernoff bound argument, so let us consider the likelihood that at most half as many, $\frac{k}{8}$ fail. Let $X$ be a random variable for the total number of gadget pair failures; since $X$ is a sum of independent Bernoulli variables we know that:
\begin{align*}
  \Pr[X \le (1 - \gamma) \mu] &\le e^{-\gamma^2 \mu / 2} \qquad \forall \gamma \in (0, 1) \\
  \Pr[X \le k/8] &\le e^{-(1/2)^2 (k/4) / 2} \\
  \Pr[X \le k/8] &\le e^{-k/32}
\end{align*}
To arrive at a contradiction, we want to show that the left-hand side is strictly less than $1 - \delta$. Hence it suffices to show that:
\begin{align*}
  e^{-k/32} &< 1 - \delta \\
  -k/32 &< \ln (1 - \delta) \\
  k &> 32 \ln (1 - \delta)^{-1}
\end{align*}

We can easily satisfy this by choosing $k = 2\ceil{16 \ln (1 - \delta)^{-1}} + 2$; note that this is guaranteed to be even and will contribute an asymptotic factor of $\Omega(\ln (1 - \delta)^{-1})$. Since we have now forced $\Pr[X \le k/8] < 1 - \delta$, we know that the complement probability is $\Pr[X > k/8] > \delta$. When this many failures occur, our Monte Carlo algorithm will have an approximation ratio which is strictly larger than:
\begin{align*}
  \frac{n}{n - k/8} &= 1 - \frac{k/8}{n - k/8} \\
                     &= 1 - \frac{k/8}{k(2r+2) - k/8} \\
                     &= 1 - \frac{1}{16r+15}
\end{align*}
Similar to the deterministic case, we can choose $r = \floor{\frac{1}{17} (\epsilon^{-1} - 15)}$ which is $\Omega(\epsilon^{-1})$ and since $\epsilon < 1/32$ this guarantees $r \ge 1$ which ensures $v^i_{r-1}$ and $v^i_{r+1}$ exist. Additionally our construction has $n = k(2r + 2) = \Omega(\epsilon^{-1} \ln (1 - \delta)^{-1})$ nodes which the algorithm was guaranteed to be able to handle. We now finish the approximation ratio analysis; the ratio is strictly larger than:
\begin{align*}
  \frac{n}{n - k/8} &= 1 - \frac{1}{16\floor{\frac{1}{17} (\epsilon^{-1} - 15)}+15} \\
                     &\ge 1 - \frac{1}{\frac{16}{17} (\epsilon^{-1} - 15) +15} \\
                     &> 1 + \epsilon & \text{(since } \epsilon < 1/32 \implies \epsilon < 1/15 \text{)}
\end{align*}
Hence with probability strictly greater than $\delta$, the approximation ratio is strictly worse than $(1 + \epsilon)$. This contradicts the approximation guarantee of the Monte Carlo LOCAL algorithm, completing the second part of the proof.
\end{proof}

\begin{remark}
  In addition to being able to guard against Monte Carlo algorithms with access to shared public randomness, it also does not matter if the algorithms know the graph size $n$ upfront. It is also possible to improve the upper bounds on allowed $\epsilon$ with additional observations, e.g. removing anchor nodes or arguing that in the deterministic case at least two nodes must actually be unmatched. At some point, this hinges on what a LOCAL algorithm is actually allowed to do in zero rounds.
\end{remark}

\subsection{Bipartite Regular Graphs of Even Degree}
\label{subsec:lower-degree-even}

We now consider the (medium difficulty) even degree case. We prove the following subresult in this subsection.

\begin{theorem}\label{thm:lower-degree-even}
  For any even degree $\Delta \ge 2$ and error $\epsilon \in (0, \frac{1}{16\Delta + 9})$, any deterministic LOCAL algorithm that computes a $(1 + \epsilon)$-multiplicative approximation for \maximummatching{} on bipartite $\Delta$-regular graphs with at least $n \ge \Omega(\Delta^{-1}\epsilon^{-1})$ nodes requires $\Omega(\Delta^{-1}\epsilon^{-1})$ rounds.

  For any even degree $\Delta \ge 2$, error $\epsilon \in (0, \frac{1}{32\Delta + 17})$, and failure probability $\delta \in (0, 1)$, any Monte Carlo LOCAL algorithm that computes a $(1 + \epsilon)$-multiplicative approximation with probability at least $1 - \delta$ for \maximummatching{} on bipartite $\Delta$-regular graphs with at least $n \ge \Omega(\Delta^{-1}\epsilon^{-1} \ln (1 - \delta)^{-1})$ nodes requires $\Omega(\Delta^{-1}\epsilon^{-1})$ rounds.
\end{theorem}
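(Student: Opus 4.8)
The plan is to generalize the degree-two construction of \Cref{thm:lower-degree-two} by replacing the path gadget with a \emph{thick path} whose layers all have the same \emph{odd} size. Fix the number of rounds $r \ge 1$ to be defended against (the final lower bound is $r+1 = \Omega(\Delta^{-1}\epsilon^{-1})$). Let $d = \Delta/2$, let $s$ be the smallest odd integer with $s \ge d$ (so $d \le s \le d+1$), and let $H$ be a fixed $d$-regular bipartite graph with both sides of size $s$ (it exists since $d \le s$). The gadget $\gadget{}$ consists of $2r+1$ layers $L_0, L_1, \ldots, L_{2r}$, each of size $s$, with a copy of $H$ joining $L_j$ and $L_{j+1}$ for every $j$; the two \emph{port} layers $L_0$ and $L_{2r}$ carry $d$ free half-edges per vertex, while every internal vertex has degree $2d = \Delta$. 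The counterexample is built from $k$ copies $\gadget{1}, \ldots, \gadget{k}$ with $k$ even, arranged cyclically, with a copy of $H$ joining the facing ports of consecutive gadgets; each gadget $i$ independently and uniformly at random receives a \emph{flip} $\phi_i \in \{0,1\}$ deciding which of its two ports faces $\gadget{i-1}$. One checks directly that every graph in the support is bipartite (two-color by layer parity; each period contributes $2r+1$ layers, so $k$ even makes the cycle even-length), $\Delta$-regular, has $n = k s (2r+1)$ vertices, and has a perfect matching (pair up consecutive layers around the cycle), so $\mathrm{OPT} = n/2$.

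The key locality observation is that for each gadget $i$ the \emph{innermost layer} $A_i := L_r^i$ sits at distance exactly $r$ from both ports, so the radius-$r$ ball (with identifiers) around any vertex of $A_i$ is precisely the induced subgraph on $\gadget{i}$'s vertices and is therefore \emph{independent of all the flips} $\phi$, since flips only rearrange how ports connect to neighboring gadgets. Consequently the matching $M$ produced by the ($r$-round, possibly shared-randomness) algorithm, restricted to edges incident to $A_i$, is a function of $\gadget{i}$'s fixed labeled structure and the shared random string only; in particular the counts $Z_i := |\{v \in A_i : v \text{ matched into } L_{r-1}^i\}|$, $\bar Z_i$ (matched into $L_{r+1}^i$), and $u_i$ (unmatched), with $Z_i + \bar Z_i + u_i = s$, do not depend on $\phi$.

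Now pair gadgets as $(2i-1, 2i)$ for $i = 1, \ldots, k/2$, and let $P_i$ be the set of vertices in the layers strictly between $A_{2i-1}$ and $A_{2i}$ along the short arc through the $H$-connection joining them. The only matching edges leaving $P_i$ go to $A_{2i-1}$ or $A_{2i}$, and the number $\xi_{2i-1}$ of such edges at the $A_{2i-1}$ end equals $Z_{2i-1}$ or $\bar Z_{2i-1}$ according to $\phi_{2i-1}$ (and likewise $\xi_{2i}$ at the other end). Partitioning the vertices of $P_i$ by their match status gives $|P_i| = 2 m_{P_i} + \xi_{2i-1} + \xi_{2i} + U_{P_i}$, where $U_{P_i}$ counts the unmatched vertices of $P_i$; since $|P_i| = (2r+1)s$ is odd, this yields $U_{P_i} \equiv 1 + \xi_{2i-1} + \xi_{2i} \pmod 2$. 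Because $s$ is odd, $Z_j$ and $\bar Z_j$ have opposite parities precisely when $u_j$ is even, and in that case $\xi_j \bmod 2$ equals $\phi_j$ up to a $\phi$-independent constant, hence is a uniform bit. Call a pair $i$ \emph{good} if both $u_{2i-1}$ and $u_{2i}$ are even; conditioned on the shared randomness and on all the values $(Z_j, \bar Z_j, u_j)$ — which determine the set of good pairs and hence $O := |\{j : u_j \text{ odd}\}|$ — the bits $\{U_{P_i} \bmod 2\}_{i \text{ good}}$ are independent and uniform, since they are controlled by disjoint subsets of the $\phi_j$'s. As the sets $P_i$ are pairwise disjoint and disjoint from every $A_j$, and every gadget with $u_j$ odd already contributes at least one unmatched vertex, the total number of unmatched vertices is at least $O + \sum_{i \text{ good}} \mathbf{1}[U_{P_i} \text{ odd}]$, where the sum is conditionally $\mathrm{Bin}(g, 1/2)$ with $g \ge k/2 - O$ good pairs.

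To finish, split on whether $O \ge k/8$: in either case a Chernoff bound shows that the number of unmatched vertices is at most $k/8$ with probability at most $e^{-k/96}$. For deterministic algorithms take $k$ a small constant; some instance then leaves at least one vertex unmatched, so the approximation ratio is at least $n/(n-1) > 1+\epsilon$ once $r$ is chosen as the largest value keeping $n = ks(2r+1)$ small enough, which is $\Theta(\Delta^{-1}\epsilon^{-1})$. For Monte Carlo algorithms with failure probability $\delta$, take $k = \Theta(\ln (1-\delta)^{-1})$ so that $e^{-k/96} < 1 - \delta$; then with probability strictly greater than $\delta$ at least $k/8$ vertices are unmatched, giving ratio at least $n/(n - k/8) > 1+\epsilon$, again for $r = \Theta(\Delta^{-1}\epsilon^{-1})$ and $n = \Theta(\Delta^{-1}\epsilon^{-1}\ln(1-\delta)^{-1})$; larger $n$ is handled by padding with extra gadgets. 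Chasing the explicit constants to match the stated ranges $\epsilon \in (0, \tfrac{1}{16\Delta+9})$ and $\epsilon \in (0, \tfrac{1}{32\Delta+17})$ is routine. I expect the main obstacle to be the parity bookkeeping when the algorithm leaves innermost vertices unmatched: one must see that such vertices only arise from gadgets with $u_j$ odd (which are few unless the algorithm is already far from optimal) and that these vertices are themselves counted directly toward the deficiency, so the two sources of unmatched vertices add up cleanly without double counting.
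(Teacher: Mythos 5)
Your proof is correct in its overall structure, but it takes a genuinely different route from the paper, and it contains a small but fixable arithmetic slip that is worth flagging.

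\textbf{Comparison with the paper's proof.} The paper keeps the degree-two cycle construction and replaces each \emph{node} with a node subgadget (essentially $K_{\Delta,\Delta+1}$ minus a perfect matching), then post-processes the algorithm's output so that exactly one vertex of each innermost subgadget is matched externally. You instead keep a \emph{path-like} gadget but thicken each layer to an odd-size slab of $s\approx\Delta/2$ vertices joined by a $d$-regular bipartite $H$, and rather than post-processing you directly track the boundary counts $Z_j,\bar Z_j,u_j$. The post-processing step is what lets the paper reduce to ``exactly one external match per innermost node''; your $Z/\bar Z/u$ accounting avoids post-processing at the cost of a slightly more delicate parity argument. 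Both reach the same parity engine. Your gadget has $(2r+1)s\approx(2r+1)\Delta/2$ vertices versus the paper's $(2r+2)(2\Delta+1)$, so your version is mildly more efficient; both are $\Theta(\Delta r)$, which the paper notes is the best possible for this gadget framework. Your observation that each $U_{P_i}\bmod 2$ depends only on the disjoint flip bits $\phi_{2i-1},\phi_{2i}$ (once the shared randomness is conditioned out) also absorbs the paper's separate Yao-style argument, so the conditioning step is cleaner here.

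\textbf{The slip.} With your gadget (no anchor layer between consecutive gadgets), the short arc between $A_{2i-1}=L_r^{2i-1}$ and $A_{2i}=L_r^{2i}$ passes through the $r$ layers $L_{r+1}^{2i-1},\dots,L_{2r}^{2i-1}$ and the $r$ layers $L_0^{2i},\dots,L_{r-1}^{2i}$, so $|P_i|=2rs$, which is \emph{even}, not $(2r+1)s$. Consequently the parity relation is $U_{P_i}\equiv\xi_{2i-1}+\xi_{2i}\pmod 2$, without the ``$1+$''. Fortunately, the downstream argument only uses that $U_{P_i}\bmod 2$ is a uniform bit for good pairs (via $\xi_{2i-1}$ and $\xi_{2i}$ each being uniform when $u_{2i-1},u_{2i}$ are even, and flips being independent), which is unaffected by the additive constant. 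So the error is harmless, but you should correct the count. A second very minor point: ``the radius-$r$ ball \ldots\ is \emph{precisely} the induced subgraph on $\gadget{i}$'s vertices'' should be ``is \emph{contained in}''; since $H$ need not have diameter one, some vertices of $\gadget{i}$ may lie at graph distance $>r$ from a given innermost vertex. Containment is all you need for flip-independence.

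With those two corrections, the argument is sound and the remaining constant chase is indeed routine.
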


\begin{proof}
  In order to support larger degree than the degree two proof, we plan to replace each node $v$ from that construction with a ``node subgadget'' $\subgadget{v}$. Just as each node used to be connected to two other nodes, we will be able to connect each node subgadget to two other node subgadgets.

  Informally, our subgadget will offer $\Delta / 2$ nodes that are each missing two degree and hence can be connected to adjacent node subgadgets, for a total of $\Delta$ outgoing edges. We will then expect the typical solution to use exactly one of these external edges, which simulates the behavior of a node in the original proof. The fact that we have $\Delta$ outgoing edges is not a coincidence; a combination of regularity and bipartiteness mean that if we want all subgadget nodes with external edges to be on the same side of the bipartition, the number of external edges is necessarily a multiple of $\Delta$ (since each node on that side generates $\Delta$ edges and each node on the other side consumes $\Delta$ edges).

  \begin{figure}
\centering
\begin{tikzpicture}[%
  auto,
  scale=1.0,
  hollownode/.style={
    circle,
    draw=black,
    minimum size=35pt,
    inner sep=1pt,
  },
  redblock/.style={
    rectangle,
    solid,
    draw=red,
    fill=white,
    text=black,
    align=center,
  },
  ]

  \foreach \i in {1, ..., 4} {
    \node[hollownode] (ell\i) at (0, -1.5 * \i) {$\ell^v_{\i}$};
  }
  \foreach \j in {1, ..., 5} {
    \node[hollownode] (r\j) at (3, 0 - 1.5 * \j) {$r^v_{\j}$};
  }

  \foreach \i in {1, ..., 4} {
    \foreach \j in {1, ..., 5} {
      \ifnum \numexpr\i/2\relax=\j
        \draw[red, dashed] (ell\i) -- (r\j);
      \else
        \draw (ell\i) -- (r\j);
      \fi
    }
  }

  \draw[red, thick, dotted] (-1, -8.5) node[redblock] {Subgadget $\subgadget{v}$} -- (-1, -0.5) -- (4, -0.5) -- (4, -8.5) -- cycle;

  \foreach \j in {1, ..., 2} {
    \node[hollownode] (r\j') at (5, -1.5 * \j) {$r^{v'}_{\j}$};
    \node[hollownode] (r\j'') at (7, -1.5 * \j) {$r^{v''}_{\j}$};
    \draw[blue] (r\j) to[bend right=30] (r\j');
    \draw[blue] (r\j) to[bend left=30] (r\j'');
  }
\end{tikzpicture}
\caption{(Even Degree Case) This is our node subgadget $\subgadget{v}$ for degree $\Delta = 4$, which takes on the role of a node $v$ from the original construction. The red dashed lines emphasize missing edges (if they were not missing, the gadget would internally look like biclique $K_{\Delta, \Delta + 1}$). The subgadget is connected to two other subgadgets $\subgadget{v'}, \subgadget{v''}$ via blue curved edges to correspond with the original node being connected to two other nodes $v'$ and $v''$. To maintain the bipartiteness of the resulting graph, the nodes being replaced by these node subgadgets must all be from \emph{even-length} cycles.}
\label{fig:tikz-biclique}
\end{figure}
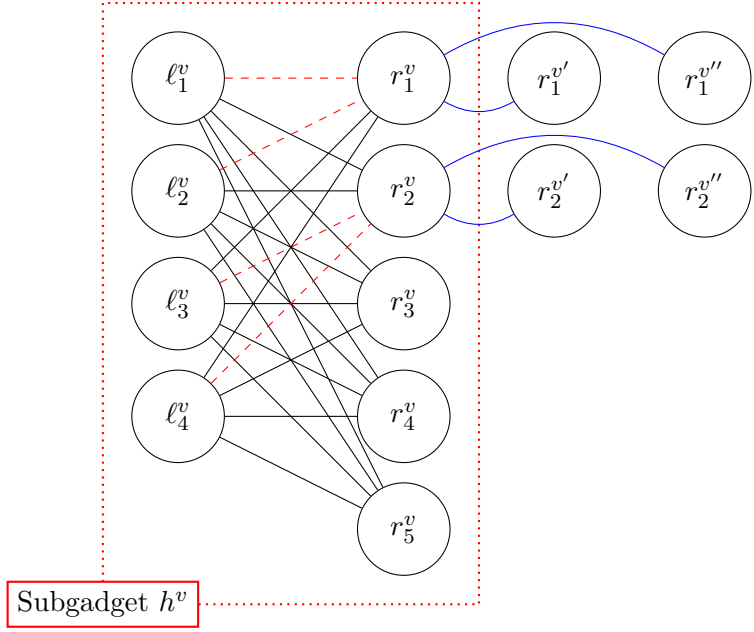

  \Cref{fig:tikz-biclique} depicts the subgadget design for degree $\Delta = 4$, and we now formally describe it for all even $\Delta \ge 2$. To replace what was originally node $v$, our (node) subgadget consists of the $(2\Delta + 2)$ nodes $\ell^v_1, \ell^v_2, ..., \ell^v_{\Delta}, r^v_1, r^v_2, ..., r^v_{\Delta + 1}$. Internally, we connect $\ell^v_{j'}$ with $r^v_j$ for all for all $j' \in \{1, 2, ..., \Delta\}$ and $j \in \{1, 2, ..., \Delta+1\}$ unless $j' \in \{2j - 1, 2j\}$. Our gadget will also have $\Delta$ external edges coming out of nodes $r^v_1, r^v_2, ..., r^v_{\Delta/2}$ (two each). If the node $v$ was connected to nodes $v'$ and $v''$, then we have external edges from $r^v_j$ to $r^{v'}_j$ and $r^{v''}_j$ for all $j \in \{1, 2, ..., \Delta/2\}$.

  Now that we have described our subgadget, we copy the path gadget construction from the proof of \Cref{thm:lower-degree-two}, replacing every node with a node subgadget. Reusing some notation from that proof, our path gadgets will guard against algorithms that run for $r$ rounds and we will be using $k$ path gadgets in total, where $k$ is even and to be decided later. We also use $k$ additional anchor nodes which also get turned into node subgadgets. In total, the total number of nodes we use is now $k(2r+2)(2\Delta+1)$.

  Let us consider what the algorithm does on a single node subgadget. For convenience, let's define the following sets of subgadget $v$'s nodes:
  \begin{align*}
    X^v &= \{\ell^v_1, ..., \ell^v_{\Delta}\}
        & (\abs{X^v} &= \Delta) \\
    Y^v &= \{r^v_1, ..., r^v_{\Delta / 2}\}
        & (\abs{Y^v} &= \Delta / 2) \\
    Z^v &= \{r^v_{\Delta / 2 + 1}, ..., r^v_{\Delta + 1}\}
        & (\abs{Z^v} &= \Delta / 2 + 1)
  \end{align*}
  The most natural thing to do for subgadget $v$ is to match the $\Delta / 2 + 1$ nodes of $Z^v$ to an equal number of nodes in $X^v$, then match the remaining nodes of $X^v$ to an equal number of nodes in $Y^v$. This leaves one remaining node from $Y^v$ to match with an external node.

  In fact, we will force the algorithm to do this on each node subgadget $\subgadget{v^i_r}$ that corresponds to an innermost node $v^i_r$ in some path gadget $\gadget{i}$. For convenience, let $v = v^i_r$. We will postprocess the algorithm's decisions on subgadget $\subgadget{v}$, and we will maintain that the total size of the matching does not decrease, so if the post-processed matching fails the approximation guarantee, so did the algorithm's original matching. First, if any node in $Z^v$ is unmatched, we match it to a node in $X^v$ which is not currently matched to another node in $Z^v$. This is always possible because $Z^v$ and $X^v$ are fully connected and $\abs{X^v} = \Delta \ge \Delta / 2 + 1 = \abs{Z^v}$. We possibly need to unmatch that node of $Z^v$ with a node in $Y^v$, but even with this in mind, we have not reduced the total size of the matching. Next, if any node in $X^v$ is unmatched after this, we match it to a node in $Y^v$ which is not currently matched to another node in $Y^v$. This is always possible because $X^v$ and $Y^v$ are almost fully connected; each node in $X^v$ is only missing a connection to a single node in $Y^v$, and between $Y^v$ and $Z^v$ there are $\Delta + 1$ nodes total, enough to guarantee a match for every node in $X^v$ even with the missing edges. We possibly need to unmatch that node of $Y^v$ with an external node, but again this does not reduce the total size of the matching. Finally, if any node in $Y^v$ is unmatched after this, we match it to an external node in $\subgadget{v^i_{r+1}}$, possibly unmatching that external node with a different external node. Since we are only doing this post-processing for innermost nodes $v = v^i_r$, we can safely do this all without collisions. As a result of our post-processing, exactly one node of $\subgadget{v^i_r}$ is matched to an external node, and the size of the matching has not decreased.

  At this point, we can follow the original proof with only minor changes. We say that $\subgadget{v^i_r}$ is matched to the right if its only external edge matched is with $\subgadget{v^i_{r+1}}$ and its gadget $\gadget{i}$ is inserted forwards, or if its only external edge matched is with $\subgadget{v^i_{r-1}}$ and its gadget $\gadget{i}$ is inserted backwards. Similarly, we say that it is matched to the left if its only external edge edge matched is with $\subgadget{v^i_{r-1}}$ and its gadget $\gadget{i}$ is inserted forwards, or if its only external edge matched is with $\subgadget{v^i_{r+1}}$ and its gadget $\gadget{i}$ is inserted backwards. Again, our post-processing has guaranteed that exactly one of these possibilities occurs, and our random insertions guarantee each innermost node subgadget is independently an uniformly at random matched to the right or left.

  We now pair up adjacent gadgets as before; for $i \in \{1, 2, ..., k/2\}$, the $i$th gadget pair includes the node subgadgets of $\gadget{2i-1}$, the node subgadgets of $\gadget{2i}$, as well as the node subgadget for the anchor node $u^{2i-1, 2i}$. We know that there is a $1/2$ probability that the two innermost node subgadgets have one match left and one match right. When this event occurs, there are an odd number of nodes between these two matching edges (an odd number of subgadgets times an odd number of nodes per subgadget), so one of these nodes must be unmatched. Furthermore, this $1/2$ failure chance occurs independently across all our gadget pairs.

  Against deterministic LOCAL algorithms we again choose a minimal $k = 2$ for a single gadget pair. On average against our distribution, the algorithm leaves $1/2$ of a node unmatched; we can begin to work out its multiplicative approximation ratio to be at least:
  \begin{align*}
  \frac{n}{n - 1/2} &= 1 + \frac{1}{2n-1} \\
                    &> 1 + \frac{1}{2n} \\
                    &= 1 + \frac{1}{2k(2r+2)(2\Delta+1)} \\
                    &= 1 + \frac{1}{8(2\Delta+1)(r+1)}
  \end{align*}
  To turn this into $(1 + \epsilon)$, we let $r = \floor{\frac{\epsilon^{-1} - 1}{16\Delta + 8}}$. This is $\Omega(\Delta^{-1}\epsilon^{-1})$ as claimed, and as long as $\epsilon < \frac{1}{16\Delta + 9}$, we have $r \ge 1$ so that our post-processing was valid (i.e. $v^i_{r-1}$ and $v^i_{r+1}$ exist). Additionally our construction has $\Omega(\Delta^{-1}\epsilon^{-1})$ nodes which the algorithm was guaranteed to be able to handle. The upshot is that after making these choices for $k$ and $r$, we have contradicted the $(1 + \epsilon)$-multiplicative approximation guarantee of the deterministic LOCAL algorithm, completing that half of the proof.

  To deal with the other half, Monte Carlo LOCAL algorithms, we again invoke a Chernoff bound. Since the failure probability is the same, the Chernoff calculation is identical and we need to choose $k = 2\ceil{16\ln(1 - \delta)^{-1}} + 2$ so that the number of gadget pair failures $X$ satisfies $\Pr[X \le k/8] < 1 - \delta$. Hence the complement probability is $\Pr[X > k/8] < \delta$, but this many failures would make our Monte Carlo algorithm have an approximation which is strictly larger than:
  \begin{align*}
    \frac{n}{n - k/8} &= 1 + \frac{k/8}{n - k/8} \\
                      &= 1 + \frac{k/8}{k(2r+2)(2\Delta+1) - k/8} \\
                      &> 1 + \frac{1}{16(2\Delta+1)(r+1)}
  \end{align*}
  To turn this into $(1 + \epsilon)$, we let $r = \floor{\frac{\epsilon^{-1} - 1}{32\Delta + 16}}$. This is $\Omega(\Delta^{-1}\epsilon^{-1})$ as claimed, and as long as $\epsilon < \frac{1}{32\Delta + 17}$, we have $r \ge 1$ so that our post-processing was valid. Additionally our construction has $\Omega(\Delta^{-1}\epsilon^{-1}\ln (1 - \delta)^{-1})$ nodes which the algorithm was guaranteed to be able to handle. Hence with probability strictly greater than $\delta$, the approximation ratio is strictly worse than $(1 + \epsilon)$. This contradicts the approximation guarantee of the Monte Carlo LOCAL algorithm, completing the second part of the proof.
\end{proof}

\subsection{Bipartite Regular Graphs of General Degree}
\label{subsec:lower-degree-general}

We now consider the (hard difficulty) general degree case. Unfortunately, the way we got the even degree proof to work via node subgadgets runs into complications when try the same technique for odd degree. If we design a node subgadget with an odd number of nodes, we will have an odd number of external edges and the algorithm will be able to break symmetry (the subgadget's two neighbors will not look the same). If we design a node subgadget with an even number of nodes, then we will want to match an even number of nodes to external nodes and the algorithm will be able to evenly split them between the subgadget's two neighbors.

Hence we will need a different attack angle to handle odd degrees. The key insight is to view the path gadget as a degree-two way to simulate a very long edge (so that the algorithm cannot see the neighbor on the other side of the long edge). We will design edge subgadgets to do the same thing for higher degrees.

We are now ready to prove \Cref{thm:lower-general}, which is restated below for convenience.

\lowergeneral*

\begin{proof}
  In order to support larger degree than the degree two proof, we plan to create a (bipartite, regular) degree five graph and then replace each edge $(u, v)$ in it with an ``edge subgadget'' $\subgadget{u, v}$. We choose degree five to be the original degree because of two factors: we want to write every degree $\Delta \ge 2$ as a sum $\Delta = 3x + 2y$ where $x, y \in \{0, 1, ..., \Delta\}$ and we needed the coefficients on $x$ and $y$ in that expression to both be at least two. It is easy to prove that every degree $\Delta \ge 2$ fits this criteria; if $\Delta$ is even, then it can be written as $\Delta = 0 + 2y$ for some $y \ge 1$ and we can then always choose $x = 0$. If $\Delta$ is odd, then it is at least $\Delta \ge 3$ and so $\Delta - 3$ is even; we can hence write $\Delta = 3 + 2y$ for some $y \ge 0$ and then always choose $x = 1$.
  
  This property that $\Delta = 3x + 2y$ means that for any specific value of $\Delta$ we will only use two different edge subgadgets. Our ``blue'' subgadget will induce degree $x$ and our ``red'' subgadget will induce degree $y$, and hence three blue subgadgets and two blue subgadgets induce a real degree of $\Delta$.

  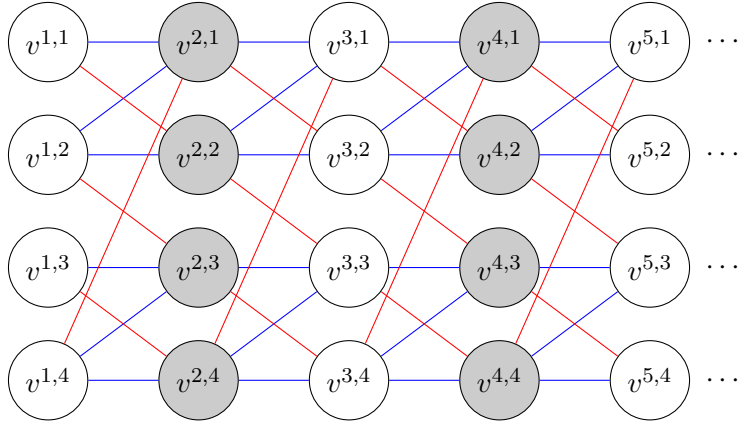
\begin{figure}
\centering
\begin{tikzpicture}[%
  auto,
  scale=1.0,
  hollownode/.style={
    circle,
    draw=black,
    minimum size=30pt,
    inner sep=1pt,
  },
  shadednode/.style={
    circle,
    draw=black,
    fill=black!20,
    minimum size=30pt,
    inner sep=1pt,
  },
  redblock/.style={
    rectangle,
    solid,
    draw=red,
    fill=white,
    text=black,
    align=center,
  },
  ]
  \foreach \i in {1, 3, 5} {
    \foreach \j in {1, 2, 3, 4} {
      \node[hollownode] (v\i\j) at (2 * \i, -1.5 * \j) {$v^{\i, \j}$};
    }
  }
  \foreach \i in {2, 4} {
    \foreach \j in {1, 2, 3, 4} {
      \node[shadednode] (v\i\j) at (2 * \i, -1.5 * \j) {$v^{\i, \j}$};
    }
  }
  \node at (11, -1.5) {$\cdots$};
  \node at (11, -3) {$\cdots$};
  \node at (11, -4.5) {$\cdots$};
  \node at (11, -6) {$\cdots$};

  \foreach \i / \j in {1/2, 2/3, 3/4, 4/5} {
    \draw[blue] (v\i1) -- (v\j1) -- (v\i2) -- (v\j2);
    \draw[blue] (v\i3) -- (v\j3) -- (v\i4) -- (v\j4);
    \draw[red]  (v\i1) -- (v\j2);
    \draw[red]  (v\i2) -- (v\j3);
    \draw[red]  (v\i3) -- (v\j4);
    \draw[red]  (v\i4) -- (v\j1);
  }

\end{tikzpicture}
\caption{(General Degree Case) These are nodes from five adjacent layers of our base graph, which is bipartite, regular, and has degree five. The edges are colored blue and red so that every node has three incident blue edges and two incident red edges. Each of these colored edges $(u, v)$ will be replaced by an edge subgadget $\subgadget{u, v}$.}
\label{fig:tikz-penta}
\end{figure}

  The exact degree five base graph is not actually important (we can always extract five disjoint perfect matchings and arbitrarily color them three blue, two red), but for concreteness we will reason about the following layered graph. Let $k$, the number of layers, be a parameter chosen later; we guarantee that $k$ will be a multiple of four (each set of four layers corresponds to a gadget pair in the original proof). We have four nodes per layer, so our graph consists of $4k$ nodes labelled $v^{i, j}$ for $i \in \{1, 2, ..., k\}$ and $j \in \{1, 2, 3, 4\}$. For all $i, i' \in \{1, 2, ..., k\}$, where $i + 1 \equiv i' \pmod{k}$, and for all $(j, j') \in \{(1, 1), (2, 1), (2, 2), (3, 3), (4, 3), (4, 4)\}$ we have an blue edge $(v^{i, j}, v^{i', j'})$. For all $i, i' \in \{1, 2, ..., k\}$, where $i + 1 \equiv i' \pmod{k}$, and for all $(j, j') \in \{(1, 2), (2, 3), (3, 4), (4, 1)\}$ we have an red edge $(v^{i, j}, v^{i', j'})$. This base graph is depicted in \Cref{fig:tikz-penta}.

  \begin{figure}
\centering
\begin{tikzpicture}[%
  auto,
  scale=1.0,
  hollownode/.style={
    circle,
    draw=black,
    minimum size=30pt,
    inner sep=1pt,
  },
  redblock/.style={
    rectangle,
    solid,
    draw=red,
    fill=white,
    text=black,
    align=center,
  },
  ]
  \node[hollownode] (u) at (1, 0) {$u$};
  \foreach \i in {1, 2} {
    \node[hollownode] (w\i1) at (3*\i, 3.75)  {$w^{u, v}_{\i, 1}$};
    \node             (w\i2) at (3*\i, 2.25)  {$\vdots$};
    \node[hollownode] (w\i3) at (3*\i, 0.75)  {$w^{u, v}_{\i, z}$};
    \node[hollownode] (w\i4) at (3*\i, -0.75) {$w^{u, v}_{\i, z+1}$};
    \node             (w\i5) at (3*\i, -2.25) {$\vdots$};
    \node[hollownode] (w\i6) at (3*\i, -3.75) {$w^{u, v}_{\i, \Delta}$};
  }
  \draw (u) -- (w11);
  \draw (u) -- (w13);
  \draw[red, dashed] (w11) -- (w21);
  \draw (w11) -- (w23);
  \draw (w11) -- (w24);
  \draw (w11) -- (w26);
  \draw (w13) -- (w21);
  \draw[red, dashed] (w13) -- (w23);
  \draw (w13) -- (w24);
  \draw (w13) -- (w26);
  \draw (w14) -- (w21);
  \draw (w14) -- (w23);
  \draw (w14) -- (w24);
  \draw (w14) -- (w26);
  \draw (w16) -- (w21);
  \draw (w16) -- (w23);
  \draw (w16) -- (w24);
  \draw (w16) -- (w26);

  \node (w231) at (7.5, 3.75) {$\hdots$};
  \node (w233) at (7.5, 0.75) {$\hdots$};
  
  \foreach \i/\j in {3/\rho-1, 4/\rho} {
    \node[hollownode] (w\i1) at (3*\i, 3.75)  {$w^{u, v}_{\j, 1}$};
    \node             (w\i2) at (3*\i, 2.25)  {$\vdots$};
    \node[hollownode] (w\i3) at (3*\i, 0.75)  {$w^{u, v}_{\j, z}$};
    \node[hollownode] (w\i4) at (3*\i, -0.75) {$w^{u, v}_{\j, z+1}$};
    \node             (w\i5) at (3*\i, -2.25) {$\vdots$};
    \node[hollownode] (w\i6) at (3*\i, -3.75) {$w^{u, v}_{\j, \Delta}$};
  }
  \node[hollownode]  (v) at (14, 0) {$v$};

  \draw (w21) -- (w231) -- (w31);
  \draw (w23) -- (w233) -- (w33);
  
  \draw[red, dashed] (w31) -- (w41);
  \draw (w31) -- (w43);
  \draw (w31) -- (w44);
  \draw (w31) -- (w46);
  \draw (w33) -- (w41);
  \draw[red, dashed] (w33) -- (w43);
  \draw (w33) -- (w44);
  \draw (w33) -- (w46);
  \draw (w34) -- (w41);
  \draw (w34) -- (w43);
  \draw (w34) -- (w44);
  \draw (w34) -- (w46);
  \draw (w36) -- (w41);
  \draw (w36) -- (w43);
  \draw (w36) -- (w44);
  \draw (w36) -- (w46);
  \draw (w41) -- (v);
  \draw (w43) -- (v);

  \draw[red, thick, dotted] (2, -4.75) node[redblock] {Subgadget $\subgadget{u, v}$} -- (2, 4.75) -- (13, 4.75) -- (13, -4.75) -- cycle;
\end{tikzpicture}
\caption{(General Degree Case) This is our edge subgadget $\subgadget{u, v}$, which simulates $z$ parallel edges between $u$ and $v$ of length $2\rho + 1$. The red dashed lines emphasize missing edges. To maintain the bipartiteness of the resulting graph, $u$ and $v$ must be on opposite sides of the bipartition.}
\label{fig:tikz-edge}
\end{figure}
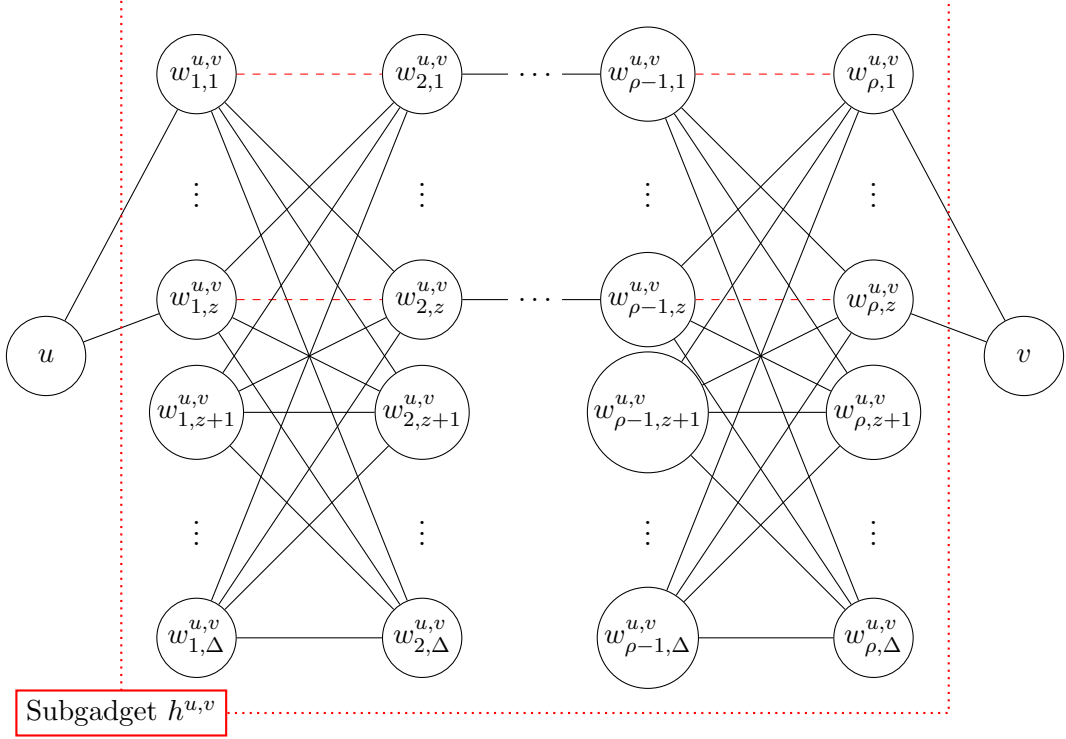

  We are now ready to present our edge subgadget $\subgadget{u, v}$. It takes in three parameters: the degree $\Delta$, an induced degree $z \in \{0, 1, ..., \Delta\}$, and a length $\rho$ which is a positive even integer. Informally, our edge subgadget behaves like $z$ parallel edges between $u$ and $v$ that have a length of $2\rho + 1$ (i.e. it takes that many hops to get from $u$ to $v$), implying that a $2\rho$-round algorithm cannot see the identity of $v$ when making a matching decision for $u$. This is done by adding $\Delta \rho$ additional nodes along with edges that use up all of their $\Delta$ degree along with using up $z$ degree from both $u$ and $v$.

  We now formally describe the edge subgadget, which is depicted in \Cref{fig:tikz-edge}. Our subgadget consists of $(\Delta \rho)$ nodes $\{w^{u, v}_{i, j}\}_{i \in \{1, 2, ..., \rho\}, j \in \{1, 2, ..., \Delta\}}$. We think of our nodes as being arranged in $\rho$ layers with $\Delta$ nodes in each layer. We say that a node $w^{u, v}_{i, j}$ is in layer $i$, and each layer is only connected to adjacent layers (except for the first layer $i = 1$ and last layer $i = \rho$, which are also connected to $u$ and $v$ respectively).

  When going up from an odd layer $i \in \{1, 3, ..., \rho - 1\}$ to the even layer $i + 1$, the nodes are fully connected except for a matching between the first $z$ nodes. That is, $w^{u, v}_{i, j}$ is connected to $w^{u, v}_{i+1, j'}$ for all $j, j' \in \{1, 2, ..., \Delta\}$ unless $j = j' \le z$. When going up from an even layer $i \in \{2, 4, ..., \rho - 2\}$ to the even layer $i + 1$, the only connection is a matching between the first $z$ nodes. That is, $w^{u, v}_{i, j}$ is connected to $w^{u, v}_{i+1, j}$ for all $j \in \{1, 2, ..., z\}$. Finally, $u$ is connected to $w^{u, v}_{1, j}$ for $j \in \{1, 2, ..., z\}$ and $v$ is connected to $w^{u, v}_{\rho, j}$ for $j \in \{1, 2, ..., z\}$.

  After replacing all the edges in our base graph with edge subgadgets, we have a total of $n = 4k + 10k(\Delta \rho)$ nodes. Next, we will identify the equivalents of path gadgets and gadget pairs within our graph, so that we can argue that algorithms will wind up making mistakes at a certain rate.

  Our new version of a gadget consists of an innermost node $v^{i, j}$ with \emph{even} $i$ along with the five edge subgadgets incident to that node. Our construction has provided the following properties: (i) gadgets do not overlap and (ii) each gadget can be removed and reinserted into the graph one of $3!2! = 12$ ways without changing the overall structure of the graph, since the blue edge subgadgets can be freely permuted with each other and the red edge subgadgets can be freely permuted with each other. This leaves all nodes $v^{i, j}$ with \emph{odd} $i$ serving as anchor nodes between gadgets.

  We now group gadgets as follows; for $i \in \{1, 2, ..., k/4\}$, the $i$th gadget group includes the five gadgets whose innermost nodes are $v^{4i-2, 1}, v^{4i-2, 2}, v^{4i-2, 4}, v^{4i, 1}, v^{4i, 2}$ (each consisting of that node and the five edge subgadgets incident to that node), as well as the anchor node $v^{4i-1, 1}$. Intuitively, the situation is that we are examining an anchor node and its five neighbors; the algorithm needs exactly one of these neighbors to be matched towards our anchor node, but it cannot do so precisely due to our random method for inserting gadgets. The algorithm gets to decide how to split a gadget's probability of matching its innermost node towards a red edge or towards a blue edge, but after that the red edges get shuffled and the blue edges get shuffled and so any decision has a constant probability of getting messed up.

  In order to have independence between gadget groups later, it will be convenient to run a Yao's minimax principle argument. Let $Y$ be a random variable representing the algorithm's randomness (even shared public randomness between the nodes), and $y$ be a specific value for this randomness. If the algorithm is deterministic, then this does not matter and we can say e.g. $y$ can only be the empty string. If we condition on $Y = y$, then the algorithm behaves deterministically and in particular, every gadget's innermost node is deterministically matched to either a node in a red edge subgadget or a blue edge subgadget. After these decisions are locked in, we randomly insert all gadgets and consider what happens to our gadget groups.

  We want to lower bound the probability that some node in the gadget group is unmatched. One sufficient condition for this to occur is that an even number of the five gadgets in the gadget group match towards the anchor node. The algorithm's decision (which can depend on $Y$) amounts to choosing which subset of gadgets get to participate, but participating gadgets have a probability of either $1/2$ or $1/3$ of changing the parity of the outcome. Since the probability of changing the parity is never greater than $1/2$, this means that the even-parity probability after a coin flip (viewed as a weighted average) is closer to the even-parity probability before the flip than the odd-parity probability before the flip; hence the even-parity probability can never drop below $1/2$. Furthermore, due to conditioning $Y = y$, all gadget groups are independent Bernoulli variables (possibly with different probabilities).

  Against deterministic LOCAL algorithms the conditioning doesn't matter, and we choose a minimal $k = 4$ giving us a single gadget group. On average against our distribution, the algorithm leaves at least $1/2$ of a node unmatched; we can begin to work out its multiplicative approximation ratio to be at least:
  \begin{align*}
    \frac{n}{n-1/2} &= 1 + \frac{1}{2n - 1} \\
                     &> 1 + \frac{1}{2n} \\
                     &= 1 + \frac{1}{8k + 20k\Delta\rho} \\
                     &= 1 + \frac{1}{32 + 80\Delta\rho}
  \end{align*}
  To turn this into $(1 + \epsilon)$, we let $\rho = \floor{\frac{\epsilon^{-1} - 32}{80\Delta}}$. This is $\Omega(\Delta^{-1} \epsilon^{-1})$ as claimed, and as long as $\epsilon < \frac{1}{160\Delta + 32}$ we have $\rho \ge 2$ so that our edge subgadgets are valid. Additionally our construction has $\Omega(\Delta^{-1} \epsilon^{-1})$ nodes which the algorithm was guaranteed to be able to handle and we guarded against $2\rho$ round algorithms which is $\Omega(\Delta^{-1} \epsilon^{-1})$ as well. The upshot is that after making these choices for $k$ and $\rho$, we have contradicted the $(1 + \epsilon)$-multiplicative approximation guarantee of the deterministic LOCAL algorithm, completing that half of the proof.

  We need to be a little careful rerunning our Chernoff bound, since the failure probability may not be exactly $1/2$, although it turns out that a lower bound on the probability suffices. The expected number of gadget groups which fail is $\mu$, which is between $\frac12 \frac{k}{4} = \frac{k}{8}$ and $\frac{k}{4}$ (the exact value in this range can depend on $y$). We need some multiplicative error to run our Chernoff bound argument, so let us consider the situation where at most $\frac{k}{16}$ fail. Let $X$ be a random variable for the total number of gadget pair failures; since $X$ is a sum of independent Bernoulli variables (independent conditioned on $Y = y$) we know that:
  \begin{align*}
    \Pr [X \le (1 - \gamma) \mu | Y = y]
      &\le e^{-\gamma^2 \mu / 2} \qquad \forall \gamma \in (0, 1) \\
    \Pr [X \le \mu/2 | Y = y]
      &\le e^{- \mu / 8} \\
    \Pr [X \le k/16 | Y = y]
      &\le e^{- k / 64}
  \end{align*}
  
  To arrive at a contradiction, we want to show that the left-hand side is strictly less than $1 - \delta$. Hence it suffices to show that:
  \begin{align*}
    e^{- k / 64} &< 1 - \delta \\
    -k/64 &< \ln (1 - \delta) \\
    k &> 64 \ln (1 - \delta)^{-1}
  \end{align*}

  We satisfy this by choosing $k = 4\ceil{16\ln(1 - \delta)^{-1}} + 4$; note that this is guaranteed to be a multiple of four and will contribute an asymptotic factor of $\Omega(\ln (1 - \delta)^{-1})$. Since we have now forced $\Pr[X \le k/16] < 1 - \delta$, we know that the complement probability is $\Pr[X > k/16] < \delta$. When this many failures occur, our Monte Carlo algorithm will have an approximation ratio which is strictly larger than:
  \begin{align*}
    \frac{n}{n - k/16} &= 1 + \frac{k/16}{n - k/16} \\
                       &= 1 + \frac{k/16}{4k + 10k\Delta\rho - k/16} \\
                       &= 1 + \frac{1}{64 + 160\Delta\rho - 1} \\
                       &> 1 + \frac{1}{64 + 160\Delta\rho}
  \end{align*}
  To turn this into $(1 + \epsilon)$, we let $\rho = \floor{\frac{\epsilon^{-1} - 64}{160\Delta}}$. This is $\Omega(\Delta^{-1}\epsilon^{-1})$ as claimed, and as long as $\epsilon < \frac{1}{320\Delta + 64}$, we have $\rho \ge 2$ so that our edge subgadgets are valid. Additionally our construction has $\Omega(\Delta^{-1}\epsilon^{-1}\ln (1 - \delta)^{-1})$ nodes which the algorithm was guaranteed to be able to handle and we guarded against $2\rho$ round algorithms which is $\Omega(\Delta^{-1}\epsilon^{-1})$. Hence with probability strictly greater than $\delta$ (just over our counterexample distribution), conditioned on $Y = y$, the approximation ratio is strictly worse than $(1 + \epsilon)$. But we now finish our Yao's minimax argument and observe that since this argument worked for every value of $y$, we can safely uncondition this statement by taking a weighted average. Hence with probability strictly greater than $\delta$ (over both our counterexample distribution and the randomness of the algorithm), the approximation ratio is strictly worse than $(1 + \epsilon)$.
  
  This contradicts the approximation guarantee of the Monte Carlo LOCAL algorithm, completing the second part of the proof.
\end{proof}

\bibliographystyle{plainurl}
\bibliography{refs}

\appendix
\section{Preliminaries}\label{app:prelims}

\subsection{Matchings}

We use the following well-known characterization of the matching polytope.
\begin{theorem}[Folklore, e.g. \cite{PlummerLovasz86}] \label{thm:polytope}
Let $G$ be an undirected graph and let $\mathcal{M}$ denote the matching polytope for $G$; that is the convex hull of all 0-1 vectors in $\mathbb{R}^m$ that are indicator vectors of matchings in $G$. Then, $\mathcal{M}$ can also be written as the intersection of the following families of halfspaces:

\begin{enumerate}
    \item $x_e\ge 0$ for all $e\in E(G)$.
    \item $\displaystyle\sum_{e\sim v} x_e \le 1$ for all $v\in V(G)$.
    \item $\displaystyle\sum_{e\in E(G[S])} x_e \le \frac{|S|-1}{2}$ for all sets $S\subseteq V(G)$ with odd size.
\end{enumerate}
\end{theorem}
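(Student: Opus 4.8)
This is Edmonds' matching polytope theorem, so in the paper itself I would simply invoke it as a classical fact (as the excerpt does, citing \cite{PlummerLovasz86}); what follows is the self-contained argument I would give if a proof were demanded. Write $P = P(G)$ for the polytope cut out by inequalities (1)--(3), and for a set of edges $F$ write $x(F) = \sum_{e \in F} x_e$. One inclusion is immediate: the indicator vector of any matching $M$ is nonnegative, has at most one $1$ per vertex, and restricted to any odd-size $S \subseteq V$ uses at most $\tfrac{|S|-1}{2}$ edges of $G[S]$, since $M \cap E(G[S])$ is a matching on $S$; hence $\mathcal{M} \subseteq P$. The content is $P \subseteq \mathcal{M}$, and since a $0$-$1$ vector satisfies (1)--(2) if and only if it is the indicator of a matching, it suffices to show that every vertex of $P$ is a $0$-$1$ vector, i.e.\ that $P$ is integral.

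I would argue by contradiction, fixing a graph $G$ minimizing $|V(G)| + |E(G)|$ among those admitting a non-integral vertex $x$ of $P$. A few standard reductions apply. If $x_e = 0$ for some edge $e$, then $x$ restricted to $G - e$ is a non-integral vertex of $P(G-e)$, contradicting minimality; if $x_e = 1$, the vertex constraints (2) at the two endpoints of $e$ force all other incident edges to value $0$, and one excludes this too. So $0 < x_e < 1$ for every edge, and $G$ is connected with minimum degree $\ge 2$, hence $|E(G)| \ge |V(G)|$. Because $x$ is a vertex and no bound constraint $x_e \in \{0,1\}$ is tight, $x$ is the unique solution of a system of $|E(G)|$ linearly independent tight constraints, each of the form $x(E(G[S])) = \tfrac{|S|-1}{2}$ for an odd set $S$ (singletons $S=\{v\}$ recovering (2)).

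The core is then to control this family of tight odd sets. First, a standard uncrossing argument --- using $x(E(G[S])) + x(E(G[T])) \le x(E(G[S\cup T])) + x(E(G[S\cap T]))$ together with the parities of $|S|,|T|$ --- lets me replace it by a linearly independent \emph{laminar} family $\mathcal{F}$ of tight odd sets with the same span. If every member of $\mathcal{F}$ is a singleton, then $x$ is pinned down by vertex-degree equations alone; connectivity with $|E(G)| = |V(G)|$ (a unique cycle, which must be odd for the incidence matrix to have full rank) together with $x_e > 0$ forces $G$ to be a single odd cycle on which $x \equiv \tfrac{1}{2}$, and then constraint (3) with $S = V(G)$ reads $\tfrac{|V|}{2} \le \tfrac{|V|-1}{2}$, a contradiction. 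Otherwise pick a minimal non-singleton $S \in \mathcal{F}$; tightness forces $x(\delta(S)) \le 1$, and I would contract $S$ to a single vertex to obtain a smaller (multi)graph $G/S$ with induced vector $x'$ (retaining the edges of $\delta(S)$ as parallel edges at the new vertex). Translating each odd-set inequality of $G/S$ through the identity $x(E(G[S])) = \tfrac{|S|-1}{2}$ shows $x' \in P(G/S)$; checking that $x'$ is still a \emph{vertex} of $P(G/S)$ lets me invoke minimality to conclude $x'$ is integral, and lifting the integral structure back across $S$ contradicts $0 < x_e < 1$.

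The step I expect to be the main obstacle is exactly this contraction: verifying that $x'$ is a vertex of $P(G/S)$ and, dually, that integrality of $x'$ lifts back to $G$ --- equivalently, that the restriction of $x$ to the edges meeting $S$ lies on a $0$-dimensional face of the relevant near-perfect-matching polytope of $G[S]$ --- requires genuine care, as does the odd-parity bookkeeping in the uncrossing step. For this paper none of that is actually needed: Theorem~\ref{thm:polytope} is used only as a black box (e.g.\ to certify large matchings in almost-regular graphs), so citing \cite{PlummerLovasz86} or Edmonds' original paper is entirely adequate.
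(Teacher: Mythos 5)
The paper does not actually prove this theorem; it states it as folklore and cites Lov\'asz--Plummer, and you are right that citation is the appropriate move given how the result is used (only as a black box to certify large matchings). Your self-contained sketch is an accurate outline of the classical Edmonds argument: reduce to showing $P$ is integral, take a minimal counterexample, argue $0 < x_e < 1$ on all edges so that a vertex $x$ of $P$ is determined by $|E|$ tight odd-set constraints, uncross to a laminar family, dispose of the all-singleton case via the odd-cycle contradiction, and otherwise contract a minimal non-singleton tight set and appeal to minimality. You also correctly identify where the real work lives --- verifying that the contracted point is a vertex of $P(G/S)$ and lifting integrality back across $S$ (equivalently, that $G[S]$ is factor-critical and the restriction of $x$ sits on the right face of its near-perfect-matching polytope), plus the parity bookkeeping in the uncrossing step. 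A couple of asserted reductions (e.g.\ minimum degree $\ge 2$, ruling out $x_e = 1$) deserve a sentence each in a full write-up, but none of that affects the comparison: relative to the paper, which offers a citation rather than a proof, your proposal is correct and, if anything, more than is needed.
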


For a matching $M$ of graph $G$, an augmenting path $P$ is a path in $G$ that alternates between edges in $M$ and those not in $M$ with the additional property that both its end points are unmatched in $M$. Let $\mathcal{P}$ be a collection of vertex disjoint augmenting paths, then $M' = M \cup \{\mathcal{P} \setminus M\} \setminus \{M \cap \mathcal{P}\}$ is a new matching of size $|M| + |\mathcal{P}|$ and we say that $M'$ is obtained by \emph{augmenting} $M$ with $\mathcal{P}$.

\begin{proposition}[Theorem 2.1 of \cite{PettieSanders04}, statement from Proposition 7.1 of \cite{Harris19}]\label{prop:augment}
Let $M$ be an arbitrary matching of $G$, and $OPT$ be the size of a maximum weight matching in $G$. Then, for any $\ell \geq 1$, there exists a collection $\mathcal{P}$ with $|\mathcal{P}| \geq \frac{1}{2}\left(OPT(1 - 1/\ell) - |M|\right)$of vertex disjoint augmenting paths where each path consist of at most $2\ell + 1$ edges.
\end{proposition}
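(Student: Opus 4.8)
The plan is to use the classical symmetric-difference argument for short augmenting paths, in the unweighted setting of this paper (where ``maximum weight matching'' just means a maximum-cardinality matching). Fix a maximum matching $M^\star$ with $|M^\star| = OPT$ and consider the subgraph $H = M \triangle M^\star$. Every vertex has degree at most $2$ in $H$, so $H$ is a vertex-disjoint union of paths and cycles, each of which alternates between edges of $M$ and edges of $M^\star$; in particular every cycle has equal numbers of $M$- and $M^\star$-edges. Let $\mathcal{Q}$ be the set of connected components of $H$ that are $M$-augmenting paths, i.e.\ paths whose two endpoints are both unmatched by $M$ (equivalently, paths that begin and end with an $M^\star$-edge). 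Since the components of $H$ are pairwise vertex-disjoint, so are the paths in $\mathcal{Q}$.

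First I would lower bound $|\mathcal{Q}|$. Writing $\operatorname{gain}(C) = |C \cap M^\star| - |C \cap M|$ for a component $C$, we have $\sum_C \operatorname{gain}(C) = |M^\star| - |M| = OPT - |M|$. Cycles contribute $0$, and any path contributes $-1$, $0$, or $+1$, with a contribution of $+1$ exactly for the $M$-augmenting paths; hence $|\mathcal{Q}| \ge OPT - |M|$. At this point I would also dispose of the degenerate case: if $OPT(1-1/\ell) - |M| \le 0$, the proposition is immediate since the empty collection $\mathcal{P} = \emptyset$ already satisfies the claimed inequality. So assume henceforth that $OPT(1-1/\ell) - |M| > 0$, which in particular gives $OPT > |M|$ (and $\ell > 1$).

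Next I would discard the long paths. Call a path in $\mathcal{Q}$ \emph{short} if it has at most $2\ell + 1$ edges and \emph{long} otherwise, and let $\mathcal{P} \subseteq \mathcal{Q}$ be the short ones. A long path has at least $2\ell + 2$ edges and alternates colors, so it contains at least $\ell$ edges of $M$. The paths in $\mathcal{Q}$ are edge-disjoint and all their $M$-edges lie in $M$, so there are at most $|M|/\ell$ long paths. Therefore
\[
|\mathcal{P}| \;\ge\; |\mathcal{Q}| - \frac{|M|}{\ell} \;\ge\; OPT - |M| - \frac{|M|}{\ell} \;\ge\; OPT\!\left(1 - \tfrac1\ell\right) - |M| \;\ge\; \tfrac12\!\left(OPT\!\left(1 - \tfrac1\ell\right) - |M|\right),
\]
where the third inequality uses $|M| \le OPT$ (hence $|M|/\ell \le OPT/\ell$) and the last uses that the bracketed quantity is nonnegative. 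Every path in $\mathcal{P}$ has at most $2\ell+1$ edges by construction, which finishes the argument.

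There is no genuine obstacle here: this is a textbook fact and the argument is entirely elementary. The only points that require a little care are the counting step (cycles cancel, so the number of augmenting components is at least $OPT - |M|$), the bookkeeping that a path with more than $2\ell+1$ edges necessarily spends at least $\ell$ of them on $M$ (so that edge-disjointness caps the number of long paths by $|M|/\ell$), and the separate handling of the case $OPT(1-1/\ell)-|M|\le 0$ where only the empty collection is needed.
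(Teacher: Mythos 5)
Your proof is correct, and in fact it proves something slightly stronger than the stated proposition: in the unweighted setting you obtain $|\mathcal{P}| \ge OPT(1-1/\ell) - |M|$, with no factor of $1/2$ needed. The paper does not give its own proof of this proposition; it cites it as a black box from Theorem~2.1 of Pettie--Sanders via Proposition~7.1 of Harris, where the statement is established for \emph{weighted} matchings (and the $1/2$ loss is genuinely incurred by the weighted argument). Your symmetric-difference argument is the standard, cleaner route for the unweighted specialization: the counting via $\operatorname{gain}(C)$ giving $|\mathcal{Q}|\ge OPT - |M|$, the observation that any augmenting component of length exceeding $2\ell+1$ uses at least $\ell$ (indeed at least $\ell+1$) edges of $M$ so that edge-disjointness caps the number of long components by $|M|/\ell$, and the use of $|M|\le OPT$ to replace $|M|/\ell$ by $OPT/\ell$, are all sound, and the degenerate case $OPT(1-1/\ell)-|M|\le 0$ is correctly dispatched with the empty collection.
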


\subsection{Concentration Inequalities}
We utilize the following concentration bound for sums of random variables with limited dependence as well as standard Chernoff bounds.

\begin{theorem}[Inequality (3) of \cite{Lampert18}, derived from Theorem 2.1 of \cite{Janson08}]\label{thm:CorrelatedChernoff}
Consider $n$ random variables $\mathcal A = \{X_1,X_2,\hdots,X_n\}$ with the property that $0\le X_i\le 1$ for all $i$ almost surely. Then

$$\mathbb{P}\left[\sum_{i=1}^n X_i - \sum_{i=1}^n \mathbb{E}[X_i] > \lambda\right] \le \exp\left(-\frac{2\lambda^2}{n \cdot \chi(\mathcal{A})}\right)$$

where $\chi(\mathcal A)$ is the chromatic number of the dependency graph of $\mathcal A$.
\end{theorem}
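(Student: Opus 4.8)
The plan is to obtain this inequality from a standard Chernoff (moment‑generating‑function) argument, but applied after decomposing the index set along a proper coloring of the dependency graph, which is the route taken by Janson. Let $\chi = \chi(\mathcal{A})$ and fix a proper coloring of the dependency graph, giving a partition of $\{1,\dots,n\}$ into independent sets $C_1,\dots,C_\chi$. Because no two indices within a single $C_j$ are adjacent in the dependency graph, the variables $\{X_i : i \in C_j\}$ form a jointly independent family. Put $S = \sum_{i=1}^n X_i$ and $S_j = \sum_{i \in C_j} X_i$, so $S = \sum_{j=1}^\chi S_j$ and $\E[S] = \sum_{j=1}^\chi \E[S_j]$.

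First I would bound the centered moment generating function $\E[e^{u(S-\E[S])}]$ for an arbitrary $u>0$. Writing it as the product $\prod_{j=1}^\chi e^{u(S_j-\E[S_j])}$ and applying H\"older's inequality with all exponents equal to $\chi$ (their reciprocals sum to $1$) gives
\[
\E\!\left[e^{u(S-\E[S])}\right] \le \prod_{j=1}^{\chi}\left(\E\!\left[e^{u\chi(S_j-\E[S_j])}\right]\right)^{1/\chi}.
\]
Inside a class $C_j$, joint independence factorizes the expectation over $i\in C_j$, and Hoeffding's lemma (valid since $0\le X_i\le 1$, so $\E[e^{s(X_i-\E[X_i])}]\le e^{s^2/8}$) yields $\E[e^{u\chi(S_j-\E[S_j])}]\le e^{|C_j|(u\chi)^2/8}$. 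Raising to the power $1/\chi$, multiplying over $j$, and using $\sum_j|C_j|=n$ gives $\E[e^{u(S-\E[S])}]\le e^{n u^2\chi/8}$.

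The bound then follows from Markov's inequality: $\Pr[S-\E[S]>\lambda]\le e^{-u\lambda}\,\E[e^{u(S-\E[S])}]\le\exp(-u\lambda+nu^2\chi/8)$, and choosing the minimizer $u=4\lambda/(n\chi)$ produces $\exp\!\left(-2\lambda^2/(n\chi(\mathcal{A}))\right)$, as claimed.

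The step I expect to require the most care is the assertion that the variables in a single color class are \emph{mutually} (not merely pairwise) independent, since that is precisely what legitimizes the factorization inside each $C_j$; this is where one must work with the correct notion of a dependency graph — one in which a set of indices spanning no internal edge determines a jointly independent family — rather than a weaker pairwise version. The remaining ingredients are routine: H\"older's inequality applies because boundedness of the $X_i$ makes all the moment generating functions finite, and the final step is a one‑variable optimization.
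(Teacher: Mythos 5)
The paper does not prove this theorem: it is imported as a black-box from the literature (Lampert, citing Janson), so there is no in-paper proof to compare against. Your reconstruction is correct and is essentially the standard Janson argument: partition the index set into the $\chi$ color classes of a proper coloring, observe that each class is a jointly independent family by the definition of a dependency graph, bound $\E[e^{u(S-\E S)}]$ by splitting the product across classes via H\"older with all exponents $\chi$, apply Hoeffding's lemma within each class, and optimize $u$. The arithmetic checks out: $\E[e^{u(S-\E S)}]\le e^{nu^2\chi/8}$, and $u=4\lambda/(n\chi)$ gives the exponent $-2\lambda^2/(n\chi)$. You are also right that the crux is the \emph{joint} (not merely pairwise) independence within each color class, which is exactly what a dependency graph (in Janson's sense) guarantees for vertex subsets spanning no edge. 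The only thing worth flagging for full rigor is that if the dependency graph has no edges then $\chi=1$ and the argument degenerates to the ordinary Hoeffding bound, and more generally one should confirm $\chi\ge 1$ so the H\"older exponents are well-defined; this is automatic.
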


\begin{theorem}[Chernoff Bound]\label{thm:Chernoff}
Let $X = \sum_{i=1}^n X_i$, where the $X_i$s are independent random variables with value 0 or 1. Let $\mu = \mathbb{E}[X]$. Then
\begin{enumerate}
    \item $\mathbb{P}[X \ge (1 + \delta)\mu] \le e^{-\frac{\delta^2\mu}{2+\delta}}$ for all $\delta > 0$.
    \item $\mathbb{P}[X \le (1 - \delta)\mu] \le e^{-\mu\delta^2/2}$ for all $0 < \delta < 1$.
    \item $\mathbb{P}[|X - \mu| \ge \delta\mu] \le 2e^{-\mu\delta^2/3}$ for all $0 < \delta < 1$.
\end{enumerate}
\end{theorem}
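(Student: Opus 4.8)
The plan is to use the standard exponential-moment (Chernoff) method: bound the moment generating function of $X$, apply Markov's inequality to $e^{tX}$, optimize the free parameter $t$, and then simplify the resulting expression into the clean exponential forms stated in the theorem via elementary calculus.

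First I would prove the upper tail. Write $p_i = \mathbb{P}[X_i = 1]$, so $\mu = \sum_{i=1}^n p_i$. For any $t > 0$, Markov's inequality on the nonnegative random variable $e^{tX}$ gives $\mathbb{P}[X \ge (1+\delta)\mu] \le \mathbb{E}[e^{tX}] \, e^{-t(1+\delta)\mu}$. Independence lets me factor $\mathbb{E}[e^{tX}] = \prod_{i=1}^n \mathbb{E}[e^{tX_i}] = \prod_{i=1}^n \bigl(1 + p_i(e^t-1)\bigr) \le \prod_{i=1}^n e^{p_i(e^t-1)} = e^{\mu(e^t-1)}$, using $1 + x \le e^x$. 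Substituting and choosing $t = \ln(1+\delta) > 0$ yields the classical bound $\mathbb{P}[X \ge (1+\delta)\mu] \le \bigl(e^\delta/(1+\delta)^{1+\delta}\bigr)^\mu$. The final step is the elementary inequality $\delta - (1+\delta)\ln(1+\delta) \le -\delta^2/(2+\delta)$, valid for all $\delta > 0$; raising to the $\mu$-th power turns this into the claimed bound $e^{-\delta^2\mu/(2+\delta)}$.

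The lower tail is symmetric: for $s > 0$, $\mathbb{P}[X \le (1-\delta)\mu] \le \mathbb{E}[e^{-sX}] \, e^{s(1-\delta)\mu} \le e^{\mu(e^{-s}-1) + s(1-\delta)\mu}$, and the optimal choice $s = \ln\frac{1}{1-\delta} > 0$ gives $\mathbb{P}[X \le (1-\delta)\mu] \le \bigl(e^{-\delta}/(1-\delta)^{1-\delta}\bigr)^\mu$. The clean form $e^{-\delta^2\mu/2}$ follows from the elementary inequality $(1-\delta)\ln(1-\delta) \ge -\delta + \delta^2/2$ for $0 < \delta < 1$, which one checks from the Taylor expansion of $\ln(1-\delta)$. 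The third bound is then immediate from a union bound: $\mathbb{P}[|X-\mu| \ge \delta\mu] \le \mathbb{P}[X \ge (1+\delta)\mu] + \mathbb{P}[X \le (1-\delta)\mu] \le e^{-\delta^2\mu/(2+\delta)} + e^{-\delta^2\mu/2} \le 2e^{-\delta^2\mu/3}$, using $2+\delta \le 3$ and $2 \le 3$ on the exponents for $0 < \delta < 1$.

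The only genuinely non-routine part — the main obstacle — is verifying the two one-variable inequalities $\delta - (1+\delta)\ln(1+\delta) \le -\delta^2/(2+\delta)$ and $(1-\delta)\ln(1-\delta) \ge -\delta + \delta^2/2$ that convert the raw MGF bounds into the stated closed forms. Each is handled by defining the difference function $g(\delta)$, noting $g(0) = 0$, and showing $g$ has the right monotonicity on the relevant interval via one more derivative; these are standard but slightly delicate calculus computations. Everything else — Markov's inequality, factoring the MGF by independence, the bound $1+x \le e^x$, the scalar optimization over $t$ (just zeroing a derivative), and the final union bound — is mechanical.
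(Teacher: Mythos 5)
Your proof is correct, and it is the standard exponential-moment (Chernoff) derivation. The paper itself states this theorem as a known reference result and does not supply a proof, so there is no paper proof to compare against; your argument — Markov on $e^{tX}$, factoring the moment generating function by independence, bounding $1+x\le e^x$, optimizing $t$, and closing with the two elementary inequalities $\delta-(1+\delta)\ln(1+\delta)\le-\delta^2/(2+\delta)$ and $(1-\delta)\ln(1-\delta)\ge-\delta+\delta^2/2$ followed by a union bound — is exactly the textbook route and fills that gap correctly.

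One small remark worth spelling out: the first of your two ``delicate'' inequalities follows cleanly from $\ln(1+\delta)\ge\tfrac{2\delta}{2+\delta}$ for $\delta\ge 0$ (which one checks since the difference vanishes at $0$ and has derivative $\tfrac{\delta^2}{(1+\delta)(2+\delta)^2}\ge 0$), since then
\[
\delta-(1+\delta)\ln(1+\delta)\le\delta-\frac{2\delta(1+\delta)}{2+\delta}=\frac{-\delta^2}{2+\delta};
\]
the second follows because $h(\delta)=(1-\delta)\ln(1-\delta)+\delta-\delta^2/2$ satisfies $h(0)=0$ and $h'(\delta)=-\ln(1-\delta)-\delta\ge 0$ on $(0,1)$. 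With these lemmas made explicit, every step of your proposal goes through.
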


\subsection{Martingales}\label{app:martingales}

We start with the following useful folklore observations about the conditional expectation of a function of a random variable, and conditional variance.
\begin{observation}\label{obs:ConditioningOnFunctionOfRV}
    Let $Y$ and $Z$ be two random variables such that each is a function of the other. For any random variable $X$, we have that: 
    $$\mathbb{E}[X\mid Y] = \mathbb{E}[X\mid Z]$$
\end{observation}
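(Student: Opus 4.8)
The statement to prove is Observation~\ref{obs:ConditioningOnFunctionOfRV}: if $Y$ and $Z$ are two random variables such that each is a (measurable) function of the other, then $\mathbb{E}[X \mid Y] = \mathbb{E}[X \mid Z]$ for any random variable $X$.

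\medskip

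\textbf{Proof proposal.} The plan is to argue at the level of $\sigma$-algebras, since conditional expectation depends only on the $\sigma$-algebra generated by the conditioning variable. First I would recall that $\mathbb{E}[X \mid Y]$ is by definition $\mathbb{E}[X \mid \sigma(Y)]$, where $\sigma(Y)$ is the $\sigma$-algebra generated by $Y$, and similarly $\mathbb{E}[X \mid Z] = \mathbb{E}[X \mid \sigma(Z)]$. The key step is then to show $\sigma(Y) = \sigma(Z)$. Since $Z = f(Y)$ for some measurable $f$, every set of the form $\{Z \in B\} = \{f(Y) \in B\} = \{Y \in f^{-1}(B)\}$ lies in $\sigma(Y)$; as such sets generate $\sigma(Z)$, we get $\sigma(Z) \subseteq \sigma(Y)$. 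By the symmetric hypothesis that $Y = g(Z)$ for some measurable $g$, the reverse inclusion $\sigma(Y) \subseteq \sigma(Z)$ holds as well, so $\sigma(Y) = \sigma(Z)$. Once the two conditioning $\sigma$-algebras coincide, the conclusion $\mathbb{E}[X \mid Y] = \mathbb{E}[X \mid \sigma(Y)] = \mathbb{E}[X \mid \sigma(Z)] = \mathbb{E}[X \mid Z]$ is immediate from the (almost sure) uniqueness of conditional expectation given a fixed $\sigma$-algebra.

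\medskip

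For a self-contained alternative that avoids invoking measure theory machinery, I would instead verify directly that $\mathbb{E}[X \mid Z]$ satisfies the two defining properties of $\mathbb{E}[X \mid Y]$: it must be $\sigma(Y)$-measurable, and it must satisfy $\mathbb{E}[\mathbf{1}_A \, \mathbb{E}[X \mid Z]] = \mathbb{E}[\mathbf{1}_A X]$ for every $A \in \sigma(Y)$. Measurability holds because $\mathbb{E}[X \mid Z]$ is a function of $Z$, hence (composing with $Y = g(Z)$, or rather noting $Z$ is a function of $Y$) it is $\sigma(Y)$-measurable. The averaging identity holds because every $A \in \sigma(Y)$ also lies in $\sigma(Z)$ (by the inclusion argument above), and $\mathbb{E}[X \mid Z]$ satisfies the averaging identity over all of $\sigma(Z)$. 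By the almost-sure uniqueness of the object satisfying these two properties, $\mathbb{E}[X \mid Z] = \mathbb{E}[X \mid Y]$.

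\medskip

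\textbf{Main obstacle.} There is no serious obstacle here; this is a standard measure-theoretic fact, and the only care needed is to state cleanly why $\sigma(Y) = \sigma(Z)$ under the ``each is a function of the other'' hypothesis and to appeal correctly to the uniqueness (up to a.s.\ equality) of conditional expectation. The mild subtlety worth flagging is the implicit assumption that the functions relating $Y$ and $Z$ are measurable (otherwise $\sigma(Y) = \sigma(Z)$ can fail); in the paper's applications $Z$ is always an explicit deterministic reindexing or bijective transformation of $Y$ (e.g.\ $F_i$ versus $E_i$, which are related by multiplication by the deterministic constant $\gamma^{i-1}$), so measurability is automatic and I would note this in passing.
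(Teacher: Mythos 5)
Your proof is correct and is the standard argument; note that the paper states this as a folklore observation in Appendix~\ref{app:martingales} without supplying any proof, so there is no paper proof to compare against. Your reduction to showing $\sigma(Y)=\sigma(Z)$ and then invoking almost-sure uniqueness of conditional expectation is exactly the right approach, and your remark about the implicit measurability assumption is a fair caveat (harmless here since the paper only ever applies this to deterministic reindexings and invertible affine transformations such as $F_i = |E_i|/\gamma^{i-1}$).
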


\begin{observation}\label{obs:conditionalVariance}
    Let $X$ and $Y$ be random variables. It holds that
    $Var[X\mid Y] = Var[X+f(Y)\mid Y]$
    where $f(Y)$ is a function of $Y$. 
\end{observation}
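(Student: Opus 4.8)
The plan is to unwind the definition of conditional variance and exploit that $f(Y)$ acts as a constant once we condition on $Y$. Recall $Var[X \mid Y] = \mathbb{E}\big[(X - \mathbb{E}[X \mid Y])^2 \mid Y\big]$, and the analogous identity with $X + f(Y)$ in place of $X$. Thus the entire argument reduces to checking that the centered variable $X + f(Y) - \mathbb{E}[X + f(Y) \mid Y]$ equals $X - \mathbb{E}[X \mid Y]$ almost surely.

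First I would apply the standard ``taking out what is known'' property of conditional expectation: since $f(Y)$ is a function of $Y$, we have $\mathbb{E}[X + f(Y) \mid Y] = \mathbb{E}[X \mid Y] + f(Y)$. Subtracting then gives $X + f(Y) - \mathbb{E}[X + f(Y) \mid Y] = X + f(Y) - \mathbb{E}[X \mid Y] - f(Y) = X - \mathbb{E}[X \mid Y]$, so the two centered variables agree pointwise.

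Finally I would square both sides and take the conditional expectation given $Y$; since the random variables inside are equal, so are the resulting conditional second moments, which yields $Var[X + f(Y) \mid Y] = \mathbb{E}\big[(X - \mathbb{E}[X \mid Y])^2 \mid Y\big] = Var[X \mid Y]$, as claimed.

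There is essentially no obstacle here; the only point needing care is the measurability / pull-out step, which is entirely standard. In fact, in the discrete setting used throughout the paper one can sidestep it: conditioning on the event $\{Y = y\}$ turns $f(Y)$ into the genuine constant $f(y)$, and $Var[X + c \mid Y = y] = Var[X \mid Y = y]$ for every constant $c$; letting $y$ range over the support of $Y$ then gives the desired identity of the random variables $Var[X + f(Y) \mid Y]$ and $Var[X \mid Y]$.
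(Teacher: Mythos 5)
Your proof is correct: the pull-out property $\mathbb{E}[X + f(Y) \mid Y] = \mathbb{E}[X \mid Y] + f(Y)$ immediately shows the two centered variables coincide, and squaring and taking $\mathbb{E}[\cdot \mid Y]$ finishes the argument. The paper treats this as a folklore observation and gives no proof, so there is nothing to compare against; your argument is the standard one and is complete.
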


Next, we define martingales, supermartingales, and submartingales.  
\begin{definition}\label{def:martingale}\textbf{[Martingale, Supermartingale, and Submartingale]}\\
    A sequence of random variables $X_1,\cdots, X_t$ is called a martingale if $\mathbb{E}[X_i\mid X_1,\cdots, X_{i-1}] = X_{i-1}$ for any $i\geq 2$, a supermartingale if $\mathbb{E}[X_i\mid X_1,
    \cdots, X_{i-1}]\geq X_{i-1}$ for any $i\geq 2$, and a submartingale if $\mathbb{E}[X_i\mid X_1,\cdots, X_{i-1}]\leq X_{i-1}$ for any $i\geq 2$.
\end{definition}

We note that in the above definition of super/submartingales, we follow the definition of Chung and Lu~\cite{ChungL06}. In some other textbooks, the terms are reversed and the condition $\mathbb{E}[X_i\mid X_1,
    \cdots, X_{i-1}]\geq X_{i-1}$ corresponds to a submartingale, and the condition $\mathbb{E}[X_i\mid X_1,
    \cdots, X_{i-1}]\leq X_{i-1}$ corresponds to a supermartingale.

\subsubsection{Martingale Inequalities}

In this section we state some known martingale inequalities.


\begin{theorem}\label{thm:BoundedVarianceMartingaleUpperBound}\textbf{(Theorem 6.1 in~\cite{ChungL06}) Bounded Variance Martingale: Upper Bound}\\
    Let $X_1,\cdots, X_t$ be a martingale sequence satisfying:
\begin{enumerate}
    \item $Var[X_i\mid X_1,\cdots, X_{i-1}] \leq \phi_i$
    \item $|X_i-X_{i-1}| \leq M$
\end{enumerate}
$\forall i\geq 2$, where $\phi_i$ and $M$ are non-negative constants. Then for $\lambda>0$, 

$$\mathbb{P}[X_t-\mathbb{E}[X_t]\geq \lambda]\leq \exp\left({-\frac{\lambda^2}{2\left((\sum_{i=1}^t \phi_i) + M\lambda/3\right)}}\right)$$
 
\end{theorem}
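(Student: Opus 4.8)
The plan is to prove this Bernstein-type martingale inequality by the standard exponential-moment (Chernoff) method applied to the martingale difference sequence, together with the two elementary analytic facts about the function $g(x) = (e^x - 1 - x)/x^2$.

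First I would set up notation and reduce to a convenient normalization: take $X_1$ to be deterministic (the usual convention in the Chung--Lu setup; a random $X_1$ only requires an extra term bounding $\mathrm{Var}[X_1]$, which is consistent with the sum in the statement running from $i=1$), so that $\mathbb{E}[X_t] = X_1$ and it suffices to bound $\mathbb{P}[X_t - X_1 \ge \lambda]$. Write $Y_i := X_i - X_{i-1}$ for $i \ge 2$ and let $\mathcal{F}_{i-1}$ be the $\sigma$-algebra generated by $X_1,\dots,X_{i-1}$. The martingale property gives $\mathbb{E}[Y_i \mid \mathcal{F}_{i-1}] = 0$, hypothesis (2) gives $|Y_i| \le M$, and hypothesis (1) gives $\mathbb{E}[Y_i^2 \mid \mathcal{F}_{i-1}] = \mathrm{Var}[X_i \mid \mathcal{F}_{i-1}] \le \phi_i$.

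Next comes the core step. Using that $g(x) = (e^x - 1 - x)/x^2$ (with $g(0) = 1/2$) is nondecreasing, for $\theta > 0$ and any $y \le M$ one has $e^{\theta y} = 1 + \theta y + (\theta y)^2 g(\theta y) \le 1 + \theta y + \theta^2 y^2 g(\theta M)$. Taking $\mathbb{E}[\,\cdot \mid \mathcal{F}_{i-1}]$ and using $1+z \le e^z$,
$$\mathbb{E}\big[e^{\theta Y_i} \mid \mathcal{F}_{i-1}\big] \le 1 + \theta^2 g(\theta M)\,\mathbb{E}[Y_i^2 \mid \mathcal{F}_{i-1}] \le \exp\!\big(\theta^2 g(\theta M)\,\phi_i\big).$$
Iterating this bound from $i=t$ down to $i=2$ via the tower rule yields $\mathbb{E}\big[e^{\theta(X_t - X_1)}\big] \le \exp\!\big(\theta^2 g(\theta M)\,V\big)$, where $V := \sum_{i=1}^t \phi_i$ (the extra $\phi_1 \ge 0$ only weakens the estimate). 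By Markov's inequality, $\mathbb{P}[X_t - X_1 \ge \lambda] \le \exp\!\big({-}\theta\lambda + \theta^2 g(\theta M) V\big)$ for every $\theta > 0$.

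Finally I would optimize over $\theta$. Using the series estimate $g(x) \le \tfrac{1}{2}\cdot\tfrac{1}{1 - x/3}$ for $0 \le x < 3$ (which follows from $1/k! \le 1/(2\cdot 3^{k-2})$ for $k \ge 2$), plug in $\theta = \lambda/(V + M\lambda/3)$; this satisfies $\theta M < 3$, gives $1 - \theta M/3 = V/(V + M\lambda/3)$, and makes the exponent collapse exactly to $-\lambda^2/\big(2(V + M\lambda/3)\big)$, which is the claimed bound since $\mathbb{E}[X_t] = X_1$. The routine parts are the monotonicity of $g$, the $1/(1-x/3)$ upper estimate, and the algebra after substituting the optimal $\theta$; the only mildly delicate point is the bookkeeping for a possibly random $X_1$ versus the index range of the variance hypothesis, which I would dispatch in one sentence by the deterministic-$X_1$ normalization noted above.
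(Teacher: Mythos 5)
Your proof is correct and is the standard Bernstein-type exponential-moment argument: the paper does not prove this theorem but cites it as Theorem 6.1 of Chung--Lu, and the proof you give (conditional MGF bound via the nondecreasing function $g(x)=(e^x-1-x)/x^2$, the estimate $g(x)\le \tfrac12(1-x/3)^{-1}$, tower-rule iteration, and the choice $\theta=\lambda/(V+M\lambda/3)$) is exactly the argument that reference uses. The one delicate point you flag, the treatment of a possibly random $X_1$ versus the variance hypothesis only holding for $i\ge 2$, is handled correctly by your deterministic-$X_1$ normalization, under which $\mathbb{E}[X_t]=X_1$ and including $\phi_1\ge 0$ in the sum only weakens the bound.
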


\begin{theorem}\label{thm:BoundedVarianceMartingaleLowerBound}\textbf{(Theorem 6.5 in~\cite{ChungL06}) Bounded Variance Martingale: Lower Bound}\\
    Let $X_1,\cdots, X_t$ be a martingale sequence satisfying:
\begin{enumerate}
    \item $Var[X_i\mid X_1,\cdots, X_{i-1}] \leq \phi_i$
    \item $X_{i-1} - X_i \leq M$
\end{enumerate}
 $\forall i\geq 2$, where $\phi_i$ and $M$ are non-negative constants. Then for $\lambda>0$,   

$$\mathbb{P}[X_t-\mathbb{E}[X_t] \leq -\lambda]\leq \exp\left(-\frac{\lambda^2}{2\left((\sum_{i=1}^t \phi_i ) + M\lambda/3\right)}\right)$$
 
\end{theorem}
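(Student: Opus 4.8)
The statement is verbatim Theorem~6.5 of~\cite{ChungL06}, so in the paper it is simply cited; for completeness I sketch the standard self-contained derivation via the exponential-moment (Bernstein--Freedman) method. Write $\sigma^2 := \sum_{i=1}^t \phi_i$. Set $X_0 := \mathbb{E}[X_1]$ (a constant, so $X_0, X_1, \dots, X_t$ is still a martingale) and let $D_i := X_i - X_{i-1}$ for $i \in [t]$, so that $X_t - \mathbb{E}[X_t] = \sum_{i=1}^t D_i$. The hypotheses say $\mathbb{E}[D_i \mid X_1,\dots,X_{i-1}] = 0$, $\mathbb{E}[D_i^2 \mid X_1,\dots,X_{i-1}] = \mathrm{Var}[X_i \mid X_1,\dots,X_{i-1}] \le \phi_i$, and $-D_i = X_{i-1}-X_i \le M$ (with the usual convention that $\phi_1$ also dominates $\mathrm{Var}[X_1]$; equivalently one may assume $X_1$ deterministic). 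For any $s>0$, Markov's inequality applied to $e^{-s(X_t-\mathbb{E}[X_t])}$ gives
\[
\mathbb{P}\!\left[X_t-\mathbb{E}[X_t]\le -\lambda\right]\le e^{-s\lambda}\,\mathbb{E}\!\left[e^{-s\sum_{i=1}^t D_i}\right],
\]
so it suffices to bound this moment generating function and then optimize over $s$.

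To bound the MGF I peel off one conditional factor at a time. The key scalar estimate is that $\psi(x) := (e^x-1-x)/x^2$, $\psi(0):=\tfrac12$, is nondecreasing on $\mathbb{R}$ (indeed $\psi(x)=\int_0^1\!\int_0^t e^{ux}\,du\,dt$, an average of increasing functions of $x$). Since $-sD_i \le sM$, the identity $e^{v}=1+v+v^2\psi(v)$ together with monotonicity gives the pointwise bound $e^{-sD_i}\le 1 - sD_i + s^2 D_i^2\,\psi(sM)$. Taking $\mathbb{E}[\cdot\mid X_1,\dots,X_{i-1}]$ and using $\mathbb{E}[D_i\mid\cdot]=0$ and $\mathbb{E}[D_i^2\mid\cdot]\le\phi_i$ (with $\psi(sM)\ge 0$) yields
\[
\mathbb{E}\!\left[e^{-sD_i}\mid X_1,\dots,X_{i-1}\right]\le 1+s^2\psi(sM)\,\phi_i\le \exp\!\left(s^2\psi(sM)\,\phi_i\right).
\]
Iterating the tower rule from $i=t$ down to $i=1$ gives $\mathbb{E}[e^{-s\sum_i D_i}]\le \exp(s^2\psi(sM)\,\sigma^2)$, hence $\mathbb{P}[X_t-\mathbb{E}[X_t]\le -\lambda]\le \exp(-s\lambda + s^2\psi(sM)\,\sigma^2)$.

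Finally I would replace $\psi$ by a rational majorant to make the optimization explicit: for $0\le x<3$ one has $\psi(x)\le \tfrac{1}{2(1-x/3)}$, which follows termwise from $e^x-1-x=\sum_{k\ge 2}\tfrac{x^k}{k!}\le \sum_{k\ge 2}\tfrac{x^k}{2\cdot 3^{k-2}}=\tfrac{x^2/2}{1-x/3}$ since $k!\ge 2\cdot 3^{k-2}$ for all $k\ge 2$. This turns the exponent into $-s\lambda+\tfrac{s^2\sigma^2}{2(1-sM/3)}$. Choosing $s=\dfrac{\lambda}{\sigma^2+M\lambda/3}$ is admissible ($sM/3<1$), gives $1-sM/3=\dfrac{\sigma^2}{\sigma^2+M\lambda/3}$, and collapses the exponent to exactly $-\dfrac{\lambda^2}{2(\sigma^2+M\lambda/3)}=-\dfrac{\lambda^2}{2\left(\sum_{i=1}^t\phi_i + M\lambda/3\right)}$, which is the claimed bound.

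The main obstacle is getting the two technical scalar lemmas exactly right with the constant $\tfrac13$: the monotonicity of $\psi$ combined with the one-sided exponential inequality $e^v\le 1+v+v^2\psi(b)$ valid for $v\le b$ (applied with $v=-sD_i$, $b=sM\ge 0$), and the rational bound $\psi(x)\le\tfrac{1}{2(1-x/3)}$ on $[0,3)$; everything else is the routine telescoping and the one-line choice of $s$. A minor bookkeeping point is the first step $D_1=X_1-\mathbb{E}[X_1]$, where the downward-jump hypothesis is vacuous — this is harmless under the standard conventions of~\cite{ChungL06} (take $X_0:=\mathbb{E}[X_1]$ and let the index-$1$ term of $\sum\phi_i$ account for $\mathrm{Var}[X_1]$, or assume $X_1$ deterministic).
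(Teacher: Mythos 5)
Your proposal is correct, and there is nothing in the paper to compare it against: Theorem~\ref{thm:BoundedVarianceMartingaleLowerBound} is used purely as a black box, cited as Theorem~6.5 of~\cite{ChungL06}, with no proof given. Your derivation is the standard Bernstein--Freedman exponential-moment argument and is essentially the proof in Chung--Lu itself: the decomposition into martingale differences $D_i$, the monotonicity of $\psi(x)=(e^x-1-x)/x^2$ combined with the one-sided bound $e^{-sD_i}\le 1-sD_i+s^2D_i^2\psi(sM)$ (valid since $-sD_i\le sM$), the tower-rule telescoping to $\exp\!\left(s^2\psi(sM)\sum_i\phi_i\right)$, the majorant $\psi(x)\le \tfrac{1}{2(1-x/3)}$ via $k!\ge 2\cdot 3^{k-2}$, and the choice $s=\lambda/(\sigma^2+M\lambda/3)$ all check out and yield exactly the stated denominator $2\left(\sum_{i=1}^t\phi_i+M\lambda/3\right)$. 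Your bookkeeping remark about the index-$1$ term (the statement only constrains $i\ge 2$, so either $X_1$ is taken deterministic or $\phi_1$ must dominate $\mathrm{Var}[X_1]$) is a fair observation about the restatement's indexing and is handled the same way in the source; it does not affect how the theorem is applied in the paper, where the relevant martingales start from a deterministic value.
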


\begin{theorem}\label{thm:SuperMartingaleConcentrationVariance}\textbf{(Theorem 7.5 in~\cite{ChungL06}) Bounded Variance Supermartingale}\\
Let $X_1,\cdots, X_n$ be a supermartingale satisfying:
\begin{enumerate}
    \item $Var[X_i\mid X_1,\cdots, X_{i-1}] \leq \phi_i$
    \item $\mathbb{E}[X_i\mid X_1,\cdots, X_{i-1}] - X_i \leq M$
\end{enumerate}
$\forall i\geq 2$, where $\phi_i$ and $M$ are non-negative constants. Then for any $0<\lambda\leq X_1$, 
$$\mathbb{P}[X_t\leq X_1 - \lambda]\leq \exp\left({-\frac{\lambda^2}{2\left((\sum_{i=1}^t \phi_i) + M\lambda/3\right)}}\right)$$
\end{theorem}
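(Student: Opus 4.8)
The plan is to deduce Theorem~\ref{thm:SuperMartingaleConcentrationVariance} from the bounded-variance \emph{martingale} lower-tail bound, Theorem~\ref{thm:BoundedVarianceMartingaleLowerBound}, by compensating the supermartingale $X$ with its (nonnegative) predictable drift to obtain a genuine martingale $W$, observing that the lower-tail event for $X$ is contained in the lower-tail event for $W$, and checking that the variance bound and the one-sided increment bound survive the compensation.

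First I would define, for $i\ge 2$, the predictable drift $D_i := \mathbb{E}[X_i \mid X_1,\dots,X_{i-1}] - X_{i-1}$, which is nonnegative because $X$ is a supermartingale (in the Chung--Lu convention used here). Set $W_1 := X_1$ and $W_i := X_i - \sum_{j=2}^{i} D_j$ for $i\ge 2$. Since $\sum_{j=2}^{i} D_j$ is a function of $X_1,\dots,X_{i-1}$, a direct computation gives $\mathbb{E}[W_i \mid X_1,\dots,X_{i-1}] = (X_{i-1}+D_i) - \sum_{j=2}^{i}D_j = W_{i-1}$, so $W_1,\dots,W_n$ is a martingale and $\mathbb{E}[W_t] = W_1 = X_1$ (here $X_1$ is the fixed constant in the statement). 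Because each $D_j\ge 0$, the event $\{X_t \le X_1-\lambda\}$ forces $W_t = X_t - \sum_{j=2}^{t}D_j \le X_1-\lambda = \mathbb{E}[W_t]-\lambda$, hence $\mathbb{P}[X_t\le X_1-\lambda] \le \mathbb{P}[W_t - \mathbb{E}[W_t] \le -\lambda]$.

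It remains to verify the hypotheses of Theorem~\ref{thm:BoundedVarianceMartingaleLowerBound} for $W$. For the one-sided increment, $W_{i-1}-W_i = (X_{i-1}-X_i) + D_i = \mathbb{E}[X_i \mid X_1,\dots,X_{i-1}] - X_i \le M$ by hypothesis~(2). For the variance, the predictable term cancels in $W_i - \mathbb{E}[W_i\mid X_1,\dots,X_{i-1}] = X_i - \mathbb{E}[X_i\mid X_1,\dots,X_{i-1}]$ (this is exactly the translation invariance of \Cref{obs:conditionalVariance}), so $Var[W_i\mid X_1,\dots,X_{i-1}] = Var[X_i\mid X_1,\dots,X_{i-1}] \le \phi_i$; and since $W_1,\dots,W_{i-1}$ are functions of $X_1,\dots,X_{i-1}$ and $\mathbb{E}[W_i\mid X_1,\dots,X_{i-1}] = W_{i-1}$ is determined by $W_1,\dots,W_{i-1}$, the law of total variance gives $Var[W_i\mid W_1,\dots,W_{i-1}] \le \phi_i$ as well, while hypothesis~(2) is an almost-sure bound and so is unaffected by the coarser conditioning. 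Applying Theorem~\ref{thm:BoundedVarianceMartingaleLowerBound} to $W$ with these $\phi_i$ and $M$ yields $\mathbb{P}[W_t-\mathbb{E}[W_t]\le -\lambda] \le \exp\left(-\lambda^2/\big(2(\sum_{i=1}^{t}\phi_i + M\lambda/3)\big)\right)$, which combined with the inclusion above is precisely the claimed inequality. (The extra restriction $\lambda\le X_1$ in the statement plays no role in this argument, since Theorem~\ref{thm:BoundedVarianceMartingaleLowerBound} holds for all $\lambda>0$; it is only meaningful when one additionally knows $X_t\ge 0$.)

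The only delicate point, and thus the main obstacle, is the bookkeeping of which collection of random variables one conditions on: Theorem~\ref{thm:BoundedVarianceMartingaleLowerBound} is phrased with conditioning on $W_1,\dots,W_{i-1}$, whereas the natural hypotheses on $X$ and the compensation are phrased in terms of $X_1,\dots,X_{i-1}$, and the $\sigma$-algebra generated by the former may be strictly coarser. Transferring the variance bound from the fine conditioning to the coarse conditioning without losing constants requires exactly the law of total variance together with the tower property; once this is set up, everything else is routine.
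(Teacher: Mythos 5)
The paper does not prove this statement: it is imported verbatim as Theorem 7.5 of Chung and Lu~\cite{ChungL06}, alongside the sibling Theorems~\ref{thm:BoundedVarianceMartingaleUpperBound}, \ref{thm:BoundedVarianceMartingaleLowerBound}, and \ref{thm:SubMartingaleConcentrationVariance}, and is used only as a black box (notably in the proofs of Lemmas~\ref{lem:E_iConcentratesLowerBound}, \ref{lem:E_iConcentratesUpperBound}, and \ref{lem:FewUnlabeledNodes}). So there is no in-paper proof to compare against.

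Your derivation is nevertheless correct and self-contained given Theorem~\ref{thm:BoundedVarianceMartingaleLowerBound}. The compensator $W_i = X_i - \sum_{j=2}^{i} D_j$ with $D_j = \mathbb{E}[X_j\mid X_1,\dots,X_{j-1}] - X_{j-1}\ge 0$ is the Doob decomposition of the supermartingale, and you verify exactly the three things needed: $W$ is a martingale with $\mathbb{E}[W_t]=X_1$; the event $\{X_t\le X_1-\lambda\}$ is contained in $\{W_t\le \mathbb{E}[W_t]-\lambda\}$ because the subtracted drift is nonnegative; and both hypotheses transfer unchanged because $W_{i-1}-W_i = \mathbb{E}[X_i\mid X_1,\dots,X_{i-1}]-X_i$ and the predictable increment drops out of the conditional variance. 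The ``delicate point'' you raise about switching from conditioning on $X_1,\dots,X_{i-1}$ to conditioning on $W_1,\dots,W_{i-1}$ is resolved correctly by the law of total variance plus the tower property, though it is also worth knowing that Chung and Lu state their Azuma-type bounds relative to an arbitrary filtration, so one may simply run Theorem~\ref{thm:BoundedVarianceMartingaleLowerBound} for $W$ with the filtration $\sigma(X_1,\dots,X_i)$ and skip that step entirely. Your side remark that the constraint $\lambda\le X_1$ is not used is also right; it is vestigial from the setting in which $X_t\ge 0$ is also assumed. In short: this is the standard reduction of supermartingale to martingale concentration, it is executed correctly, and the paper itself simply does not carry it out because it treats the result as a citation.
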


\begin{theorem}\label{thm:SubMartingaleConcentrationVariance}\textbf{(Theorem 7.3 in~\cite{ChungL06}) Bounded Variance Submartingale}\\
Let $X=X_1,\cdots, X_n$ be a submartingale satisfying:
\begin{enumerate}
    \item $Var[X_i\mid X_1,\cdots, X_{i-1}] \leq \phi_i$
    \item $ X_i -\mathbb{E}[X_i\mid X_1,\cdots, X_{i-1}] \leq M$
\end{enumerate}
$\forall i\geq 2$, where $\phi_i$ and $M$ are non-negative constants. Then for $\lambda>0$,   
$$\mathbb{P}[X_t\geq X_1 + \lambda]\leq \exp\left({-\frac{\lambda^2}{2\left((\sum_{i=1}^t \phi_i) + M\lambda/3\right)}}\right)$$
\end{theorem}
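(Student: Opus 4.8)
The plan is to reduce the bounded-variance submartingale bound to the bounded-variance \emph{martingale} upper bound, Theorem~\ref{thm:BoundedVarianceMartingaleUpperBound}, via the Doob decomposition. Recall that in the Chung--Lu convention used here a ``submartingale'' satisfies $\mathbb{E}[X_i\mid X_1,\dots,X_{i-1}]\le X_{i-1}$, i.e.\ it has nonpositive one-step drift, which is exactly what makes the upper-tail bound $\mathbb{P}[X_t\ge X_1+\lambda]$ plausible.

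Concretely, I would write $X_i = M_i + A_i$ with $A_i := \sum_{j=2}^{i}\big(\mathbb{E}[X_j\mid X_1,\dots,X_{j-1}] - X_{j-1}\big)$ the predictable compensator and $M_i := X_i - A_i$. Then $A_1=0$, every summand in $A_i$ is $\le 0$ by the submartingale hypothesis, so $A_t\le 0$ and hence $X_t = M_t + A_t \le M_t$; combined with $M_1 = X_1$ this gives the event inclusion $\{X_t\ge X_1+\lambda\}\subseteq\{M_t\ge M_1+\lambda\}$. One checks that $M_1,\dots,M_t$ is a genuine martingale, that $M_i - M_{i-1} = X_i - \mathbb{E}[X_i\mid X_1,\dots,X_{i-1}] \le M$ by hypothesis~(2), and that $\mathrm{Var}[M_i\mid X_1,\dots,X_{i-1}] = \mathrm{Var}[X_i\mid X_1,\dots,X_{i-1}] \le \phi_i$ since $A_i$ is a function of $X_1,\dots,X_{i-1}$. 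Conditioning on $X_1$ so that the martingale has a deterministic starting value, I would apply the martingale upper-tail bound to $M_1,\dots,M_t$ and then take expectations over $X_1$ to remove the conditioning, which yields exactly the claimed estimate.

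The one wrinkle is that Theorem~\ref{thm:BoundedVarianceMartingaleUpperBound} is stated with a two-sided increment bound $|M_i-M_{i-1}|\le M$, while the decomposition only produces the one-sided bound $M_i-M_{i-1}\le M$. This is harmless, since only the one-sided bound is used in the Bernstein-type proof of the upper tail, but to stay self-contained one can simply inline that proof: for $0<\theta<3/M$ and the mean-zero increment $W_i := X_i - \mathbb{E}[X_i\mid X_1,\dots,X_{i-1}]\le M$ with $\mathrm{Var}[W_i\mid\cdot]\le\phi_i$, the elementary fact that $y\mapsto(e^y-1-y)/y^2$ is nondecreasing together with $(e^x-1-x)/x^2\le\tfrac{1/2}{1-x/3}$ on $[0,3)$ gives $\mathbb{E}[e^{\theta W_i}\mid X_1,\dots,X_{i-1}]\le\exp\!\big(\tfrac{\theta^2\phi_i/2}{1-\theta M/3}\big)$; since the drift is nonpositive, $\mathbb{E}[e^{\theta(X_i-X_{i-1})}\mid\cdot]\le\mathbb{E}[e^{\theta W_i}\mid\cdot]$ (the only place the submartingale hypothesis is used), so iterating with the tower property bounds $\mathbb{E}[e^{\theta(X_t-X_1)}]$, and Markov's inequality with the choice $\theta = \lambda/\big(\sum_i\phi_i + M\lambda/3\big)$ closes the argument. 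Everything here is bookkeeping except the single-step moment-generating-function estimate, which is the only step I would expect to require any care.
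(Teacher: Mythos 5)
This statement is cited in the paper as Theorem 7.3 of Chung and Lu~\cite{ChungL06} and is used as a black box, so there is no in-paper proof to compare against; the question is whether your reconstruction is sound, and it is. You correctly observe that a naive reduction via the Doob decomposition $X_i = M_i + A_i$ to the paper's martingale bound (Theorem~\ref{thm:BoundedVarianceMartingaleUpperBound}) does not close: that theorem as restated in the paper assumes the two-sided increment bound $|M_i - M_{i-1}|\le M$, whereas the decomposition only delivers $M_i - M_{i-1} = X_i - \mathbb{E}[X_i\mid X_1,\dots,X_{i-1}]\le M$ with no matching lower bound. Your inline Bernstein argument is exactly the right repair and is, in substance, the proof Chung and Lu give: set $W_i := X_i - \mathbb{E}[X_i\mid X_1,\dots,X_{i-1}]$, use that $W_i\le M$, $\mathbb{E}[W_i\mid\cdot]=0$, and $\mathbb{E}[W_i^2\mid\cdot]\le\phi_i$ together with the monotonicity of $y\mapsto(e^y-1-y)/y^2$ and the bound $(e^x-1-x)/x^2\le\tfrac{1/2}{1-x/3}$ on $[0,3)$ to get $\mathbb{E}[e^{\theta W_i}\mid\cdot]\le\exp\bigl(\tfrac{\theta^2\phi_i/2}{1-\theta M/3}\bigr)$ for $0<\theta<3/M$; then the nonpositive drift gives $\mathbb{E}[e^{\theta(X_i-X_{i-1})}\mid\cdot]\le\mathbb{E}[e^{\theta W_i}\mid\cdot]$, the tower property telescopes this to a bound on $\mathbb{E}[e^{\theta(X_t-X_1)}]$, and Markov with $\theta=\lambda/\bigl(\sum_i\phi_i+M\lambda/3\bigr)$ yields the stated exponent. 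The only cosmetic point worth noting is that the sum $\sum_{i=1}^t\phi_i$ should be read as $\sum_{i=2}^t\phi_i$ (with $\phi_1$ set to $0$), since the hypotheses only constrain $i\ge 2$; this is harmless because adding a nonnegative $\phi_1$ only weakens the bound.
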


\subsection{The Shifted Martingale}\label{sec:ShiftedMartingale}

In this Section we prove the following two theorem by using the shifted martingale trick (that was briefly discussed in Section~\ref{sec:martingale-overview}).

\begin{theorem}\textbf{[The Shifted Martingale Upper Bound]}\label{thm:ShiftedMartingaleUpperBound}\\
     For $t>0$, let $X_1,\cdots, X_t$ be non-negative random variables, $S_i = \sum_{j=1}^i X_j$, and $p_i = \mathbb{E}[X_i \mid X_1,\cdots, X_{i-1}]$. Let $P_1,\cdots,P_t$ and $M$ be fixed non-negative numbers, and assume that $X_i\leq M$ and $p_i\leq P_i$ for all $i\in [t]$. For $P = \sum_{i=1}^t P_i$ and $\lambda>P$, it holds that:

     $$\mathbb{P}[S_t\geq \lambda]\leq \exp{\left(-\frac{(\lambda -  P)^2}{8\cdot M\cdot P + 2M(\lambda -  P)/3}\right)}$$
\end{theorem}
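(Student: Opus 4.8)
The plan is to prove Theorem~\ref{thm:ShiftedMartingaleUpperBound} by reducing it to the bounded-variance martingale upper bound (Theorem~\ref{thm:BoundedVarianceMartingaleUpperBound}) via the shifted martingale construction. First I would define the shifted sequence
\[
Y_0 = 0, \qquad Y_i = Y_{i-1} + \bigl(S_i - \mathbb{E}[S_i \mid X_1,\dots,X_{i-1}]\bigr) = Y_{i-1} + X_i - p_i
\]
for $i \in [t]$. Since $\mathbb{E}[X_i - p_i \mid X_1,\dots,X_{i-1}] = 0$, and each $Y_i$ is a deterministic function of $X_1,\dots,X_i$, the sequence $Y_0,\dots,Y_t$ is a martingale with respect to the filtration generated by the $X_i$; in particular $\mathbb{E}[Y_t] = 0$. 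The key identity is that $Y_t = S_t - \sum_{i=1}^t p_i$, so on the event $\{S_t \geq \lambda\}$ we have $Y_t \geq \lambda - \sum_{i=1}^t p_i$. Because $p_i \leq P_i$ for all $i$ (here I would be slightly careful: $p_i$ is itself a random variable, but $p_i \le P_i$ holds surely by hypothesis, so $\sum_i p_i \le P$ surely), this gives $Y_t \geq \lambda - P > 0$ on that event. Hence $\mathbb{P}[S_t \geq \lambda] \leq \mathbb{P}[Y_t - \mathbb{E}[Y_t] \geq \lambda - P]$, and it remains to bound the right-hand side via Theorem~\ref{thm:BoundedVarianceMartingaleUpperBound}.

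To apply that theorem I need the two hypotheses: bounded increments and bounded conditional variance. For the increments, $|Y_i - Y_{i-1}| = |X_i - p_i| \leq \max(X_i, p_i) \leq M$ since $0 \le X_i \le M$ and $0 \le p_i \le M$ (as $p_i = \mathbb{E}[X_i \mid \cdots]$ is an average of values in $[0,M]$); so I may take the increment bound to be $M$. For the variance, $\mathrm{Var}[Y_i \mid X_1,\dots,X_{i-1}] = \mathrm{Var}[X_i \mid X_1,\dots,X_{i-1}] \le \mathbb{E}[X_i^2 \mid \cdots] \le M\,\mathbb{E}[X_i \mid \cdots] = M p_i \le M P_i$ using $X_i \le M$ and $X_i \ge 0$. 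So I can take $\phi_i = M P_i$, whence $\sum_{i=1}^t \phi_i \le M P$. Plugging $\phi := MP$ as the variance-sum bound, $M$ as the increment bound, and $\lambda - P$ as the deviation into Theorem~\ref{thm:BoundedVarianceMartingaleUpperBound} yields
\[
\mathbb{P}[Y_t - \mathbb{E}[Y_t] \ge \lambda - P] \le \exp\!\left(-\frac{(\lambda-P)^2}{2\bigl(MP + M(\lambda-P)/3\bigr)}\right) = \exp\!\left(-\frac{(\lambda-P)^2}{2MP + 2M(\lambda-P)/3}\right),
\]
which is at least as strong as the claimed bound $\exp(-(\lambda-P)^2/(8MP + 2M(\lambda-P)/3))$ since $2MP \le 8MP$; so the stated inequality follows a fortiori. (The looser constant $8$ is presumably there to absorb the slack one gets from possibly replacing surely-bounded quantities by almost-surely-bounded ones, or to match the lower-bound companion theorem.)

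The main obstacle I anticipate is purely a matter of bookkeeping rather than mathematical depth: one must be careful that $p_i$ is a random variable and that $Y_i$ is genuinely measurable with respect to $X_1,\dots,X_i$ (it is, since $p_i$ is a function of $X_1,\dots,X_{i-1}$), so that the martingale property and Theorem~\ref{thm:BoundedVarianceMartingaleUpperBound}'s hypotheses — which are stated for sequences where conditioning is on the sequence's own past — actually apply. One clean way to handle this is to note that $Y_1,\dots,Y_t$, as a sequence, satisfies $\mathbb{E}[Y_i \mid Y_1,\dots,Y_{i-1}] = \mathbb{E}[\mathbb{E}[Y_i \mid X_1,\dots,X_{i-1}] \mid Y_1,\dots,Y_{i-1}] = Y_{i-1}$ (by the tower property, since $\sigma(Y_1,\dots,Y_{i-1}) \subseteq \sigma(X_1,\dots,X_{i-1})$), and similarly $\mathrm{Var}[Y_i \mid Y_1,\dots,Y_{i-1}] \le \mathbb{E}[\mathrm{Var}[Y_i \mid X_1,\dots,X_{i-1}] \mid Y_1,\dots,Y_{i-1}] \le MP_i$; since the $P_i$ and $M$ are constants this survives the extra conditioning. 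Beyond this measurability care, the remaining steps are the elementary estimates above, so the proof is short.
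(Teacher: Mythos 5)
Your proposal is correct and follows essentially the same route as the paper: define the shifted martingale $Y_i = Y_{i-1} + X_i - p_i$, verify the martingale property, bound increments and conditional variance, apply \Cref{thm:BoundedVarianceMartingaleUpperBound}, and conclude via $Y_t = S_t - \sum_i p_i \ge S_t - P$. Your estimates $|X_i - p_i| \le M$ and $\mathrm{Var}[X_i \mid \cdots] \le M p_i \le M P_i$ are in fact tighter than the paper's $2M$ and $4MP_i$, so you derive a strictly stronger exponent ($2MP$ in place of $8MP$) that implies the stated bound a fortiori; your tower-property handling of the $X$-filtration versus $Y$-filtration is an equivalent, equally careful treatment of the measurability point the paper addresses via its Observation on conditioning on functions of random variables.
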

 \begin{proof}
    First, we define the random variable $Y_i$ as follows. 
    
     \begin{align*}
        Y_i = \begin{cases}
            0 &\text{$i=0$}\\
             S_i - \mathbb{E}[S_i\mid Y_0,\cdots, Y_{i-1}] + Y_{i-1} &\text{$i>0$}
         \end{cases}
     \end{align*}

    \paragraph{Roadmap of the proof:} Our goal is to show that the sequence $Y_0,\cdots, Y_t$ is a martingale, prove a concentration result for $Y_t$ by using Theorem~\ref{thm:BoundedVarianceMartingaleUpperBound}, and then deduce a concentration result for $S_t$. To use Theorem~\ref{thm:BoundedVarianceMartingaleUpperBound}, we need to bound the variance $Var[Y_i\mid Y_0,\cdots, Y_{i-1}]$ and the value of $|Y_i - Y_{i-1}|$. The proof is divided into four steps. In the first step, we show that the sequence $Y_1,\cdots, Y_t$ is a martingale. In the second step  we show that $Y_i-Y_{i-1} = X_i-p_i$, which implies that $|Y_i - Y_{i-1}|\leq 2M$. In the third step, we use the property from the second step to bound the variance $Var[Y_i\mid Y_0,\cdots, Y_{i-1}]$. Finally, in the fourth step, we plug these bounds into Theorem~\ref{thm:BoundedVarianceMartingaleUpperBound} to get a concentration result for $Y_t$, and deduce the desired concentration result for $S_t$.

     \noindent\textbf{First step: the sequence \textbf{$Y_0,\cdots, Y_t$ is a martingale}.} We show that for any $i\geq 1$,  $\mathbb{E}[Y_i\mid Y_0,\cdots,Y_{i-1}] = Y_{i-1}$. Observe that:
    \begin{align*}
        \mathbb{E}[Y_i|Y_0,\cdots, Y_{i-1}] &= \mathbb{E}[S_i\mid Y_0,\cdots, Y_{i-1}] - \mathbb{E}[\mathbb{E}[S_i\mid Y_0,\cdots, Y_{i-1}]\mid Y_0,\cdots, Y_{i-1}] + \mathbb{E}[Y_{i-1}\mid Y_0,\cdots, Y_{i-1}]\\
         &= \mathbb{E}[S_i\mid Y_0,\cdots, Y_{i-1}] - \mathbb{E}[S_i\mid Y_0,\cdots, Y_{i-1}] + \mathbb{E}[Y_{i-1}\mid Y_{i-1}]\\
        &=Y_{i-1}
    \end{align*}

    \noindent where the second equality follows since for any two random variables $X,Y$, $\mathbb{E}[\mathbb{E}[X\mid Y]\mid Y] = \mathbb{E}[X\mid Y]$.
    \\
    
    \noindent\textbf{Second step: \textbf{$Y_i-Y_{i-1} = X_i-p_i$}.} The claim trivially holds for $i=1$. For $i>1$, observe that:
     \begin{align}
    Y_i - Y_{i-1} &= S_i - \mathbb{E}[S_{i-1}\mid Y_0,\cdots, Y_{i-1}]\\
     &=S_i - \mathbb{E}[S_i-S_{i-1}+S_{i-1}]\mid Y_0,\cdots, Y_{i-1}]\\
     & = S_i - \mathbb{E}[S_i-S_{i-1}\mid Y_0,\cdots, Y_{i-1}] - \mathbb{E}[S_{i-1}\mid Y_0,\cdots,Y_{i-1}]\\
     &= S_i - \mathbb{E}[S_i-S_{i-1}\mid S_1,\cdots, S_{i-1}] - \mathbb{E}[S_{i-1}\mid S_1,\cdots,S_{i-1}]\\
     &=S_i - S_{i-1} - \mathbb{E}[X_i\mid X_1,\cdots, X_{i-1}]\\
     &=X_i - \mathbb{E}[X_i\mid X_1,\cdots, X_{i-1}]\\
     &=X_i - p_i
 \end{align}

 \noindent where (3) follows from linearity of expectation, (4) follows from Observation~\ref{obs:ConditioningOnFunctionOfRV} which implies that $\mathbb{E}[S_{i-1}\mid Y_0,\cdots,Y_{i-1}] = \mathbb{E}[S_{i-1}\mid S_1,\cdots, S_{i-1}]$, since $(Y_0,\cdots, Y_{i-1})$ and $(S_1,\cdots, S_{i-1})$ are functions of each other, and (5)  follows since $\mathbb{E}[S_{i-1}\mid S_1,\cdots, S_{i-1}] = \mathbb{E}[S_{i-1}\mid S_{i-1}] = S_{i-1}$, and from another application of Observation~\ref{obs:ConditioningOnFunctionOfRV} since $(X_1,\cdots, X_{i-1})$ and $(S_1,\cdots, S_{i-1})$ are functions of each other, so conditioning on either of them is equivalent.

\setcounter{equation}{0}
 \noindent\textbf{Third step: $Var[Y_i\mid Y_0,\cdots,Y_{i-1}]\leq 4MP_i$}. Observe that 
    \begin{align}
        Var[Y_i\mid Y_0,\cdots,Y_{i-1}] &= Var[Y_i - Y_{i-1}\mid Y_0,\cdots,Y_{i-1}]\\
         &\leq \mathbb{E}[(Y_i - Y_{i-1})^2\mid Y_0,\cdots,Y_{i-1}]\\
         &\leq 2M\cdot \mathbb{E}[|X_i-p_i|\mid Y_0,\cdots,Y_{i-1}]\\
         &\leq 2M\cdot(\mathbb{E}[X_i\mid Y_0,\cdots,Y_{i-1}] + \mathbb{E}[p_i\mid Y_0,\cdots,Y_{i-1}])\\
         &=2M\cdot(\mathbb{E}[X_i\mid X_1,\cdots,X_{i-1}] + \mathbb{E}[p_i\mid X_1,\cdots,X_{i-1}])\\
         &\leq 2M(p_i + \mathbb{E}[p_i\mid X_1,\cdots,X_{i-1}])\\
         &\leq 4MP_i
     \end{align}

     \noindent where (1) follows from Observation~\ref{obs:conditionalVariance}, (2) follows from the definition of variance, (3) follows since we showed that $Y_i - Y_{i-1}= X_i - p_i$ in the second step, which also implies that $|Y_i - Y_{i-1}|\leq |X_i|+|p_i|\leq 2M$, (4) follows from the triangle inequality, linearity of expectation, and since the $X_i$'s are non-negative, (5) follows from Observation~\ref{obs:ConditioningOnFunctionOfRV}, and (7) follows since $p_i\leq P_i$ for all $i$.
    
     ~\\\\\noindent\textbf{Fourth step: concentration of $Y_t$ and $S_t$}. First, observe that $Y_0=\mathbb{E}[Y_t] = 0$. Moreover, having proved that the sequence $Y_1,\cdots, Y_t$ is a martingale, $|Y_i-Y_{i-1}|\leq 2M$, and $Var[Y_i\mid Y_0,\cdots,Y_{i-1}]\leq 4MP_i$, we can plug these bounds into  Theorem~\ref{thm:BoundedVarianceMartingaleUpperBound} to that for $\lambda'>0$ and $P=\sum_{i=1}^t P_i$:

     \begin{align*}
        \mathbb{P}[Y_t\geq \lambda']\leq \exp\left(-{\frac{\lambda'^2}{8\cdot M\cdot P + 2M\cdot \lambda'/3}}\right)
     \end{align*}

     \noindent Furthermore, since $Y_i-Y_{i-1} = X_i - p_i$ for all $i\in [t]$, it implies that $Y_t = S_t - \sum_{i=1}^t p_i$. Hence, $Y_t \geq S_t - P$, which implies that for $\lambda>P$:

     \begin{align*}
         \mathbb{P}[S_t\geq \lambda]\leq \mathbb{P}[Y_t+ P\geq \lambda]=\mathbb{P}[Y_t\geq \lambda- P]\leq \exp{\left(-\frac{(\lambda -  P)^2}{8 M\cdot P + 2M(\lambda -  P)/3}\right)}
    \end{align*}
     \noindent as desired.
 \end{proof}

\begin{theorem}\textbf{[The Shifted Martingale Lower Bound]}\label{thm:ShiftedMartingaleLowerBound}\\
     For $t>0$, let $X_1,\cdots, X_t$ be non-negative random variables, $S_i = \sum_{j=1}^i X_j$, and $p_i = \mathbb{E}[X_i \mid X_1,\cdots, X_{i-1}]$. Let $P^{\ell}_1,\cdots, P^{\ell}_t, P^h_1,\cdots, P^h_t$ and $M$ be fixed non-negative numbers, and assume that $X_i\leq M$ and $P^{\ell}_i\leq p_i\leq P^{h}_i$ for all $i\in [t]$. Let $P^{\ell} = \sum_{i=1}^t P^{\ell}_i$ and $P^h=\sum_{i=1}^t P^h_i$. For $P^{\ell}>\lambda$, it holds that:

     $$\mathbb{P}[S_t \leq \lambda] \leq \exp\left({-\frac{(P^{\ell}-\lambda)^2}{8\cdot M\cdot P^h + 2M\cdot (P^{\ell}-\lambda)/3}}\right)$$

 \end{theorem}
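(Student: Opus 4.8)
The plan is to mirror, with the obvious sign reversals, the four-step argument used to establish Theorem~\ref{thm:ShiftedMartingaleUpperBound}. First I would introduce the same shifted random variables: set $Y_0 = 0$ and, for $i \geq 1$,
\[
Y_i = S_i - \mathbb{E}[S_i \mid Y_0, \ldots, Y_{i-1}] + Y_{i-1}.
\]
Exactly as in the proof of Theorem~\ref{thm:ShiftedMartingaleUpperBound} (that part of the argument uses no hypothesis on the $p_i$), the sequence $Y_0, \ldots, Y_t$ is a martingale and the increments satisfy $Y_i - Y_{i-1} = X_i - p_i$ for every $i \in [t]$. Since $X_i \in [0, M]$ and $p_i = \mathbb{E}[X_i \mid X_1, \ldots, X_{i-1}] \in [0, M]$, each increment lies in $[-M, M]$; in particular the one-sided bound $Y_{i-1} - Y_i = p_i - X_i \leq p_i \leq M$ holds, which is precisely the step condition required by Theorem~\ref{thm:BoundedVarianceMartingaleLowerBound}.

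The second step is the conditional-variance bound, and this is the only place where the statement's upper envelope $P^h_i$ replaces the quantity $P_i$ from the upper-bound proof. Starting from Observation~\ref{obs:conditionalVariance}, $Var[Y_i \mid Y_0, \ldots, Y_{i-1}] = Var[Y_i - Y_{i-1} \mid Y_0, \ldots, Y_{i-1}] \leq \mathbb{E}[(X_i - p_i)^2 \mid Y_0, \ldots, Y_{i-1}]$; bounding $(X_i - p_i)^2 \leq 2M|X_i - p_i|$, applying the triangle inequality and linearity, and using Observation~\ref{obs:ConditioningOnFunctionOfRV} to replace conditioning on $Y_0, \ldots, Y_{i-1}$ by conditioning on $X_1, \ldots, X_{i-1}$, one gets $Var[Y_i \mid Y_0, \ldots, Y_{i-1}] \leq 4M p_i \leq 4M P^h_i$, where the last inequality is the one appeal to the $p_i \leq P^h_i$ hypothesis. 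Feeding $\phi_i = 4M P^h_i$ and the step bound $M$ into Theorem~\ref{thm:BoundedVarianceMartingaleLowerBound}, and noting $\mathbb{E}[Y_t] = Y_0 = 0$, gives for every $\lambda' > 0$
\[
\mathbb{P}[Y_t \leq -\lambda'] \leq \exp\!\left(-\frac{\lambda'^2}{8 M P^h + 2M \lambda'/3}\right).
\]

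The final step transfers this tail bound to $S_t$ using the lower envelope $P^\ell$. Telescoping the increments yields $Y_t = S_t - \sum_{i=1}^t p_i$, and since $p_i \geq P^\ell_i$ we have $\sum_{i=1}^t p_i \geq P^\ell$, hence $Y_t \leq S_t - P^\ell$. Therefore $\{S_t \leq \lambda\} \subseteq \{Y_t \leq \lambda - P^\ell\}$, and substituting $\lambda' = P^\ell - \lambda > 0$ into the displayed bound gives
\[
\mathbb{P}[S_t \leq \lambda] \leq \mathbb{P}[Y_t \leq \lambda - P^\ell] \leq \exp\!\left(-\frac{(P^\ell - \lambda)^2}{8 M P^h + 2M (P^\ell - \lambda)/3}\right),
\]
which is the claimed inequality. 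I do not expect a genuine obstacle: the argument is the mirror image of the already-proven upper bound, so the only points requiring care are (i) orienting the step-size hypothesis of Theorem~\ref{thm:BoundedVarianceMartingaleLowerBound} in the correct direction, i.e.\ using $Y_{i-1} - Y_i \leq M$ rather than $Y_i - Y_{i-1} \leq M$, and (ii) using the upper envelope $P^h_i$ in the variance estimate but the lower envelope $P^\ell$ to bound $S_t$ from below — interchanging these two roles would invalidate the proof.
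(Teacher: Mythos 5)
Your argument is correct and follows the same route as the paper's: the same shifted martingale $Y_i$, the same variance bound $4Mp_i \le 4MP^h_i$ via $(X_i - p_i)^2 \le 2M|X_i - p_i|$, an appeal to Theorem~\ref{thm:BoundedVarianceMartingaleLowerBound} with $\mathbb{E}[Y_t]=0$, and then the telescoping identity $Y_t = S_t - \sum_i p_i \le S_t - P^\ell$ to transfer the tail bound to $S_t$. The only cosmetic difference is that you spell out explicitly why the one-sided step bound $Y_{i-1}-Y_i \le M$ holds (since $X_i \ge 0$ and $p_i \le M$), whereas the paper defers entirely to its proof of Theorem~\ref{thm:ShiftedMartingaleUpperBound} for that; your version is if anything a bit more careful on this point, since it makes clear the step-size constant fed into the black-box inequality is $M$, consistent with the $2M(\cdot)/3$ term in the final denominator.
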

\begin{proof}
     The proof is very similar to the proof of Theorem~\ref{thm:ShiftedMartingaleUpperBound}. We start with defining the shifted random variable $Y_i$ similarly to how we defined it in the proof of Theorem~\ref{thm:ShiftedMartingaleUpperBound}.

    \begin{align*}
        Y_i = \begin{cases}
            0 &\text{$i=0$}\\
             S_i - \mathbb{E}[S_i\mid Y_0,\cdots, Y_{i-1}] + Y_{i-1} &\text{$i>0$}
         \end{cases}
    \end{align*}

    \noindent We showed in the proof of Theorem~\ref{thm:ShiftedMartingaleUpperBound} that the sequence $Y_1,\cdots, Y_t$ is a martingale. Next, we would like to use Theorem~\ref{thm:BoundedVarianceMartingaleLowerBound} to get a concentration result for $Y_t$, which would imply a concentration result for $S_t$. Recall that in the proof of Theorem~\ref{thm:ShiftedMartingaleUpperBound} we showed that $Var[Y_i\mid Y_0\cdots, Y_{i-1}]\leq 4MP^h_i$ (which was shown in the third step in the proof of Theorem~\ref{thm:ShiftedMartingaleUpperBound}), and that $Y_t = S_t - \sum_{i=1}^t p_i$ (which was shown at the end of the proof of Theorem~\ref{thm:ShiftedMartingaleUpperBound}). Furthermore, recall that $\mathbb{E}[Y_t] = 0$. Hence, we can plug these bounds into Theorem~\ref{thm:BoundedVarianceMartingaleLowerBound} to get that:

     \begin{align*}
         \mathbb{P}[S_t \leq \lambda] \leq \mathbb{P}[Y_t +  P^{\ell}\leq\lambda] = \mathbb{P}[Y_t \leq   -(P^{\ell}-\lambda)]
         \leq \exp\left({-\frac{(P^{\ell}-\lambda)^2}{8\cdot M\cdot P^h + 2M\cdot (P^{\ell}-\lambda)/3}}\right)
     \end{align*}
     \noindent as desired.
 \end{proof}

\section{General Graphs: Deferred Proofs from Section~\ref{sec:warmup}}\label{app:general}

\begin{proof}[\textbf{Proof of Lemma~\ref{lem:regular-matching-size}}]
Let $X$ be the set of vertices in $G$ with degrees in $((1 - \kappa)D, (1 + \kappa)D)$. Restricting the graph to these vertices yields $H = G[X]$. Then by definition, we have $|V(H)| \ge (1 - \tau_v)|V(G)|$ and $|E(H)| \ge (1 - \tau_e)|E(G)|$. Let $\Dup = (1 + \kappa)D$. Our goal is to find a large matching in $H$, which we plan to do by providing a fractional point in the matching polytope and deducing that some integral point (matching) is at least as good.

Our fractional point is $x \in \mathbb{R}^{E(H)}$ where $x_e = \frac{1}{\Dup + 1},\ \forall e \in E(H)$. We first claim that $x$ belongs to the matching polytope of $H$ by verifying that it satisfies the conditions of \Cref{thm:polytope}. The first set of inequalities is true as $x_e = \frac{1}{\Dup + 1} > 0$. Since the degree of any vertex in $H$ is at most $\Dup$ by definition, the second set of inequalities is satisfied as well. For the third set of inequalities, let $S \subseteq X$ be an arbitrary odd sized set of vertices. Note that $|E(H[S])| \leq \frac{|S|(|S|-1)}{2}$ and further since each vertex in $H$ has degree at most $\Dup$, $|E(H[S])| \leq \frac{1}{2}\sum_{v \in S} \Dup = \frac{|S| \Dup}{2}$. Thus we have,

\begin{align*}
    \sum_{e\in E(H[S])} x_e &= |E(H[S])| \cdot \frac{1}{\Dup + 1} \le \frac{|S|\min(|S|-1,\Dup)}{2(\Dup+1)}
    = \frac{|S|\min(|S|-1,\Dup) + (\Dup + 1)}{2(\Dup+1)} - \frac{1}{2}\\
    &\le \frac{\max(|S|,\Dup+1)\min(|S|-1,\Dup) + \max(|S|,\Dup + 1)}{2(\Dup+1)} - \frac{1}{2}\\
    &= \frac{\max(|S|,\Dup+1)\min(|S|,\Dup+1)}{2(D_0+1)} - \frac{1}{2}\\
    &= \frac{|S|(\Dup+1)}{2(\Dup+1)} - \frac{1}{2}
    = \frac{|S|-1}{2}
\end{align*}
and thus the third set of inequalities is also satisfied. Thus, by \Cref{thm:polytope}, $x$ is in the convex hull of indicator vectors of matchings in $H$, which implies that there exists an integral matching $M$ in $H$ with size at least $\sum_{e\in E(H)} x_e$. In particular,

\begin{align*}
    |M| &\ge \sum_{e\in E(H)} x_e = \frac{|E(H)|}{\Dup+1}
    \ge \frac{(1 - \tau_e)|E(G)|}{\Dup+1}\\
    \intertext{But we have $|E(G)| \geq \frac{1}{2} \cdot(\sum_{v \in V(H)}(1-\kappa)D) \geq \frac{1}{2} \cdot((1-\tau_v)|V(G)|(1-\kappa)D)$}
    &\ge \frac{(1 - \tau_e)(1 - \tau_v)(1-\kappa)|V(G)|D}{2(\Dup+1)}
    \ge \frac{(1 - \tau_e)(1 - \tau_v)(1 - \kappa)|V(G)|D}{2(1+\kappa)(D+1)}\\
    &\geq (1 - \tau_e)(1 - \tau_v)(1 - 2\kappa)(1 - 1/(D+1))\frac{|V(G)|}{2}\\
    &\ge (1 - \tau_e - \tau_v - 2\kappa - 1/(D+1))\frac{|V(G)|}{2}
\end{align*}
as desired.
\end{proof}

\section{Dense Graphs: Deferred Proofs from Section~\ref{sec:reg}}\label{app:dense}

\begin{proof}[\textbf{Proof of ~\Cref{claim:BoundingParameters}}]
        We start with bounding $\alpha_i$, observe that:
        \setcounter{equation}{0}
        \begin{align}
            \alpha_i &= 10\alpha_{i-1} + \Delta^{-1/600}\\
            & = 10(10\alpha_{i-2} +\Delta^{-1/600})+\Delta^{-1/600}\\
            &\vdots\\
            &\leq \sum_{j=0}^i 10^{j+1}\cdot \Delta^{-1/600}\\
            &\leq i\cdot 10^{i+1} \cdot \Delta^{-1/600}\\
            &\leq 100\log(1/\epsilon)\cdot  10^{10\log(1/\epsilon)}\cdot \Delta^{-1/600}\\
            &\leq \frac{100}{c}\log\Delta\cdot \Delta^{40/c}\cdot \Delta^{-1/600}\\
            &\leq \Delta^{41/c-1/600}\\
            &\leq \Delta^{41/98400-1/600}\\
            &\leq \Delta^{1/2400-1/600}\\
            &= \Delta^{-1/800} \leq 1/10
        \end{align}

        \noindent where (4) follows since $\alpha_0=\Delta^{-1/600}$, (6) follows since $i\leq 10\log(1/\epsilon)$, and (7) follows since $\epsilon\geq 1/\Delta^{1/c}$. Next, we bound $\delta_i$. Let $\Delta_i=\Delta/2^i$. Since $\epsilon\geq 1/(\Delta^{1/c})$ and $i\leq 10\log(1/\epsilon)$, we have that $\Delta_i = \Delta/2^i\geq \Delta/2^{10\log1/\epsilon}\geq \Delta^{0.9}$. Therefore, $2\exp(-\Delta_i^{1/100})\leq 2\exp(-\Delta^{9/1000})\leq \exp(-\Delta^{1/200})$. Hence, we get that:

    \setcounter{equation}{0}
    \begin{align}
        \delta_i &\leq \Delta^2(\delta_{i-1} + \exp(-\Delta^{1/200}))\\
        &\leq \Delta^2(\Delta^2(\delta_{i-2} +\exp(-\Delta^{1/200}))+\exp(-\Delta^{1/200}))\\
        &\vdots\\
        &=\sum_{j=0}^i \Delta^{2(j+1)}\cdot \exp(-\Delta^{1/200})\\
        &\leq 2\exp(-\Delta^{1/200}) \Delta^{2(i+1)} \\
        &\leq 2\exp(-\Delta^{1/200})\cdot  \exp({\log^2\Delta})\\
        &\leq \exp(-\Delta^{1/300})
    \end{align}

    \noindent where (5)  holds since $1/\Delta^2\leq 1/2$, and (7) holds since $i\leq 10\log(1/\epsilon)\leq \log(\Delta)/10$. 
    \end{proof}

\begin{proof}[\textbf{Proof of ~\Cref{cor:ConstantMatchingRegularGraphs}}]
    To apply \Cref{thm:MatchConstantFraction}, we need to show that at least $1/2$ of the edges in $G'$ are incident with nodes of degree smaller than $2\bar{d}$, where $\bar{d}$ is the average degree in $G'$. We start with bounding $\bar{d}$. Let $R$ be the set of nodes that are $(\alpha',\Delta')$-regular in $G'$. 

    \paragraph{Upper bounding $\bar{d}$:} Observe that:

    \begin{align*}
        \bar{d} &= \frac{\sum_{v\in V(G')} deg_{G'}(v)}{|V(G')|}\\
        & \leq \frac{|R|\cdot \Delta'(1+\alpha') + \exp(-\Delta^{1/300})|V(G')|\cdot \Delta}{n'}\\
        &\leq \Delta'(1+\alpha') + \exp(-\Delta^{1/300})\cdot \Delta\\
        &\leq \Delta'(1+2/10)\\
    \end{align*}
    
    \paragraph{Lower bounding $\bar{d}$:} Observe that:

    \begin{align*}
        \bar{d} &= \frac{\sum_{v\in V(G')} deg(v)}{|V(G')|}
         \geq \frac{|R|\cdot \Delta'(1-\alpha')}{|V(G')|}\\
        &\geq (1-\exp(-\Delta^{1/300}))\Delta'(1-\alpha') \geq (1-1/10) \Delta' (1-1/10)\\
        &\geq \Delta'(1-2/10)
    \end{align*}
    where we used that $\exp(-\Delta^{1/300}) < 1/10$ for $\Delta$ larger than $C$ (which is a sufficiently large constant).

    Hence, all the nodes that are $(\alpha',\Delta')$-regular in $G'$ have degree within a factor of $2$ from the average degree. Furthermore, the only edges that are incident to nodes with degree larger than $2\bar{d}$ are the edges incident to the nodes that are not $(\alpha',\Delta')$-regular, and these are only a $\Delta\cdot \exp{(-\Delta^{1/300}))}<<1/2$-fraction of the edges. The claim follows.
\end{proof}

\section{Proof of Theorem \ref{thm:LowerDense}}\label{app:LowerDense}

We now prove Theorem \ref{thm:LowerDense}:

\thmmainoptimal*

We use the probabilistic round elimination approach used in \cite{KS25} to prove this result. We are able to use their key results as black-boxes. First, we use the definitions of $r$-neighborhoods, $r$-flowers, matching-certified algorithms, and vertex survival probability from their paper. We refer to the reader to the preliminaries of that paper for complete formal definitions:

\begin{definition}[Definitions 10 and 11 of \cite{KS25}]
Consider any $\Delta\ge 2$ and $r\ge 0$. (Informally) an \emph{$r$-neighborhood} is a labeling of the $r$-hop neighborhood of a vertex in an infinite $\Delta$-ary tree with numbers from $[0,1]$. An \emph{$r$-flower} is a labeling of the $r$-hop neighborhood of an \emph{edge} in an infinite $\Delta$-ary tree with numbers from $[0,1]$. Let $\mc R_r$ and $\mc F_r$ denote the sets of $r$-neighborhoods and $r$-flowers respectively.

Two $r$-flowers $w$ and $w'$ are said to be \emph{incident} iff there exists an $r$-neighborhood for which $w$ and $w'$ are extensions of that $r$-neighborhood in different directions.

A function $f:\mc F_r\rightarrow \{0,1\}$ (an $r$-round randomized algorithm) is called an $r$-\emph{round matching-certified algorithm} if it has the property that for any incident $r$-flowers $w,w'$, $f(w)f(w') = 0$ (i.e. the algorithm always outputs a matching). The \emph{vertex survival probability} $P_f$ of $f$ is the probability that a given vertex\footnote{Equal for all vertices by symmetry/lack of vertex IDs.} is not matched (incident with a matched edge) by the algorithm $f$.
\end{definition}

Specifically, we use the following results from their paper:

\begin{theorem}[Theorem 3 of \cite{KS25}]\label{thm:vsp}
For any $r\ge 0$ and $r$-round matching-certified algorithm $f$, $P_f \ge C_1^{-r}$, where $C_1 = 10^{80}$.
\end{theorem}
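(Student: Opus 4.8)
The plan is to prove \Cref{thm:vsp} by \emph{probabilistic round elimination}: show that discarding one round of locality can increase the vertex survival probability by at most a fixed multiplicative factor $C_1$, and then iterate, anchoring the induction at $r = 0$. The base case is easy: a $0$-flower of an edge $e = \{u, v\}$ records only the two endpoint labels $\ell(u), \ell(v) \in [0,1]$, and the $0$-flowers of $\{u, v\}$ and $\{u, w\}$ with $v \neq w$ are incident, so a $0$-round matching-certified $f$ must satisfy $f(\ell(u), \ell(v)) \cdot f(\ell(u), \ell(w)) = 0$ for all distinct labels. This forces $\{y : f(x, y) = 1\}$ to be a singleton for a.e.\ $x$, hence a null set; since $\Delta \ge 2$ this gives $f \equiv 0$ a.e., so $P_f = 1 = C_1^0$.

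The crux is the round-elimination lemma: for every $r \ge 1$ and every $r$-round matching-certified $f$, construct an $(r-1)$-round matching-certified $g$ with $P_g \le C_1 \cdot P_f$; writing $p$ for the probability that a fixed edge is matched (so the survival probability is $1 - \Delta p$), this is $1 - \Delta p_g \le C_1 (1 - \Delta p_f)$. For an $(r-1)$-flower $w$ centered at an edge $e$, define $\phi(w) := \bbP_{w^+}[f(w^+) = 1]$, where $w^+$ ranges over one-hop extensions of $w$ to $r$-flowers with fresh i.i.d.\ uniform labels on the new vertices. Two facts drive the argument: (i) $\E_W[\phi(W)] = p_f$ for a uniformly random $(r-1)$-flower $W$; and (ii) at any vertex $x$, if $w_1, \dots, w_\Delta$ are the $(r-1)$-flowers of the $\Delta$ incident edges obtained from a \emph{single} coupled one-hop extension, then $\sum_{i=1}^\Delta \phi(w_i) \le 1$, since these $\Delta$ events are mutually exclusive ($f$ outputs a matching) and the marginal law of each extended $r$-flower is unchanged by extending the other directions. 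One then turns $\phi$ into a genuine $(r-1)$-round algorithm $g$ via a \emph{correlated selection}: each vertex, using only its $(r-1)$-view, selects at most one incident edge, picking edge $e$ with probability close to $\phi(w_e)$, and $e \in g$ iff both endpoints select it; feasibility of ``at most one'' is exactly fact (ii), $g$ is matching-certified by construction, and the loss incurred in making the two endpoints' selections consistent with only local information is absorbed into $C_1$. Applying this lemma $r$ times yields a $0$-round matching-certified $f_0$ with $P_{f_0} \le C_1^{\,r} P_f$, and since $P_{f_0} = 1$ we conclude $P_f \ge C_1^{-r}$.

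I expect the main obstacle to be entirely in the round-elimination lemma, specifically in realizing the correlated per-vertex selection with only $(r-1)$-round information while simultaneously preserving the matching property and, up to a bounded factor, the matched fraction. The naive choices all fail: thresholding $\phi$ at a constant loses all its mass when $\phi$ is small and spread out; putting $e \in g$ independently with probability $\phi(w_e)$ breaks the vertex constraint because sums of independent Bernoullis exceed $1$; and inflating to $\sqrt{\phi}$ to compensate for the ``both endpoints agree'' squaring can violate $\sum_i \phi(w_i) \le 1$. Resolving this likely requires carrying a relaxed intermediate problem along the recursion (a fractional or small-defect relaxation of matching, or a list variant) and/or decomposing each round-removal into a short fixed sequence of elementary edge-view/vertex-view transformations, each losing only a small polynomial factor; compounding a bounded number of such factors is precisely what produces the enormous but still universal constant $C_1 = 10^{80}$.
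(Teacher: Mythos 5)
First, a framing point: the paper does not prove this statement at all --- it is Theorem~3 of \cite{KS25}, imported verbatim and explicitly used as a black box, so your attempt has to be judged against the original argument in \cite{KS25}, of which this is the main technical result. Your high-level plan (probabilistic round elimination: show a one-round elimination loses at most a fixed factor $C_1$ in survival probability, anchor at $r=0$ where any matching-certified map is $0$ almost everywhere, and iterate) is the right general shape and consistent with how that line of work proceeds; the base case and the identity $P_f = 1-\Delta p_f$ are fine.

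However, there is a genuine gap, and it sits exactly where all the work is. The round-elimination lemma is never proved: you define $\phi(w)=\mathbb{P}_{w^+}[f(w^+)=1]$, concede that every concrete way you know to convert $\phi$ into an $(r-1)$-round matching-certified algorithm fails, and gesture at ``a relaxed intermediate problem'' or ``a sequence of elementary transformations'' without supplying either --- so what you have is a plan, not a proof. Worse, the one quantitative claim you do assert in its support, fact~(ii) that $\sum_{i=1}^{\Delta}\phi(w_i)\le 1$ pointwise at each vertex, is false. Disjointness of the events ``$e_i$ is matched'' only bounds a sum of conditional probabilities taken with respect to a \emph{common} conditioning; each $\phi(w_i)$ conditions only on its own $(r-1)$-flower and re-randomizes labels (the distance-$r$ layer around $x$ away from $y_i$) that are genuinely part of the sibling flowers $w_j$, so exclusivity does not transfer. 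Concretely, take $\Delta=2$, $r=1$, and the matching-certified rule $f(p,q,s,t)=1$ iff $s>p$ and $q>t$ (match an edge iff each endpoint's label beats the outside neighbor on the opposite side); one checks that incident $1$-flowers can never both output $1$. If $x$, its neighbor $y$, and its other neighbor $x'$ all carry label $0.9$, then $\phi(w_1)=\phi(w_2)=0.81$ and the sum is $1.62>1$. Since the feasibility of your ``correlated selection'' rests precisely on this inequality, the proposed route to the lemma collapses, and with it the induction; closing the gap would essentially amount to redoing the main argument of \cite{KS25}, whose constant $C_1=10^{80}$ reflects the compounded losses needed to make a correct version of this step work.
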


\begin{lemma}[Lemma 1 of \cite{KS25}; adapted from prior work with the last guarantee removed]\label{lem:densehardgraph}
For any two even positive integers $n,\Delta$, there exists a graph $G_{n,\Delta}$ with the following properties:

\begin{enumerate}
    \item (Regularity) $G_{n,\Delta}$ is $\Delta$-regular.
    \item (Girth) $G_{n,\Delta}$ has girth at least $(\log_{\Delta} n)/1000$.
\end{enumerate}
\end{lemma}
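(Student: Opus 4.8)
The statement is the classical Erd\H{o}s--Sachs phenomenon (existence of regular graphs of prescribed girth), with the only extra work being to land on exactly $n$ vertices; since we merely need $\mathrm{girth}\ge (\log_\Delta n)/1000$, the generous constant leaves a great deal of slack and the argument is mostly bookkeeping. The plan is first to dispose of the trivial regime: set $g:=\lceil (\log_\Delta n)/1000\rceil$, and note that if $n\le\Delta^{3000}$ then $g\le 3$, so (we assume $n>\Delta$, as in all applications --- for $n\le\Delta$ no $\Delta$-regular $n$-vertex graph exists; recall $\Delta$ is even, so $\Delta n$ is automatically even) any simple $\Delta$-regular graph on $n$ vertices already has girth $\ge 3\ge g$. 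So assume $n>\Delta^{3000}$, which forces $\Delta<n^{1/3000}$; this is exactly the inequality that makes the ``seed'' graph below negligible compared to $n$.

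In the non-trivial regime I would build $G_{n,\Delta}$ in two stages. Stage one: produce one seed graph $H$ that is $\Delta$-regular, has girth at least $g$, and has $m\le\Delta^{100 g}$ vertices, via the Erd\H{o}s--Sachs extremal argument --- among all $\Delta$-regular multigraphs on at most $\Delta^{100 g}$ vertices (such exist, e.g.\ a disjoint union of copies of $K_{\Delta+1}$) take one of maximum girth, tie-broken by fewest shortest cycles; if its girth were below $g$, a single $2$-switch trading an edge on a shortest cycle for a far-away edge strictly improves this potential (using that the graph has $\gg\Delta^{g}$ vertices, so a far-away edge exists), a contradiction. Alternatively one may simply cite Erd\H{o}s--Sachs here, since this lemma is explicitly ``adapted from prior work.''

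Stage two: assemble $G_{n,\Delta}$ on exactly $n$ vertices by a girth-preserving connected sum. Given $\Delta$-regular girth-$\ge g$ graphs $A,B$, delete an edge $x_Ay_A$ from $A$ and $x_By_B$ from $B$ and add the edges $x_Ax_B$ and $y_Ay_B$: the result is $\Delta$-regular on $|V(A)|+|V(B)|$ vertices. Because $\Delta$ is even, every component of $A$ and of $B$ is bridgeless, so $x_A,y_A$ (and $x_B,y_B$) remain at distance $\ge g-1$ after the deletion (a shorter path plus the deleted edge would be a cycle of length $<g$); and since the only edges between $A$ and $B$ are the two new ones, any cycle using a new edge must use both, hence has length at least $2g$, while cycles avoiding the new edges live in $A-x_Ay_A$ or $B-x_By_B$ --- so girth $\ge g$ is preserved. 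Iterating this over copies of two seed graphs of near-coprime sizes (Chicken McNugget / Frobenius) realizes every sufficiently large $n$, and $n>\Delta^{3000}$ together with $m\le\Delta^{100 g}\le\Delta^{100((\log_\Delta n)/1000+1)}=n^{1/10}\Delta^{100}\le n^{1/10+100/3000}\ll n$ puts $n$ in range. (An alternative Stage two working directly on $n$ vertices: take a uniformly random $\Delta$-regular graph on $n$ vertices via the configuration model; its expected number of cycles of length at most $g$, including loops and parallel edges, is at most $2(\Delta-1)^g\le n^{1/500}$, so with probability $\ge 1/2$ there are $o(n)$ of them; they are repairable by local $2$-switches that do not interfere, leaving a $\Delta$-regular graph of girth $>g$ on exactly $n$ vertices.)

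The main obstacle --- the reason this is prior work rather than a one-line remark --- is the girth bookkeeping: in the extremal route one must check that the $2$-switch never creates a cycle at least as short as the one it destroys, and in the random-repair route one must argue that the short cycles are spread out enough (or cluster harmlessly) that the local fixes are independent. Everything else is routine arithmetic, and the constant $1000$ in the girth bound was chosen precisely so that none of this bookkeeping is tight.
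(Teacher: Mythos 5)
There is nothing in the paper to compare against here: \Cref{lem:densehardgraph} is imported as a black box (``Lemma 1 of \cite{KS25}, adapted from prior work''), and neither this paper nor its proof of \Cref{thm:LowerDense} re-derives it. So your proposal is necessarily a different route --- a self-contained classical construction --- and in outline it is the right one. The trivial regime ($n\le\Delta^{3000}$, girth $\ge 3$ suffices), the Erd\H{o}s--Sachs-type seed, and the girth-preserving connected sum are all fine; in particular your argument that a cycle through a new edge must use both new edges and hence has length $\ge 2(g-1)+2=2g$ is correct, and the arithmetic $m\le\Delta^{100g}\le n^{1/10}\Delta^{100}\ll n$ checks out. (Your caveat that $n>\Delta$ is needed is also fair: as literally stated for all even $n,\Delta$ the lemma is vacuous or false when $n\le\Delta$, an imprecision inherited from the citation.)

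The one step that is genuinely not established as written is the exact-$n$ realization. Your extremal/2-switch argument produces \emph{one} seed of \emph{some} size at most $\Delta^{100g}$; the Frobenius (``Chicken McNugget'') step then needs two seeds whose sizes have gcd dividing $2$, and you never show such a pair exists --- ``near-coprime sizes'' is an assertion, not a construction, and without control on $\gcd(m_1,m_2)$ the sums $am_1+bm_2$ can miss entire residue classes of even $n$. The clean fix is to invoke (or prove) the standard strong form of the Erd\H{o}s--Sachs/Sachs theorem: for every $g$ and $\Delta$, a $\Delta$-regular graph of girth $>g$ exists on \emph{every} sufficiently large $N$ with $N\Delta$ even, with threshold polynomial in $(\Delta-1)^g\ll n$; that gives the lemma in one step and makes the connected sum unnecessary. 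Your alternative configuration-model route also works and is closer in spirit to how such lemmas are usually proved, but there too the repair phase (each 2-switch must not create a new $\le g$-cycle, and the $o(n)$ switches must not interfere) is exactly the bookkeeping you defer; it needs an actual argument (e.g., swap with a uniformly random edge and union-bound over the $\le\Delta^{g}$ vertices within distance $g$ of the switch) rather than the parenthetical assertion.
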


For the second result, there was also a \emph{Subgraph size} property, but it turns out that this is not required for proving Theorem \ref{thm:LowerDense}. Now, we outline the proof of Theorem \ref{thm:LowerDense}. The proof is a simplified version of the proof given for the hardness of maximal matching in \cite{KS25}. Suppose, for the sake of contradiction, that there exists an $r = \log(1/\epsilon)/c$-round randomized LOCAL algorithm that outputs 1 or 0 for each edge that happens to output a $(1+\epsilon)$-approximate matching with probability at least $p$, where $p = \exp(-\sqrt{n})$. We obtain a contradiction via the following steps:

\begin{enumerate}
    \item This algorithm can be used to obtain an $(r+1)$-round matching-certified algorithm that produces a $(1+\epsilon)$-approximate matching with probability at least $p$, simply by deleting any pair of edges in the output that are incident.
    \item Supply the graph given by Lemma \ref{lem:densehardgraph} as input to the algorithm. $r+1 < (\log_{\Delta} n)/1000$ by the second part of the maximum constraining $\epsilon$, so the view supplied to this $(r+1)$-round algorithm is a $\Delta$-ary tree. Theorem \ref{thm:vsp} thus shows that each vertex is unmatched by the algorithm with probability at least $\epsilon^{1/100}$.
    \item By Chernoff with bounded dependence (Theorem \ref{thm:CorrelatedChernoff}), the number of unmatched nodes is less than $n\epsilon^{1/100}$ with probability at most $\exp(-\sqrt{n})$. But Lemma \ref{lem:regular-matching-size} implies that the algorithm has not found a $(1+\epsilon)$-approximate matching in any other case, a contradiction.
\end{enumerate}

We now formalize these steps:

\begin{proof}[Proof of Theorem \ref{thm:LowerDense}]
Suppose for the sake of contradiction that there is an $r = \log(1/\epsilon)/c$-round randomized LOCAL algorithm $\mc A$ that takes as input a $\Delta$-regular graph $H$ and outputs 1 or 0 for each edge and that happens to output a $(1+\epsilon)$-approximate matching with probability at least $p = \exp(-\sqrt{n})$. The algorithm $\mc A$ is given a tape of randomness (represented by a number chosen i.i.d. from $[0,1]$) at each vertex, along with a unique ID at each vertex, and output 1 or 0 for each edge depending on whether or not the algorithm adds that edge to the output edge set. We now obtain an $(r+1)$-round matching-certified algorithm $f_{\mc A}$ as follows: given a number $r_e\in [0,1]$ for each edge $e\in E(H)$, use $\mc A$ as follows:

\begin{enumerate}
    \item Split $r_e$ into 5 numbers $r_e^{(0)}$, $r_e^{(1)}$, $r_e^{(2)}$, $r_e^{(3)}$, and $r_e^{(4)}$. $r_e^{(i)}\in [0,1]$ is the number obtained by concatenating the bits in position $5j+i$ in $r_e$ for all integers $j\ge 0$.
    \item For each vertex $v\in V(H)$, let $r_v^{(0)} = \sum_{e\sim v} r_e^{(0)}$. Since $\Delta\ge 2$, for each edge $e = \{u,v\}$, $r_u^{(0)} \ne r_v^{(0)}$ almost surely. Thus, we may orient each edge $e = \{u,v\}$ from the smaller $r^{(0)}$ value to the larger value.
    \item For each vertex $v\in V(H)$, port number the edges $e_1,e_2,\hdots,e_{\Delta}$ incident with $v$ in increasing order by value of $r_e^{(0)}$. This is a valid order because the numbers are different almost surely.
    \item For each $v\in V(H)$, join\footnote{Every $i$th bit out of $d$ coming from $r_{e_i}^{(1)}$ or $r_{e_i}^{(2)}$} the incoming $r_e^{(1)}$s and outgoing $r_e^{(2)}$s in port order to obtain the number $s_v$. Do the same for the $r_{e_i}^{(3)}$s and $r_{e_i}^{(4)}$s to obtain the number $t_v$.
    \item Use the first $100\log n$ bits of $s_v$ as the ID for $v$ (all IDs unique with probability at least $(1 - 1/\text{poly}(n))$), and use the number $t_v$ as a randomness tape for $v$ in the algorithm $\mc A$.
    \item Apply $\mc A$ to get numbers 0,1 on all edges. For any two incident edges that are labeled with 1, change both of their labels to 0 and return the result.
\end{enumerate}

This is an $(r+1)$-round matching-certified algorithm, where the matching-certified property comes from the last step. Lastly, it outputs a $(1+\epsilon)$-approximate matching in $H$ with probability at least $(1-1/\text{poly}(n))p$, as conditioned on fixed values of $r_e^{(0)}$, $r_e^{(1)}$, and $r_e^{(2)}$ for each edge $e$, the $t_v$ values are independent random variables for different $v$, and thus success probability at least $p$ is achieved for each fixing of the first variables which leads to unique IDs (a $ > (1 - 1/\text{poly}(n))$-fraction of the time).

Now, supply the graph $G$ given by Lemma \ref{lem:densehardgraph} as input to $f_{\mc A}$. Notice that

$$r+1 \le 2\log(1/\epsilon)/c \le 2\log(n^{1/(c\log\Delta)})/c = 2(\log_{\Delta} n)/c^2 < (\log_\Delta n)/1000$$

Thus, the algorithm is being given an $(r+1)$-flower as input, which means that Theorem \ref{thm:vsp} applies to show that

$$P_{f_{\mc A}} \ge C_1^{-r-1} > \epsilon^{1/100}$$

Apply Theorem \ref{thm:CorrelatedChernoff} to show concentration of the number of unmatched vertices as follows. Let $X_v$ be the indicator variable of $v$ being matched. The dependency graph of these random variables has degree (and thus chromatic number) at most $\Delta^{2r+2} \le n^{1/250}$. The expected sum of these random variables is at most $n(1 - P_{f_{\mc A}}) < n(1 - \epsilon^{1/100})$, so plugging in $\lambda = \epsilon^{1/100}n/2$ shows that the sum of these random variables is at most $n(1 - \epsilon^{1/100}/2)$ with probability at least $1 - \exp(-2\sqrt{n})$.

By Lemma \ref{lem:regular-matching-size} applied with $\tau_e = \tau_v = \kappa = 0$, $G$ has a matching consisting of at least $(1 - 1/(\Delta+1))n/2$ edges. By the first constraint in the maximum on $\epsilon$, this is more than $(1 - \epsilon)n/2$. Since $3\epsilon < \epsilon^{1/100}$, a $(1+\epsilon)$-approximate matching is found with probability at most $\exp(-2\sqrt{n})$. Thus, $\exp(-2\sqrt{n}) \ge (1 - 1/\text{poly}(n))p$, a contradiction to the definition of $p$. Thus, the algorithm $\mc A$ cannot exist.
\end{proof}

\end{document}